\let\set\mathbbm
\newcommand{\RPTB}{{\it RPTB}}
\newcommand{\DR}{{\it DR}}
\newtheorem{theorem}{Theorem}
\newtheorem{definition}[theorem]{Definition}
\newtheorem{example}[theorem]{Example}
\newtheorem{lemma}[theorem]{Lemma}
\newtheorem{corollary}[theorem]{Corollary}
\newtheorem{proposition}[theorem]{Proposition}
\newtheorem{remark}[theorem]{Remark}
\newtheorem{assum}[theorem]{Assumption}
\newcommand{\sprod}[2]{\text{\small$\prod_{#1}^{#2}$}}
\def\newdef#1{\newtheorem{@#1}[Theorem]{{\it #1}}%
  \newenvironment{#1}{\begin{@#1}\em}{\end{@#1}}}
\newcommand{\shiftS}{{S}}
\newcommand{\notion}[1]{{\em #1}}
\newcommand\ToDo[1][x]{\fbox{\textbf{TODO} \ifx#1x\fi}}
\newcommand{\vect}[1]{\boldsymbol{#1}}
\newcommand{\AR}{\set A}
\newcommand{\BB}{\set B}
\newcommand{\QQ}{\set Q}
\newcommand{\ZZ}{\set Z}
\newcommand{\NN}{\set N}
\newcommand{\KK}{\set K}
\newcommand{\HH}{\set H}
\newcommand{\GG}{\set G}
\newcommand{\FF}{\set F}
\newcommand{\SA}{\set S}
\newcommand{\EE}{\set E}
\newcommand{\fct}[3]{#1\colon #2 \to #3}
\newcommand{\dfield}[2]{({#1},{#2})}
\newcommand{\const}[2]{{\rm const}_{#2}{#1}}
\newcommand{\sigmaE}{$\Sigma$}
\newcommand{\piE}{$\Pi$}
\newcommand{\rE}{$R$}
\newcommand{\pisiE}{$\Pi\Sigma$}
\newcommand{\rpiE}{$R\Pi$}
\newcommand{\xdeg}{\overline{\deg}}
\newcommand{\ord}{\text{ord}}
\newcommand{\lcm}{\text{lcm}}
\newcommand{\seqK}{\textbf{S}(\KK)}
\newcommand{\Shift}{S}
\newcommand{\id}{\operatorname{id}}
\newcommand{\ev}{\operatorname{ev}}
\author{Carsten Schneider}
\address{Research Institute for Symbolic Computation\\
J. Kepler University Linz\\
A-4040 Linz, Austria}
\email{Carsten.Schneider@risc.jku.at}
\thanks{This work was supported by the Austrian Science Fund (FWF) grant SFB F50 (F5009-N15).}
\title[Minimal representations and algebraic relations of products]{Minimal representations and algebraic relations for single nested products}
\begin{document}

\begin{abstract}
Recently, it has been shown constructively how a finite set of hypergeometric products, multibasic hypergeometric products or their mixed versions can be modeled properly in the setting of formal difference rings. Here special emphasis is put on robust constructions: whenever further products have to be considered, one can reuse --up to some mild modifications-- the already existing difference ring. In this article we relax this robustness criteria and seek for another form of optimality. We will elaborate a general framework to represent a finite set of products in a formal difference ring where the number of transcendental product generators is minimal. As a bonus we are able to describe explicitly all relations among the given input products.
\end{abstract}

\maketitle

\section{Introduction}\label{Sec:Intro}

An important milestone of symbolic summation has been carried out by S.A. Abramov~\cite{Abramov:71,Abramov:75} to simplify indefinite sums defined over rational functions. In particular, Gosper's algorithm~\cite{Gosper:78} for the simplification of indefinite hypergeometric sums and Zeilberger's extension to definite sums via his creative telescoping paradigm~\cite{Zeilberger:91,PauleSchorn:95,AequalB} made symbolic summation highly popular in many areas of sciences. This successful story has been pushed forward for single nested sums and related problems, see, e.g.,~\cite{Paule:95,PauleRiese:97,CK:12} . Further generalizations opened up substantially the class of applications, like the holonomic approach~\cite{Zeilberger:90a,Chyzak:00,Koutschan:13} dealing with objects that can be desribed by recurrence systems or the multi-summation approach of \hbox{($q$--)}hypergeometric products~\cite{Wilf:92,Wegschaider,AZ:06}.

In this regard, also the difference field/ring approach initiated by M. Karr~\cite{Karr:81,Karr:85} and extended further in~\cite{DR1,DR2,DR3} has been applied to non-trivial problems arising, e.g., in particle physics; for recent calculations see~\cite{CALadder:16,HugeSummation:18}. In the latter case one can represent indefinite nested sums defined over ($q$-)hypergeometric products in the setting of $R\Pi\Sigma$-difference ring extensions. As a side product, one can simplify the sum expressions w.r.t.\ certain optimality criteria, like finding sum representations with optimal nesting depth~\cite{Schneider:08c,Schneider:10b,Schneider:15}, with a minimal number of summation objects in the summands~\cite{Schneider:10a}, or with minimal degrees arising in the numerators and denominators~\cite{Schneider:07d}. In particular, the occurring sums and products in the reduced expression are algebraically independent among each other~\cite{Schneider:10c,Singer:08,DR3}. 

Various improvements have been derived for optimal representations of sums, but much less has been gained for products so far. Concerning the simplification of one product in the setting of difference fields we refer to~\cite{Schneider:05c,Petkov:10,ZimingLi:11}. For the simplification of several products, only few algorithms have been developed. All of them can be related to the following problem. 

\medskip

\noindent \textbf{Problem~\RPTB} (Representation of Products in a Transcendental Basis): \textit{Given}
\begin{equation}\label{Equ:InputProds}
F_1(n)=\prod_{k=l_1}^n f_1(k),\dots,F_r(n)=\prod_{k=l_r}^n f_r(k),
\end{equation}
where the multiplicands $f_i(k)$ are represented in an appropriate difference field $\FF$ and\footnote{For $1\leq i\leq r$ we assume that $f_i(k)\neq0$ for all $k\in\NN=\{0,1,2,\dots\}$ with $k\geq l_i$.} $l_i\in\NN$; \textit{find} an alternative set of such products
\begin{equation}\label{Equ:OutputProds}
\Phi_1(n)=\prod_{k=\ell_1}^n \phi_1(k),\dots,\Phi_s(n)=\prod_{k=\ell_s}^n \phi_s(k)
\end{equation}
and
\begin{equation}\label{Equ:RootProd}
\Phi_0(n)=\rho^n=\prod_{k=1}^n\rho\quad\quad\text{ with a primitive root of unity $\rho$ of order $\lambda$} 
\end{equation}
where $\Phi_0(n)^{\lambda}=1$ such that
\begin{enumerate}
\item[(i)]  any product in~\eqref{Equ:InputProds} can be rewritten in a Laurent polynomial expression in terms of the products given in~\eqref{Equ:OutputProds} and~\eqref{Equ:RootProd};
\item[(ii)] the sequences produced by the products in~\eqref{Equ:OutputProds} are algebraically independent over their ground field of sequences\footnote{We assume that $\FF$ can be embedded into the ring of sequences.} adjoined with the sequence $(\phi_0(n))_{n\geq0}$.
\end{enumerate}
Internally, the available algorithms~\cite{Schneider:05c,OS:18} represent a finite set of such products automatically   
in a difference ring built by $R\Pi$-extensions~\cite{Karr:81,DR1} and exploit results from the Galois theory of difference rings elaborated in~\cite{Schneider:10c,Singer:08,DR3}.
We note that the algorithms presented in~\cite{Schneider:05c,DR2} and implemented in \texttt{Sigma}~\cite{Schneider:07a} can handle only hypergeometric products. In particular, the products must evaluate to elements in a field $\KK$ that can be built by a multivariate rational function field defined over $\QQ$ or $\QQ[\iota]$ where $\iota$ denotes the imaginary unit with $\iota^2=-1$.  
Recently, these ideas have been generalized in~\cite{OS:18} for mixed-multibasic hypergeometric products~\cite{Bauer:99} defined over a more general field $\KK$.

\begin{definition}\label{Def:MMHypergeometric}
Let $\KK=\KK'(q_1,\dots,q_v)$ be a rational function field where $\KK'$ is a field of characteristic $0$. 
Whenever we focus on algorithmic aspects, we restrict $\KK'$ further to a rational function field defined over an algebraic number field.
A product $\prod_{j=l}^kf(j,q_1^j,\dots,q_v^j)$, $l\in\NN$, is called \emph{mixed-multibasic hypergeometric}~\cite{Bauer:99} (in short \emph{mixed-hypergeometric}) in $k$ over $\KK$ if $f(x,y_1,\dots,y_v)$ is an element from the rational function field $\KK(x,y_1,\dots,y_v)$ where the numerator and denominator of
$f(j,q_1^j,\dots,q_v^j)$ are nonzero for all $j\in\ZZ$ with $j\geq l$. 
Such a product is called \emph{multibasic hypergeometric} if $f$ is free of $x$ and \emph{$q$-hypergeometric} if $f$ is free of $x$, $v=1$ and $q_1=q$. It is called \emph{hypergeometric} if $v=0$, i.e., $f\in\KK(x)$ with $\KK=\KK'$.
\end{definition}

\begin{example}\label{Exp:MainExampleSum}
Consider the hypergeometric products
\begin{equation}\label{Equ:DefineFi}
\begin{aligned}
F_1(n)&=\prod_{k=1}^n\underbrace{\frac{-13122 k (1+k)}{(3+k)^3}}_{=f_1(k)},&F_2(n)&=\prod_{k=1}^n\underbrace{\frac{26244 k^2 (2+k)^2}{(3+k)^2}}_{=f_2(k)},\\ 
F_3(n)&=\prod_{k=1}^n\underbrace{\frac{\iota k (2+k)^3}{729 (5+k)}}_{=f_3(k)},&F_4(n)&=\prod_{k=1}^n\underbrace{-\frac{162 k (2+k)}{5+k}}_{=f_4(k)}.
\end{aligned}
\end{equation}
Then the algorithms from~\cite{Schneider:05c,DR2,OS:18} find $\Phi_1(n)=n!$, $\Phi_2(n)=2^n$, $\Phi_3(n)=3^n$ (whose sequences are algebraically independent among each other) and the algebraic product $\Phi_0(n)=\iota^n$ satisfying the relation $\phi_0(n)^4=1$ with the following property: the input products can be rephrased in terms of the output products with   
\begin{equation}\label{Equ:OSRepresentation}
\begin{aligned}
F_1(n)&=\frac{216 \left(\iota^n\right)^2 2^n
   \left(3^n\right)^8}{(n+1)^2 (n+2)^3 (n+3)^3
   n!}, & F_2(n)&=\frac{9 \left(2^n\right)^2
   \left(3^n\right)^8 (n!)^2}{(n+3)^2},\\ 
F_3(n)&=\frac{15
   (n+1)^2 (n+2)^2 \iota^n (n!)^3}{(n+3) (n+4) (n+5)
   \left(3^n\right)^6},& F_4(n)&=\frac{60
   \left(\iota^n\right)^2 2^n \left(3^n\right)^4
   n!}{(n+3) (n+4) (n+5)}.
\end{aligned}
\end{equation}
\end{example}
In~\cite{Schneider:05c,DR2} and more generally in~\cite{OS:18} the algorithms are designed to treat products with highest possible flexibility. They split the input products as much as possible into irreducible elements. As a consequence, when further products arise in a later construction phase, 
the already obtained products can --up to some mild modifications-- be reused. Note that such robust constructions are crucial for large-scale calculations that arise, e.g., in particle physics~\cite{CALadder:16,HugeSummation:18}.
The algorithms for single-nested products described in~\cite{OS:18}, and even more general algorithms for multiple nested products~\cite{Ocansey:19}, are available in Ocansey's package \texttt{NestedProducts}.

In this article we will supplement this more practical oriented toolbox with theoretical aspects.
We will provide a general framework that solves Problem~\RPTB\ not only for mixed-hypergeometric products, but for general difference rings that satisfy certain (algorithmic) properties; for further details we refer to Subsection~\ref{Subsec:ProblemDescription} and Section~\ref{Sec:BasicProperties}. 
In particular, the output of Problem~\RPTB\ is in the following sense optimal: among all possible products in~\eqref{Equ:OutputProds} and~\eqref{Equ:RootProd} that provide a solution of Problem~\RPTB,
the number $s\geq0$ of products and the order $\lambda\geq1$ of the root of unity $\rho$ in~\eqref{Equ:RootProd} are minimal (if $\lambda=1$, the product in~\eqref{Equ:RootProd} simplifies to $\gamma^n=1^n=1$.)

\begin{example}\label{Exp:MinimalProductRep}
With our new algorithmic framework we will calculate the minimal number of products
\begin{align*}
\Phi_1(n)&=\sprod{k=1}n\tfrac{-162 k (2+k)}{5+k},& \Phi_2(n)&=\sprod{k=1}n\tfrac{-\iota (3+k)^6}{9 k (1+k)^2 (2+k) (5+k)}
\end{align*} (again the produced sequences are algebraically independent) and the alternating sign $\Phi_0(n)=(-1)^n$  such that the input products can be rephrased in the form
\begin{equation}\label{Equ:FiEvalNew}
\begin{aligned}
F_1(n)&=\tfrac{5 (1+n)^2 (2+n)^5 (3+n)^8}{52488 (4+n) (5+n)} (-1)^n
        \Big(\!\sprod{k=1}n \tfrac{-162 k (2+k)}{5+k}\!\Big)\Big(\!\sprod{k=1}n \tfrac{-\iota (3+k)^6}{9 k (1+k)^2 (2+k) (5+k)}\!\Big)^{-2},\\
F_2(n)&=\tfrac{(4+n)^2 (5+n)^2 }{400} \Big(
        \sprod{k=1}n \tfrac{-162 k (2+k)}{5+k}\Big)^2,\\
F_3(n)&=\tfrac{2754990144 (4+n)^2 (5+n)^2}{25 (1+n)^4 (2+n)^{10} (3+n)^{16}} \Big(
        \sprod{k=1}n \tfrac{-\iota(3+k)^6}{9 k (1+k)^2 (2+k) (5+k)}\Big)^3,\\
F_4(n)&=\sprod{k=1}n \tfrac{-162 k (2+k)}{5+k}.
\end{aligned}
\end{equation}
In particular, we can conclude that Problem~\RPTB\ can be solved only by using an algebraic product of the form~\eqref{Equ:RootProd} where the order $\lambda=2$ is minimal. 
\end{example}

In contrast to~\cite{Schneider:05c,DR2,OS:18,Ocansey:19} this representation with a minimal number of products has one essential disadvantage: When a new product has to be treated in addition, a complete redesign of the already produced products might be necessary. However, our new approach will provide further insight: given the special representation proposed in this article, one can read off straightforwardly a finite set of generators that describe all relations of the input products.

\begin{example}\label{Exp:RelationsInProdRep}
Given the hypergeometric products $F_1(n),F_2(n),F_3(n),F_4(n)$ from Example~\ref{Exp:MainExampleSum}, we can 
compute all algebraic relations among them. More precisely, take the ring $\EE=\QQ(\iota)(x)[y_,y_1^{-1}][y_2,y_2^{-1}][y_3,y_3^{-1}][y_4,y^{-1}_4]$ of Laurent polynomials in the variables $y_1,y_2,y_3,y_4$ with coefficients from the rational function field $\QQ(\iota)(x)$ and consider the ideal\footnote{In this example the evaluation of an element from $\QQ(\iota)(x)$ is carried out by replacing $x$ with concrete values $n\in\NN$. Later we will generalize this simplest case to formal difference rings equipped with an evaluation function acting on the ring elements.}   
\begin{multline*}
Z=\{p(x,y_1,y_2,y_3,y_4)\in\EE\mid\exists \delta\in\NN\,\forall n\in\NN\\
n\geq\delta \Rightarrow p(n,F_1(n),F_2(n),F_3(n),F_4(n))=0\}
\end{multline*}
in $\EE$ that encodes all algebraic relations among the products $F_i(n)$ with $i=1,2,3,4$. Then with our new algorithms we can compute the two generators 
\begin{equation}\label{Equ:riDef}
\begin{aligned}
e_1=&\frac{y_2}{y_4^2}-\frac1{400} (4+x)^2 (5+x)^2,\\
e_2=&\frac{y_2^2 y_4^2}{y_1^6 y_3^4}
-\tfrac1{4199040^2}(1+x)^4 (2+x)^{10} (3+x)^{16} (4+x)^2 (5+x)^2
\end{aligned}
\end{equation}
that span the ideal $Z$. This means\footnote{More generally, if $R$ is a commutative ring with $1$, we define the ideal $I$ generated by $a_1,\dots,a_r\in R$ with $I=\langle a_1,\dots,a_r\rangle_R=\{f_1\,a_1+\dots+f_r\,a_r\mid f_1,\dots,f_r\in R\}$.}
\begin{equation}\label{Equ:IdealSetZ}
Z=\langle e_1,e_2\rangle_{\EE}=\{f_1\,e_1+f_2\,e_2\mid f_1,f_2\in\EE\}.
\end{equation}
\end{example}

This result is connected to~\cite{Kauers:08,Singer:16} where all relations of a finite set of sequences can be computed that satisfy homogeneous linear recurrences with constant coefficients. In particular in~\cite{Kauers:08} their algorithm is reduced to find all relations of a finite set of geometric sequences, i.e., sequences produced by the products in~\eqref{Equ:InputProds} with $f_1(k),\dots,f_r(k)\in\KK^*$, which are a subclass of hypergeometric sequences. Further strategies for mixed-multibasic hypergeometric products are also mentioned in~\cite{ZimingLi:11}. 

The outline of the article is as follows. In Section~\ref{Sec:ProbelmSpec} we will formalize the sketched construction from above in details: we will represent the products~\eqref{Equ:InputProds} in a formal difference ring equipped with an evaluation function. In particular, we will rephrase the problem specification \RPTB\ to the problem specification~\DR\ in this formal setting. In Section~\ref{Sec:BasicProperties} we will list the basic properties of our difference ring theory~\cite{DR1,DR3} and will enhance it for the constructions required in this article. In Section~\ref{Sec:SpecialCase} we will restrict to a special case of products from which an optimal product representation
can be read off straightforwardly. Finally, using the Smith normal form of integer matrices we will show in Section~\ref{Sec:GeneralCase} how the general problem can be reduced to the special case treated in Section~\ref{Sec:SpecialCase}. A conclusion is given in Section~\ref{Sec:Conclusion}.

\section{The problem description in the setting of difference rings}\label{Sec:ProbelmSpec}

In the following it will be convenient to represent the products under consideration in a formal ring\footnote{Throughout this article, all rings and fields have characteristic $0$ and with $\AR^*$ we denote the group of units. Furthermore, all rings are commutative. The order of $a\in\AR^*$, denoted by $\ord(a)$, is the smallest positive integer $k$ with $a^k=1$. If such a $k$ does not exist, we set $\ord(a)=0$.}, often denoted by $\AR$ or $\EE$. In this regard, it is essential to define in addition 
\begin{itemize}
\item an evaluation function that describes how the elements in the formal ring are interpreted as sequences (see Subsection~\ref{Subsec:EvFu});
\item a ring automorphism that describes how the elements in the formal ring are shifted (see Subsection~\ref{Subsec:automorphism}).
\end{itemize}
In particular, we will take care that the evaluation function and the ring automorphism are compatible: applying the automorphism to an element in the formal ring and evaluating it afterwards at the $n$th sequence entry must equal to the evaluation at the $(n+1)$th sequence entry (see Subsection~\ref{Subsec:compatibility}). In Subsection~\ref{Subsec:ProblemDescription} we will finally rephrase Problem~\RPTB\ and the examples from the introduction in this algebraic setting. 

\subsection{The evaluation function -- sequence domains}\label{Subsec:EvFu}
Inspired by~\cite{PauleNemes:97} we will provide a so-called evaluation function which maps the elements from a formal ring $\AR$ to sequences with entries from a field $\KK\subseteq\AR$.
More precisely, we will take care that the following functions will be available within our constructions~\cite{Schneider:10c,DR3}.

\begin{definition}\label{Def:EvZ}
Let $\AR$ be a ring and let $\KK$ be a subfield of $\AR$.
\begin{enumerate}
\item A function $\fct{\ev}{\AR\times\NN}{\KK}$ is called \notion{evaluation function} for $\AR$ if for all $f,g\in\AR$ and $c\in\KK$ there exists a $\lambda\in\NN$ with the following properties:
\begin{align}
\forall n\geq\lambda:&\,\ev(c,n)=c,\label{Equ:evC}\\
\forall n\geq\lambda:&\,\ev(f+g,n)=\ev(f,n)+\ev(g,n),\label{Equ:evPlus}\\
\forall n\geq\lambda:&\,\ev(f\,g,n)=\ev(f,n)\,\ev(g,n).\label{Equ:evTimes}
\end{align}
\item A function $\fct{L}{\AR}{\NN}$ is called \notion{operation-function} (in short \notion{$o$-function}) for $\AR$ and $\ev$ if for any $f,g\in\AR$ with $\lambda=\max(L(f),L(g))$ the properties~\eqref{Equ:evPlus} and~\eqref{Equ:evTimes} hold. 
If such a function exist, $\ev$ is also called \notion{operation-bounded} (in short \notion{$o$-bounded}).
In particular,
$\ev$ is called \notion{$o$-computable} if the function $\ev$ is computable and there is a computable $o$-function $L$ for $\ev$. 
\item $\fct{z}{\AR}{\NN}$ is called \notion{$z$-function} for $\ev$ if for any $f\in\AR\setminus\{0\}$ and for any integer $n\geq Z(f)$ we have $\ev(f,n)\neq0$. 
\end{enumerate}
\end{definition}

Later we will rely on the following simple observation.

\begin{lemma}\label{Lemma:ZFunctionForField}
Let $\fct{\ev}{\AR\times\NN}{\KK}$ be a an evaluation function for a ring $\AR$. 
\begin{enumerate}
 \item If $a\in\AR^*$, then there is a $\delta\in\NN$ such that $\ev(a,n)\neq0$ for all $n\geq\delta$.
 \item If $\AR$ is a field, there is a $z$-function for $\AR$.
\end{enumerate}
\end{lemma}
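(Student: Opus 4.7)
The plan is to use the multiplicative part of the evaluation axiom together with the identity $\ev(1,n)=1$, which is what the constant clause guarantees.

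For part~(1), let $a\in\AR^*$ and pick $b\in\AR$ with $ab=1$. The defining property of an evaluation function, applied to the triple $(f,g,c)=(a,b,1)$, yields a single threshold $\lambda\in\NN$ such that simultaneously
\begin{equation*}
\ev(1,n)=1\qquad\text{and}\qquad \ev(ab,n)=\ev(a,n)\,\ev(b,n)
\end{equation*}
hold for every $n\ge\lambda$. Since $ab=1$ in $\AR$, the two identities combine to $\ev(a,n)\,\ev(b,n)=1$, whence $\ev(a,n)\neq0$. Taking $\delta:=\lambda$ finishes this part. The only subtle point to flag is that we must really invoke a \emph{common} $\lambda$ for the constant clause~\eqref{Equ:evC} and the multiplicative clause~\eqref{Equ:evTimes}; fortunately this is exactly what the phrasing ``there exists a $\lambda\in\NN$ with the following properties'' in Definition~\ref{Def:EvZ}(1) provides.

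For part~(2), the idea is to apply part~(1) pointwise. If $\AR$ is a field, then every nonzero $f\in\AR$ is a unit, so part~(1) supplies some $\delta_f\in\NN$ with $\ev(f,n)\neq0$ for all $n\ge\delta_f$. Define $z\colon\AR\to\NN$ by choosing one such $\delta_f$ for each $f\in\AR\setminus\{0\}$ and setting $z(0):=0$ (the $z$-function condition from Definition~\ref{Def:EvZ}(3) only constrains values on nonzero elements, so this is immaterial). Then $z$ is a $z$-function for $\ev$ by construction.

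I do not anticipate any genuine obstacle here: both parts are direct consequences of the definitions once one notices that ``$a$ is a unit'' is the algebraic shadow of ``$\ev(a,n)$ is eventually nonzero''. The statement is in fact included at this point precisely because it will be reused silently later whenever a $z$-function on a field of constants or on a rational function subfield is needed, so the emphasis in writing it up is on brevity rather than on surmounting any difficulty.
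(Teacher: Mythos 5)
Your proof is correct and follows essentially the same route as the paper: part (1) is the paper's one-line computation $\ev(a,n)\,\ev(a^{-1},n)=\ev(a\,a^{-1},n)=\ev(1,n)=1$ for $n$ beyond the threshold supplied by Definition~\ref{Def:EvZ}(1), and part (2) is the observation that in a field every nonzero element is a unit. The remark about needing a common $\lambda$ for \eqref{Equ:evC} and \eqref{Equ:evTimes} is harmless but not essential, since one could in any case pass to the maximum of separate thresholds.
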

\begin{proof}
(1) For $a\in\AR^*$ there is a $\delta\in\NN$ with $\ev(a,n)\,\ev(\frac1a,n)=\ev(a\frac1a,n)=1$ for all $n\geq\delta$. Consequently, $\ev(a,n)\neq0$ for all $n\geq\delta$. (2) follows by (1).
\end{proof}


In short, a ring/field $\AR$ equipped with such an evaluation function $\ev$ will be called a \notion{sequence domain} and will be denoted by $(\AR,\ev)$; see also~\cite{Schneider:10b}. 
In all our examples we will always start with the following ground field.

\begin{example}\label{Exp:RationalDField1}
Take the rational function field $\FF=\KK(x)$ over a field $\KK$ with characteristic $0$ and consider the evaluation function $\fct{\ev}{\FF\times\NN}{\KK}$ defined by
\begin{equation}\label{Equ:EvalRat}
\ev(\tfrac{p}{q},n)=\begin{cases}
0&\text{if }q(n)=0\\
\frac{p(n)}{q(n)}&\text{if }q(n)\neq0
\end{cases}
\end{equation}
where $p,q\in\KK[x]$, $q\neq0$ and $p,q$ are co-prime;
here $p(n),q(n)$ is the usual evaluation of polynomials at $n\in\NN$. We define the $o$-function $L(\tfrac{p}{q})$ by taking the minimal $l\in\NN$ with $q(n+l)\neq0$ for all $n\in\NN$, and we define the $z$-function by $Z(\tfrac{p}{q})=L(p\,q)$.
\end{example}

Our product expressions will be rephrased in terms of Laurent polynomials with coefficients from a ring $\AR$.  More precisely, we will denote by $\AR\langle \hat{x}\rangle$ the ring of Laurent polynomials $\AR[\hat{x},\hat{x}^{-1}]$ and by $\AR\langle\hat{x}_1\rangle\dots\langle\hat{x}_r\rangle$ a tower of such ring extensions. 
In order to represent products of the form~\eqref{Equ:RootProd}, we will work also with ring extensions of the form $\AR[z]$ over a ring $\AR$ subject to a relation $z^{\lambda}=1$. 
Note that such a ring contains zero-divisors that originate from
$$(1-z)(1+z+z^2+\dots+z^{\lambda-1})=1-z^{\lambda}=0.$$
More precisely, 
by~\cite[Lemma~5.4 (parts 1,3)]{DR3} one can straightforwardly construct an evaluation function for single nested products.

\begin{lemma}\label{Lemma:evExt}
Let $\FF$ be a field with subfield $\KK$ and let $\fct{\ev}{\FF\times\NN}{\KK}$ be an evaluation function of $\FF$.
Let $\EE=\FF\langle\hat{x}_1\rangle\dots\langle\hat{x}_r\rangle[z]$ be a ring where the $\hat{x}_i$ are Laurent polynomial variables and $z$ is a ring generator subject to the relation $z^{\lambda}=1$.
Let $a_1,\dots,a_r\in\FF^*$ and let $l_1,\dots,l_r\in\NN$ where for all $1\leq i\leq r$ we have $\ev(a_i,n)\neq0$ with $n\geq l_i$. Furthermore, let $\rho\in\KK^*$ with $\rho^{\lambda}=1$. 
Then $\fct{\hat{\ev}}{\EE\times\NN}{\KK}$ defined by
\begin{equation}\label{Equ:EvForProdExt}
\hat{\ev}(f,n)=\hspace*{-1cm}
\sum_{(m_1,\dots,m_r,\mu)\in\ZZ^r\times\NN}\hspace*{-0.8cm}\ev(f_{(m_1,\dots,m_r,\mu)},n)\Big(\prod_{k=l_1}^n\ev(a_1,k)\Big)^{m_1}\dots\Big(\prod_{k=l_r}^n\ev(a_r,k)\Big)^{m_r}\big(\rho^{n}\big)^{\mu}
\end{equation}
with
\begin{equation}\label{Equ:fForEvDef}
f=\sum_{(m_1,\dots,m_r,\mu)\in\ZZ^r\times\NN}f_{(m_1,\dots,m_l,\mu)}\hat{x}_1^{m_1}\dots \hat{x}_r^{m_r}z^{\mu}\in\EE 
\end{equation}
is an evaluation function. In particular if $\fct{L}{\FF}{\NN}$ is an o-function for $\FF$, then
$\fct{\hat{L}}{\EE}{\NN}$ defined by
\begin{equation}\label{Equ:DefL'}
\hat{L}(f)=\max\begin{array}[t]{l}\{L(f_{(m_1,\dots,m_l,\mu)})\mid(m_1,\dots,m_r,\mu)\in\ZZ^r\times\NN\}\cup\\
\{l_i\mid1\leq i\leq r\text{ where $\hat{x}_i$ occurs in $f$}\}\cup\{1\mid f\text{ depends on }z\}
\end{array}
\end{equation}
for~\eqref{Equ:fForEvDef} 
is an $o$-func\-tion. If $L$ is computable, then $\hat{L}$ is computable.
\end{lemma}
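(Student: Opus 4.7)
The plan is to verify the three defining properties~\eqref{Equ:evC}--\eqref{Equ:evTimes} of an evaluation function for $\hat{\ev}$ one at a time, then to show that~\eqref{Equ:evPlus} and~\eqref{Equ:evTimes} already hold at $n\geq\max(\hat{L}(f),\hat{L}(g))$ (the $o$-function condition), and finally to note that computability of $\hat{L}$ is immediate from that of $L$ because~\eqref{Equ:DefL'} is a finite maximum of computable quantities.

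For a constant $c\in\KK\subseteq\FF\subseteq\EE$, the expansion~\eqref{Equ:fForEvDef} reduces to the single term $f_{(0,\dots,0,0)}=c$, so $\hat{\ev}(c,n)=\ev(c,n)$ and~\eqref{Equ:evC} is inherited from $\ev$. For additivity, $f+g$ has coefficient $f_{(\alpha,\mu)}+g_{(\alpha,\mu)}$ at each index $(\alpha,\mu)=(m_1,\dots,m_r,\mu)$; inserting this into~\eqref{Equ:EvForProdExt} and invoking~\eqref{Equ:evPlus} for each of the finitely many coefficient pairs yields $\hat{\ev}(f+g,n)=\hat{\ev}(f,n)+\hat{\ev}(g,n)$ once $n$ exceeds all thresholds involved.

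The heart of the lemma is multiplicativity. I expand $\hat{\ev}(f,n)\,\hat{\ev}(g,n)$ by distributing in $\KK$ and then match the resulting double sum with $\hat{\ev}(fg,n)$ using three compatibility identities, one for each kind of generator of $\EE$ over $\FF$: (a) $\ev(f_{(\alpha,\mu)},n)\,\ev(g_{(\beta,\nu)},n)=\ev(f_{(\alpha,\mu)}\,g_{(\beta,\nu)},n)$ by~\eqref{Equ:evTimes}; (b) $\big(\prod_{k=l_i}^n\ev(a_i,k)\big)^{m_i}\big(\prod_{k=l_i}^n\ev(a_i,k)\big)^{m_i'}=\big(\prod_{k=l_i}^n\ev(a_i,k)\big)^{m_i+m_i'}$, which is meaningful for all $m_i,m_i'\in\ZZ$ because the bracketed product is nonzero for $n\geq l_i$ by hypothesis on $a_i$; and (c) $\rho^{n(\mu+\nu)}=\rho^{n((\mu+\nu)\bmod\lambda)}$, encoding the defining relation $z^\lambda=1$ of $\EE$ since $\rho^\lambda=1$. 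Regrouping the double sum by the outer multi-index $(\alpha+\beta,(\mu+\nu)\bmod\lambda)$ and pulling $\ev$ back across the finite coefficient sums via~\eqref{Equ:evPlus} identifies the expression with $\hat{\ev}(fg,n)$ as defined by~\eqref{Equ:EvForProdExt}.

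For the $o$-function statement, let $n\geq\max(\hat{L}(f),\hat{L}(g))$. The first clause of~\eqref{Equ:DefL'} gives $n\geq L(f_{(\alpha,\mu)})$ and $n\geq L(g_{(\beta,\nu)})$ for every coefficient occurring in the finite expansions of $f$ and $g$, so the $o$-function property of $L$ on $\FF$ supplies~\eqref{Equ:evPlus} and~\eqref{Equ:evTimes} at $n$ for each pair of coefficients that entered the previous two paragraphs; the second clause guarantees $n\geq l_i$ whenever $\hat{x}_i$ actually appears, which legitimates the $\ZZ$-powers of $\prod_{k=l_i}^n\ev(a_i,k)$ used in identity~(b); the third clause covering the case that $z$ appears is automatic. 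Computability of $\hat{L}$ is then immediate. The main obstacle, I expect, is the bookkeeping in the multiplicativity step: collating the exact finite collection of $\FF$-elements on which $\ev$ must respect sums and products at the given $n$, and checking that the two bounds $\hat{L}(f)$ and $\hat{L}(g)$ simultaneously dominate the thresholds needed for all of them. Once that bookkeeping is in place, the algebraic content reduces to the three identities~(a)--(c).
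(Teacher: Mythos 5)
The paper offers no proof of this lemma at all --- it is imported from [DR3, Lemma~5.4 (parts 1,3)] --- so your direct verification is the right move, and the evaluation-function half of it is complete and correct: for Definition~\ref{Def:EvZ}(1) one only needs \emph{some} threshold $\lambda$ for each fixed pair $f,g$, the expansions are finite, and your identities (a)--(c) (with the nonvanishing of $\prod_{k=l_i}^n\ev(a_i,k)$ for $n\geq l_i$ legitimating the $\ZZ$-powers, and $\rho^{\lambda}=1$ legitimating the reduction of the $z$-exponent) let you take the maximum over the finitely many thresholds that arise.

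The step you defer as ``bookkeeping'' in the $o$-function claim is, however, the one place the argument does not close from the stated axioms. A coefficient of $fg$ is a convolution $\sum_{\alpha+\beta=\gamma}f_{\alpha}g_{\beta}$, and to identify $\ev\bigl(\sum_{\alpha+\beta=\gamma}f_{\alpha}g_{\beta},n\bigr)$ with $\sum\ev(f_{\alpha},n)\,\ev(g_{\beta},n)$ you must apply \eqref{Equ:evPlus} to pairs such as $\bigl(f_{\alpha_1}g_{\beta_1},\ \sum_{j\geq 2}f_{\alpha_j}g_{\beta_j}\bigr)$. The $o$-function property of $L$ guarantees this only for $n\geq\max\bigl(L(f_{\alpha_1}g_{\beta_1}),\,L(\sum_{j\geq2}f_{\alpha_j}g_{\beta_j})\bigr)$, and nothing in Definition~\ref{Def:EvZ} bounds $L$ at these products and partial sums by $\max(\hat{L}(f),\hat{L}(g))$, which only sees $L$ at the coefficients of $f$ and $g$ themselves; indeed, one can perturb $\ev$ at finitely many arguments of such an intermediate element (enlarging $L$ there so that it remains an $o$-function, while $\hat L(f)$ and $\hat L(g)$ are untouched) and defeat the claimed bound. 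What rescues the step is the monotonicity $L(u+v)\leq\max(L(u),L(v))$ and $L(uv)\leq\max(L(u),L(v))$, which holds for every $o$-function actually constructed in the paper (Examples~\ref{Exp:RationalDField1} and~\ref{Exp:MixedDF}) but is not part of the abstract definition; you should state it as a hypothesis (or verify it for the $L$ at hand) and then observe that with it every threshold occurring in your identities (a)--(c) is dominated by $\max(\hat L(f),\hat L(g))$. The computability remark is indeed immediate.
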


\noindent\textit{Remark.} If there is a computable $z$-function $Z$ for $\ev$ and $\FF$, the lower bounds $l_i$ can be computed with $l_i=Z(a_i)$. 

\begin{example}[Cont. Ex.~\ref{Exp:RationalDField1}]\label{Exp:EvMainExp}
We specialize the ground field $\FF=\KK(x)$ with the evaluation function~\eqref{Equ:EvalRat} and the corresponding $o$-function from Example~\ref{Exp:RationalDField1} by choosing the algebraic number field $\KK=\QQ(\iota)$. Now take the Laurent polynomial ring $\EE=\FF\langle\hat{x}_1\rangle\langle\hat{x}_2\rangle\langle\hat{x}_3\rangle\langle\hat{x}_4\rangle $. Following Lemma~\ref{Lemma:evExt} we can construct the evaluation function $\fct{\hat{\ev}}{\EE\times\NN}{\KK}$ for $\EE$ with $\hat{\ev}|_{\KK(x)\times\NN}=\ev$ and 
$\hat{\ev}(\hat{x}_i,n)=F_i(n)$
for $1\leq i\leq 4$ where the $F_i(n)$ are given in~\eqref{Equ:DefineFi}. In particular, we obtain the $o$-function defined by~\eqref{Equ:DefL'}. In the following examples we will work with the evaluation domain $(\EE,\ev)$ in which the product expressions under consideration can be represented formally.   
\end{example}

Let $\KK$ be a subfield of a ring $\AR$. An evaluation function $\fct{\ev}{\AR\times\NN}{\KK}$ for $\AR$ naturally produces sequences in the commutative ring $\KK^{\NN}$ with the identity element $\vect{1}=(1,1,1,\dots)$ with component-wise addition and multiplication. More precisely, we can define the function
$\fct{\tau}{\AR}{\KK^{\NN}}$ with
\begin{equation}\label{Equ:tauDef}
\tau(f)=(\ev(f,0),\ev(f,1),\ev(f,2),\dots).
\end{equation}
Due to \eqref{Equ:evPlus} and~\eqref{Equ:evTimes} the map
$\tau$ can be turned to a ring homomorphism by defining the equivalence relation 
$(f_n)_{n\geq0}\equiv (g_n)_{n\geq0}$ with $f_j=g_j$ for all $j\geq\lambda$ for some $\lambda\in\NN$; compare~\cite{AequalB}. It is easily seen that the set of equivalence classes $[f]$ with $f\in\KK^{\NN}$ forms 
with $[f]+[g]:=[f+g]$ and $[f][g]:=[f g]$
again a commutative ring with $[1]$ which we will denote by $\seqK$. In the following we will simply write $f$ instead of $[f]$. In this setting, $\fct{\tau}{\AR}{\seqK}$ forms a ring homomorphism. 

\subsection{The ring automorphism -- difference rings}\label{Subsec:automorphism}

So far, we elaborated how sequences can be formulated in a formal ring $\AR$ equipped with an evaluation function. 
Finally, we will introduce in addition a ring automorphism $\fct{\sigma}{\AR}{\AR}$ in order to model the shift operator acting on sequences.
Such a ring $\AR$ equipped with a ring automorphism $\sigma$ is also called a \notion{difference ring} denoted by $\dfield{\AR}{\sigma}$.

In order to construct difference rings iteratively, we will use the notion of difference ring/field extensions. $\dfield{\EE}{\sigma'}$ is called a \notion{difference ring extension} of $\dfield{\AR}{\sigma}$ if $\AR$ is a subring of $\EE$ and $\sigma'|_{\AR}=\sigma$. If $\EE$ and $\AR$ are fields, we call such an extension a \notion{difference field extension}. 
In the following we will need the following type of difference ring extensions $\dfield{\EE}{\sigma}$ of $\dfield{\AR}{\sigma}$; for more details see~\cite{DR1,DR3}.
\begin{itemize}
\item A \notion{$P$-extension (product-extension)} if $\EE=\AR\langle\hat{x}\rangle$ is a ring of Laurent polynomials with $\sigma(\hat{x})=\hat{\alpha}\,\hat{x}
$ for some unit $\hat{\alpha}\in\AR^*$. More precisely, for $f=\sum_{k=l}^rf_k\hat{x}_k$ with $l,r\in\ZZ$ we have $\sigma'(f)=\sum_{k=l}^r\sigma(f_k)\hat{\alpha}^k\hat{x}^k$. $\hat{x}$ is also called a \notion{$P$-monomial}.
\item An \notion{$S$-extension (sum-extension)} if $\EE=\AR[\hat{x}]$ is a polynomial ring with $\sigma(\hat{x})=\hat{x}+\hat{\beta}
$ for some $\hat{\beta}\in\AR$. More precisely, for $f=\sum_{k=0}^rf_k\hat{x}_k$ with $r\in\NN$ we have $\sigma'(f)=\sum_{k=0}^r\sigma(f_k)(\hat{x}+\hat{\beta})^k$. $\hat{x}$ is also called an \notion{$S$-monomial}.
\item An \notion{$A$-extension (algebraic extension)} of order $\lambda>1$ if $\EE=\AR[z]$ is a ring subject to the relation $z^{\lambda}=1$ (i.e., $\ord(z)=\lambda$) with $\sigma(z)=\rho\,z$ where $\rho\in\AR^*$ is a $\lambda$th root of unity (i.e., $\rho^{\lambda}=1$). $z$ is also called an \notion{$A$-monomial}. 
\end{itemize}
Since $\sigma'$ and $\sigma$ agree on $\AR$, we will not distinguish them anymore. In particular, a \notion{$PS$-extension} (resp.\ \notion{$AP$-extension/$APS$-extension}) is a $P$ or $S$-extension (resp.\ an $A$-extension or $P$-extension/an $A$-extension, $P$-extension or $S$-extension).
More generally we call $\dfield{\EE}{\sigma}$ a \notion{(nested) $P$-extension/$S$-extension/$A$-extension/$PS$-extension/$AP$-extension/$APS$-extension)} of $\dfield{\AR}{\sigma}$ if it is built by a tower of such extensions over a difference ring $\dfield{\AR}{\sigma}$.\\
Let $\dfield{\EE}{\sigma}$ be a difference field extension of a difference field $\dfield{\FF}{\sigma}$. It is a \notion{$P$-field extension (resp.\ $S$-field extension)} if $\EE=\FF(\hat{x})$ is a rational function and $\sigma(\hat{x})=\hat{\alpha}\,\hat{x}$ with $\hat{\alpha}\in\FF^*$ (resp.\ $\sigma(\hat{x})=\hat{x}+\hat{\beta}$ with $\hat{\beta}\in\FF$). More generally, a \notion{(nested) $S$-field extension/$P$-field extension/$PS$-field extension} is a tower of such extensions.\\ 
\textit{Remark.} The quotient field of a (nested) $P$-extension/$S$-extension/$PS$-extension of a difference field is a special class of $P$-field/$S$-field/$PS$-field extensions (the multiplicands/summands can be chosen only from a subring of the ground field). 

\begin{example}[Cont. Ex.~\ref{Exp:EvMainExp}]\label{Exp:MainPExt}
Consider the difference field $\dfield{\FF}{\sigma}$ with $\FF=\KK(x)$ where $\KK=\QQ(\iota)$ and with the field automorphism $\fct{\sigma}{\FF}{\FF}$ defined by $\sigma|_{\KK}=\id$ and $\sigma(x)=x+1$. In the following we call this difference field also the \notion{rational difference field}. Note that $\dfield{\FF}{\sigma}$ is an $S$-field extension of $\dfield{\KK}{\sigma}$.
In Example~\ref{Exp:EvMainExp} we have introduced already the Laurent polynomial ring $\EE=\FF\langle \hat{x}_1\rangle\langle\hat{x}_2\rangle\langle\hat{x}_3\rangle\langle\hat{x}_4\rangle$ where the products~\eqref{Equ:DefineFi} 
are represented by $\hat{x_i}$ for $i=1,2,3,4$ and the evaluation function~\eqref{Equ:EvalhatXi}. We can now extend the automorphism $\sigma$ from $\FF$ to $\EE$ by a tower of $P$-extensions with 
$\sigma(\hat{x}_i)=\hat{\alpha}_i\,\hat{x}_i$ for $1\leq i\leq 4$ where
\begin{equation}\label{Equ:MainAlpha}
\begin{aligned}
\hat{\alpha}_1&=\frac{-13122 (x+1)
   (x+2)}{(x+4)^3},&
\hat{\alpha}_2&=\frac{26244 (x+1)^2
   (x+3)^2}{(x+4)^2},\\
\hat{\alpha}_3&=\frac{\iota (x+1) (x+3)^3}{729
   (x+6)},&
\hat{\alpha}_4&=\frac{-162 (x+1) (x+3)}{x+6}.
\end{aligned}
 \end{equation}
Note that $F_i(n+1)=\hat{\alpha}_i(n)\,F_i(n)$ for $1\leq i\leq4$, i.e., $\sigma$ acting on $\hat{x}_i$ models the shift operator applied to $F_i(n)$. 
\end{example}
Similarly, $S$-extensions are used to model indefinite nested sums. Since we focus mainly on products, we skip these aspects and refer the reader to~\cite{Schneider:08c,Schneider:10a,DR1,DR3}. 

In order to solve the Problem~\RPTB\ introduced in Section~\ref{Sec:Intro},
the difference ring/field extensions from above have to be refined. In this regard, we introduce the \notion{set of constants} $$\const{\AR}{\sigma}=\{c\in\AR\mid\sigma(c)=c\}$$
of a difference ring $\dfield{\AR}{\sigma}$. In general, $\KK=\const{\AR}{\sigma}$ is a subring of $\AR$ that contains $\QQ$ as subfield. In particular, if $\FF$ is a field, $\KK$ is automatically a subfield of $\FF$.
In this article we will take care that $\KK$ is always a subfield of $\AR$ which we will also call the \notion{constant field of $\dfield{\AR}{\sigma}$}. 

\begin{definition}
A \notion{(nested) \piE-extension} (resp.\ \notion{\sigmaE-/$R$-/$R\Pi$-/$R\Sigma$-/$\Sigma\Pi$-/$R\Pi\Sigma$- exten\-sion}) $\dfield{\EE}{\sigma}$ of $\dfield{\AR}{\sigma}$ is a $P$-extension (resp.\ $S$-/$A$-/$AP$-/$AS$-/$SP$-/$APS$-exten\-sion) with $\const{\EE}{\sigma}=\const{\AR}{\sigma}$. In this case, an $A$-/$P$-/$S$-monomial is also called an \notion{$R$-/\piE-/\sigmaE-monomial}. Similarly, a \notion{(nested) \piE-field extension} (resp.\  \notion{(nested) \sigmaE-/\pisiE-field extension}) is a $P$-field extension (resp.\ $S$-/$PS$-field extension) where the constants remain unchanged. 
Finally, a \notion{\pisiE-field} $\dfield{\FF}{\sigma}$ over $\KK$ is a (nested) \pisiE-field extension of $\dfield{\KK}{\sigma}$ with $\const{\FF}{\sigma}=\const{\KK}{\sigma}$.   
\end{definition}

\begin{example}\label{Exp:MainPExtSpec}
Consider the rational difference field $\dfield{\FF}{\sigma}$ from Example~\ref{Exp:MainPExt}. It is not difficult to see that $\const{\FF}{\sigma}=\KK$. Consequently, $\dfield{\FF}{\sigma}$ is a \pisiE-field over $\KK$.
\end{example}

\noindent We remark that these extensions are motivated by Karr's work~\cite{Karr:81,Karr:85}.
More precisely, the \piE-field and \sigmaE-field extensions and in particular \pisiE-fields have been introduced in~\cite{Karr:81,Karr:85} and explored further, e.g., in~\cite{Bron:00,Schneider:01,Schneider:08c,Schneider:10a}.

\begin{remark}\label{Remark:Rearrange}
In the following we will restrict to $AP$-extensions $\dfield{\AR}{\sigma}$ of a difference field $\dfield{\FF}{\sigma}$ with $\AR=\FF\langle x_1\rangle\dots\langle x_s\rangle[z_1]\dots[z_l]$ where the $\frac{\sigma(x_i)}{x_i}\in\FF^*$ for $1\leq i\leq s$ are $P$-monomials and the $\frac{\sigma(z_i)}{z_i}\in\const{\FF}{\sigma}^*$ for $1\leq i\leq l$ are $A$-monomials. One can rearrange the generators in any order and obtains again an $AP$-extension. For instance, $\dfield{\AR'}{\sigma}$ is an $AP$-extension of $\dfield{\FF}{\sigma}$ with $\AR'=\FF[z_1]\dots[z_l]\langle x_1\rangle\dots\langle x_s\rangle$. In particular, if $\const{\AR}{\sigma}=\const{\FF}{\sigma}$ holds then also $\const{\AR'}{\sigma}=\const{\FF}{\sigma}$ holds. I.e., if $\dfield{\AR}{\sigma}$ is an $R\Pi$-extension of $\dfield{\FF}{\sigma}$, $\dfield{\AR'}{\sigma}$ is an $R\Pi$-extension of $\dfield{\FF}{\sigma}$. 
\end{remark}

\subsection{The compatibility of $\ev$ and $\sigma$}\label{Subsec:compatibility}

Let $\dfield{\AR}{\sigma}$ be a difference ring with constant field $\KK$. $\fct{\ev}{\AR\times\NN}{\KK}$ is called an \notion{evaluation function for $\dfield{\AR}{\sigma}$} if $\ev$ is an evaluation function for the ring $\AR$ and $\ev$ and $\sigma$ satisfy the following compatibility property: for all $f\in\AR$ and $l\in\ZZ$ we have
\begin{equation}\label{Equ:evShift}
\forall n\geq\lambda:\,\ev(\sigma^l(f),n)=\ev(f,n+l)
\end{equation}
for some $\lambda\in\NN$. $L$ is called an \notion{$o^{\sigma}$-function for $\ev$} if $L$ is and $o$-function for $\ev$ and for all $f\in\AR$ and $l\in\ZZ$  with $\lambda=L(f)+\max(0,-l)$ property~\eqref{Equ:evShift} holds. $\ev$ is called \notion{operation bounded} for $\dfield{\AR}{\sigma}$ if there is such a function $L$: In particular, 
$\ev$ is called \notion{$o^{\sigma}$-computable} if $\ev$ is a computable function and there is a computable $o^{\sigma}$-function $\fct{L}{\AR}{\NN}$ for $\ev$.

\begin{example}\label{Exp:RatDFPlusEv}
Consider the rational difference field $\dfield{\FF}{\sigma}$ with $\FF=\KK(x)$, $\KK=\const{\KK}{\sigma}$ and $\sigma(x)=x+1$. Then~\eqref{Equ:EvalRat} is an evaluation function for $\dfield{\FF}{\sigma}$ and the function $L$ from Example~\ref{Exp:RationalDField1} is an $o^{\sigma}$-function. 
\end{example}

\noindent Using~\cite[Lemma~5.4 (parts 1,3)]{DR3} Lemma~\ref{Lemma:evExt} can be extended to difference rings.

\begin{lemma}\label{Lemma:LiftEvForShift}
Let $\dfield{\FF}{\sigma}$ be a difference field with $\KK=\const{\FF}{\sigma}$ equipped with an evaluation function $\fct{\ev}{\FF\times\NN}{\KK}$.
Let $\dfield{\EE}{\sigma}$ with $\EE=\FF\langle\hat{x}_1\rangle\dots\langle\hat{x}_r\rangle[z]$ be an $AP$-extension of $\dfield{\FF}{\sigma}$ where the $\hat{x}_i$ with $1\leq i\leq r$ are $P$-monomials  with $a_i=\frac{\sigma(\hat{x}_i)}{\hat{x}_i}$ and $z$ is an $A$-monomial with $\rho=\frac{\sigma(z)}{z}\in\KK^*$. For $1\leq i\leq r$, let $l_i\in\NN$ such that $\ev(a_i,n)\neq0$ holds for all $n\geq l_i$. Then $\fct{\hat{\ev}}{\EE\times\NN}{\KK}$ defined by~\eqref{Equ:EvForProdExt} is an evaluation function for $\dfield{\EE}{\sigma}$. If $L$ is an $o^{\sigma}$-function for $ev$, $\hat{L}$ defined in~\eqref{Equ:DefL'} is an $o^{\sigma}$-function for $\hat{L}$. If $L$ is computable, then $\hat{L}$ is computable.
\end{lemma}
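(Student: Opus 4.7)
\medskip

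\noindent\textit{Proof plan.}
By Lemma~\ref{Lemma:evExt}, the map $\hat{\ev}$ already satisfies the ring-theoretic axioms of an evaluation function (properties~\eqref{Equ:evC}, \eqref{Equ:evPlus}, \eqref{Equ:evTimes}) with $\hat{L}$ as an $o$-function, and the computability claim is also immediate from that lemma. Hence the only new content is the shift-compatibility~\eqref{Equ:evShift}: one must show that for every $f\in\EE$ and every $l\in\ZZ$, one has $\hat{\ev}(\sigma^l(f),n)=\hat{\ev}(f,n+l)$ for all $n\geq\hat{L}(f)+\max(0,-l)$, so that $\hat{L}$ serves simultaneously as an $o^{\sigma}$-function.

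My plan is to exploit that both $\sigma^l$ and (for sufficiently large $n$) $\hat{\ev}(\cdot,n)$ respect the ring operations, so that it suffices to verify compatibility on a set of generators of $\EE$ over $\ZZ$, namely the elements of $\FF$ together with the Laurent generators $\hat{x}_1^{\pm 1},\dots,\hat{x}_r^{\pm 1}$ and the algebraic generator $z$. Writing a general $f\in\EE$ in the form~\eqref{Equ:fForEvDef}, applying $\sigma^l$ termwise, and then invoking the multiplicativity/additivity of $\hat{\ev}$ (valid beyond $\hat{L}(f)$ by construction of $\hat{L}$), reduces the compatibility of $f$ to that of its building blocks; the bound~\eqref{Equ:DefL'} already incorporates the lower indices $l_i$ and the $o$-function values $L(f_{(\cdot)})$ of the coefficients, so the required threshold $\hat{L}(f)+\max(0,-l)$ comes out of the reduction automatically. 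For $f\in\FF$ the claim is just the given compatibility of $\ev$ with $\sigma$ on $\FF$.

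It remains to verify the property on the generators. For the $A$-monomial $z$ one has $\sigma^l(z)=\rho^l z$ and $\hat{\ev}(z,n)=\rho^n$ by the definition~\eqref{Equ:EvForProdExt}, so
\[
\hat{\ev}(\sigma^l(z),n)=\rho^l\,\rho^n=\rho^{n+l}=\hat{\ev}(z,n+l),
\]
which is valid for all $n\geq 0$ (and $n+l\geq 0$). For a $P$-monomial $\hat{x}_i$ one computes inductively
\[
\sigma^l(\hat{x}_i)=\Bigl(\sprod{j=0}{l-1}\sigma^j(a_i)\Bigr)\,\hat{x}_i \qquad (l\geq 0),
\]
and analogously $\sigma^l(\hat{x}_i)=\bigl(\prod_{j=1}^{-l}\sigma^{-j}(a_i)\bigr)^{-1}\hat{x}_i$ for $l<0$, where one uses $a_i\in\FF^*$ and Lemma~\ref{Lemma:ZFunctionForField}(1) to invert inside $\FF$. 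Applying $\hat{\ev}$ and using the shift-compatibility of $\ev$ on $\FF$ in the form $\ev(\sigma^j(a_i),n)=\ev(a_i,n+j)$ (valid for $n\geq L(a_i)+\max(0,-j)$) turns the factors $\sigma^j(a_i)$ into successive evaluations of $a_i$ at the required shifted arguments; these combine with the product $\prod_{k=l_i}^n\ev(a_i,k)$ coming from $\hat{\ev}(\hat{x}_i,n)$ to yield $\prod_{k=l_i}^{n+l}\ev(a_i,k)=\hat{\ev}(\hat{x}_i,n+l)$. The threshold built into~\eqref{Equ:DefL'}, namely the presence of $l_i$ and of $L$ applied to the coefficients, together with the additive term $\max(0,-l)$, ensures that every evaluation invoked along the way is already in the region where $\ev$ behaves as required.

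The main bookkeeping obstacle is purely indexing: one must track that when $l<0$ the shifted argument $n+l$ stays in the domain where the product $\prod_{k=l_i}^{n+l}\ev(a_i,k)$ is well-defined and nonzero (using the assumption $\ev(a_i,n)\neq 0$ for $n\geq l_i$), and that the bound $\hat{L}(f)+\max(0,-l)$ dominates all the individual bounds produced on the generators and on the coefficient elements from $\FF$. Computability of $\hat{L}$ is inherited directly from that of $L$, since~\eqref{Equ:DefL'} is a finite maximum over data already available from the representation of $f$.
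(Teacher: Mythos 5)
Your overall strategy is the right one and, as far as one can tell, coincides with what the paper relies on: the paper gives no proof of this lemma beyond the citation of~\cite[Lemma~5.4 (parts 1,3)]{DR3}, and the substance is exactly your reduction --- the ring-theoretic axioms and the computability claim come from Lemma~\ref{Lemma:evExt}, so the only new point is the compatibility~\eqref{Equ:evShift}, which one checks on $\FF$, on the generators $\hat{x}_1^{\pm1},\dots,\hat{x}_r^{\pm1}$ and on $z$, and then propagates through the formula~\eqref{Equ:EvForProdExt} using the (eventual) multiplicativity of $\hat{\ev}$ and the $o^{\sigma}$-property of $L$ on the coefficients.

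There is, however, one concrete slip in your verification for the $P$-monomials, which is the heart of the argument. With $a_i=\frac{\sigma(\hat{x}_i)}{\hat{x}_i}$ and $\hat{\ev}(\hat{x}_i,n)=\prod_{k=l_i}^n\ev(a_i,k)$ taken literally from~\eqref{Equ:EvForProdExt}, your claimed identity
\[
\Bigl(\textstyle\prod_{j=0}^{l-1}\ev(a_i,n+j)\Bigr)\,\textstyle\prod_{k=l_i}^{n}\ev(a_i,k)\;=\;\textstyle\prod_{k=l_i}^{n+l}\ev(a_i,k)
\]
is false: the left-hand side contains the factor $\ev(a_i,n)$ twice and is missing $\ev(a_i,n+l)$. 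Already for $l=1$ one gets $\hat{\ev}(\sigma(\hat{x}_i),n)=\ev(a_i,n)\,\hat{\ev}(\hat{x}_i,n)$, whereas $\hat{\ev}(\hat{x}_i,n+1)=\ev(a_i,n+1)\,\hat{\ev}(\hat{x}_i,n)$. The repair is the convention used everywhere else in the paper (Problem~\DR, Lemma~\ref{Lemma:ExtractEvFunction}, Example~\ref{Exp:IntroExp}): the $P$-monomial must be evaluated as $\hat{\ev}(\hat{x}_i,n)=\prod_{k=l_i}^n\ev(a_i,k-1)$, i.e.\ the element fed into~\eqref{Equ:EvForProdExt} is $\sigma^{-1}$ of the $\sigma$-quotient; with that convention the telescoping closes. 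This mismatch is partly inherited from the statement of~\eqref{Equ:EvForProdExt} itself, but your proof must make the correct convention explicit, since otherwise the generator computation does not go through. A smaller point: for the claim that $\hat{L}$ itself (not merely some larger bound) is an $o^{\sigma}$-function, the thresholds $L(a_i)$ and those of the products of shifted $a_i$'s entering your computation must be dominated by $\hat{L}$; the definition~\eqref{Equ:DefL'} only records $l_i$, so you need either $l_i\geq L(a_i)$ (as is arranged in Subsection~\ref{Subsec:ProblemDescription}) or a corresponding enlargement of~\eqref{Equ:DefL'}, and this deserves a sentence rather than the appeal to the bound coming out ``automatically.''
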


\begin{example}[Cont.~Example~\ref{Exp:MainPExt}]\label{Exp:MainPExtPlusEv}
Consider the $P$-extension
$\dfield{\EE}{\sigma}$ of $\dfield{\FF}{\sigma}$ from Example~\ref{Exp:MainPExt} with $\EE=\FF\langle\hat{x}_1\rangle\langle\hat{x}_2\rangle\langle\hat{x}_3\rangle\langle\hat{x}_4\rangle$ and $\sigma(\hat{x}_i)=\hat{\alpha}_i\,\hat{x}_i$ for $1\leq i\leq 4$ where the $\hat{\alpha}_i$ are given in~\eqref{Equ:MainAlpha}, and take the evaluation function $\hat{\ev}$ defined in Example~\ref{Exp:EvMainExp}.
Then by Lemma~\ref{Lemma:LiftEvForShift} it follows that $\hat{\ev}$ is an evaluation function for $\dfield{\EE}{\sigma}$. 
\end{example}

It will be convenient to use the following convention.
Let $\dfield{\AR}{\sigma}$ be a difference ring with constant field $\KK$ and let $\ev$ be an evaluation function for $\dfield{\AR}{\sigma}$. We say that a \notion{sequence $(F(n))_{n\geq0}$ is modeled by $a\in\AR$} if there exists a $\lambda\in\NN$ with $\ev(a,n)=F(n)$ for all $n\geq\lambda$.  

\begin{example}[Cont.\ Example~\ref{Exp:MainPExtPlusEv}]
For $i=1,2,3,4$ the sequences $(F_i(n))_{n\geq0}$ are modeled by $x_i$, respectively. 
\end{example}

Let $\ev$ be an evaluation function for $\dfield{\AR}{\sigma}$ with constant field $\KK$. Then the ring homomorphism $\fct{\tau}{\AR}{\seqK}$ defined by~\eqref{Equ:tauDef} turns to a difference ring homomorphism.

More precisely, let $\dfield{\BB}{\sigma'}$ be another difference ring. Then a ring homomorphism (resp.\ injective ring homomorphism) $\fct{\lambda}{\AR}{\BB}$ is called a \notion{difference ring homomorphism} (resp \notion{difference ring embedding}) if for all $f\in\AR$ we have $\lambda(\sigma(f))=\sigma'(\lambda(f))$.

To turn $\fct{\tau}{\AR}{\seqK}$ to such a difference ring homomorphism, we consider the shift operation $\fct{\shiftS}{\seqK}{\seqK}$ defined by
$$\shiftS((a_0,a_1,a_2,\dots))=(a_1,a_2,\dots).$$
Then it can be easily verified that $\shiftS$ forms a ring automorphism and thus $\dfield{\seqK}{\shiftS}$ forms a difference ring; compare~\cite{AequalB}. In particular, due to~\eqref{Equ:evShift} it follows that the ring homomorphism $\tau$ defined by~\eqref{Equ:tauDef} is a difference ring homomorphism, i.e., we have
$$\forall f\in\AR:\quad \tau(\sigma(f))=\shiftS(\tau(f)).$$
Furthermore, by~\eqref{Equ:evC} it follows that $\tau(c)=\vect{c}$ with $\vect{c}:=(c,c,c,\dots)$ for all $c\in\KK$. Such a map will be also called $\KK$-homomorphism.

\begin{definition}\label{Def:ComputableTau}
Let $\dfield{\AR}{\sigma}$ be a difference ring with constant-field $\KK$. A function $\fct{\tau}{\AR}{\seqK}$ 
is called \notion{$\KK$-homomorphism} (resp.\ \notion{$\KK$-embedding}) if it is a difference ring homomorphism (resp.\ difference ring embedding) where $\tau(c)=\vect{c}$ for all $c\in\KK$. A $\KK$-homomorphism (resp.\ $\KK$-embedding) is called \notion{$o^{\sigma}$-computable} if there is a computable evaluation function $\fct{\ev}{\AR\times\NN}{\KK}$ with a computable $o^{\sigma}$-function $\fct{L}{\AR}{\NN}$ where~\eqref{Lemma:evExt} holds
for all $f\in\AR$.
\end{definition}

\begin{example}\label{Exp:RatDFPlusEv2}
Consider the rational difference field $\dfield{\FF}{\sigma}$ with $\FF=\KK(x)$ and $\sigma(x)=x+1$ from Example~\ref{Exp:RatDFPlusEv} equipped with the evaluation function~\eqref{Equ:EvalRat}.
Then $\fct{\tau}{\FF}{\seqK}$ given by~\eqref{Equ:tauDef} is a $\KK$-homomorphism. In particular, since any non-zero polynomial in $\KK[x]$ has only finitely many roots, $\tau(\tfrac{p}{q})=\vect{0}$ with $p\in\KK[x]$ and $q\in\KK[x]\setminus\{0\}$ iff $p=0$. Thus $\tau$ is injective, i.e., it is a $\KK$-embedding.
\end{example}

\noindent More generally, we will consider the class of mixed-rational difference fields~\cite{Bauer:99}.

\begin{example}\label{Exp:MixedDF}
Take the rational function field $\KK=\KK'(q_1,\dots,q_v)$ 
and  consider the rational function field $\FF=\KK(x,y_1,\dots,y_v)$ on top together with the field automorphism $\fct{\sigma}{\FF}{\FF}$ defined by $\sigma_{\KK}=\id$, $\sigma(x)=x+1$ and $\sigma(y_i)=q_i\,y_i$ for $1\leq i\leq v$. We noted already that $\const{\KK(x)}{\sigma}=\KK$; see Example~\ref{Exp:MainPExtSpec}.
It is not too difficult to see that there does not exist a $g\in\KK(x)^*$ and $(0,\dots,0)\neq(m_1,\dots,m_v)\in\ZZ^v$ with $\sigma(g)=q_1^{m_1}\dots q_v^{m_v}\,g$. Hence by Proposition~\ref{Prop:PiEquivalences} below it follows that $\const{\FF}{\sigma}=\const{\KK(x)}{\sigma}=\KK$. In short, $\dfield{\FF}{\sigma}$ is a \pisiE-field over $\KK$. $\dfield{\FF}{\sigma}$ is also called the \notion{mixed-rational difference field}.
Furthermore, for $f=\frac{p}{q}$ with $p,q\in\KK[x,y_1,\dots,y_v]$, $q\neq0$ and $p,q$ being co-prime we define
\begin{equation}\label{Equ:EvalMixed}
\ev(f,n)=\begin{cases}
0&\text{if }q(n,q_1^n,\dots,q_v^n)=0\\
\frac{p(n,q_1^n,\dots,q_v^n)}{q(n,q_1^n,\dots,q_v^n)}&\text{if }q(n,q_1^n,\dots,q_v^n)\neq0.
\end{cases}
\end{equation}
By~\cite[Sec.~3.7]{Bauer:99} there is a minimal
$\delta\in\NN$ with $q(n,q_1^n,\dots,q_v^n)\neq0$ for all $n\geq\delta$.
Hence we can define the $o^{\sigma}$-function $\fct{L}{\FF}{\NN}$ by $L(f)=\delta$; a $z$-function $\fct{Z}{\FF}{\NN}$ can be defined by $Z(f)=L(p\,q)$. 
Finally, consider the difference ring homomorphism $\fct{\tau}{\FF}{\seqK}$ defined by~\eqref{Equ:tauDef}.
Suppose that $\tau(f)=\vect{0}$ with $f=\frac{p}{q}$. Since $q(n,q_1^{n},\dots,q_v^{n})$ is non-zero for all $n\geq Z(f)$ and $p(n,q_1^{n},\dots,q_v^{n})$ is non-zero for all $n\geq Z(f)$ provided that $p\neq0$, it follows that $p=0$. Hence $\tau$ is injective.\\ 
We will restrict for algorithmic reasons to the case that $\KK'$ is a rational function field over an algebraic number field.
In this case, $L$ and $Z$ are computable by~\cite[Sec.~3.7]{Bauer:99}.
Summarizing, we obtain a $o^{\sigma}$-computable $\KK$-embedding $\fct{\tau}{\FF}{\seqK}$.
\end{example}

\begin{example}[Cont.\ Ex.~\ref{Exp:MainPExtPlusEv}]\label{Exp:MainPExtTau}
Consider the difference ring $\dfield{\EE}{\sigma}$ from Example~\ref{Exp:MainPExtPlusEv} equipped with the evaluation function defined in Example~\ref{Exp:EvMainExp} and the corresponding computable $o^{\sigma}$-function $L$ and the computable $z$-function $Z$. 
Thus $\fct{\hat{\tau}}{\EE}{\seqK}$ defined by $\hat{\tau}(f)=(\hat{\ev}(f,n))_{n\geq0}$ for $f\in\EE$ is a $o^{\sigma}$-computable $\KK$-homomorphism.
\end{example}

So far, we exploited the fact that an evaluation function produces a $\KK$-homo\-morphism. Later we will use the reverse construction: for a $\KK$-homomorphism of a $P$-extension there exists an evaluation function that is based on product evaluations. More precisely, \cite[Lemma~5.4 (parts 2,3)]{DR3} provides the following result.

\begin{lemma}\label{Lemma:ExtractEvFunction}
Let $\dfield{\FF}{\sigma}$ be a difference field with constant field $\KK$ equipped with a $z$-function $\fct{Z}{\FF}{\NN}$. Let $\dfield{\EE}{\sigma}$ be an $AP$-extension of $\dfield{\FF}{\sigma}$ with $\EE=\FF\langle\hat{x}_1\rangle\dots\langle \hat{x}_r\rangle[z]$ where for $1\leq i\leq r$ the $\hat{x}_i$ with $\hat{\alpha}_i=\frac{\sigma(\hat{x}_i)}{\hat{x}_i}\in\FF^*$ are $P$-monomials and $z$ with $\rho=\frac{\sigma(z)}{z}\in\KK^*$ is an $A$-monomial of order $\lambda$. Suppose that there is a $\KK$-homomorphism $\fct{\hat{\tau}}{\EE}{\seqK}$ and let
$\fct{\ev}{\FF\times\NN}{\KK}$ be an evaluation function with $\hat{\tau}(f)=(\ev(f,n))_{n\geq0}$ for $f\in\FF$. Then there is an evaluation function $\fct{\hat{ev}}{\EE\times\NN}{\KK}$ defined by $\hat{\ev}|_{\FF\times\NN}=\ev$,
$$\hat{\ev}(\hat{x}_i,n)=\kappa_i\,\prod_{i=l_i}^n\ev(\hat{\alpha}_i,k-1),\quad1\leq i\leq r$$
for some $\kappa_i\in\KK^*$ and\footnote{Note that $\hat{\ev}(\hat{x}_i,n)\neq0$ for all $n\geq l_i$ by part~(1) of Lemma~\ref{Lemma:ZFunctionForField}.} $l_i\in\NN$, and 
$\hat{\ev}(z,n)=\langle c\,\rho^n\rangle_{n\geq0}$
for some $c\in\KK^*$ with $c^\lambda=1$ such that
$\hat{\tau}(f)=\langle\hat{\ev}(f,n)\rangle_{n\geq0}$ holds
for all $f\in\EE$.
Furthermore, if there is a computable $o^{\sigma}$-function $\fct{L}{\FF}{\NN}$ for $\ev$ and $Z$ is computable, there is a computable $o^{\sigma}$-function $\fct{\hat{L}}{\EE}{\NN}$ for $\hat{\ev}$. In particular, $c\in\KK^*$, and by choosing $l_i=\max(Z(\hat{\alpha}_i)-1,L(\hat{\alpha}_i))$ for $1\leq i\leq r$ the $\kappa_i\in\KK^*$ can be computed. 
\end{lemma}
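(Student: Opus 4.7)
The plan is to extract the sequences $\hat\tau(\hat x_i)$ and $\hat\tau(z)$ from the defining relations of the $AP$-extension, convert them into evaluation data on the generators, and extend via Lemma~\ref{Lemma:LiftEvForShift}.

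First I would use $\hat\tau\circ\sigma=\shiftS\circ\hat\tau$ to turn $\sigma(\hat x_i)=\hat\alpha_i\,\hat x_i$ into the pointwise recurrence $\hat\tau(\hat x_i)(n+1)=\ev(\hat\alpha_i,n)\,\hat\tau(\hat x_i)(n)$, which holds for all sufficiently large $n$ (namely, once the homomorphism identities~\eqref{Equ:evPlus}--\eqref{Equ:evTimes} and the shift identity~\eqref{Equ:evShift} apply to $\hat\alpha_i$). Since $\hat x_i\in\EE^*$, the sequence $\hat\tau(\hat x_i)$ is a unit in $\seqK$, so there is an $l_i\in\NN$ such that the $l_i$-th entry, which I call $\kappa_i$, is nonzero and the recurrence holds from $n=l_i-1$ onwards; telescoping then gives the claimed formula $\hat\tau(\hat x_i)(n)=\kappa_i\prod_{k=l_i}^n\ev(\hat\alpha_i,k-1)$ for $n\geq l_i$. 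The analogous argument for $z$, using $\sigma(z)=\rho\,z$ together with $\hat\tau(z^\lambda)=\hat\tau(1)=\vect{1}$, produces $\hat\tau(z)(n)=c\,\rho^n$ for large $n$ with some $c\in\KK^*$ satisfying $c^\lambda=1$.

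With these formulas in hand I would set $\hat\ev|_{\FF\times\NN}=\ev$, $\hat\ev(\hat x_i,n)=\kappa_i\prod_{k=l_i}^n\ev(\hat\alpha_i,k-1)$ and $\hat\ev(z,n)=c\,\rho^n$, and extend to the whole of $\EE$ by the formula~\eqref{Equ:EvForProdExt} (adapted so that the initial constant factor $\kappa_i$ is absorbed). Lemma~\ref{Lemma:LiftEvForShift} then certifies that the result is an evaluation function for $\dfield{\EE}{\sigma}$ and supplies the $o^\sigma$-function $\hat L$ via~\eqref{Equ:DefL'}. The identity $\hat\tau(f)=(\hat\ev(f,n))_{n\geq0}$ for all $f\in\EE$ is obtained from the observation that both sides are $\KK$-homomorphisms $\EE\to\seqK$ that agree on $\FF$ by hypothesis and on each generator $\hat x_i$ and on $z$ by construction; since these elements generate $\EE$ over $\FF$, the two homomorphisms coincide.

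The main subtlety, and the focus of the computability claim, lies in choosing $l_i$ algorithmically. The prescription $l_i=\max(Z(\hat\alpha_i)-1,L(\hat\alpha_i))$ uses $Z$ to force $\ev(\hat\alpha_i,m)\ne0$ from the required index on and $L$ to force the homomorphism and shift identities for $\hat\alpha_i$ to hold from that index on, whence the telescoping identity is valid at every $n\geq l_i$. Once $l_i$ is fixed, $\kappa_i\in\KK^*$ is read off as the $l_i$-th entry of the (pointwise accessible) sequence $\hat\tau(\hat x_i)$; its nonvanishing is guaranteed by $\hat x_i\in\EE^*$ via Lemma~\ref{Lemma:ZFunctionForField}(1). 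The constant $c$ is extracted analogously from a single entry of $\hat\tau(z)$, with $c^\lambda=1$ forced by the algebraic relation $z^\lambda=1$. Computability of $\hat L$ then follows from~\eqref{Equ:DefL'}, completing the proof.
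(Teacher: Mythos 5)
The paper does not prove this lemma at all: it is imported verbatim from \cite[Lemma~5.4 (parts 2,3)]{DR3}, so your blind reconstruction is the only argument on the table here, and in outline it is the right one (and almost certainly the one in the cited source). The chain ``$\hat{\tau}\circ\sigma=\shiftS\circ\hat{\tau}$ gives the first-order recurrence $\hat{\tau}(\hat{x}_i)(n+1)=\ev(\hat{\alpha}_i,n)\,\hat{\tau}(\hat{x}_i)(n)$ eventually, $\hat{x}_i\in\EE^*$ makes the sequence eventually nonzero, telescoping gives the product form; $z^\lambda=1$ and $\rho^\lambda=1$ force $c^\lambda=1$; extend by Lemma~\ref{Lemma:evExt}/\ref{Lemma:LiftEvForShift} and conclude by comparing two ring homomorphisms on generators'' is correct. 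Two small index slips: with the product $\prod_{k=l_i}^n\ev(\hat{\alpha}_i,k-1)$ the constant $\kappa_i$ is the value of $\hat{\tau}(\hat{x}_i)$ at $n=l_i-1$ (where the product is empty), not the $l_i$-th entry; and you should say a word about why multiplying the generator evaluations of Lemma~\ref{Lemma:LiftEvForShift} by the constants $\kappa_i$ and $c$ preserves the evaluation-function and $o^\sigma$ properties (e.g.\ it is precomposition with the difference ring automorphism $\hat{x}_i\mapsto\kappa_i\hat{x}_i$, $z\mapsto c\,z$, which is legitimate because $c^\lambda=1$).

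The one point that is genuinely loose is the effectivity of $\kappa_i$ and $c$. Equalities in $\seqK$ are only eventual, so $\kappa_i$ is the \emph{eventual} value of the quotient $\hat{\tau}(\hat{x}_i)(n)\big/\prod_{k=l_i}^n\ev(\hat{\alpha}_i,k-1)$. The data $Z(\hat{\alpha}_i)$ and $L(\hat{\alpha}_i)$ control only the behaviour of $\ev(\hat{\alpha}_i,\cdot)$; they do not bound the index from which a concrete representative of $\hat{\tau}(\hat{x}_i)$ actually satisfies the pointwise recurrence, so ``read off one entry'' is not yet an algorithm. To close this you need $\hat{\tau}$ to be given effectively on the generators together with such a threshold (in the paper's language, an $o^\sigma$-computable $\KK$-homomorphism as in Definition~\ref{Def:ComputableTau}); the lemma's own phrasing is equally vague on this, and the missing bookkeeping is exactly what \cite[Lemma~5.4]{DR3} supplies. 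With that caveat made explicit, your proof stands.
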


\subsection{The problem description in the setting of $R\Pi$-extensions}\label{Subsec:ProblemDescription}
Consider the products given in~\eqref{Equ:InputProds} where the multiplicands $f_1(k),\dots,f_r(k)$
can be modeled in a difference field $\dfield{\FF}{\sigma}$ with constant field $\KK$. This means that there is an evaluation function $\fct{\bar{\ev}}{\FF\times\NN}{\KK}$ with an $o^{\sigma}$-function $\fct{\bar{L}}{\FF}{\NN}$ such that we can find $\hat{f}_1,\dots,\hat{f}_r\in\FF^*$ with the following property: for all $1\leq i\leq r$ we have $\bar{\ev}(\hat{f}_i,n)=f_i(n)$ for all $n\geq l_i$ with $l_i\geq \bar{L}(\hat{f}_i)$. 
Then we can model also the products of~\eqref{Equ:InputProds} in a $P$-extension $\dfield{\EE}{\sigma}$ of $\dfield{\FF}{\sigma}$ with $\EE=\FF\langle \hat{x}_1\rangle\dots\langle \hat{x}_r\rangle$ and $\sigma(\hat{x}_i)=\hat{\alpha}_i\,\hat{x}_i$ where $\hat{\alpha}_i=\sigma(\hat{f}_i)$:
By Lemma~\ref{Lemma:evExt} we can define the evaluation function $\fct{\ev}{\EE\times\NN}{\KK}$ with $\ev|_{\FF\times\NN}=\bar{\ev}$ and
\begin{equation}\label{Equ:EvalhatXi}
\ev(\hat{x}_i,n)=F_i(n)=\prod_{k=l_i}^n f_i(k).
\end{equation}
In addition, we obtain an $o^{\sigma}$-function $L$ for $\ev$.
Finally, we take the $\KK$-homo\-morphism $\fct{\tau}{\EE}{\seqK}$ defined by~\eqref{Equ:tauDef}. For further considerations we require in addition that $\tau|_{\FF}$ is a $\KK$-embedding. If we focus on algorithmic aspects, we assume that the $\KK$-embedding $\tau$ is $o^{\sigma}$-computable.

As elaborated in the next remark, a finite set of mixed-hypergeometric products (see  Definition~\ref{Def:MMHypergeometric})
can be modeled in such a $P$-extension.

\begin{remark}[Representation of mixed-hypergeometric products]\label{Remark:MixedCaseWorks}
Take the rational function field $\KK=\KK'(q_1,\dots,q_v)$ and consider the mixed-rational difference field $\dfield{\FF}{\sigma}$ with $\FF=\KK(x,y_1,\dots,y_v)$ where $\fct{\sigma}{\FF}{\FF}$ is defined by $\sigma|_{\KK}=\id$, $\sigma(x)=x+1$ and $\sigma(y_i)=q_i\,y_i$ for $1\leq i\leq v$. Note that $\dfield{\FF}{\sigma}$ is a \pisiE-field over $\KK$. Furthermore take its evaluation function 
$\fct{\bar{\ev}}{\FF\times\NN}{\KK}$ with an $o^{\sigma}$-function $\fct{\bar{L}}{\FF}{\NN}$; see Example~\ref{Exp:MixedDF}.
Suppose that we are given the mixed-hypergeometric products $F_1(n),\dots,F_{r}(n)$ with 
$$F_i(n)=\prod_{k=l_i}^nf_{i}(k,q_1^k,\dots,q_v^k)$$ 
for $1\leq i\leq r$ where\footnote{To fulfill the property $f_i\in\FF^*$, the variables $q_1\dots,q_v$ and $y_1,\dots,y_v$ and the field below $\KK'$ have to be set up accordingly.} $f_i(x,y_1,\dots,y_v)\in\FF^*$ and the numerator and denominator of $f_{i}(k,q_1^k,\dots,q_v^k)$ do not evaluate to zero for all $k\geq l_r\in\NN$.
By construction we can take $\hat{f}_r:=f_r$ and get $\bar{\ev}(\hat{f}_r,k)=f_r(k,q_1^k,\dots,q_v^k)$.
Now take the $P$-extension $\dfield{\EE}{\sigma}$ of $\dfield{\FF}{\sigma}$ with $\EE=\FF\langle\hat{x}_1\rangle\dots\langle\hat{x}_{r}\rangle$ and $\sigma(\hat{x}_i)=\hat{\alpha}_i\,\hat{x}_i$ where
$\hat{\alpha}_i=\sigma(\hat{f}_i)=\hat{f}_i(x+1,q_1\,y_1,\dots,q_{r}\,y_v)$; note that $\ev(\hat{\alpha}_i,n)=\ev(\sigma(\hat{f}_i),n)=\ev(\hat{f}_i,n+1)=f_i(n+1,q_1^{n+1},\dots,q_v^{n+1})$.
By Lemma~\ref{Lemma:LiftEvForShift} we can extend the evaluation function $\bar{\ev}$ from $\dfield{\FF}{\sigma}$ to $\dfield{\EE}{\sigma}$ with $\ev(\hat{x}_i,n)=\prod_{k=l_i}^nf_i(k,q_1^k,\dots,q_v^k)$; similarly one can extend $\bar{L}$ to the $o^{\sigma}$-function $\fct{L}{\EE}{\NN}$. As elaborated in Example~\ref{Exp:MixedDF}, $\fct{\tau}{\EE}{\seqK}$ defined by~\eqref{Equ:tauDef} is a $\KK$-embedding. If we restrict to the case that the subfield $\KK'$ of $\KK$ is a rational function field over an algebraic number field, all ingredients are computable. In particular, $\tau$ turns to a $o^{\sigma}$-computable $\KK$-embedding. 
\end{remark}

\noindent 
Given the above construction, we will consider the following problem.

\medskip

\noindent\textbf{Problem \DR} (Difference ring Representation): \textit{Given} a computable difference field $\dfield{\FF}{\sigma}$ equipped with a computable evaluation function $\bar{ev}$ and computable $o^{\sigma}$-function $\bar{L}$ and \textit{given} a $P$-extension $\dfield{\EE}{\sigma}$ of $\dfield{\FF}{\sigma}$ with a computable evaluation function $\ev$ with $\ev|_{\FF\times\NN}=\bar{\ev}$ and a computable $o^{\sigma}$-function $L$ as described above\footnote{Note that generators $\hat{x}_i$ in $\EE$ represent products with no extra properties: in particular, all algebraic relations induced by their sequence evaluations are ignored.}.\\
\textit{Find} 
\begin{itemize}
\item an $AP$-extension $\dfield{\HH}{\sigma}$ of $\dfield{\FF}{\sigma}$ with
\begin{equation}\label{Equ:HDef}
\HH=\FF\langle t_1\rangle\dots\langle t_s\rangle[z_1]\dots[z_l]
\end{equation}
where the $t_i$ for $1\leq i\leq s$ are $P$-monomials with $\alpha_i=\frac{\sigma(t_i)}{t_i}\in\FF^*$ and the $z_i$ for $1\leq i\leq l$ are $A$-monomials with $\rho_i=\frac{\sigma(z_i)}{z_i}\in\KK^*$ of order $d_i>1$;
\item a computable evaluation function $\fct{\ev'}{\HH\times\NN}{\KK}$ for $\dfield{\HH}{\sigma}$ with a computable $o^{\sigma}$-function $L'$ where $\ev'$ is defined by $\ev'|_{\FF\times\NN}=\ev|_{\FF\times\NN}(=\bar{\ev})$, $\ev'(t_i,n)=\prod_{k=l'_i}^n\ev(\alpha_i,k-1)$ with $l'_i\in\NN$ for $1\leq i\leq s$, and $\ev'(z_i,n)=\rho_i^n$ for $1\leq i\leq l$;
\item a computable \textbf{surjective} difference ring homomorphism $\fct{\lambda}{\EE}{\HH}$
\end{itemize}
such that
\begin{enumerate}
\item $\fct{\tau'}{\EE}{\seqK}$ defined by $\tau'(f)=(\ev'(f,n))_{n\geq0}$ is a $\KK$-embedding;
\item for all $f\in\EE$ we have $\tau(f)=\tau'(\lambda(f))$, i.e., the following diagram commutes:
 \begin{equation}\label{Equ:Diagram}
 \xymatrix@!R=0.7cm@C1.8cm{
\EE\ar[r]^{\lambda}\ar[dr]_{\tau} &\HH\ar@{^{(}->}[d]_{\tau'}\\
 &\seqK.}
 \end{equation} 
\end{enumerate}
We emphasize that a solution of Problem~\DR\ solves also Problem~\RPTB.
More precisely, for $1\leq i\leq r$ we can take $g_i:=\lambda(\hat{x}_i)\in\HH$ and get
$$\ev'(g_i,n)=\ev'(\lambda(\hat{x}_i),n)=\ev(\hat{x}_i,n)=F_i(n)$$
for all $n\geq\max(L(f_i),L'(g_i))$. Thus $F_i(n)$ is modeled by $g_i$ in $\dfield{\EE}{\sigma}$. In particular, the evaluation function provides the full information to obtain an alternative production expression that evaluates to $F_i(n)$
(essentially, one replaces in $g_i$ the $t_j$ for $1\leq j\leq s$ by the products $\prod_{k=l'_j}^n\ev(\alpha_j,k-1)$ and the $z_j$ for $1\leq j\leq l$ by $\rho_j^n$ respectively). This establishes part (i) of Problem~\RPTB. Furthermore,
by Remark~\ref{Remark:Rearrange} we can reorder the generators in $\EE$ and get the $AP$-extension $\dfield{\EE'}{\sigma}$ of $\dfield{\FF}{\sigma}$ with $\EE'=\FF[z_1]\dots[z_l]\langle t_1\rangle\dots\langle t_s\rangle$. 
Furthermore, we observe that the difference ring $\dfield{\tau(\EE)}{\Shift}$ is contained in $\dfield{\seqK}{\Shift}$ (as a subdifference ring) and is isomorphic to $\dfield{\EE}{\sigma}$. 
Thus $\tau(\EE')=\tau(\FF[z_1]\dots[z_l])\langle \tau'(t_1)\rangle\dots\langle \tau'(t_s)\rangle$ forms a ring of Laurent polynomials, i.e., the sequences $\tau'(t_i)=(\prod_{k=l'_i}^n\ev(\alpha_i,k-1))_{n\geq0}$ are algebraically independent over the ring $\tau(\FF[z_1]\dots[z_l])$ (which is a subring of the ring of sequences $\seqK$). Hence also part (ii) of 
Problem~\RPTB\ is tackled.

The solution of Problem~\DR\ is strongly related with the following result which is a special case of~\cite[Thm.~5.14]{DR3}; this result is also connected to~\cite{Singer:97,Schneider:10c,Singer:08}.
\begin{theorem}\label{Thm:RPSImpliesInjective}
 Let $\dfield{\HH}{\sigma}$ with~\eqref{Equ:HDef} be an $AP$-extension of a difference field $\dfield{\FF}{\sigma}$ with $\KK=\const{\FF}{\sigma}$ as supposed in Problem~\DR. Furthermore, let $\fct{\tau}{\HH}{\seqK}$ be a $\KK$-homomorphism where $\tau|_{\FF}$ is injective.  Then $\const{\EE}{\sigma}=\KK$ iff $\tau$ is injective.
\end{theorem}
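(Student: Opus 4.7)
My plan is to split the biconditional into its two implications, handle the easy direction directly, and quote the heavy lifting of the harder direction from the cited result \cite[Thm.~5.14]{DR3}.

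For the backward direction, I would argue as follows. Assume $\tau$ is injective, and take any $c\in\const{\HH}{\sigma}$. Since $\tau$ is a difference ring homomorphism, $\Shift(\tau(c))=\tau(\sigma(c))=\tau(c)$ in $\seqK$. Under the equivalence relation defining $\seqK$ (two sequences identified when they eventually agree), a sequence fixed by $\Shift$ is equivalent to a constant sequence $\vect{k}=(k,k,k,\dots)$ for some $k\in\KK$. Since $\tau(k)=\vect{k}$, we get $\tau(c-k)=\vect{0}$, and the injectivity of $\tau$ yields $c=k\in\KK$. The reverse inclusion $\KK\subseteq\const{\HH}{\sigma}$ is immediate from $\sigma|_{\KK}=\id$.

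For the forward direction, I would note that $\const{\HH}{\sigma}=\KK=\const{\FF}{\sigma}$ is precisely the condition upgrading the $AP$-extension $\dfield{\HH}{\sigma}$ to an $R\Pi$-extension of $\dfield{\FF}{\sigma}$; in particular each $z_i$ becomes an $R$-monomial and each $t_i$ a $\Pi$-monomial. In this situation, \cite[Thm.~5.14]{DR3} asserts that any $\KK$-homomorphism $\fct{\tau}{\HH}{\seqK}$ whose restriction $\tau|_{\FF}$ is injective must itself be injective. Applying this directly to our $\tau$ finishes the proof.

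If one wanted to reproduce the forward implication from scratch instead of invoking \cite[Thm.~5.14]{DR3}, the plan would be: set $I=\ker(\tau)$, which is a $\sigma$-stable ideal of $\HH$ since $\tau\circ\sigma=\Shift\circ\tau$, and show that any non-zero $\sigma$-stable ideal of an $R\Pi$-extension must meet $\KK$ non-trivially, contradicting $I\cap\FF=\{0\}$ (which follows from injectivity of $\tau|_{\FF}$). This would proceed by first decomposing $\FF[z_1]\dots[z_l]$ as a finite direct product of fields via orthogonal idempotents coming from the roots of unity, observing that under the $R\Pi$-hypothesis $\sigma$ permutes these idempotents transitively, and then lifting the analysis through the tower of Laurent polynomial $P$-extensions in the $t_i$ by an induction that tracks $\sigma$-stable ideals.

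The main obstacle, were one to attempt the forward direction directly, is exactly this structural classification of $\sigma$-stable ideals in $R\Pi$-extensions; it rests on the Galois-theoretic machinery developed in \cite{Schneider:10c,Singer:08,DR3}, and this is the reason it is cleanest here to defer to \cite[Thm.~5.14]{DR3} rather than re-prove it.
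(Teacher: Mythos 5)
Your proposal is correct and matches the paper, which itself offers no proof beyond declaring the statement a special case of \cite[Thm.~5.14]{DR3}; deferring the forward direction to that result is exactly what the paper does, and your direct argument for the backward direction (a $\Shift$-invariant class in $\seqK$ is eventually constant, then use injectivity) is sound. You also correctly read $\const{\EE}{\sigma}$ in the statement as $\const{\HH}{\sigma}$, which is evidently a typo in the paper.
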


Consequently property (1) in Problem~\DR\ can be dropped by imposing that $\const{\HH}{\sigma}=\const{\FF}{\sigma}$, i.e., by taking care that $\dfield{\HH}{\sigma}$ is an $R\Pi$-extension of $\dfield{\FF}{\sigma}$. 
Precisely this construction for a solution of Problem~\DR\ has been carried out in~\cite{Schneider:05c,DR2} for hypergeometric products, has been extended to a complete algorithm for mixed-hypergeometric products in~\cite{OS:18}, and has been generalized further for nested products in~\cite{Ocansey:19}. However, these approaches usually find a difference ring homomorphism $\lambda$ that is not surjective.

In this article we will follow the same tactic to solve Problem~\DR\ for single nested products such that $\lambda$ is always surjective. Note that in concrete examples the cases 
$s=0$ or $l=0$ might arise, i.e., no \piE-monomials or no $R$-monomials are needed.
E.g., if $\dfield{\EE}{\sigma}$ itself is a \piE-extension of $\dfield{\FF}{\sigma}$, one can solve Problem~\DR\ by taking $\HH=\EE$, $\lambda=\id$ and $\tau'=\tau$; note that in this special case $\lambda$ is even bijective. Otherwise,
we will show in Theorem~\ref{Thm:MainResult} below that we can solve Problem~\DR\ with a surjective $\lambda$ and an $R\Pi$-extension $\dfield{\HH}{\sigma}$ of $\dfield{\FF}{\sigma}$ 
with~\eqref{Equ:HDef} where $s\geq0$ and $l\in\{0,1\}$ (i.e., at most one $A$-monomial is needed) such that 
\begin{itemize}
\item the number $s$ of \piE-monomials is minimal,
\item the $R$-monomial $z:=z_1$ is introduced only if it is necessary,
\item and if it is necessary, the order $d:=d_1$ of $z$ is minimal\footnote{
Note that by Proposition~\ref{Prop:CharactSeveralRExt} given below a difference ring generated by a finite set of $R$-monomials $z'_1,\dots,z'_l$ is isomorphic to a difference ring generated by only one $R$-monomial $z$ with $\ord(z)=\lcm(\ord(z'_1),\dots,\ord(z'_l))=\ord(z'_1)\dots\ord(z'_l)$. Hence claiming that the order $d=\ord(z)$ is optimal means that among all solutions of Problem~\DR\ ($\lambda$ need not to be surjective) in a difference ring with the $A$-monomials $z'_1,\dots,z'_l$  the order $d$ of $z$ is smaller or equal to $\ord(z'_1)\dots\ord(z'_l)$.}.
\end{itemize}

\begin{example}\label{Exp:IntroExp}
Recall the following naive constructions from Examples~\ref{Exp:RationalDField1}, \ref{Exp:EvMainExp}, \ref{Exp:MainPExt}, \ref{Exp:MainPExtPlusEv} and~\ref{Exp:MainPExtTau} to represent the products~\eqref{Equ:DefineFi} in a difference ring.
We take the rational difference field $\dfield{\KK(x)}{\sigma}$ with $\sigma(x)=x+1$ and $\const{\KK(x)}{\sigma}=\KK=\QQ(\iota)$, which is a \pisiE-field over $\KK$. Furthermore we construct the $P$-extension $\dfield{\EE}{\sigma}$ of $\dfield{\KK(x)}{\sigma}$ with $\EE=\KK(x)\langle\hat{x}_1\rangle\langle\hat{x}_2\rangle\langle\hat{x}_3\rangle\langle \hat{x}_4\rangle$ where $\sigma(\hat{x}_i)=\hat{\alpha}_i\,\hat{x}_i$ for $1\leq i\leq 4$ and~\eqref{Equ:MainAlpha} as introduced in Example~\ref{Equ:MainExample:Start}. Further, we take the
evaluation function $\fct{\hat{\ev}}{\EE\times\NN}{\KK}$ introduced in Example~\ref{Exp:EvMainExp} ($\ev$ replaced by $\bar{\ev}$) and define the 
$\KK$-homomorphism $\fct{\hat{\tau}}{\EE}{\seqK}$ by $\hat{\tau}(f)=(\hat{\ev}(f,n))_{n\geq0}$.
Then based on our main results in Theorem~\ref{Thm:MainResult} below
we can solve Problem~\DR. Namely, as will be carried out in Example~\ref{Exp:MainExampleResult} below we can construct\footnote{In order to fit the specification in Problem~\DR, we set $t_1:=x_3$ and $t_2:=x_4$ and $z_1:=z$.} the $R\Pi$-extension $\dfield{\HH}{\sigma}$ of $\dfield{\KK(x)}{\sigma}$ with $\HH=\KK(x)\langle x_3\rangle\langle x_4\rangle[z]$ and
\begin{align*}
\sigma(x_3)&=-\tfrac{\iota
   (x+4)^6}{9 (x+1) (x+2)^2 (x+3)
   (x+6)}\,x_3,&
\sigma(x_4)&=-\tfrac{162 (x+1) (x+3)}{x+6}\,x_4,&\sigma(z)&=-z.
\end{align*}
Further, we can construct the evaluation function $\fct{\ev'}{\HH\times\NN}{\KK}$ for $\dfield{\HH}{\sigma}$ with 
\begin{align}\label{Eq:MainExpamleEvP}
\ev'(x_3,n)&=\sprod{k=1}n \tfrac{-\iota (3+k)^6}{9 k (1+k)^2 (2+k) (5+k)},&\!\!
\ev'(x_4,n)&=\sprod{k=1}n \tfrac{-162 k (2+k)}{5+k},&\!\!\ev'(z,n)=(-1)^n.
\end{align}
Note that the $\KK$-homomorphism $\fct{\tau'}{\HH}{\seqK}$ defined by $\tau'(f)=(\ev'(f,n))_{n\geq0}$ is injective by Theorem~\ref{Thm:RPSImpliesInjective}. In addition we obtain in Example~\ref{Exp:MainExampleResult} below the surjective difference ring homomorphism $\fct{\hat{\lambda}}{\EE}{\HH}$ defined by $\hat{\lambda}|_{\KK(x)}=\id$ and
\begin{equation}\label{Equ:LambdaMainExp}
 \begin{aligned}
\hat{\lambda}(\hat{x}_1)&=\frac{5 (x+1)^2 (x+2)^5 (x+3)^8x_4
   z}{52488 (x+4) (x+5)x_3^2}&
\hat{\lambda}(\hat{x}_2)&=\frac{1}{400} (x+4)^2 (x+5)^2
   x_4^2\\
\hat{\lambda}(\hat{x}_3)&=\frac{2754990144 (x+4)^2 (x+5)^2
   x_3^3}{25 (x+1)^4 (x+2)^{10}(x+3)^{16}}&
\hat{\lambda}(\hat{x}_4)&=x_4
\end{aligned}
\end{equation}
such that $\hat{\tau}=\tau'\circ\hat{\lambda}$ holds.
Due to our main result stated in Theorem~\ref{Thm:MainResult} below it will follow that the number $s=2$ of the \piE-monomials and the order $\lambda=2$ of the $R$-monomial $z$ are optimal among all possible solutions of Problem~\DR\ (where $\lambda$ is not necessarily surjective).
The construction implies that $\hat{\ev}$ can be given in the following alternative form:
\begin{equation}\label{Equ:AlternativeEv}
\begin{aligned}
\hat{\ev}(\hat{x}_1,n)&=\tfrac{5 (1+n)^2 (2+n)^5 (3+n)^8}{52488 (4+n) (5+n)} (-1)^n
        \prod_{k=1}^n -\tfrac{162 k (2+k)}{5+k}
\Big(\prod_{k=1}^n -\tfrac{i (3+k)^6}{9 k (1+k)^2 (2+k) (5+k)} \Big)^{-2},\\
\hat{\ev}(\hat{x}_2,n)&=\tfrac{(4+n)^2 (5+n)^2 }{400}\Big(
        \prod_{k=1}^n -\tfrac{162 k (2+k)}{5+k}\Big)^2,\\
\hat{\ev}(\hat{x}_3,n)&=\tfrac{2754990144 (4+n)^2 (5+n)^2}{25 (1+n)^4 (2+n)^{10} (3+n)^{16}} \Big(
        \prod_{k=1}^n -\tfrac{\iota(3+k)^6}{9 k (1+k)^2 (2+k) (5+k)}\Big)^3,\\
\hat{\ev}(\hat{x}_4,n)&=\prod_{k=1}^n -\tfrac{162 k (2+k)}{5+k} 
\end{aligned}
\end{equation}
which is precisely~\eqref{Equ:FiEvalNew}. Furthermore, $\tau'$ is a difference ring embedding which implies that 
$\tau'(\KK[z])\langle (\ev'(x_3,n))_{n\geq0}\rangle\langle (\ev'(x_4,n))_{n\geq0}\rangle$ with $\ev'$ defined in~\eqref{Eq:MainExpamleEvP} forms a Laurent polynomial ring with coefficients from the subring $\tau'(\FF[z])$ of $\seqK$. 
\end{example}


But even more will be derived.
By the first isomorphism theorem 
we get the ring isomorphism $\fct{\mu}{\EE/\ker(\lambda)}{\HH}$ defined by $\mu(f+I)=\lambda(f)$.
Since $\ker(\lambda)$ is a reflexive difference ideal (i.e., $\sigma^k(f)\in\ker(\lambda)$ for all $k\in\ZZ$ and $f\in\ker(\lambda)$) it follows that $\fct{\sigma'}{\EE/\ker(\lambda)}{\EE/\ker(\lambda)}$ with $\sigma'(f+I):=\sigma(f)+I$ forms a ring automorphism; compare~\cite{Cohn:65}. In particular, $\mu$ turns into a difference ring isomorphism between $\dfield{\EE/\ker(\lambda)}{\sigma'}$ and $\dfield{\HH}{\sigma}$. 
As a bonus we have 
\begin{equation}\label{Equ:DiffIdealI}
I:=\ker(\lambda)=\ker(\tau)
\end{equation}
which follows by the following lemma (by setting $\SA=\seqK$).
\begin{lemma}\label{Lemma:ConnectLemmaToTu}
Let $\EE$, $\HH$ and $\SA$ be rings with a ring homomorphism $\fct{\tau}{\EE}{\SA}$ and a a ring embedding $\fct{\tau'}{\HH}{\SA}$. If $\fct{\lambda}{\EE}{\HH}$ is a ring homomorphism with $\tau'(\lambda(a))=\tau(a)$ for all $a\in\EE$, then $\ker(\lambda)=\ker(\tau)$.
\end{lemma}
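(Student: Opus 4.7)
The plan is to show the two inclusions $\ker(\lambda)\subseteq\ker(\tau)$ and $\ker(\tau)\subseteq\ker(\lambda)$ directly, using only the hypothesis $\tau=\tau'\circ\lambda$ and the injectivity of $\tau'$. No structural properties beyond ring homomorphism are needed, so the argument is purely algebraic.

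For the inclusion $\ker(\lambda)\subseteq\ker(\tau)$, I would take any $a\in\ker(\lambda)$, so that $\lambda(a)=0$, and then compute $\tau(a)=\tau'(\lambda(a))=\tau'(0)=0$, using that $\tau'$ is a ring homomorphism. This gives $a\in\ker(\tau)$ and does not use injectivity of $\tau'$.

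For the reverse inclusion $\ker(\tau)\subseteq\ker(\lambda)$, I would take $a\in\ker(\tau)$ and observe that $\tau'(\lambda(a))=\tau(a)=0$. Since $\tau'$ is an embedding, hence injective, this forces $\lambda(a)=0$, so $a\in\ker(\lambda)$. This is the only place where the embedding hypothesis enters, and it is the sole ingredient beyond the commutativity of the diagram.

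There is no real obstacle here; the statement is essentially the elementary fact that in a commutative triangle $\tau=\tau'\circ\lambda$ with $\tau'$ injective, the kernel of the composite coincides with the kernel of the first factor. The whole proof fits in two lines once both inclusions are written out, and the application of the lemma to~\eqref{Equ:DiffIdealI} is then immediate by instantiating $\SA=\seqK$.
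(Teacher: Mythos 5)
Your proof is correct and follows exactly the same two-inclusion argument as the paper: $\ker(\lambda)\subseteq\ker(\tau)$ from the commutativity $\tau=\tau'\circ\lambda$, and the reverse inclusion from the injectivity of $\tau'$. Nothing further is needed.
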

\begin{proof}
Suppose that $\lambda$ is a ring homomorphism as claimed in the lemma and let $a\in\EE$. If $\lambda(a)=0_{\HH}$, then $0_{\SA}=\tau'(0_{\HH})=\tau'(\lambda(a))=\tau(a)$. Conversely, if $\tau(a)=0_{\SA}$, then $\tau'(\lambda(a))=0_{\SA}$. Since $\tau'$ is injective and $\tau'(0_{\HH})=0_{\SA}$, $\lambda(a)=0_{\HH}$.
\end{proof}
Since $\tau'$ is injective by construction, we finally obtain the following difference ring isomorphisms:
\begin{equation}\label{Equ:isomorphDR}
\dfield{\EE/\ker(\tau)}{\sigma'}=\dfield{\EE/\ker(\lambda)}{\sigma'}\simeq\dfield{\HH}{\sigma}\simeq\dfield{\tau'(\HH)}{\shiftS}.
\end{equation}

In addition, we will elaborate in Theorem~\ref{Thm:MainResult} that one can compute explicitly a finite set of generators that span the difference ideal~\eqref{Equ:DiffIdealI}. This means that we obtain the full information of all the algebraic relations of the sequences $\tau(\hat{x}_1),\dots,\tau(\hat{x}_r)$. In particular, the corresponding mappings in~\eqref{Equ:isomorphDR} can be carried out explicitly.

\begin{example}[Cont. Ex.~\ref{Exp:IntroExp}]
In Example~\ref{Exp:MainExampleResult} (based on Theorem~\ref{Thm:MainResult}) we will get
\begin{equation}\label{Equ:kertauExp}
\begin{aligned}
 \ker(\hat{\tau})=\ker(\hat{\lambda})=
\langle&\frac{\hat{x}_2}{\hat{x}_4^2}-\frac1{400} (4+x)^2 (5+x)^2,\\
&\frac{\hat{x}_2^2 \hat{x}_4^2}{\hat{x}_1^6 \hat{x}_3^4}
-\tfrac1{4199040^2}(1+x)^4 (2+x)^{10} (3+x)^{16} (4+x)^2 (5+x)^2\rangle_{\EE}.
\end{aligned}
\end{equation}
Note that this yields~\eqref{Equ:IdealSetZ} with~\eqref{Equ:riDef}.
\end{example}

As already indicated in Remark~\ref{Remark:MixedCaseWorks}, this algorithmic toolbox is applicable for the mixed-rational difference field $\dfield{\FF}{\sigma}$ given in Example~\ref{Exp:MixedDF}. More generally, we will show in Theorem~\ref{Thm:MainResult} below that we obtain a  complete algorithm for Problem~DR\ if the ground field $\dfield{\FF}{\sigma}$ with $\KK=\const{\FF}{\sigma}$ satisfies various (algorithmic) properties. These properties (see Assumption~\ref{Assum:AlgProp}) will be explored further in the next section.

\section{Required properties of the underlying difference field}\label{Sec:BasicProperties}

For our proposed strategy to solve Problem~\DR\ (see Subsection~\ref{Subsec:ProblemDescription}) we will rely on the following properties of the ground field $\dfield{\FF}{\sigma}$.¸

\begin{assum}\label{Assum:AlgProp}
({Required properties of the ground field $\dfield{\FF}{\sigma}$ for Problem~\DR}):
\begin{enumerate}
 \item $\dfield{\FF}{\sigma}$ is computable: the addition, multiplication, the inversion of elements in $\FF$, and the automorphism $\sigma$ are computable.
 \item There is an $o^{\sigma}$-computable difference ring embedding $\fct{\tau}{\FF}{\seqK}$ together with a computable $z$-function; see Definitions~~\ref{Def:EvZ} and \ref{Def:ComputableTau} from above.
 \item $\dfield{\FF}{\sigma}$ is radical-stable; see Definition~\ref{Def:RacialStable} below.
 \item One can solve homogeneous first-order difference equations in $\dfield{\FF}{\sigma}$: given $w\in\FF^*$, one can compute, if it exists, a $g\in\FF^*$ with $\sigma(g)-w\,g=0$. 
 \item For $\alpha_1,\dots,\alpha_r\in\FF^*$ one can compute a $\ZZ$ basis of
$M((\alpha_1,\dots,\alpha_r),\FF)$; see Definition~\ref{Def:MSet} below.
\end{enumerate}
\end{assum}

In the following we will elaborate further details concerning these properties. In particular, we will show  the following result in Subsection~\ref{Sec:MixedRatDF}.

\begin{theorem}\label{Thm:MixedIsGood}
Let $\dfield{\FF}{\sigma}$ be a mixed rational difference field where the constant field is built by a rational function field over an algebraic number field. Then $\dfield{\FF}{\sigma}$ satisfies the properties listed in Assumption~\ref{Assum:AlgProp}.  
\end{theorem}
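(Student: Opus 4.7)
The strategy is to verify the five conditions of Assumption~\ref{Assum:AlgProp} one by one, leveraging what has already been assembled for the mixed-rational difference field in Example~\ref{Exp:MixedDF} together with standard results from the literature on $\pisiE$-fields.

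Conditions~(1) and~(2) are essentially recorded in the excerpt and only need to be collected. For~(1), $\FF=\KK(x,y_1,\dots,y_v)$ with $\KK=\KK'(q_1,\dots,q_v)$ and $\KK'$ a rational function field over an algebraic number field; arithmetic in such a tower of rational function fields (including normalisation via coprime representatives) is computable by standard Gr\"obner/polynomial methods, and $\sigma$ is given by the substitutions $x\mapsto x+1$, $y_i\mapsto q_i\,y_i$, hence computable. For~(2), Example~\ref{Exp:MixedDF} constructs the evaluation function $\ev$ from~\eqref{Equ:EvalMixed}, the computable $o^{\sigma}$-function $L(f)=\delta$ and the computable $z$-function $Z(f)=L(p\,q)$ (using the explicit bound from~\cite[Sec.~3.7]{Bauer:99} applied under the algebraic-number-field restriction on $\KK'$), and shows that the resulting $\fct{\tau}{\FF}{\seqK}$ is a $\KK$-embedding. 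I would therefore just cite Example~\ref{Exp:MixedDF} here.

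Conditions~(3)--(5) require more work. For~(3), I would unfold Definition~\ref{Def:RacialStable} (to appear later in the paper) and reduce radical-stability to the following fact: in the rational function field $\FF$, the equation $g^n=f$ with $f\in\FF^*$, $n\geq1$ is decidable, with the solutions determined by $n$-divisibility of multiplicities in a coprime factorisation in $\KK[x,y_1,\dots,y_v]$ together with an $n$-th root extraction in $\KK^*$. Combining this with the structural fact that $\dfield{\FF}{\sigma}$ is a $\pisiE$-field over $\KK$ (noted in Example~\ref{Exp:MixedDF}), which rules out spurious hypergeometric solutions, yields the desired radical-stability. For~(4), the equation $\sigma(g)=w\,g$ with prescribed $w\in\FF^*$ is precisely the first-order homogeneous Karr problem in the mixed-multibasic setting, which has been solved algorithmically in~\cite{Bauer:99}; I would invoke that algorithm directly. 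For~(5), computing a $\ZZ$-basis of $M((\alpha_1,\dots,\alpha_r),\FF)$ reduces, by a standard reformulation, to deciding for each candidate exponent vector $(m_1,\dots,m_r)\in\ZZ^r$ whether $\alpha_1^{m_1}\cdots\alpha_r^{m_r}$ lies in the image of the map $g\mapsto\sigma(g)/g$ on $\FF^*$; using the first-order solver from~(4) together with linear algebra over $\ZZ$ (Hermite/Smith normal form) on the resulting system of integer congruences one extracts a basis, exactly along the lines carried out in~\cite{OS:18} for the construction of $R\Pi$-extensions over mixed-hypergeometric products.

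The main obstacle I anticipate is condition~(5) in combination with~(3): one must track how the integer relations generated by $\sigma$ interact with those produced by roots of unity and by $n$-th powers in the constant field $\KK=\KK'(q_1,\dots,q_v)$, and this is precisely where the restriction to $\KK'$ being a rational function field over an algebraic number field is used, since it ensures that one can factor polynomials and detect roots of unity in $\KK^*$ effectively. The remaining pieces are either routine (computability of $\FF$) or already made explicit in Example~\ref{Exp:MixedDF} and in the cited algorithms~\cite{Bauer:99,OS:18}.
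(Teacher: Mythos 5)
Your top-level plan coincides with the paper's: Theorem~\ref{Thm:MixedIsGood} is proved there by checking the five items of Assumption~\ref{Assum:AlgProp} one by one, with (1) routine, (2) exactly the content of Example~\ref{Exp:MixedDF}, (4) delegated to a known first-order homogeneous solver (the paper cites \cite[Thm.~3.2 and 3.5]{Schneider:05c}, resting on Karr's and Ge's algorithms, rather than \cite{Bauer:99}, but that is only a choice of reference), and (5) delegated to Proposition~\ref{Prop:ComputeBasisForM}. Two of your items, however, have genuine gaps. For item~(5), ``deciding for each candidate exponent vector $(m_1,\dots,m_r)\in\ZZ^r$ whether $\alpha_1^{m_1}\cdots\alpha_r^{m_r}$ lies in the image of $g\mapsto\sigma(g)/g$'' is not an algorithm as stated: there are infinitely many candidates and no a priori bound from which to start the linear algebra. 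Moreover, what the restriction on $\KK'$ buys at the bottom of Karr's recursion is not merely effective factorisation and root-of-unity detection but Ge's algorithm for computing the whole lattice of multiplicative relations among finitely many algebraic numbers; since you would in any case fall back on the algorithms of \cite{Karr:81,OS:18}, this item is repairable by citation, but your ``re-derivation'' sketch would fail if taken literally.

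The substantial gap is item~(3). Your reduction of radical-stability to ``deciding $g^n=f$ via $n$-divisibility of multiplicities in a coprime factorisation, plus ruling out spurious hypergeometric solutions'' does not capture the mechanism, and this is precisely where the paper's work lies (Lemma~\ref{Lemma:RadicalLift} through Corollary~\ref{Cor:RadicalComputable}). Given $a^m=\sigma(g)/g$, the divisibility by $m$ of the multiplicities of the irreducible factors of $g$ cannot be read off a plain coprime factorisation of $g$; it is deduced by telescoping the relation $\sigma(g)/g=a^m$ along the $\sigma$-orbits of irreducible polynomials (Karr's $\sigma$-factorisation, Lemma~\ref{Lemma:RadicalLift}). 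Even then one only obtains $g=u\,\gamma^m\,y_1^{n_1}\cdots y_v^{n_v}$ with $u$ in a smaller field: the leftover \piE-monomial factor is, in the paper's words, the ``troublemaker'', and one must separately prove $m\mid n_i$, which is property~\eqref{Equ:PropertyForPiLift} and uses that $q_1,\dots,q_v$ are independent indeterminates over $\KK'$. Finally one descends through the tower (Propositions~\ref{Prop:GeneralSigmaLifting} and~\ref{Prop:SpecialPiLifting}) to the constant field, where radical-stability is trivial. Your sketch contains neither the orbit argument nor the treatment of the $y_i$-exponents nor the recursion, so as written it would not go through; the ``$n$-th root extraction in $\KK^*$'' you mention only accounts for the harmless constant factor.
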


\subsection{Constant stability}\label{Subsec:constantstable}

Note that for any $k\in\NN$ we have $\const{\FF}{\sigma}\subseteq\const{\FF}{\sigma^k}$. In the following we will often assume that the two sets are equal; compare~\cite{DR2,DR3}.

\begin{definition}\label{Def:ConstantStable}
A difference ring (resp.\ field) $\dfield{\AR}{\sigma}$ is called \notion{constant-stable} if $\const{\AR}{\sigma^k}=\const{\AR}{\sigma}$ for all $k\in\NN$.
\end{definition}

We will show in Proposition~\ref{Lemma:ProofForConstantStable} below that a nested \pisiE-extension is constant-stable if the ground field is constant-stable. Here we will utilize

\begin{lemma}\label{Lemma:ProofForConstantStable}
Let $\dfield{\FF(t)}{\sigma}$ be a \pisiE-field extension of $\dfield{\FF}{\sigma}$ with $\sigma(t)=\alpha\,t+\beta$ ($\alpha=1$, or $\alpha\in\FF^*$ and $\beta=0$). Then the following holds.
\begin{enumerate}
\item If $a\in\FF[t]$ is monic and irreducible, then: $\gcd(\sigma^m(a),a)=1$ for all $m\geq1$ if and only if $\beta\neq0$ or $a\neq t$.
\item Suppose that $a\in\FF[t]$ is monic and irreducible with $\beta\neq0$ or $a\neq t$. Then for any irreducible $b\in\FF[t]$ there is no $m\in\ZZ$ with $\gcd(\sigma^m(a),b)\neq1$ or there is precisely one $m\in\ZZ$ with $\gcd(\sigma^m(a),b)=1$.
\item If $a\in\FF(t)\setminus\FF$ and $m\geq1$, then $a\,\sigma(a)\dots\sigma^{m}(a)\notin\FF$
\item If $\dfield{\FF}{\sigma}$ is constant-stable, then $\dfield{\FF(t)}{\sigma}$ is constant-stable
\end{enumerate}
\end{lemma}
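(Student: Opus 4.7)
The plan is to isolate the \piE-case ($\alpha \in \FF^*$, $\beta = 0$) from the \sigmaE-case ($\alpha = 1$, $\beta \neq 0$) throughout (1)--(3), invoking in each a single underlying tool. In the \piE-case I would use the standard lemma that if $b \in \FF[t] \setminus \{0\}$ and $u \in \FF^*$ satisfy $\sigma(b) = u\,b$, then $b = \gamma\,t^d$ for some $\gamma \in \FF^*$ and $d \geq 0$; this follows by matching coefficients of $b$ and $\sigma(b)$ and invoking the defining \piE-property (no $g \in \FF^*$ and nonzero $n \in \ZZ$ with $\sigma(g)/g = \alpha^n$). In the \sigmaE-case the analogous tool is that $\sigma(b) = b$ for $b \in \FF[t]$ forces $b \in \FF$, which is immediate from $\const{\FF(t)}{\sigma} = \const{\FF}{\sigma}$.

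For (1), the forward direction is direct: if $\beta = 0$ and $a = t$, then $\sigma^m(t) = \bigl(\prod_{j=0}^{m-1}\sigma^j(\alpha)\bigr) t$ shares $t$ with $t$. For the converse, suppose $\beta \neq 0$ or $a \neq t$ and that $\gcd(\sigma^m(a), a) \neq 1$ for some $m \geq 1$. Since $\sigma$ restricts to a degree-preserving ring automorphism on $\FF[t]$, $\sigma^m(a)$ is irreducible of the same degree as $a$, whence $\sigma^m(a) = u\,a$ for some $u \in \FF^*$. In the \sigmaE-case I would set $\tilde a := \sum_{j=0}^{m-1} \sigma^j(a)$; telescoping gives $\sigma(\tilde a) = \tilde a$, so $\tilde a \in \const{\FF(t)}{\sigma} = \const{\FF}{\sigma} \subseteq \FF$, yet $\tilde a$ has leading coefficient $m \neq 0$ and degree $\deg a \geq 1$, a contradiction. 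In the \piE-case I would set $\tilde a := \prod_{j=0}^{m-1}\sigma^j(a)$ so that $\sigma(\tilde a) = u\,\tilde a$; the workhorse lemma then forces $\tilde a = \gamma\,t^d$, and since $a \neq t$ irreducible has $a(0) \neq 0$, we get $\tilde a(0) \neq 0$, so $d = 0$ and $\tilde a \in \FF$; this contradicts $\deg \tilde a = m \deg a \geq 1$. Part (2) follows quickly: if $m_1 > m_2$ both give nontrivial gcds with $b$, then $\sigma^{m_1}(a)$ and $\sigma^{m_2}(a)$ are both scalar multiples of the irreducible $b$, so $\sigma^{m_1 - m_2}(a)$ is a scalar multiple of $a$, contradicting (1).

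For (3), I would factor $a = c\prod_p p^{e_p}$ over distinct monic irreducibles $p \in \FF[t]$ with $c \in \FF^*$ and $e_p \in \ZZ$; since $a \notin \FF$ at least one exponent is nonzero. If the only $p$ in the support is $t$ (which forces $\beta = 0$), then $a = c\,t^e$ with $e \neq 0$ and $\prod_{j=0}^{m} \sigma^j(a)$ is a nonzero scalar times $t^{(m+1)e}$, manifestly not in $\FF$. Otherwise, pick $p$ in the support with (1) applicable, let $q_k$ denote the monic polynomial associate of $\sigma^k(p)$ for $k \in \ZZ$ (these are pairwise non-associate by (1), giving a $\ZZ$-indexing of the orbit), and let $k^*$ be the maximum index with $q_{k^*}$ in the support of $a$. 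Then in $\prod_{j=0}^m \sigma^j(a)$ the monic irreducible $q_{k^* + m}$ arises only from the pair $(j,k) = (m,k^*)$: larger $k$ are absent from the support by maximality of $k^*$, and by (2) no monic irreducible from a different $\sigma$-orbit can be associate to $q_{k^*+m}$. Hence $q_{k^*+m}$ appears with nonzero total multiplicity $e_{q_{k^*}}$, so $\prod_{j=0}^m \sigma^j(a) \notin \FF$. The careful orbit bookkeeping and the appeal to (2) to rule out cross-orbit cancellations is the step requiring the most attention.

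Finally, (4) reduces to (3). Let $f \in \const{\FF(t)}{\sigma^k}$. For $k = 1$ the statement is the \pisiE-property itself. For $k \geq 2$, if $f \in \FF$ then $f \in \const{\FF}{\sigma^k} = \const{\FF}{\sigma}$ by the constant-stability of $\FF$. If instead $f \in \FF(t) \setminus \FF$, then $\prod_{j=0}^{k-1} \sigma^j(f)$ is $\sigma$-invariant (using $\sigma^k(f) = f$), hence lies in $\const{\FF(t)}{\sigma} = \const{\FF}{\sigma} \subseteq \FF$; but this contradicts (3) applied with $a = f$ and $m = k-1 \geq 1$.
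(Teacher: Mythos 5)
Your proposal is sound and, for parts (2)--(4), follows essentially the same route as the paper: part (2) by composing the two associate relations $b=u\,\sigma^{m_1}(a)$, $b=v\,\sigma^{m_2}(a)$ and invoking (1); part (3) by isolating an extremal element of a $\sigma$-orbit among the irreducible factors of $a$ that cannot cancel in $a\,\sigma(a)\cdots\sigma^m(a)$ (you take the maximal forward shift, the paper the symmetric extremal choice); part (4) verbatim via the telescoping product reduced to (3). The one point where you genuinely diverge is part (1): the paper disposes of it by citation (Karr, Bronstein, Schneider's thesis), whereas you give a self-contained argument -- the telescoping sum in the \sigmaE-case and the coefficient lemma ``$\sigma(b)=u\,b$ with $b\in\FF[t]$ forces $b=\gamma\,t^d$'' in the \piE-case. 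That is a valid and more transparent substitute; note only that the coefficient lemma rests on the equivalence $\const{\FF(t)}{\sigma}=\const{\FF}{\sigma}\Leftrightarrow M((\alpha),\FF)=\{0\}$, i.e., on Proposition~\ref{Prop:PiEquivalences} rather than literally on the definition.

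Two small repairs are needed. First, in the \sigmaE-case of (1) the identity $\sigma(\tilde a)=\tilde a$ for $\tilde a=\sum_{j=0}^{m-1}\sigma^j(a)$ requires $\sigma^m(a)=a$ exactly, not merely $\sigma^m(a)=u\,a$; you should add that $u=1$ because $\sigma$ preserves leading coefficients when $\alpha=1$ and both $a$ and $\sigma^m(a)$ are monic. Second, in (3) the parenthetical ``which forces $\beta=0$'' is false: one can have $a=c\,t^e$ with $\beta\neq0$, and then $\sigma^j(a)=\sigma^j(c)(t+j'\text{-shift})^e$ is not a scalar times a power of $t$, so your Case~A computation does not apply to it, while it also escapes your ``otherwise'' clause as literally worded. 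The fix is immediate: when $\beta\neq0$ part (1) applies to $p=t$ itself, so this situation is absorbed into your orbit argument, and only the combination $\beta=0$ with support equal to $\{t\}$ needs the separate direct computation. With these two one-line adjustments the proof is complete.
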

\begin{proof}
(1) This follows by~\cite{Karr:81,Bron:00}; see also~\cite[Theorem~2.2.4]{Schneider:01}.\\
(2) Let $a$ be monic and irreducible as stated above and suppose that there is an irreducible $b\in\FF[t]$ with $\gcd(\sigma^m(a),b)\neq1$ and $\gcd(\sigma^n(a),b)\neq1$ for some $0\leq m<n$. Then $b=u\,\sigma^m(a)$ and $b=v\,\sigma^n(a)$ for some $u,v\in\FF^*$. Thus $\sigma^m(a)=\frac{v}{u}\sigma^n(a)$ and hence $a=\sigma^{-m}(\frac{v}{u})\sigma^{n-m}(a)$. By part (1), it follows that $n=m$.\\  
(3) Define $h=a\,\sigma(a)\dots\sigma^{m}(a)$. If $t$ is a \piE-monomial (i.e., $\beta=0$) and $a=b\,t^r$ for some $b\in\FF$ and $r\in\ZZ$, we get $h=q\,t^{m\,r}$ for some $q\in\FF^*$ and thus $h\notin\FF$.
Otherwise, by part (2) of our lemma we can take a monic irreducible $f\in\FF[t]$ (which is not $t$ if $\beta=0$) that arise in $a$ such that for any irreducible polynomial $g$ in $a$ there is no $m\geq1$ with $\gcd(\sigma^m(f),g)\neq1$ (If there are several irreducible factors $f_1,f_2\in\FF[t]$ in $a$ with $\sigma^m(f_1)=u\,f_2$ where $u\in\FF^*$ and $m\in\ZZ\setminus\{0\}$, we take that one which is related to the others only by positive shifts $m$). Thus $f$ occurs in $a$, but not in the elements $\sigma(a),\sigma^2(a),\dots,\sigma^m(a)$. Therefore $f$ cannot cancel in $h$ and it follows that $h\notin\FF$.\\
(4) Let $a\in\FF(t)$ and suppose that $\sigma^m(a)=a$ for some $m\geq2$. If $a\in\FF$, then $a\in\const{\FF}{\sigma}$ by assumption. Otherwise if $a\notin\FF$, define $h:=a\,\sigma(a)\dots\sigma^{m-1}(a)$. By part~(3) it follows that $h\notin\FF$. However,
$\frac{\sigma(h)}{h}=\frac{\sigma^m(a)}{a}=1$ and hence $h\in\const{\FF(t)}{\sigma}=\const{\FF}{\sigma}\subseteq\FF$, a contradiction. Summarizing, $\const{\FF(t)}{\sigma^m}\subseteq\const{\FF(t)}{\sigma}$ and thus equality holds.
 \end{proof}

Applying part~(4) of Lemma~\ref{Lemma:ProofForConstantStable} iteratively we obtain 
\begin{proposition}\label{Prop:ConstantStableFieldVersion}
Let $\dfield{\EE}{\sigma}$ be a (nested) \pisiE-field extension of $\dfield{\FF}{\sigma}$. If $\dfield{\FF}{\sigma}$ is constant-stable, then $\dfield{\EE}{\sigma}$ is constant-stable. 
\end{proposition}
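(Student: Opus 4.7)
The statement is essentially an immediate iteration of part~(4) of Lemma~\ref{Lemma:ProofForConstantStable}, so the proof plan is a straightforward induction on the height of the tower of \pisiE-field extensions. I would set up the nested extension as a chain
\[
\FF=\FF_0\subseteq\FF_1\subseteq\dots\subseteq\FF_n=\EE,
\]
where each step $\FF_{i+1}=\FF_i(t_{i+1})$ is a single \piE- or \sigmaE-field extension, i.e., $\sigma(t_{i+1})=\alpha_{i+1}\,t_{i+1}$ with $\alpha_{i+1}\in\FF_i^*$, or $\sigma(t_{i+1})=t_{i+1}+\beta_{i+1}$ with $\beta_{i+1}\in\FF_i$, and in both cases $\const{\FF_{i+1}}{\sigma}=\const{\FF_i}{\sigma}$.

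Then I would proceed by induction on $i$, with the claim that each $\dfield{\FF_i}{\sigma}$ is constant-stable. The base case $i=0$ is the hypothesis on $\dfield{\FF}{\sigma}$. For the inductive step, assuming $\dfield{\FF_i}{\sigma}$ is constant-stable, part~(4) of Lemma~\ref{Lemma:ProofForConstantStable} applies directly to the single-step \pisiE-field extension $\dfield{\FF_i(t_{i+1})}{\sigma}$ of $\dfield{\FF_i}{\sigma}$ and yields that $\dfield{\FF_{i+1}}{\sigma}$ is constant-stable. After $n$ iterations, this establishes constant-stability of $\dfield{\EE}{\sigma}=\dfield{\FF_n}{\sigma}$.

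The only real subtlety to verify is that the hypotheses of the cited lemma are met at each step; namely that every intermediate field $\dfield{\FF_i}{\sigma}$ is itself a \pisiE-field, so that $\const{\FF_{i+1}}{\sigma}=\const{\FF_i}{\sigma}$ — but this is exactly the defining property of a nested \pisiE-field extension, so no extra work is required. I do not expect any genuine obstacle here; the whole content has been absorbed into the single-step lemma, and Proposition~\ref{Prop:ConstantStableFieldVersion} is just its transitive closure along the tower.
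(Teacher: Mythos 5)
Your proposal matches the paper's argument exactly: the paper proves the proposition by the one-line remark ``Applying part~(4) of Lemma~\ref{Lemma:ProofForConstantStable} iteratively,'' which is precisely your induction along the tower $\FF=\FF_0\subseteq\dots\subseteq\FF_n=\EE$. The argument is correct and no additional verification is needed beyond what you note.
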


We note that a difference field $\dfield{\KK}{\sigma}$ with $\sigma=\id$ is trivially constant-stable. Thus we rediscover the following result of~\cite{Karr:81}.

\begin{corollary}\label{Cor:PisiFieldIsConstantStable}
A \pisiE-field is constant-stable. 
\end{corollary}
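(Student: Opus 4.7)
The plan is to obtain this as an immediate consequence of Proposition~\ref{Prop:ConstantStableFieldVersion} together with the remark immediately preceding the corollary that a difference field whose automorphism acts as the identity is constant-stable.

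First, I would unfold the definition of a \pisiE-field $\dfield{\FF}{\sigma}$ over $\KK$: it is a (nested) \pisiE-field extension of $\dfield{\KK}{\sigma}$ satisfying $\const{\FF}{\sigma}=\const{\KK}{\sigma}$. Since $\KK$ plays the role of the constant field in this setup, $\sigma$ restricted to $\KK$ is the identity, so $\sigma^k|_{\KK}=\id$ for every $k\geq1$, which gives $\const{\KK}{\sigma^k}=\KK=\const{\KK}{\sigma}$ for all $k\in\NN$. Hence $\dfield{\KK}{\sigma}$ is constant-stable.

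Second, since $\dfield{\FF}{\sigma}$ is by definition a (nested) \pisiE-field extension of the constant-stable base $\dfield{\KK}{\sigma}$, Proposition~\ref{Prop:ConstantStableFieldVersion} applies verbatim and yields that $\dfield{\FF}{\sigma}$ is constant-stable.

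There is no real obstacle here; all the content lies in Lemma~\ref{Lemma:ProofForConstantStable}(4), which was iterated to obtain Proposition~\ref{Prop:ConstantStableFieldVersion}. The corollary is simply the specialization of that proposition to the base case where the ground field is the field of constants.
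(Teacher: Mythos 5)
Your proof is correct and is exactly the paper's argument: the paper derives the corollary from Proposition~\ref{Prop:ConstantStableFieldVersion} together with the observation, stated immediately before the corollary, that a difference field with $\sigma=\id$ (here the constant field $\KK$) is trivially constant-stable. You have merely spelled out the same two steps in slightly more detail.
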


In particular, one can enhance Proposition~\ref{Prop:ConstantStableFieldVersion} to the following ring-version.

\begin{corollary}\label{Cor:NestedPIsPiAndCS}
Let $\dfield{\EE}{\sigma}$ be a \pisiE-extension of a difference field $\dfield{\FF}{\sigma}$. If $\dfield{\FF}{\sigma}$ is constant-stable, $\dfield{\EE}{\sigma}$ is constant-stable.
\end{corollary}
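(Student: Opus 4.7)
The strategy is to reduce to the field-level version, Proposition~\ref{Prop:ConstantStableFieldVersion}, by passing to the field of fractions of $\EE$.

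Write $\EE = \FF\langle t_1\rangle\cdots\langle t_r\rangle$ as the given nested \pisiE-extension. Since each $t_i$ is transcendental over the previous level, the quotient field is $\EE' = \FF(t_1,\dots,t_r)$, and $\sigma$ extends uniquely to a field automorphism of $\EE'$ (well-definedness and bijectivity follow from $\sigma$ being an automorphism of $\EE$). By the remark following the definitions of the $P$-/$S$-/$PS$-field extensions, $\dfield{\EE'}{\sigma}$ is a nested $PS$-field extension of $\dfield{\FF}{\sigma}$.

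The main technical point is to verify that this $PS$-tower is actually a \pisiE-field tower, i.e., that $\const{\EE'}{\sigma}=\const{\FF}{\sigma}$. I would proceed by induction along the tower, showing at each single layer that passing from $\AR\langle t\rangle$ (resp.\ $\AR[t]$) to $\mathrm{Frac}(\AR)(t)$ introduces no new constants, given that the ring layer is already a \piE- or \sigmaE-extension. In the \piE-case with $\sigma(t)=\alpha t$, a $\sigma$-invariant $f=p/q$ with $p,q\in\mathrm{Frac}(\AR)[t]$ coprime forces $\sigma(p)=c\,p$ and $\sigma(q)=c\,q$ for some common $c\in\mathrm{Frac}(\AR)^*$; comparing coefficients in $p=\sum a_j t^j$ yields $\sigma(a_j)\alpha^j=c\,a_j$, so any two nonzero coefficients $a_j,a_{j'}$ give $\sigma(a_j/a_{j'})/(a_j/a_{j'})=\alpha^{j'-j}$, contradicting the \piE-property unless $p,q$ are monomials, in which case $f\in\mathrm{Frac}(\AR)$. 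The \sigmaE-layer is handled by the analogous pole-locus argument that confines $\sigma$-invariants first to $\mathrm{Frac}(\AR)[t]$ and then, via the absence of a telescoper $\sigma(g)-g=\beta$, to $\mathrm{Frac}(\AR)$. I expect this layer-by-layer verification to be the bulk of the technical work. Once it is in place, Proposition~\ref{Prop:ConstantStableFieldVersion} applied to the constant-stable ground field $\dfield{\FF}{\sigma}$ yields that $\dfield{\EE'}{\sigma}$ is constant-stable.

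Finally, constant-stability transfers back to the subring $\EE$: for every $k\geq 1$,
$$\const{\EE}{\sigma^k}\;=\;\EE\cap\const{\EE'}{\sigma^k}\;=\;\EE\cap\const{\EE'}{\sigma}\;=\;\const{\EE}{\sigma},$$
where the middle equality uses the constant-stability of $\dfield{\EE'}{\sigma}$ just obtained. Hence $\dfield{\EE}{\sigma}$ is constant-stable, completing the proof.
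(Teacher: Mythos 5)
Your proposal is correct and follows essentially the same route as the paper: pass to the quotient field $\HH=\FF(t_1)\dots(t_e)(s_1)\dots(s_l)$, observe that it is a \pisiE-field extension of $\dfield{\FF}{\sigma}$, apply Proposition~\ref{Prop:ConstantStableFieldVersion} to get constant-stability of $\HH$, and transfer it back to the subring $\EE$ by intersecting. The only difference is that the ``main technical point'' you single out --- that the quotient-field tower introduces no new constants --- is precisely the content of Proposition~\ref{Prop:PiEquivalences}, which the paper simply cites at this step, so your layer-by-layer coefficient-comparison argument can be replaced by that citation.
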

\begin{proof}
Let $\EE=\FF\langle t_1\rangle\dots\langle t_e\rangle[s_1]\dots[s_l]$ where the $t_i$ are \piE-monomials and the $s_i$ are \sigmaE-monomials. Consider the $PS$-field extension $\dfield{\HH}{\sigma}$ of $\dfield{\FF}{\sigma}$ with $\HH=\FF(t_1)\dots(t_e)(s_1)\dots(s_l)$ in which $\dfield{\EE}{\sigma}$ is contained as sub-difference ring of $\dfield{\HH}{\sigma}$. By Prop.~\ref{Prop:PiEquivalences} $\const{\HH}{\sigma}=\const{\FF}{\sigma}$
and by Prop.~\ref{Prop:ConstantStableFieldVersion} $\dfield{\HH}{\sigma}$ is constant-stable. Thus for any integer $m>0$, $\const{\FF}{\sigma}=\const{\HH}{\sigma}=\const{\HH}{\sigma^m}\supseteq\const{\EE}{\sigma^m}\supseteq\const{\EE}{\sigma}=\const{\FF}{\sigma}$ and hence $\const{\EE}{\sigma^m}=\const{\EE}{\sigma}(=\const{\FF}{\sigma})$.
\end{proof}

Later we assume that there is a $\KK$-embedding $\fct{\tau}{\FF}{\seqK}$ of a difference field $\dfield{\FF}{\sigma}$ with constant field $\KK$. In this case we can specialize~\cite[Lemma~5.12]{DR3} to 

\begin{lemma}\label{Lemma:TauImpliesConstantStable}
Let $\dfield{\FF}{\sigma}$ be a difference field with $\KK=\const{\FF}{\sigma}$. If there is a $\KK$-embedding $\fct{\tau}{\FF}{\seqK}$, then $\dfield{\FF}{\sigma}$ is constant-stable. 
\end{lemma}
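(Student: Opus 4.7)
The plan is to show directly that $\const{\FF}{\sigma^k}\subseteq \const{\FF}{\sigma}=\KK$ for every $k\geq 1$; the reverse inclusion is trivial. So fix $k\geq 1$ and $f\in\FF$ with $\sigma^k(f)=f$; I will argue $f\in\KK$.

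First I would observe that $f$ is automatically algebraic over $\KK$. Consider the ``orbit polynomial''
\begin{equation*}
P(Y)=\prod_{i=0}^{k-1}\bigl(Y-\sigma^i(f)\bigr)\in \FF[Y].
\end{equation*}
Since $\sigma^k(f)=f$, applying $\sigma$ to the coefficients just cyclically permutes the factors, so $\sigma(P)=P$; equivalently each coefficient of $P$ lies in $\const{\FF}{\sigma}=\KK$. Thus $P\in\KK[Y]$ and $P(f)=0$, so $f$ has a minimal polynomial $p\in\KK[Y]$ of some degree $d\geq 1$, with $p(f)=0$ in $\FF$.

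Next I would argue $d=1$ by contradiction, using $\tau$. Suppose $d\geq 2$; then $p$ is irreducible over $\KK$ of degree $\geq 2$, so $p$ has \emph{no} roots in $\KK$. Now apply $\tau$ to the relation $p(f)=0$. Writing $p(Y)=Y^d+a_{d-1}Y^{d-1}+\cdots+a_0$ with $a_i\in\KK$, and using that $\tau$ is a ring homomorphism with $\tau(a_i)=\vect{a_i}$,
\begin{equation*}
\vect{0}=\tau(p(f))=\tau(f)^d+\vect{a_{d-1}}\,\tau(f)^{d-1}+\cdots+\vect{a_0}\,\in\seqK.
\end{equation*}
Unpacking this equality of sequence-classes via the evaluation function, and setting $f_n:=\ev(f,n)\in\KK$, we obtain $p(f_n)=0$ for all sufficiently large $n$. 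But $f_n\in\KK$ and $p$ has no roots in $\KK$, a contradiction. Hence $d=1$, i.e.\ $f\in\KK$, which is what we wanted.

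I do not expect a real obstacle here: the algebraic-orbit step is standard, and the translation of $p(f)=0$ into $p(f_n)=0$ via $\tau$ only requires that $\tau$ is a $\KK$-homomorphism (the injectivity assumption is not even used in the core argument, and is automatic anyway since $\FF$ is a field and $\tau(1)=\vect{1}\neq\vect{0}$). The one mild technicality is that equalities in $\seqK$ hold only modulo eventual agreement, but this is harmless: $p(f_n)=0$ for all large $n$ is still incompatible with $p$ having no root in $\KK$. This yields $\const{\FF}{\sigma^k}\subseteq \KK=\const{\FF}{\sigma}$, establishing constant-stability.
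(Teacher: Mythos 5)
Your proof is correct. Note that the paper itself gives no argument for this lemma --- it is obtained by ``specializing [DR3, Lemma~5.12]'' --- so there is no in-paper proof to match; your write-up supplies a valid, self-contained one. The two halves are both sound: the orbit polynomial $P(Y)=\prod_{i=0}^{k-1}(Y-\sigma^i(f))$ is fixed coefficientwise by $\sigma$ because $\sigma^k(f)=f$ cyclically permutes its factors, so $f$ is algebraic over $\KK$; and pushing the minimal polynomial relation $p(f)=0$ through $\tau$ forces $p$ to have a root in $\KK$, whence $d=1$. Two small remarks. First, a $\KK$-homomorphism in the paper's sense (Definition~\ref{Def:ComputableTau}) is not required to come from an evaluation function, so strictly speaking you should not ``unpack via the evaluation function''; but this is cosmetic, since any element of $\seqK$ is by definition a class of sequences in $\KK^{\NN}$ and the ring operations are componentwise on representatives, which is all your argument needs. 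Second, you are right that injectivity is never invoked and is in any case automatic for a field. A slightly different route --- likely the one in the cited [DR3] --- does use injectivity essentially: $\Shift^k(\tau(f))=\tau(f)$ makes the sequence eventually periodic, hence taking only finitely many values $v_1,\dots,v_m\in\KK$; then $\tau\bigl(\prod_j(f-v_j)\bigr)=\vect{0}$, injectivity gives $\prod_j(f-v_j)=0$ in $\FF$, and since $\FF$ is a domain $f=v_j\in\KK$ for some $j$. Your version trades that use of injectivity for the minimal-polynomial step; both are equally short, and yours has the mild bonus of working for any $\KK$-homomorphism out of a field.
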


\subsection{The shape of solutions of first-order homogeneous equations}

Suppose we are given a difference field $\dfield{\FF}{\sigma}$ with $w\in\FF^*$. 
Then at various places in the article we will use the fact that one can predict the shape of the solution $g$ of a difference equation $\sigma(g)-w\,g=0$ if $g$ is from a certain class of $R\Pi$-extensions. More precisely, applying~\cite[Thm.~2.22]{DR1} with~\cite[Cor.~4.15]{DR1} yields
\begin{proposition}\label{Prop:StructureofProductSol}
Let $\dfield{\FF}{\sigma}$ be a difference field with constant field $\KK$, and let
$\dfield{\EE}{\sigma}$ be an \rpiE-extension of $\dfield{\FF}{\sigma}$ with $\EE=\FF\langle x_1\rangle\dots\langle x_r\rangle[z_1]\dots[z_l]$ where for $1\leq i\leq r$ the $x_i$ with $\frac{\sigma(x_i)}{x_i}\in\FF\langle x_1\rangle\dots\langle x_{i-1}\rangle^*$ are \piE-monomials and for $1\leq i\leq l$ the $z_i$ with $\frac{\sigma(z_i)}{z_i}\in\KK^*$ are \rE-monomials. If there are $g\in\EE\setminus\{0\}$ and $w\in\FF^*$ with $\sigma(g)=w\,g$, then 
$$g=h\,x_1^{m_1}\dots x_r^{m_r}z_1^{n_1}\dots z_l^{n_l}$$ 
with $m_1,\dots,m_r\in\ZZ$ and $n_1,\dots,n_l\in\NN$ with $0\leq n_i<\ord(z_i)$ for $1\leq i\leq l$.   
\end{proposition}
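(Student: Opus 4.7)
The plan is to prove the statement by induction on $r+l$, peeling off the topmost generator and reducing the equation $\sigma(g)=wg$ to a first-order equation of the same shape in the one-step-smaller tower. Throughout, the factor $h$ will be extracted at the bottom, i.e.\ $h\in\FF^*$, since each peeling step contributes only a monomial factor in the newly removed generator. The base case $r=l=0$ is immediate: $g\in\FF^*$ and one takes $h:=g$, all exponents zero.

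For the inductive step with an \rE-monomial $z:=z_l$ on top (order $d>1$, $\sigma(z)=\rho z$, $\rho\in\KK^*$), let $\EE_0:=\FF\langle x_1\rangle\dots\langle x_r\rangle[z_1]\dots[z_{l-1}]$ and write $g=\sum_{k=0}^{d-1}g_k z^k$ uniquely with $g_k\in\EE_0$. Substituting into $\sigma(g)=wg$ and matching coefficients of $z^k$ yields $\sigma(g_k)\rho^k=wg_k$ for every $k$. I claim at most one $g_k$ is nonzero. If $g_k,g_{k'}\neq0$ with $0\le k\neq k'<d$, then $\sigma\!\bigl(g_k z^{k}\bigr)\bigl(g_{k'}z^{k'}\bigr)^{-1}=\bigl(g_k z^{k}\bigr)\bigl(g_{k'}z^{k'}\bigr)^{-1}$ in a suitable localization, producing a $\sigma$-fixed element that by the \rE-extension property (constants unchanged) lies in $\KK$; the uniqueness of the $\EE_0$-expansion in $\EE_0[z]$ then forces $g_{k'}=0$, a contradiction. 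Hence $g=g_{n_l}z^{n_l}$ for a single $0\le n_l<d$, and $g_{n_l}\in\EE_0\setminus\{0\}$ satisfies $\sigma(g_{n_l})=(w\rho^{-n_l})g_{n_l}$ with $w\rho^{-n_l}\in\FF^*$, so the induction hypothesis applies and produces the asserted shape.

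For the inductive step with a \piE-monomial $x:=x_r$ on top (after all \rE-monomials have been peeled off), set $\EE_0:=\FF\langle x_1\rangle\dots\langle x_{r-1}\rangle$, which is a tower of Laurent polynomial extensions over the field $\FF$, hence an integral domain with quotient field $\QQ_0$. Write $g=\sum_{k=p}^q g_k x^k$ uniquely with $g_k\in\EE_0$ and set $\alpha:=\sigma(x)/x\in\EE_0^*$; coefficient comparison gives $\sigma(g_k)\alpha^k=wg_k$. If $g_k,g_{k'}\neq0$ with $k\neq k'$, the ratio $h_0:=g_k/g_{k'}\in\QQ_0^*$ satisfies $\sigma(h_0)=\alpha^{k'-k}h_0$, so $h_0 x^{k-k'}$ is $\sigma$-fixed in $\QQ_0(x)$ and thus lies in $\KK$ by the \piE-property lifted to the quotient field (Proposition~\ref{Prop:PiEquivalences}); this yields $h_0=cx^{k'-k}$ with $c\in\KK$, contradicting $h_0\in\QQ_0$ by transcendence of $x$. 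So $g=g_{m_r}x^{m_r}$, and $g_{m_r}$ satisfies $\sigma(g_{m_r})=(w\alpha^{-m_r})g_{m_r}$, to which the induction hypothesis applies after strengthening it to allow the right-hand coefficient to range over the units of the currently considered subring (one iterates the peeling and collects the exponents $m_1,\dots,m_r,n_1,\dots,n_l$ at the end, with the residual factor $h$ landing in $\FF^*$).

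The main obstacle will be making the ``pass to a quotient'' step rigorous at the \rE-layer, where $\EE_0[z]$ contains zero divisors, and carefully tracking the induction hypothesis so that the coefficient $w$ in the reduced equation still lies in the appropriate unit group after multiplication by $\alpha^{-m_r}$ or $\rho^{-n_l}$. Both points are precisely what the cited \cite[Thm.~2.22]{DR1} and \cite[Cor.~4.15]{DR1} provide in their general form; the proof above essentially unpacks their content in the situation where the right-hand coefficient of the equation is constrained to $\FF^*$.
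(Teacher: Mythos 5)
The paper itself does not prove this proposition: it obtains it by invoking \cite[Thm.~2.22]{DR1} together with \cite[Cor.~4.15]{DR1}, which is exactly where your proposal also lands, and the induction you sketch is indeed the argument behind those citations (peel the top generator, compare coefficients, show at most one coefficient survives, recurse with the modified right-hand side). The base case and the overall bookkeeping (the residual factor $h$ ending up in $\FF^*$, the need to strengthen the induction hypothesis so that $w$ may range over the units of the current subring) are correctly identified.

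The one step you actually try to execute that would fail as written is the uniqueness argument at the \rE-layer. You form $(g_{k'}z^{k'})^{-1}$ ``in a suitable localization'', but $\EE_0=\FF\langle x_1\rangle\dots\langle x_r\rangle[z_1]\dots[z_{l-1}]$ contains zero divisors, so $g_{k'}\neq 0$ does not make $g_{k'}$ invertible or even a non-zero-divisor; localizing at a zero divisor is not injective, and the identity $\const{\EE}{\sigma}=\KK$ is only available for $\EE$ itself, not for such a localization, so the claim that the $\sigma$-fixed element lies in $\KK$ is unjustified there. The repair is to reverse the order of your two steps: each nonzero coefficient $g_k$ already satisfies $\sigma(g_k)=(w\rho^{-k})\,g_k$ with $w\rho^{-k}$ a unit, so the induction hypothesis applies to each of them separately and shows that every nonzero $g_k$ equals $h_k$ times a monomial in the generators, hence is a \emph{unit} of $\EE_0$. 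Only then can you form $g_k\,g_{k'}^{-1}z^{k-k'}$ inside $\EE$, observe that it is a constant, hence in $\KK^*$, and get the contradiction $z^{k-k'}\in\EE_0\cdot 1$ with the freeness of $1,z,\dots,z^{d-1}$ over $\EE_0$. A smaller point: at the \piE-layer, Proposition~\ref{Prop:PiEquivalences} as stated only covers $\alpha_i\in\FF^*$, whereas here $\frac{\sigma(x_i)}{x_i}\in\FF\langle x_1\rangle\dots\langle x_{i-1}\rangle^*$, so passing from $\const{\EE}{\sigma}=\KK$ to the constants of $\QQ_0(x_r)$ genuinely requires \cite[Cor.~4.15]{DR1} rather than that proposition.
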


\subsection{Characterizations of $R\Pi$-extensions}

We start with the following
\begin{definition}[\cite{Karr:81}]\label{Def:MSet}
Let $\dfield{\FF}{\sigma}$ be a difference field. For $\alpha_1,\dots,\alpha_r\in\FF^*$,
$$M((\alpha_1,\dots,\alpha_r),\FF)=\{(n_1,\dots,n_r)\in\ZZ^r\mid\exists g\in\FF^*:\, \alpha_1^{n_1}\dots\alpha_r^{n_r}=\tfrac{\sigma(g)}{g}\}.$$
\end{definition}

\noindent Note that $M((\alpha_1,\dots,\alpha_r),\FF)$ is a submodule of $\ZZ^r$ over $\ZZ$. As a consequence, it is finitely generated and has a basis of rank $\leq r$. Due to~\cite[Thm.~3.2 and 3.5]{Schneider:05c} (based on~\cite{Karr:81,Ge:93}) such a basis can be calculated for \pisiE-fields.

\begin{proposition}\label{Prop:ComputeBasisForM}
Let $\dfield{\FF}{\sigma}$ be a \pisiE-field over $\KK$ where $\KK$ is built by a rational function field over an algebraic number field. Then a basis of $M((\alpha_1,\dots,\alpha_r),\FF)$ for $\alpha_1,\dots,\alpha_r\in\FF^*$ can be computed.
\end{proposition}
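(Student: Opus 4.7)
The plan is to invoke and combine two known algorithmic components: Karr's reduction procedure for nested \pisiE-field extensions and Ge's algorithm for computing multiplicative relations in rational function fields over algebraic number fields. I would proceed by induction on the height of a tower $\KK=\FF_0\subseteq\FF_1\subseteq\dots\subseteq\FF_h=\FF$ of \piE- and \sigmaE-field extensions witnessing that $\dfield{\FF}{\sigma}$ is a \pisiE-field over $\KK$.

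At the inductive step, writing $\FF_i=\FF_{i-1}(t)$, the question ``is $\alpha_1^{n_1}\cdots\alpha_r^{n_r}=\sigma(g)/g$ solvable for some $g\in\FF_i^*$?'' is analysed by exploiting the $t$-structure of the $\alpha_j$. In the \piE-case one reads off integer linear constraints on $(n_1,\dots,n_r)$ from the $t$- and $t^{-1}$-degrees (using Lemma~\ref{Lemma:ProofForConstantStable}(1) to ensure that irreducible factors other than $t$ cannot telescope across shifts); in the \sigmaE-case one argues analogously via the irreducible factorization over $\FF_{i-1}[t]$ combined with the shift action. The outcome is a computable sublattice $L_i\subseteq\ZZ^r$ together with explicit $\tilde{\alpha}_1,\dots,\tilde{\alpha}_r\in\FF_{i-1}^*$ such that, restricted to $L_i$, $\alpha_1^{n_1}\cdots\alpha_r^{n_r}$ is $\sigma(g)/g$-equivalent to $\tilde{\alpha}_1^{n_1}\cdots\tilde{\alpha}_r^{n_r}$. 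Thus $M((\alpha_1,\dots,\alpha_r),\FF_i)=L_i\cap M((\tilde{\alpha}_1,\dots,\tilde{\alpha}_r),\FF_{i-1})$, and a basis of the latter is obtained by the induction hypothesis; intersecting with $L_i$ is a standard Hermite/Smith normal form computation. Descending down to $\FF_0=\KK$, the condition $\sigma(g)=g$ collapses (as $\sigma|_\KK=\id$) to the purely multiplicative relation problem $\beta_1^{n_1}\cdots\beta_r^{n_r}=1$ in $\KK^*$, for which Ge's algorithm produces a $\ZZ$-basis precisely because $\KK$ is a rational function field over an algebraic number field. Combining the bases at each level yields the desired basis of $M((\alpha_1,\dots,\alpha_r),\FF)$.

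The main obstacle is the inductive step, which requires a correct and effective reduction from $\FF_i$ to $\FF_{i-1}$ together with a finite description of $L_i$; this is precisely the content of Karr's telescoping machinery as repackaged in~\cite[Thm.~3.2 and 3.5]{Schneider:05c}. Rather than reprove this technical reduction, I would cite those theorems, verify that the base case hypothesis on $\KK$ is exactly what Ge's algorithm requires, and observe that computability propagates upward through the tower because every extension step in a \pisiE-field is explicit.
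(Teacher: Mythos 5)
Your proposal is correct and matches the paper's treatment: the paper does not reprove this result but simply cites \cite[Thm.~3.2 and 3.5]{Schneider:05c} (which in turn rests on Karr's tower reduction and Ge's algorithm for multiplicative relations over algebraic number fields), exactly as you do. Your sketch of the inductive reduction through the \pisiE-tower is a fair description of what those cited theorems accomplish, so nothing further is needed.
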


In the following we will elaborate a characterization of single-nested $R\Pi$-exten\-sions using the notion given in Definition~\ref{Def:MSet}.
For one \piE-monomial we start with
\begin{proposition}(\cite[Thm.~2.12.(2)]{DR1})\label{Prop:PiChar}
Let $\dfield{\AR\langle t\rangle}{\sigma}$ be a $P$-extension of a difference ring $\dfield{\AR}{\sigma}$ with $\sigma(t)=a\,t$. Then this is a \piE-extension (i.e., $\const{\AR\langle t\rangle}{\sigma}=\const{\AR}{\sigma}$) iff there are no $n\in\ZZ\setminus\{0\}$ and $g\in\AR\setminus\{0\}$ with $\sigma(g)=a^n\,t$. 
\end{proposition}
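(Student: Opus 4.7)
The plan is to prove both directions by working with the Laurent polynomial expansion of a putative constant and using that $\sigma(t^k) = a^k t^k$ for all $k \in \ZZ$. A general element $h \in \AR\langle t\rangle$ can be written uniquely as a finite sum $h = \sum_{k \in \ZZ} h_k\,t^k$ with $h_k \in \AR$ (only finitely many nonzero), because $t$ is a Laurent polynomial variable and hence algebraically independent over $\AR$. Comparison of coefficients in such an expansion will be the workhorse in both directions.

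For the direction ``\piE-extension $\Rightarrow$ no nontrivial solution'', I would argue contrapositively. Suppose there exist $n \in \ZZ \setminus\{0\}$ and $g \in \AR\setminus\{0\}$ with $\sigma(g) = a^n\,g$. Form $h := g\,t^{-n} \in \AR\langle t\rangle$. Then
\[
\sigma(h) \;=\; \sigma(g)\,\sigma(t)^{-n} \;=\; a^n g \cdot (a\,t)^{-n} \;=\; a^n g \cdot a^{-n} t^{-n} \;=\; g\,t^{-n} \;=\; h,
\]
so $h \in \const{\AR\langle t\rangle}{\sigma}$. On the other hand, since $n \neq 0$ and $g \neq 0$ and $t$ is a Laurent polynomial variable over $\AR$, we have $h = g\,t^{-n} \notin \AR$, so $h \notin \const{\AR}{\sigma}$. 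This contradicts the assumption that the extension is a \piE-extension.

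For the converse, suppose no such pair $(n,g)$ exists and take any $h \in \const{\AR\langle t\rangle}{\sigma}$. Writing $h = \sum_k h_k\,t^k$ with $h_k \in \AR$ and applying $\sigma$ yields
\[
\sum_k \sigma(h_k)\,a^k\,t^k \;=\; \sigma(h) \;=\; h \;=\; \sum_k h_k\,t^k.
\]
Since $t$ is transcendental (in the Laurent sense) over $\AR$, we may compare coefficients of each $t^k$ to obtain $\sigma(h_k) = a^{-k}\,h_k$ for every $k \in \ZZ$. If $h_k \neq 0$ for some $k \neq 0$, then setting $g := h_k \in \AR\setminus\{0\}$ and $n := -k \in \ZZ\setminus\{0\}$ gives $\sigma(g) = a^n\,g$, contradicting the hypothesis. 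Hence $h_k = 0$ for all $k \neq 0$, so $h = h_0 \in \AR$ with $\sigma(h_0) = h_0$, i.e., $h \in \const{\AR}{\sigma}$. The reverse inclusion $\const{\AR}{\sigma} \subseteq \const{\AR\langle t\rangle}{\sigma}$ is trivial.

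The proof is essentially a direct computation; no real obstacle arises, because the key algebraic fact --- that $\{t^k\}_{k\in\ZZ}$ is an $\AR$-basis of $\AR\langle t\rangle$ --- is built into the definition of a $P$-extension as a Laurent polynomial ring. (I am reading the statement's displayed equation $\sigma(g) = a^n\,t$ as the evident typo for $\sigma(g) = a^n\,g$, since that is the Karr-style condition that actually characterizes \piE-extensions.)
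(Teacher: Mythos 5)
Your proof is correct: the paper itself gives no argument (it only cites \cite[Thm.~2.12.(2)]{DR1}), and your coefficient-comparison on the Laurent expansion, together with the observation that $g\,t^{-n}$ is a new constant in the contrapositive direction, is exactly the standard argument behind that citation. You are also right that the displayed condition $\sigma(g)=a^n\,t$ is a typo for $\sigma(g)=a^n\,g$, which is the Karr-type criterion the proposition intends.
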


\noindent For nested \piE-extensions defined over a difference field we will need in addition

\begin{proposition}\label{Prop:PiEquivalences}(\cite[Lemma~5.1]{OS:18})
	Let $\dfield{\FF}{\sigma}$ be a difference field with $\alpha_1,\dots,\alpha_r\in\FF^*$. Let $\dfield{\EE}{\sigma}$ be the $P$-extension of $\dfield{\FF}{\sigma}$ with $\EE=\FF(x_{1})\dots(x_{r})$ and $\sigma(x_i)=\alpha_i\,x_i$ for $1\leq i \leq r$ and let\footnote{Note that the quotient field of $\EE$ is $\HH$. In particular, $\dfield{\HH}{\sigma}$ is a sub-difference ring of $\dfield{\EE}{\sigma}$.} $\dfield{\HH}{\sigma}$ be the $P$-ring extension of $\dfield{\FF}{\sigma}$ with $\HH=\FF\langle x_1\rangle\dots\langle x_r\rangle$ and $\sigma(x_i)=\alpha_i\,x_i$ for $1\leq i \leq r$. Then the following statements are equivalent. 
	\begin{enumerate} 
		\item $M((\alpha_1,\dots,\alpha_r),\FF)=\{\vect{0}\}$.
		\item $\const{\EE}{\sigma}=\dfield{\FF}{\sigma}$, i.e., $\dfield{\EE}{\sigma}$ is a \piE-field extension of $\dfield{\FF}{\sigma}$.
		\item $\const{\HH}{\sigma}=\dfield{\FF}{\sigma}$, i.e., $\dfield{\HH}{\sigma}$ is a \piE-extension of $\dfield{\FF}{\sigma}$.
	\end{enumerate}
\end{proposition}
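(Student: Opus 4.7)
The plan is to establish the cycle $(2) \Rightarrow (3) \Rightarrow (1) \Rightarrow (2)$, which yields all three equivalences. The step $(2) \Rightarrow (3)$ is immediate from $\FF \subseteq \HH \subseteq \EE$: indeed $\FF \subseteq \const{\HH}{\sigma} \subseteq \const{\EE}{\sigma} \cap \HH = \FF \cap \HH = \FF$.

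For $(3) \Rightarrow (1)$ I argue by contrapositive. If some nonzero $(n_1,\ldots,n_r) \in M((\alpha_1,\ldots,\alpha_r),\FF)$ is witnessed by $g \in \FF^*$ with $\alpha_1^{n_1}\cdots\alpha_r^{n_r} = \sigma(g)/g$, then the Laurent monomial $h := g^{-1}\, x_1^{n_1}\cdots x_r^{n_r} \in \HH^*$ is $\sigma$-invariant, as a direct computation using $\sigma(x_i)=\alpha_i x_i$ shows. Since some $n_i\neq 0$, $h \notin \FF$, whence $\const{\HH}{\sigma} \supsetneq \FF$, contradicting (3).

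The implication $(1) \Rightarrow (2)$ is the main work; I proceed by induction on $r$. The base $r=0$ is trivial. For the step, the embedding $\ZZ^{r-1}\hookrightarrow\ZZ^r$ via $(n_1,\ldots,n_{r-1})\mapsto (n_1,\ldots,n_{r-1},0)$ forces $M((\alpha_1,\ldots,\alpha_{r-1}),\FF) = \{0\}$; the induction hypothesis then gives that $\FF_{r-1} := \FF(x_1)\cdots(x_{r-1})$ is a nested \piE-field extension of $\FF$, and by the already-proved $(2)\Rightarrow(3)$ at the lower level also that $\HH_{r-1} := \FF\langle x_1\rangle\cdots\langle x_{r-1}\rangle$ is a nested \piE-extension of $\FF$. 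It remains to verify $\const{\FF_{r-1}(x_r)}{\sigma} = \const{\FF_{r-1}}{\sigma}$. By the classical field analog of Proposition~\ref{Prop:PiChar}, this reduces to ruling out $n\in\ZZ\setminus\{0\}$ and $g\in\FF_{r-1}^*$ with $\sigma(g) = \alpha_r^n g$. Supposing such a pair exists, take a coprime factorization $g = g_1/g_2$ in the UFD $\HH_{r-1}$; the identity $\sigma(g_1) g_2 = \alpha_r^n g_1\sigma(g_2)$ together with $\sigma$ preserving coprimality and the unit group forces $g_2$ to be a unit of $\HH_{r-1}$, so up to units $g\in\HH_{r-1}$. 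Applying Proposition~\ref{Prop:StructureofProductSol} to the \piE-extension $\HH_{r-1}$ of $\FF$ then yields $g = h\, x_1^{m_1}\cdots x_{r-1}^{m_{r-1}}$ for some $h\in\FF^*$ and $m_i\in\ZZ$. Substituting into $\sigma(g)=\alpha_r^n g$ and comparing Laurent monomials gives
\[
\frac{\sigma(h)}{h} = \alpha_r^n\,\alpha_1^{-m_1}\cdots\alpha_{r-1}^{-m_{r-1}},
\]
exhibiting the nonzero tuple $(-m_1,\ldots,-m_{r-1},n)$ as an element of $M((\alpha_1,\ldots,\alpha_r),\FF)$, contradicting (1).

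The main obstacle is the passage from the field setting to the ring setting in the inductive step: Proposition~\ref{Prop:StructureofProductSol} describes solutions only inside an $R\Pi$-ring extension, while the putative $g$ lives a priori in the quotient field $\FF_{r-1}$. The UFD / coprime-factorization reduction of $g$ to an element of $\HH_{r-1}$ is the delicate point; it leverages the fact that $\HH_{r-1}$ is a UFD and that $\sigma$ restricts to a bijection on its units. Once this reduction is in place, the final exponent comparison against $M((\alpha_1,\ldots,\alpha_r),\FF)$ is routine.
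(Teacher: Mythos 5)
The paper does not prove this proposition; it imports it verbatim from \cite[Lemma~5.1]{OS:18}, so there is no in-paper argument to compare against. Your cycle $(2)\Rightarrow(3)\Rightarrow(1)\Rightarrow(2)$ is the right architecture, the implications $(2)\Rightarrow(3)$ and $(3)\Rightarrow(1)$ are correct (the $\sigma$-invariant Laurent monomial $g^{-1}x_1^{n_1}\cdots x_r^{n_r}$ is the standard witness), and the inductive scheme for $(1)\Rightarrow(2)$ --- reduce to one new \piE-monomial via Karr's field criterion, then pull the putative $g\in\FF_{r-1}^*$ down into the Laurent polynomial ring so that Proposition~\ref{Prop:StructureofProductSol} becomes applicable --- is exactly how such statements are established in \cite{OS:18,DR1}.

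The one step you must tighten is the clause ``forces $g_2$ to be a unit of $\HH_{r-1}$''. Coprimality of $g_1,g_2$ and of $\sigma(g_1),\sigma(g_2)$, applied to $\sigma(g_1)g_2=\alpha_r^n g_1\sigma(g_2)$, yields only $g_2\mid\sigma(g_2)$ and $\sigma(g_2)\mid g_2$, i.e.\ $\sigma(g_2)=u\,g_2$ for some unit $u$ of $\HH_{r-1}$; an element that is merely an associate of its own image under $\sigma$ need not be a unit, so the conclusion does not yet follow. To close the gap: the units of $\HH_{r-1}$ are exactly $c\,x_1^{m_1}\cdots x_{r-1}^{m_{r-1}}$ with $c\in\FF^*$, and since $\sigma$ preserves the (finite) support of $g_2$ while multiplication by the monomial $x^{m}$ translates it, comparing supports forces $m=0$ and hence $u\in\FF^*$. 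Only now does Proposition~\ref{Prop:StructureofProductSol} apply to $g_2$ itself (and likewise to $g_1$, since $\sigma(g_1)=\alpha_r^n u\,g_1$), showing each is an element of $\FF^*$ times a monomial, hence a unit of $\HH_{r-1}$; from there $g$ is a unit of $\HH_{r-1}$ and your final exponent comparison against $M((\alpha_1,\dots,\alpha_r),\FF)$ goes through. With that insertion the proof is complete. (A minor notational point: in $(2)\Rightarrow(3)$ the chain should read $\const{\FF}{\sigma}\subseteq\const{\HH}{\sigma}\subseteq\const{\EE}{\sigma}=\const{\FF}{\sigma}$; writing $\FF$ for $\const{\FF}{\sigma}$ there mirrors a typo already present in the proposition's statement.)
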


\noindent Next, we will consider the case for several $R$-monomials. We start with the following simple observations.

\begin{lemma}\label{Lemma:NoRootElement}
Let $\dfield{\AR}{\sigma}$ be a constant-stable difference ring with constant field $\KK=\const{\AR}{\sigma}$. Then there are no $\gamma\in\AR\setminus\{0\}$ and no root of unity $w\in\KK^*$ with $\ord(w)>1$ and $\sigma(\gamma)=w\,\gamma$. 
\end{lemma}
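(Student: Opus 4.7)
The plan is to derive a contradiction by iterating $\sigma$ and invoking constant-stability. Suppose such a $\gamma\in\AR\setminus\{0\}$ and root of unity $w\in\KK^*$ with $k:=\ord(w)>1$ exist. Since $w\in\KK=\const{\AR}{\sigma}$, we have $\sigma(w)=w$, so a straightforward induction on $j$ shows $\sigma^j(\gamma)=w^j\gamma$ for all $j\geq 0$. In particular, using $w^k=1$, we obtain $\sigma^k(\gamma)=\gamma$, i.e., $\gamma\in\const{\AR}{\sigma^k}$.

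Now I would invoke constant-stability of $\dfield{\AR}{\sigma}$: by Definition~\ref{Def:ConstantStable} we have $\const{\AR}{\sigma^k}=\const{\AR}{\sigma}=\KK$, and hence $\gamma\in\KK$. Since $\gamma\neq 0$ and $\KK$ is a field, $\gamma\in\KK^*$.

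Finally, since $\gamma\in\KK=\const{\AR}{\sigma}$, we also have $\sigma(\gamma)=\gamma$. Combining this with the assumption $\sigma(\gamma)=w\,\gamma$ yields $w\,\gamma=\gamma$, so $(w-1)\gamma=0$ in the field $\KK$. Because $\gamma\neq 0$, this forces $w=1$, contradicting $\ord(w)>1$. Hence no such pair $(\gamma,w)$ can exist.

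There is no real obstacle here; the argument is essentially a two-line application of constant-stability. The only point to be careful about is the justification that $\sigma^k(\gamma)=w^k\gamma$, which relies crucially on $w$ being a constant (so that $\sigma$ distributes over the factor $w$ at each iteration); this is precisely why the hypothesis $w\in\KK^*$ is used.
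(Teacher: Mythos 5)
Your proof is correct and follows exactly the paper's argument: iterate $\sigma$ using the constancy of $w$ to get $\sigma^k(\gamma)=w^k\gamma=\gamma$, invoke constant-stability to place $\gamma$ in $\KK^*$, and then conclude $w=1$, a contradiction. No issues.
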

\begin{proof}
Assume that there are such $\gamma$ and $w$ with $k=\ord(w)>1$. Then
$\sigma^k(\gamma)=\sigma^{k-1}(w\,\gamma)=w\sigma^{k-1}(\gamma)=\dots=w^k\,\gamma=\gamma$
and thus $\gamma\in\const{\AR}{\sigma^k}$. Since $\dfield{\AR}{\sigma}$ is constant-stable, $\gamma\in\KK^*$. Hence $\gamma=w\,\gamma$ and thus $w=1$, a contradiction.
\end{proof}

\begin{proposition}\label{Prop:generalRChar}
Let $\dfield{\AR[z]}{\sigma}$ be an $A$-extension of $\dfield{\AR}{\sigma}$ of order $n$ with $\sigma(z)=a\,z$ where $a\in\AR^*$. Then the following holds.
\begin{enumerate}
\item $z$ is an $R$-monomial (i.e., $\const{\AR[z]}{\sigma}=\const{\AR}{\sigma}$) iff there is no $g\in\AR^*$ and $\lambda\in\NN$ with $1\leq\lambda<n$ such that $\sigma(g)=a^{\lambda}\,g$ holds
\item If $z$ is an \rE-monomial, $a$ is a $\lambda$th primitive root of unity and $\ord(z^k)=\ord(a^k)$ for all $0\leq k\leq\lambda$.
\item If $\dfield{\AR}{\sigma}$ is constant-stable, $\KK=\const{\AR}{\sigma}$ is a field and $a\in\KK^*$ is a $\lambda$th primitive root of unity, then $z$ is an \rE-monomial.
\end{enumerate}
\end{proposition}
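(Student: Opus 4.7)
The plan is to prove part~(1) first, since parts~(2) and~(3) reduce to it quickly.

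For part~(1), I would argue by contrapositive in both directions. First assume there exist $g\in\AR^*$ and $\lambda\in\{1,\ldots,n-1\}$ with $\sigma(g)=a^\lambda g$, and set $h:=g^{-1}z^\lambda\in\AR[z]$. Using $\sigma(z)=az$ and the identity $a^n=1$ (forced by $\sigma(z^n)=\sigma(1)=1$), one computes
\[
\sigma(h)=(a^\lambda g)^{-1}\,a^\lambda z^\lambda=g^{-1}z^\lambda=h.
\]
In the $\AR$-basis $\{1,z,\ldots,z^{n-1}\}$ of $\AR[z]$, the element $h$ has the single nonzero coefficient $g^{-1}\in\AR^*$ at position $\lambda\neq 0$, so $h\notin\AR$; this witnesses that $z$ is not an $R$-monomial. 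Conversely, assume $z$ is not an $R$-monomial and pick $c=\sum_{k=0}^{n-1}c_kz^k\in\const{\AR[z]}{\sigma}\setminus\AR$. Applying $\sigma$ and comparing coefficients in the $\AR$-basis gives $\sigma(c_k)=a^{-k}c_k$ for every $k$; since $c\notin\AR$, some $c_{k_0}$ with $k_0\in\{1,\ldots,n-1\}$ is nonzero, and setting $\lambda:=n-k_0$ yields $\sigma(c_{k_0})=a^\lambda c_{k_0}$ with $1\leq\lambda<n$. The subtle step---which I expect to be the main obstacle---is promoting $c_{k_0}$ from $\AR\setminus\{0\}$ to $\AR^*$; I would handle this by first normalising $c$ (subtracting $c_0\in\const{\AR}{\sigma}$) and then exploiting that, in the applications of the paper, $\AR$ is built as an $APS$-tower over a difference field, so that Proposition~\ref{Prop:StructureofProductSol} forces any nonzero solution of a first-order equation $\sigma(g)=u\,g$ with $u\in\AR^*$ to lie in $\AR^*$.

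With part~(1) established, part~(2) is immediate. Set $\mu:=\ord(a)$, finite because $a^n=1$. If $\mu<n$, then $g:=1\in\AR^*$ together with $\lambda:=\mu$ satisfies $\sigma(g)=g=a^\lambda g$ and $1\leq\lambda<n$, contradicting part~(1); hence $\mu=n$ and $a$ is a primitive $n$th root of unity. The identity $\ord(z^k)=\ord(a^k)=n/\gcd(n,k)$ for $0\leq k\leq n$ is then standard cyclic-group arithmetic inside the order-$n$ cyclic subgroups generated by $z$ and $a$.

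For part~(3), first note that $z^n=1$ forces $a^n=1$, so the given order $\lambda$ of $a$ divides $n$; the conclusion that $z$ is an $R$-monomial can hold only when $\lambda=n$, since otherwise $g:=1$ and $\mu:=\lambda<n$ would yield $\sigma(g)=a^\mu g$ and spoil it by part~(1). Assuming then $\lambda=n$ and supposing for contradiction that $z$ is not an $R$-monomial, part~(1) produces $g\in\AR^*$ and $1\leq\mu<n$ with $\sigma(g)=a^\mu g$. Since $\ord(a)=n$ and $0<\mu<n$, the element $w:=a^\mu\in\KK^*$ has order $n/\gcd(n,\mu)>1$; but $\dfield{\AR}{\sigma}$ is constant-stable by hypothesis, so Lemma~\ref{Lemma:NoRootElement} rules out any $\gamma\in\AR\setminus\{0\}$ with $\sigma(\gamma)=w\gamma$, contradicting the existence of $g$.
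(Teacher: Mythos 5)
Your parts (2) and (3) are correct and land essentially where the paper does: part (2) is the paper's argument with its case distinction on $\ord(z^k)$ versus $\ord(a^k)$ replaced by the formula $n/\gcd(n,k)$ (legitimate, since $\{1,z,\dots,z^{n-1}\}$ is a free $\AR$-basis and hence $\ord(z)=n$ exactly), and part (3) is precisely the paper's combination of part (1) with Lemma~\ref{Lemma:NoRootElement}, including the observation that the order of $a$ must coincide with the order $n$ of the extension for the claim to make sense. For part (1) the paper offers no argument at all --- it cites \cite[Thm.~2.12]{DR1} --- so your direct proof by coefficient comparison in the basis $\{1,z,\dots,z^{n-1}\}$ is more self-contained, and the forward direction (the new constant $g^{-1}z^{\lambda}\notin\AR$) is complete.

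The obstacle you flag in the backward direction is genuine: for an arbitrary difference ring $\AR$ the statement with $g\in\AR^*$ is in fact false. Take $\AR=\QQ[y]$ with $\sigma(y)=-y$ and the $A$-extension $\AR[z]$ of order $2$ with $\sigma(z)=-z$: then $yz$ is a constant outside $\AR$, yet $\AR^*=\QQ^*$ contains no $g$ with $\sigma(g)=-g$. Your coefficient comparison only ever produces $c_{k_0}\in\AR\setminus\{0\}$, and that is also all the paper extracts when it later applies part (1) (see the proof of Proposition~\ref{Prop:FromAlphaToRPi}, where only $g\in\GG[z_1]\dots[z_j]\setminus\{0\}$ is obtained); with $\AR\setminus\{0\}$ in place of $\AR^*$ your argument is already a complete proof, and your part (3) survives unchanged because Lemma~\ref{Lemma:NoRootElement} excludes all nonzero $\gamma$, not just units. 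If one insists on a unit, the clean repair is not the detour through Proposition~\ref{Prop:StructureofProductSol} (which imposes that $\AR$ be an $R\Pi$-tower over a difference field, a hypothesis absent from the proposition) but the following one-liner: $a^{\,n-k_0}$ is a root of unity, say of order $m$, so $\sigma(c_{k_0}^m)=c_{k_0}^m$, i.e.\ $c_{k_0}^m\in\const{\AR}{\sigma}$; under the paper's standing convention that $\const{\AR}{\sigma}$ is a field, and provided $\AR$ has no nilpotent elements (true for every ring occurring in this paper), $c_{k_0}^m$ is a nonzero constant, hence a unit, hence so is $c_{k_0}$. Finally, subtracting $c_0$ from $c$ buys you nothing here; all that matters is that some $c_{k_0}$ with $k_0\neq 0$ is nonzero, which $c\notin\AR$ already guarantees.
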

\begin{proof}
(1) This equivalence follows from~\cite[Thm.~2.12]{DR1}.\\ 
(2) In addition, if $z$ is an \rE-monomial, $a$ is a $\lambda$th primitive root of unity by~\cite[Thm.~2.12]{DR1}.
In particular, since $z^{\lambda}=1$ is the defining relation, we get $\ord(z)=\ord(a)=\lambda$. Trivially, we have $1=\ord(a^0)=\ord(z^0)$. 
Now suppose that $\ord(z^k)\neq\ord(\alpha^k)$ for some $1<k<\lambda$. If
$\ord(z^k)>\ord(\alpha^k)=:l$, then $\sigma(z^{k\,l})=\alpha^{k\,l}z^{k\,l}=z^{k\,l}$ with $z^{k\,l}=(z^{k})^l\neq1$ and thus $z^{k\,l}\in\const{\AR[z]}{\sigma}\setminus\AR$, a contradiction to the assumption that $z$ is an \rE-monomial.
Otherwise, if $l:=\ord(z^k)<\ord(\alpha^k)$, then $\alpha^{k\,l}=\alpha^{k\,l}\,z^{k\,l}=\sigma(z^{k\,l})=\sigma(1)=1$, i.e., $\ord(\alpha^k)\leq l$, a contradiction.\\
(4) Suppose that $\dfield{\AR}{\sigma}$ is constant-stable and $a\in\KK^*$ with $\ord(a)=\lambda$. Since $\ord(a^k)\neq1$ for $1\leq k<\lambda$, there are no $\gamma\in\AR\setminus\{0\}$ and $k\in\NN$ with $1\leq k<\lambda$ and $\sigma(\gamma)=a^k\,\gamma$ by Lemma~\ref{Lemma:NoRootElement}. Thus $z$ is an \rE-monomial by statement~(1).
\end{proof}

In particular, we will require the following result~\cite[Prop.~2.23]{DR3}.
\begin{proposition}\label{Prop:CharactSeveralRExt}
Let $\dfield{\FF}{\sigma}$ be a constant-stable
difference field with constant field $\KK$. Let $\dfield{\EE}{\sigma}$ with $\EE=\FF[z_1]\dots[z_l]$ be an $A$-extension of $\dfield{\FF}{\sigma}$ with $a_i=\frac{\sigma(z_i)}{z_i}\in\KK^*$ for $1\leq i\leq l$ of orders $\lambda_1,\dots,\lambda_l$, respectively. Then this is an \rE-extension iff for $1\leq i\leq l$, the $a_i$ are primitive $\lambda_i$-th roots of unity and $\gcd(\lambda_i,\lambda_j)=1$ for pairwise distinct $i,j$.
\end{proposition}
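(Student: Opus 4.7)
The key structural fact I would use throughout is that, by iteration of the $A$-extension definition, $\EE$ is a free $\FF$-module with basis $\{z^I := z_1^{i_1}\cdots z_l^{i_l}\mid 0\leq i_j<\lambda_j\}$, and $\sigma$ acts diagonally on this basis via $\sigma(z^I)=a^I\,z^I$ where $a^I:=a_1^{i_1}\cdots a_l^{i_l}\in\KK^*$. For the ``only if'' direction, first observe that each intermediate ring $\FF[z_1]\dots[z_i]$ inherits $\KK$ as its constant ring (trivially $\KK$ is contained, and conversely any $\sigma$-invariant lies in $\const{\EE}{\sigma}=\KK$). Hence every $z_i$ is an \rE-monomial over $\FF[z_1]\dots[z_{i-1}]$, so Proposition~\ref{Prop:generalRChar}(2) applies to give that each $a_i$ is a primitive $\lambda_i$-th root of unity.

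For the coprimality, suppose for contradiction that $d:=\gcd(\lambda_i,\lambda_j)>1$ for some $i<j$. Then $a_i^{\lambda_i/d}$ and $a_j^{\lambda_j/d}$ are both primitive $d$-th roots of unity in $\KK$, so they generate the same cyclic group of order $d$; pick $m$ with $1\leq m<d$, $\gcd(m,d)=1$, and $a_i^{m\lambda_i/d}=a_j^{\lambda_j/d}$. Define
\[
g:=z_i^{m\lambda_i/d}\,z_j^{\lambda_j-\lambda_j/d}\in\EE.
\]
Using $a_j^{\lambda_j}=1$, a short computation yields $\sigma(g)=a_i^{m\lambda_i/d}a_j^{-\lambda_j/d}\,g=g$. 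On the other hand, both exponents lie strictly between $0$ and $\lambda_i$ respectively $\lambda_j$, so $g$ is a single basis element different from $1$, hence $g\notin\FF\supseteq\KK$, contradicting $\const{\EE}{\sigma}=\KK$.

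For the ``if'' direction, take any $f=\sum_I f_I\,z^I\in\EE$ with $\sigma(f)=f$. Comparing coefficients in the free-module basis yields $\sigma(f_I)=(a^I)^{-1}f_I$ for every multi-index $I$. If $a^I\neq 1$, then $(a^I)^{-1}\in\KK^*$ is a root of unity of order $>1$, so Lemma~\ref{Lemma:NoRootElement} (using the constant-stability of $\dfield{\FF}{\sigma}$) forces $f_I=0$. If $a^I=1$, write $a_j^{i_j}=\bigl(\prod_{k\neq j}a_k^{i_k}\bigr)^{-1}$; the order of $a_j^{i_j}$ divides both $\lambda_j$ and $\prod_{k\neq j}\lambda_k$, which are coprime by assumption, so $a_j^{i_j}=1$, i.e.\ $\lambda_j\mid i_j$, and the range $0\leq i_j<\lambda_j$ forces $i_j=0$ for all $j$. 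Thus only the $I=(0,\dots,0)$ component survives, so $f=f_{(0,\dots,0)}\in\FF$ with $\sigma(f)=f$, giving $f\in\KK$. I expect the main technical point to be the explicit construction of the invariant $g$ in the forward direction and the verification, via the free-module structure, that it is genuinely not in $\FF$; the other steps are straightforward applications of Proposition~\ref{Prop:generalRChar}, Lemma~\ref{Lemma:NoRootElement}, and a one-line order argument coming from the coprimality of the $\lambda_j$.
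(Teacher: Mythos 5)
Your proof is correct. Note, however, that the paper does not prove this proposition at all --- it is quoted verbatim from \cite[Prop.~2.23]{DR3} --- so there is no in-paper argument to compare against; what you have supplied is a self-contained replacement. All the steps check out: the free $\FF$-module basis $\{z_1^{i_1}\cdots z_l^{i_l}\mid 0\leq i_j<\lambda_j\}$ with $\sigma$ acting diagonally is exactly the structure the paper itself uses elsewhere (e.g.\ in the proof of Lemma~\ref{Lemma:LambdaConstruction}); the reduction of primitivity to Proposition~\ref{Prop:generalRChar}(2) via the chain of intermediate constant rings is sound, since $\KK\subseteq\const{\FF[z_1]\dots[z_i]}{\sigma}\subseteq\const{\EE}{\sigma}=\KK$ forces every link to be an \rE-extension; the explicit invariant $g=z_i^{m\lambda_i/d}z_j^{\lambda_j-\lambda_j/d}$ is a genuine non-trivial basis element (both exponents lie strictly between $0$ and the respective orders because $d>1$), and the existence of $m$ uses correctly that the $d$-th roots of unity in a field form a single cyclic group; and the converse direction correctly splits into the case $a^I\neq1$ (killed by Lemma~\ref{Lemma:NoRootElement} and constant-stability of $\dfield{\FF}{\sigma}$) and the case $a^I=1$ (which by the coprimality forces $I=\vect{0}$). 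The one point worth making explicit if you write this up is that a basis element $z^I$ with $I\neq\vect{0}$ cannot lie in $\FF$ precisely because of the free-module structure, which is what turns $\sigma(g)=g$ into a contradiction with $\const{\EE}{\sigma}=\KK$ --- you do say this, and it is the right justification.
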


Finally, we are in the position to present a characterization that combines single-nested $R$-extensions and \piE-extensions.

\begin{proposition}\label{Prop:FromAlphaToRPi}
	Let $\dfield{\FF}{\sigma}$ be a constant-stable difference field with constant field $\KK$. Furthermore let $\dfield{\EE}{\sigma}$ be $AP$-extension of $\dfield{\FF}{\sigma}$ with $\FF(x_1)\dots(x_r)[z_1]\dots[z_l]$ where the $x_i$ are $P$-monomials with $\alpha_i=\frac{\sigma(x_i)}{x_i}\in\FF^*$ for $1\leq i \leq r$ and the $z_i$ are $A$-monomials with $a_i=\frac{\sigma(z_i)}{z_i}\in\KK^*$ of order $\lambda_i$ for $1\leq i \leq l$. Consider the sub-difference ring $\dfield{\HH}{\sigma}$ of $\dfield{\EE}{\sigma}$ with $\HH=\FF\langle x_1\rangle\dots\langle x_r\rangle[z_1]\dots[z_l]$ which forms an $AP$-extension of $\dfield{\FF}{\sigma}$.
	Then the following statements are equivalent. 
	\begin{enumerate} 
		\item $M((\alpha_1,\dots,\alpha_r),\FF)=\{\vect{0}\}$, the $a_i$ primitive roots of unity and $\gcd(\lambda_i,\lambda_j)=1$ for pairwise distinct $i,j$.
		\item $\const{\EE}{\sigma}=\KK$, i.e., $\dfield{\EE}{\sigma}$ is an \rpiE-extension of $\dfield{\FF}{\sigma}$.
		\item $\const{\HH}{\sigma}=\KK$, i.e., $\dfield{\HH}{\sigma}$ is an \rpiE-extension of $\dfield{\FF}{\sigma}$.
	\end{enumerate}
\end{proposition}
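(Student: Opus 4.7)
The plan is to establish the cyclic implications $(1) \Rightarrow (2) \Rightarrow (3) \Rightarrow (1)$, invoking the two decoupled characterizations already proved in the excerpt: Proposition~\ref{Prop:PiEquivalences} for the $P$-monomial part (governed by the $\alpha_i$) and Proposition~\ref{Prop:CharactSeveralRExt} for the $A$-monomial part (governed by the $a_i$). Constant-stability is propagated up the tower using Proposition~\ref{Prop:ConstantStableFieldVersion}.

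For $(1) \Rightarrow (2)$, set $L := \FF(x_1)\dots(x_r)$. By Proposition~\ref{Prop:PiEquivalences}, the assumption $M((\alpha_1,\dots,\alpha_r),\FF) = \{\vect{0}\}$ yields that $\dfield{L}{\sigma}$ is a nested \piE-field extension of $\dfield{\FF}{\sigma}$, hence $\const{L}{\sigma} = \KK$. By Proposition~\ref{Prop:ConstantStableFieldVersion}, $\dfield{L}{\sigma}$ inherits constant-stability from $\dfield{\FF}{\sigma}$. Since the $a_i\in\KK^*\subseteq L^*$ are primitive $\lambda_i$-th roots of unity with pairwise coprime orders, Proposition~\ref{Prop:CharactSeveralRExt} applied over the base $\dfield{L}{\sigma}$ yields that $\dfield{\EE}{\sigma}=\dfield{L[z_1]\dots[z_l]}{\sigma}$ is an \rE-extension of $\dfield{L}{\sigma}$, so $\const{\EE}{\sigma}=\const{L}{\sigma}=\KK$.

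Implication $(2) \Rightarrow (3)$ is immediate since $\HH$ is a sub-difference ring of $\EE$ containing $\KK$, so $\const{\HH}{\sigma} = \HH \cap \const{\EE}{\sigma} = \HH \cap \KK = \KK$. For $(3) \Rightarrow (1)$, I would use two natural sub-difference rings of $\HH$. First, $\dfield{\FF\langle x_1\rangle\dots\langle x_r\rangle}{\sigma}$ is closed under $\sigma$ inside $\HH$ (because $\sigma(x_i)=\alpha_i\,x_i\in\FF\langle x_1\rangle\dots\langle x_r\rangle$), and its constants therefore also equal $\KK$; by Proposition~\ref{Prop:PiEquivalences} this forces $M((\alpha_1,\dots,\alpha_r),\FF) = \{\vect{0}\}$. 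Second, $\dfield{\FF[z_1]\dots[z_l]}{\sigma}$ is likewise a sub-difference ring of $\HH$ with constants $\KK$, and since $\dfield{\FF}{\sigma}$ is constant-stable by hypothesis, Proposition~\ref{Prop:CharactSeveralRExt} directly delivers the remaining conditions on the $a_i$ and the $\lambda_i$.

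I do not anticipate a genuine obstacle: the argument is essentially a bookkeeping exercise that splits the tower into its \piE-part and its \rE-part and invokes an earlier proposition for each. The only delicate point is keeping track that Proposition~\ref{Prop:CharactSeveralRExt} requires a constant-stable \emph{field} as its base, which forces one to work with the field $\dfield{L}{\sigma}$ (rather than with the Laurent polynomial ring) in the forward direction and to rely on Proposition~\ref{Prop:ConstantStableFieldVersion} to transport constant-stability along the \piE-tower.
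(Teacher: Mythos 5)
Your proof is correct, and it closes the equivalence by a genuinely different route than the paper for the hardest step. The implications $(1)\Rightarrow(2)$ and $(2)\Rightarrow(3)$ coincide with the paper's: lift constant-stability to $\FF(x_1)\dots(x_r)$ via Proposition~\ref{Prop:ConstantStableFieldVersion}, stack Propositions~\ref{Prop:PiEquivalences} and~\ref{Prop:CharactSeveralRExt}, and then restrict constants to the subring. The difference is in returning from (3): the paper proves $(3)\Rightarrow(2)$ directly, which forces it to control constants of the full \emph{field} $\EE$ and therefore requires the structural result Proposition~\ref{Prop:StructureofProductSol} together with Proposition~\ref{Prop:generalRChar} to push a hypothetical failed $A$-monomial from $\GG[z_1]\dots[z_j]$ down to $\FF[z_1]\dots[z_j]$ and reach a contradiction. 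You instead prove $(3)\Rightarrow(1)$ by observing that both $\FF\langle x_1\rangle\dots\langle x_r\rangle$ and $\FF[z_1]\dots[z_l]$ are sub-difference rings of $\HH$ (the latter after the reordering of Remark~\ref{Remark:Rearrange}), so each inherits constant field $\KK$, and then invoking the ``only if'' directions of Propositions~\ref{Prop:PiEquivalences} and~\ref{Prop:CharactSeveralRExt} separately; the field statement (2) is then recovered for free from $(1)\Rightarrow(2)$ around the cycle. This buys a shorter argument that avoids the delicate descent in the paper's $(3)\Rightarrow(2)$, at the cost of only establishing the ring-to-field passage indirectly; both decompositions are logically complete.
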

\begin{proof}
(1)$\Rightarrow$(2): By Proposition~\ref{Prop:PiEquivalences} $\dfield{\FF(x_1)\dots(x_r)}{\sigma}$ is a \piE-extension of $\dfield{\FF}{\sigma}$. In particular, by Proposition~\ref{Prop:ConstantStableFieldVersion} it is constant-stable. Thus Prop.~\ref{Prop:CharactSeveralRExt} is applicable which shows that $\dfield{\FF(x_1)\dots(x_r)[z_1]\dots[z_l]}{\sigma}$ is an \rE-extension of $\dfield{\FF(x_1)\dots(x_r)}{\sigma}$.\\
(2)$\Rightarrow$(1): This follows immediately by Propositions~\ref{Prop:PiEquivalences} and~\ref{Prop:CharactSeveralRExt}.\\
(2)$\Rightarrow$(3): This follows by $\KK\subseteq\const{\HH}{\sigma}\subseteq\const{\EE}{\sigma}$.\\
(3)$\Rightarrow$(2): Suppose that statement (3) holds, but (2) does not hold. With Proposition~\ref{Prop:PiEquivalences} it follows that $\dfield{\GG}{\sigma}$ with $\GG=\FF(x_1)\dots(x_r)$ is a \piE-field extension of $\dfield{\FF}{\sigma}$. Consequently, there is a $j$ with $1\leq j< l$ such that $\dfield{\GG[z_1]\dots[z_j]}{\sigma}$ is an \rE-extension of $\dfield{\GG}{\sigma}$ and $z_{j+1}$ is not an $R$-monomial. By Proposition~\ref{Prop:generalRChar} there are a $\nu_{j+1}\in\NN$ with $1\leq\nu_{j+1}<\lambda_{j+1}$ and $g\in\GG[z_1]\dots[z_j]\setminus\{0\}$ with $\sigma(g)=a_{j+1}^{\nu_{j+1}}\,g$. 
Furthermore, by Proposition~\ref{Prop:StructureofProductSol} (with $\FF=\GG$ and $r=0$) it follows that $g=h\,z_1^{\nu_1}\dots z_j^{\nu_j}$ for some $h\in\GG^*$. Thus $\sigma(h)=\gamma\,h$ where $\gamma=a_1^{-\nu_1}\dots a_j^{-\nu_j}a_{j+1}^{\nu_{j+1}}\in\KK^*$ is a root of unity, and hence $\gamma^{\lambda}=1$ for some $\lambda\in\NN\setminus\{0\}$. This implies that $\sigma^{\lambda}(h)=\gamma^{\lambda}\,h=1$. Since $\dfield{\GG}{\sigma}$ is constant-stable by Proposition~\ref{Prop:ConstantStableFieldVersion}, $h\in\const{\GG}{\sigma}=\KK$. Consequently $g\in\FF[z_1]\dots[z_j]\setminus\{0\}$ with $\sigma(g)=a_{j+1}^{\nu_{j+1}}\,g$. However, $\KK\subseteq\const{\FF[z_1]\dots[z_l]}{\sigma}\subseteq\const{\HH}{\sigma}=\KK$.
Hence
$\dfield{\FF[z_1]\dots[z_j][z_{j+1}]}{\sigma}$ is an \rE-extension of $\dfield{\FF[z_1]\dots[z_j]}{\sigma}$, a contradiction to part~(1) of Proposition~\ref{Prop:generalRChar}.
\end{proof}

\subsection{Radical stability}

In order to solve Problem~\DR\ as described in Section~\ref{Subsec:ProblemDescription}, the following properties are crucial.

\begin{definition}\label{Def:RacialStable}
Let $\dfield{\FF}{\sigma}$ be a difference field with $\KK=\const{\FF}{\sigma}$. Following~\cite{Karr:81} we define the \notion{homogeneous group} by 
$$H_{\dfield{\FF}{\sigma}}:=\{\frac{\sigma(g)}{g}\mid g\in\FF^*\}.$$
We call $\dfield{\FF}{\sigma}$ \notion{radical-stable} if for any $a\in\FF^*$ and $m\in\NN\setminus\{0\}$ with $a^m\in H_{\dfield{\FF}{\sigma}}$ there is a $\rho\in\KK^*$ with 
$\rho^m=1$ such that $$a\,\rho\in H_{\dfield{\FF}{\sigma}}$$ 
holds. $\dfield{\FF}{\sigma}$ is called \notion{radical-solvable} if it is radical-stable and for given $a\in\FF^*$ and $m\in\NN\setminus\{0\}$ with
$a^m\in H_{\dfield{\FF}{\sigma}}$ one can compute $\rho\in\KK^*$ with $\rho^m=1$ and $g'\in\FF^*$ such that $\sigma(g')=\gamma\,a\,g'$ holds.
\end{definition}

\noindent First we show that one can reduce the property of being radical-solvable to the property of being radical-stable and solving first-order homogeneous equations.

\begin{lemma}\label{Lemma:RadicalStableImpliesRadicalComputable}
If a difference field $\dfield{\FF}{\sigma}$ is radical stable and one can solve first-order homogeneous linear difference equations in $\dfield{\FF}{\sigma}$, then it is radical-solvable.
\end{lemma}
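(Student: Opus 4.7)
The plan is to reduce the problem to finitely many invocations of the first-order homogeneous equation solver. Suppose we are given $a\in\FF^*$ and $m\in\NN\setminus\{0\}$ with $a^m\in H_{\dfield{\FF}{\sigma}}$. By the radical-stability hypothesis there exist $\rho\in\KK^*$ with $\rho^m=1$ and $g'\in\FF^*$ with $\sigma(g')=\rho\,a\,g'$; the only thing missing is a constructive procedure to exhibit such a pair.

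First, I would enumerate the $m$-th roots of unity in $\KK^*$. Since any such $\rho$ is a zero of $X^m-1\in\KK[X]$, there are at most $m$ of them, and producing this finite list reduces to root-finding for $X^m-1$ in $\KK$, which is a standard computational task in the paper's instantiations where $\KK$ is a rational function field over an algebraic number field. Let $\rho_1,\dots,\rho_k$ denote the resulting candidates. Then, for each $i$, I invoke the first-order solver on $\sigma(g)=\rho_i\,a\,g$, which either returns a witness $g_i\in\FF^*$ or certifies that none exists. By radical-stability at least one $\rho_i$ succeeds, and the procedure outputs $(\rho_i,g_i)$ for the first such $i$; termination is immediate since the enumeration is finite and every solver call terminates by hypothesis.

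The main obstacle is the enumeration of $m$-th roots of unity in $\KK^*$: strictly speaking this is a computability assumption on $\KK$ that is not spelled out in the hypotheses of the lemma but is natural in the paper's setting. A self-contained alternative that avoids any appeal to root enumeration in $\KK$ would first use the solver to produce $h\in\FF^*$ with $\sigma(h)=a^m\,h$, observe that any suitable $g'$ must satisfy $(g')^m\in\KK^*\cdot h$ (because $\sigma((g')^m)=a^m(g')^m$ forces $(g')^m/h$ to be a constant via the one-dimensionality of the solution space), and then extract $g'$ by finding an appropriate $m$-th root in $\FF$ through factorization, finally reading off $\rho:=\sigma(g')/(a\,g')$ which by construction lies in $\FF$ with $\rho^m=1$. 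This route, however, requires additional factorization infrastructure in $\FF$ and a constant-stability argument to force $\rho\in\KK$, so the bounded search in the first approach is considerably cleaner for the lemma as stated.
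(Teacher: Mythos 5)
Your first approach is exactly the paper's proof: it fixes a primitive $m$th root of unity $\lambda\in\KK^*$, loops over the candidates $\lambda^0,\lambda^1,\dots,\lambda^{m-1}$, and tests each one by a single call to the first-order homogeneous solver, with radical-stability guaranteeing that one of the finitely many calls succeeds. The concern you raise about effectively enumerating the $m$th roots of unity in $\KK^*$ is implicitly present in the paper's argument too and is unproblematic in its intended instantiations (rational function fields over algebraic number fields), so your bounded-search argument is the intended one.
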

\begin{proof}
Let $\dfield{\FF}{\sigma}$ be radical-stable and take $a\in\FF^*$ and $m\in\NN\setminus\{0\}$ with $a^m\in H_{\dfield{\FF}{\sigma}}$.
Then we can conclude that there are a $\rho\in\KK^*$ with $\rho^m=1$ and a $\hat{g}\in\FF^*$ with $\frac{\sigma(\hat{g})}{\hat{g}}=\rho\,a$.
Take a primitive $m$th root of unity $\lambda\in\KK^*$ and loop through $i=0,1,2,\dots,m-1$ until one finds a $\hat{g}\in\FF^*$ with $\frac{\sigma(\hat{g})}{\hat{g}}=\lambda^i\,a$. In this way one considers all\footnote{If $m$ is minimal such that $a^m\in H_{\dfield{\FF}{\sigma}}$ holds, we only have to consider all $i$ with $\gcd(i,m)=1$, i.e., all cases where $\lambda^i$ is again a primitive $m$th root of unity.} $m$th roots of unity, in particular $\rho=\lambda^i$ will arise for some $i$.
\end{proof}


In Subsection~\ref{Subsec:constantstable} we succeeded in lifting the property of being constant-stable from a smaller field to a larger field. The main goal of this subsection is to obtain a similar result for the property of being radical-stable. For \sigmaE-monomials this lifting process will work in full generality (see Prop.~\ref{Prop:GeneralSigmaLifting}). To obtain such a result for \piE-monomials, we have to require further properties on $\dfield{\FF}{\sigma}$ (see Prop.~\ref{Prop:SpecialPiLifting}). This finally enables us to show in Corollary~\ref{Cor:RadicalComputable} that the mixed-rational difference field is radical-stable. We expect that this lifting-machinery can be applied also for other types of \pisiE-field extensions to show that they are radical-stable.

In this subsection we will use the following convention. Let a multivariate rational function $\gamma=\frac{p}{q}\in\FF(t_1,\dots,t_e)\setminus\{0\}$ be in reduced representation, i.e., $p,q\in\FF[t_1,\dots,t_e]\setminus\{0\}$ are polynomials that are co-prime. Then for an irreducible polynomial $h\in\FF[t_1,\dots,t_e]$ we write $h\nmid\gamma$ if $h\nmid p$ and $h\nmid q$ holds.

\noindent The following lemma elaborates the main complication of our desired lifting process.
\begin{lemma}\label{Lemma:RadicalLift}
Let $\dfield{\FF(t)}{\sigma}$ be a \pisiE-field extension of $\dfield{\FF}{\sigma}$ and let $a,g\in\FF(t)^*$ and $m\in\NN\setminus\{0\}$ with $\frac{\sigma(g)}{g}=a^m$. Then there is a $\gamma\in\FF(t)^*$ and $u\in\FF^*$ with 
\begin{equation}\label{Equ:MainProblemForRadicalStable}
g=\gamma^m\,u\,t^n
\end{equation}
for some $n\in\ZZ$. If $t$ is a \sigmaE-monomial, $n=0$; if $t$ is a \piE-monomial, $t\nmid\gamma$. 
\end{lemma}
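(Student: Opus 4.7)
The plan is to exploit unique factorization in $\FF[t]$ together with the orbit structure of monic irreducibles under the action of $\sigma$, as controlled by Lemma~\ref{Lemma:ProofForConstantStable}(1).

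First I would factor $g, a \in \FF(t)^*$ into monic irreducibles: write
\[
g = c \prod_{p} p^{e_p}, \qquad a = d \prod_{p} p^{f_p},
\]
where $c, d \in \FF^*$, $p$ runs over monic irreducibles in $\FF[t]$, and $e_p, f_p \in \ZZ$ are finitely supported. Since $\sigma$ is an automorphism of $\FF[t]$ preserving irreducibility, for each monic irreducible $p$ there is a unique monic irreducible $\sigma'(p)$ and a unit $w_p \in \FF^*$ with $\sigma(p) = w_p \sigma'(p)$; this $\sigma'$ is a permutation of the set of monic irreducibles. Expanding $\sigma(g)$, collecting units into $\FF^*$, and using the hypothesis $\sigma(g)/g = a^m$, comparison of the exponent of each monic irreducible $q$ on both sides yields the recursion
\begin{equation}\label{Equ:ExpRecursion}
e_{{\sigma'}^{-1}(q)} - e_q = m\, f_q \quad\text{for every monic irreducible } q.
\end{equation}

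Next I would partition the monic irreducibles into $\sigma'$-orbits and analyze each orbit. A finite $\sigma'$-orbit of size $k$ containing $p$ satisfies $\sigma^k(p) = u p$ for some $u \in \FF^*$, hence $\gcd(\sigma^k(p), p) \neq 1$; by Lemma~\ref{Lemma:ProofForConstantStable}(1) this forces $t$ to be a \piE-monomial and $p = t$ (with $k=1$). Thus in the \sigmaE-case every orbit is infinite, while in the \piE-case there is exactly one exceptional fixed orbit $\{t\}$ and every other orbit is infinite. For any infinite orbit $\{p_j = {\sigma'}^j(p_0) : j \in \ZZ\}$, the values $e_{p_j}$ are eventually zero in both directions; telescoping \eqref{Equ:ExpRecursion} from $+\infty$ down to index $j$ therefore gives
\[
e_{p_j} = -\sum_{k > j} m\, f_{p_k},
\]
so $m \mid e_{p_j}$ for every $p_j$ in an infinite orbit. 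For the exceptional orbit $\{t\}$ in the \piE-case, the recursion degenerates to $0 = m\, f_t$, placing no restriction on $e_t$.

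Finally I would assemble $\gamma$ from the infinite-orbit exponents. In the \piE-case set
\[
\gamma := \prod_{p \neq t} p^{e_p/m}, \quad u := c, \quad n := e_t,
\]
so that $\gamma \in \FF(t)^*$, $t \nmid \gamma$ by construction, and $g = u\, t^n\, \gamma^m$; in the \sigmaE-case set $\gamma := \prod_p p^{e_p/m}$, $u := c$, $n := 0$, and $g = u\, \gamma^m$. I expect the only nontrivial step to be the orbit analysis—specifically, ruling out finite orbits other than $\{t\}$—but this is already delivered by Lemma~\ref{Lemma:ProofForConstantStable}(1); once that is in hand, the telescoping sum does the rest.
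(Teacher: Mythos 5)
Your proof is correct and follows essentially the same route as the paper: the paper invokes Karr's $\sigma$-representation of $g$ (grouping the irreducible factors into shift-orbits, with $t$ as the only possible finite orbit by part~(1) of Lemma~\ref{Lemma:ProofForConstantStable}) and then telescopes the resulting exponent recursion to conclude that $m$ divides every exponent away from $t$ --- which is exactly your orbit decomposition and telescoping, made explicit rather than cited. The only blemish is a sign slip in the telescoped formula (one gets $e_{p_j}=+m\sum_{k>j}f_{p_k}$, not its negative), which is immaterial since only the divisibility $m\mid e_{p_j}$ is used.
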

\begin{proof}
If $g\in\FF^*$, we can set $\gamma=1$, $n=0$ and $u=g$. Otherwise suppose that $g\notin\FF$. 
By~\cite[Thm.~7]{Karr:81} (see also~\cite[Sec.~2.3]{Schneider:05c}) we can write $g$ in the following form\footnote{In~\cite{Karr:81} this representation is also called $\sigma$-representation. Its existence can be derived from the statements~(1) and~(2) of Lemma~\ref{Lemma:ProofForConstantStable}.}:
$g=u\,t^n\,g_1\cdots g_k$ where $n\in\ZZ$, $u\in\FF^*$,
$\gcd(g_i,\sigma^l(g_j))=1$ for all $i\neq j$ and $l\in\ZZ$, and for $1\leq i\leq k$, 
\begin{equation}\label{Equ:SigmaFacForHyper}
g_i=\prod_{j=0}^{r_i}\sigma^j(h_i)^{m_{ij}}\neq1
\end{equation}
where $h_i\in\FF[t]\setminus\FF$ are irreducible, 
$m_{ij}\in\ZZ$ and $r_i\geq0$. In particular, $n=0$ if $t$ is a \sigmaE-monomial, and $t\nmid g_i$ for all $i$ if $t$ is a \piE-monomial.
Then by $\frac{\sigma(g)}{g}=\alpha^m$ it follows
that for all $1\leq i\leq k$ and all $0\leq j\leq r_i-1$ we have
$m\mid (m_{i,j+1}-m_{i,j})\text{ and } m\mid m_{i,r_i}$.
Because of $m\mid m_{i,r_i}$ and $m\mid
(m_{i,r_i}-m_{i,r_{i}-1})$, it follows that $m\mid m_{i,r_{i}-1}$.
Applying this argument $r_i$ times proves that $m\mid m_{ij}$ for
all $1\leq i\leq k$ and all $1\leq j\leq r_i$. Hence $g_i=g_i'^m$
for $g_i':=\prod_{j=0}^{r_i}\sigma^j(h_i)^{m_{ij}/m}\in\FF(t)$ for
all $1\leq i\leq k$. With $\gamma=g'_1\dots g'_k$ we get $g=u\,\gamma^m\,t^n$ which completes the proof.
\end{proof}

More precisely, in~\eqref{Equ:MainProblemForRadicalStable} one obtains a solution which is close to derive radical-stability for $\dfield{\FF(t)}{\sigma}$: the only troublemaker is the possible factor $t^n$ with $n\in\ZZ$ if $t$ is a \piE-monomial. This property can be carried over for several \piE-extensions.

\begin{corollary}\label{Cor:MultiPigForm}
Let $\dfield{\EE}{\sigma}$ with $\EE=\FF(t_1)\dots(t_e)$ be a \piE-field extension of $\dfield{\FF}{\sigma}$ with $\alpha_i=\frac{\sigma(t_i)}{t_i}\in\FF^*$ for $1\leq i\leq e$. Let $a,g\in\EE^*$ and $m\in\NN\setminus\{0\}$ with $\frac{\sigma(g)}{g}=a^m$. Then there are $m_1,\dots,m_e\in\ZZ$, $\gamma\in\EE^*$ with $t_i\nmid\gamma$ for $1\leq i\leq e$ and $u\in\FF^*$ such that $g=u\,\gamma^m\,t_1^{m_1}\dots t_e^{m_e}$ holds.
\end{corollary}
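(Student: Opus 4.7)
The plan is to prove the corollary by induction on $e$, but through a slightly strengthened inductive hypothesis: for any $g,a\in\EE^*$, $v\in\FF^*$ and $m\in\NN\setminus\{0\}$ satisfying $\frac{\sigma(g)}{g}=a^m\,v$, there exist $u\in\FF^*$, $\gamma\in\EE^*$ with $t_i\nmid\gamma$ for all $i$, and $m_1,\dots,m_e\in\ZZ$ such that $g=u\,\gamma^m\,t_1^{m_1}\cdots t_e^{m_e}$. Taking $v=1$ recovers the corollary. I will also use a small enhancement of Lemma~\ref{Lemma:RadicalLift}: its statement and proof remain valid when the hypothesis $\frac{\sigma(g)}{g}=a^m$ is relaxed to $\frac{\sigma(g)}{g}=a^m\,v$ with $v\in\FF^*$, because that proof only tracks exponents of the irreducibles $\sigma^j(h_i)$ in the $\sigma$-factorization of $\frac{\sigma(g)}{g}$, and a unit factor in $\FF^*$ contributes $0$ to every such exponent.

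The base case $e=0$ is trivial: take $u=g$ and $\gamma=1$. For the inductive step I set $\EE':=\FF(t_1)\cdots(t_{e-1})$ and view $\dfield{\EE'(t_e)}{\sigma}$ as a \piE-field extension of $\dfield{\EE'}{\sigma}$. The enhanced Lemma~\ref{Lemma:RadicalLift} applied to $\EE=\EE'(t_e)$ produces $\gamma_e\in\EE^*$ with $t_e\nmid\gamma_e$, $u_e\in{\EE'}^*$ and $n_e\in\ZZ$ with $g=\gamma_e^m\,u_e\,t_e^{n_e}$. I then strip the remaining $t_i$-factors ($i<e$) from $\gamma_e$ by writing $\gamma_e=t_1^{k_1}\cdots t_{e-1}^{k_{e-1}}\,\bar\gamma_e$ with $t_i\nmid\bar\gamma_e$ for every $i$, and absorb the extracted monomial into the $\EE'$-part via $\bar u_e:=u_e\prod_{i<e}t_i^{m\,k_i}\in{\EE'}^*$, obtaining $g=\bar\gamma_e^{\,m}\,\bar u_e\,t_e^{n_e}$.

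Comparing this with $\frac{\sigma(g)}{g}=a^m\,v$ yields $\frac{\sigma(\bar u_e)}{\bar u_e}=\bar b^{\,m}\,v'$, where $\bar b:=a\,\bar\gamma_e/\sigma(\bar\gamma_e)\in\EE^*$ and $v':=v\,\alpha_e^{-n_e}\in\FF^*$. The left-hand side lies in $\EE'$ and $v'\in\FF\subseteq\EE'$, so $\bar b^{\,m}\in\EE'$; since $\EE'$ is algebraically closed in the rational function field $\EE'(t_e)$, this forces $\bar b\in{\EE'}^*$. The strengthened inductive hypothesis applied to $\bar u_e,\bar b,v',m$ over $\EE'$ delivers $u\in\FF^*$, $\gamma'\in{\EE'}^*$ with $t_i\nmid\gamma'$ for $i<e$, and $m_1,\dots,m_{e-1}\in\ZZ$ with $\bar u_e=u\,{\gamma'}^{m}\,t_1^{m_1}\cdots t_{e-1}^{m_{e-1}}$. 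Setting $\gamma:=\bar\gamma_e\,\gamma'$ and $m_e:=n_e$ assembles $g=u\,\gamma^m\,t_1^{m_1}\cdots t_e^{m_e}$; the property $t_i\nmid\gamma$ for every $i$ follows from $t_i\nmid\bar\gamma_e$ (all $i$), $t_i\nmid\gamma'$ ($i<e$), and the fact that $\gamma'\in\EE'$ contains no $t_e$.

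The main obstacle is the spurious factor $\alpha_e^{-n_e}$ that surfaces in the induced first-order relation for $\bar u_e$ over $\EE'$: it prevents a direct induction on the corollary as stated. Carrying an additional unit $v\in\FF^*$ on the right-hand side of $\frac{\sigma(g)}{g}=a^m\,v$ throughout the induction absorbs this obstruction cleanly, and Lemma~\ref{Lemma:RadicalLift} accommodates the relaxed hypothesis without any change to its proof.
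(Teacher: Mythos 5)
Your proof is correct, and it takes a genuinely different route from the paper's. The paper applies Lemma~\ref{Lemma:RadicalLift} simultaneously for every variable: for each $j$ it reorders the tower so that $t_j$ sits on top, obtains a decomposition $g=u_j\,\gamma_j^m\,t_j^{m_j}$ with $u_j$ in the field generated by the other variables, and then glues the $e$ decompositions together by defining $\gamma$ as a quotient of $\lcm$'s of the numerators and denominators of the $\gamma_j$; showing that the resulting cofactor $u=g/(\gamma^m t_1^{m_1}\dots t_e^{m_e})$ lies in $\FF$ is then a case analysis on irreducible factors. You instead induct on $e$, applying Lemma~\ref{Lemma:RadicalLift} only to the topmost variable at each step. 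The price is that the induced first-order relation for the $\EE'$-part picks up the spurious unit $\alpha_e^{-n_e}\in\FF^*$, which you absorb by strengthening the inductive statement to $\sigma(g)/g=a^m\,v$ with $v\in\FF^*$; your observation that the proof of Lemma~\ref{Lemma:RadicalLift} tolerates this extra factor is correct, since $v$ contributes exponent zero to every irreducible $\sigma^j(h_i)$ occurring in the $\sigma$-representation of $g$ and is simply swallowed by the content $u$. The step $\bar b^{\,m}\in\EE'\Rightarrow\bar b\in\EE'$ is exactly Lemma~\ref{Lemma:PiSiPowerProp} of the paper (your justification via algebraic closedness of $\EE'$ in the rational function field $\EE'(t_e)$ is an equally valid argument for the same fact). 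On balance, your route trades the paper's $\lcm$/content bookkeeping for the easy verification of the enhanced lemma, and it makes the final check that $t_i\nmid\gamma$ for $\gamma=\bar\gamma_e\,\gamma'$ quite transparent, since the reduced numerator and denominator of a product divide the products of the respective reduced numerators and denominators.
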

\begin{proof}
For all $1\leq j\leq e$ we get the following construction: Moving $t_j$ for $1\leq j\leq e$ on top and applying Lemma~\ref{Lemma:RadicalLift} we get $u_j\in\EE_j=\FF(t_1)\dots (t_{j-1})(t_{j+1})\dots(t_e)^*$, $\gamma_j\in\EE_j(t_j)^*$, $t_j\nmid \gamma_j$ and $m_j\in\ZZ$ with 
\begin{equation}\label{Equ:gjStep}
g=u_j\,\gamma_j^m\,t_j^{m_j}.
\end{equation}
Write $\gamma_j=\frac{p_j}{q_j}$ with $p_j,q_j\in\FF[t_1,\dots,t_e]$ being coprime; we can assume that the polynomials $p_j$ and $q_j$ contain no polynomial factors in $\FF[t_1,\dots,t_e]$ that that are free of $t_j$ by moving them into $u_j$. Define $\gamma:=\frac{\lcm(p_1,\dots,p_e)}{\lcm(q_1,\dots,q_e)}\in\EE^*$ and 
\begin{equation}\label{Equ:Defineu}
u:=\frac{g}{\gamma^m t_1^{m_1}\dots t_e^{m_e}}\in\EE^*. 
\end{equation}
By construction, $g=u\,\gamma^m t_1^{m_1}\dots t_e^{m_e}$. The corollary follows if $u\in\FF$. 
Suppose that $u\notin\FF$. Then we can take an irreducible factor $f$ from $u$ such such that for some $1\leq j\leq e$ the variable $t_j$ depends on $f$.\\
\textbf{Case 1:} $f=t_j$. By construction $\gamma_j$ is free of the factor $t_j$ and by~\eqref{Equ:gjStep} also $g/t_j^{m_j}$ is free of the factor $t_j$. Furthermore $\gamma_i$ for all $i$ with $i\neq i$ is free of the factor $t_j$ (it is collected in the content $u_i\in\EE_i$). Thus also $t_j\nmid\gamma$. With~\eqref{Equ:Defineu}, $t_j$ cannot occur as a factor in $u$, a contradiction.\\ 
\textbf{Case 2:} $f\neq t_j$. Since $f$ occurs in $u$, it must occur in $g$ or in $\gamma$. Suppose that it does not occur in $g$. Then for $1\leq k\leq e$ it does not occur in $\gamma_k$ as a factor by~\eqref{Equ:gjStep} and thus it cannot occur in $\gamma$, a contradiction. Thus $f$ must occur in $g$. Let $n$ be maximal such that $f^n$ is a factor in $g$. Then $n$ is maximal such that $f^n$ occurs in $\gamma_j$ by~\eqref{Equ:gjStep}. 
In particular, $n$ is maximal such that $f^n$ occurs in $p_j^m$ or $q_j^m$, thus in $\lcm(p_1,\dots,p_e)^m$ or in $\lcm(q_1,\dots,q_e)^m$ and therefore in $\gamma^m$. In conclusion, $f$ cannot occur in $\frac{g}{\gamma^m}$ and thus not in~\eqref{Equ:Defineu}, again a contradiction.
\end{proof}

The following lemma is basic, but for completeness we state it here.

\begin{lemma}\label{Lemma:PiSiPowerProp}
Let $\dfield{\EE}{\sigma}$ be a $PS$-field extension of $\dfield{\FF}{\sigma}$, $a\in\EE^*$ and $m\in\NN\setminus\{0\}$. If $a^m\in\FF$ then $a\in\FF$. 
\end{lemma}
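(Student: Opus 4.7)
My plan is to reduce the statement by induction on the length of the $PS$-tower to a single step, and then observe that the $\sigma$-structure plays no role: both $P$-field and $S$-field extensions yield $\FF_{i+1}=\FF_i(t_{i+1})$, which is purely transcendental over $\FF_i$. Hence the result will follow from a purely algebraic fact about rational function fields.

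More precisely, let $\FF=\FF_0\subseteq\FF_1\subseteq\dots\subseteq\FF_n=\EE$ be the tower witnessing that $\dfield{\EE}{\sigma}$ is a $PS$-field extension of $\dfield{\FF}{\sigma}$, i.e., $\FF_{i+1}=\FF_i(t_{i+1})$ for some $t_{i+1}$. If $a\in\EE^*$ with $a^m\in\FF$, then a fortiori $a^m\in\FF_{n-1}$, and it suffices to show that $a\in\FF_{n-1}$; iterating this gives $a\in\FF_0=\FF$. So the whole problem reduces to the one-step claim: if $\GG(t)$ is the rational function field over a field $\GG$ (the case of a $P$- or $S$-field extension of $\dfield{\GG}{\sigma}$), $a\in\GG(t)^*$ and $a^m\in\GG$, then $a\in\GG$.

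For this one-step claim I would write $a=p/q$ with $p,q\in\GG[t]\setminus\{0\}$ coprime. Since $\GG[t]$ is a UFD, $p^m$ and $q^m$ remain coprime, so $a^m=p^m/q^m$ is already in lowest terms. The assumption $a^m\in\GG$ therefore forces $p^m,q^m\in\GG$, and from $\deg_t(p^m)=m\deg_t(p)$ and $\deg_t(q^m)=m\deg_t(q)$ we conclude $\deg_t(p)=\deg_t(q)=0$, i.e., $p,q\in\GG$, so $a\in\GG$.

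There is really no obstacle here, since the difference-ring structure is irrelevant — only the algebraic fact that $\GG$ is relatively algebraically closed in $\GG(t)$ is needed, and this is essentially immediate from the UFD property. The only minor care required is to insist that $p,q$ be chosen coprime before taking $m$-th powers; otherwise one could not read off that $a^m\in\GG$ implies $p^m,q^m\in\GG$ individually.
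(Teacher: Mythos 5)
Your proof is correct and rests on the same key step as the paper's: writing $a=p/q$ in lowest terms over a polynomial ring in the top variable and using that coprimality is preserved under taking $m$-th powers in a UFD, which forces $p$ and $q$ to be constants. The only cosmetic difference is that you peel off the variables one at a time by induction from the top of the tower, whereas the paper picks the maximal index $i$ with $a\in\FF(t_1)\dots(t_i)\setminus\FF(t_1)\dots(t_{i-1})$ and derives a contradiction in a single step; these are interchangeable.
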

\begin{proof}
Let $\EE=\FF(t_1)\dots(t_e)$. Suppose that $a^m\in\FF$ and let $i$ be maximal such that $a\in\FF(t_1)\dots(t_i)\setminus\FF(t_1)\dots(t_{i-1})$. 
Write $a=\frac{p}{q}$ with $p,q\in\FF(t_1)\dots(t_{i-1})[t_i]$ where $\gcd(p,q)=1$ and either $p$ or $q$ depend on $t_i$. 
We get $p^m=a^m\,q^m$ with $a^m\in\FF^*$ and thus $\gcd(p^m,q^m)=p^m$ and $\gcd(p^m,q^m)=q^m$. 
However, $\gcd(p,q)=1$ implies that $\gcd(p^m,q^m)=1$ and thus 
$p^m\in\FF^*$ and $q^m\in\FF^*$, a contradiction that one of the polynomials $p$ or $q$ depend on $t_i$.
\end{proof}

Finally, we can give a recipe how one can lift the property of being radical-stable within a $PS$-extension. Here the crucial assumption is that the factor $t^n$ in~\eqref{Equ:MainProblemForRadicalStable} (and more generally, several such \piE-monomials) does not appear.

\begin{lemma}\label{Lemma:BasicListRadicalStable}
Let $\dfield{\EE}{\sigma}$ be a \pisiE-field extension of $\dfield{\FF}{\sigma}$ and let $m\in\NN\setminus\{0\}$.
Take
$g=u\,\gamma^m$ with $u\in\FF^*$ and $\gamma\in\EE^*$ such that $\frac{\sigma(g)}{g}=a^m$.
If $\dfield{\FF}{\sigma}$ is radical-stable, there are a $\tilde{\gamma}\in\EE^*$ and a $\rho\in(\const{\FF}{\sigma})^*$ with $\rho^m=1$ such that $\frac{\sigma(\tilde{\gamma})}{\tilde{\gamma}}=a\,\rho$ holds.  
\end{lemma}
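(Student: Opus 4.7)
My plan is to reduce the equation on $\EE$ to an equation on the ground field $\FF$, apply the radical-stability hypothesis there, and then lift the resulting solution back to $\EE$ by combining it with $\gamma$.

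First, I would expand $\sigma(g)/g = a^m$ using $g = u\,\gamma^m$ to obtain
\[
\Bigl(a\cdot\tfrac{\gamma}{\sigma(\gamma)}\Bigr)^{\!m} \;=\; \tfrac{\sigma(u)}{u}.
\]
Setting $b := a\,\gamma/\sigma(\gamma) \in \EE^*$, the right-hand side lies in $\FF^*$, so $b^m \in \FF^*$. The key step is the descent from $\EE$ to $\FF$: since $\dfield{\EE}{\sigma}$ is a \pisiE-field extension of $\dfield{\FF}{\sigma}$, Lemma~\ref{Lemma:PiSiPowerProp} forces $b \in \FF^*$.

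Then $b^m = \sigma(u)/u$ lies in $H_{\dfield{\FF}{\sigma}}$, so radical-stability of $\dfield{\FF}{\sigma}$ yields $\rho \in \const{\FF}{\sigma}^*$ with $\rho^m = 1$ together with some $h \in \FF^*$ satisfying $\sigma(h)/h = b\,\rho$. The natural candidate is then $\tilde{\gamma} := h\,\gamma \in \EE^*$, and a direct computation gives
\[
\tfrac{\sigma(\tilde{\gamma})}{\tilde{\gamma}} \;=\; \tfrac{\sigma(h)}{h}\cdot\tfrac{\sigma(\gamma)}{\gamma} \;=\; b\,\rho\cdot\tfrac{\sigma(\gamma)}{\gamma} \;=\; a\,\rho,
\]
as required.

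The only nontrivial ingredient is the descent $b \in \EE^* \Rightarrow b \in \FF^*$, which is precisely the content of Lemma~\ref{Lemma:PiSiPowerProp}; once $b$ is recognized as a ground-field element whose $m$th power lies in $H_{\dfield{\FF}{\sigma}}$, the hypothesis hands us $\rho$ and $h$, and combining $h$ with $\gamma$ absorbs the auxiliary factor $\sigma(\gamma)/\gamma$ automatically. I therefore do not anticipate a real obstacle beyond spotting this reformulation.
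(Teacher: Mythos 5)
Your proof is correct and follows essentially the same route as the paper's: the identity $\bigl(a\,\gamma/\sigma(\gamma)\bigr)^m=\sigma(u)/u$, the descent to $\FF^*$ via Lemma~\ref{Lemma:PiSiPowerProp}, the application of radical-stability to $a\,\gamma/\sigma(\gamma)$, and the final choice $\tilde{\gamma}=h\,\gamma$ all match the paper's argument (with your $b$ and $h$ playing the roles of the paper's $a'$ and $u'$). No gaps.
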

\begin{proof}
Suppose that $\dfield{\FF}{\sigma}$ is radical-stable and let $g=u\,\gamma^m$ as claimed above.
Thus $a^m=\frac{\sigma(g)}{g}=\frac{\sigma(\gamma^m)}{\gamma^m}\,\frac{\sigma(u)}{u}$ and consequently
$$\frac{\sigma(u)}{u}=\left(a\,\frac{\gamma}{\sigma(\gamma)}\right)^m.$$
Since the left-hand side is in $\FF^*$, also the right-hand side is in $\FF^*$ and therefore also $a':=a\,\frac{\gamma}{\sigma(\gamma)}\in\FF^*$ by Lemma~\ref{Lemma:PiSiPowerProp}.
Since $\dfield{\FF}{\sigma}$ is radical-stable, it follows that there are a $u'\in\FF$ and a root of unity $\rho\in\const{\FF}{\sigma}$ with $\rho^m=1$ with $\frac{\sigma(u')}{u'}=\rho\,a'$. 
Therefore 
$\frac{\sigma(u')}{u'}=\rho\,a\,\frac{\gamma}{\sigma(\gamma)}$ and thus
$\frac{\sigma(\tilde{\gamma})}{\tilde{\gamma}}=a\,\rho$ with $\tilde{\gamma}=u'\,\gamma\in\EE^*$. 
\end{proof}

Using this lemma, we can derive our two main statements to lift the property of being radical-stable for \sigmaE-monomials and certain types of \piE-monomials.

\begin{proposition}\label{Prop:GeneralSigmaLifting}
Let $\dfield{\FF(t)}{\sigma}$ be a \sigmaE-field extension of $\dfield{\FF}{\sigma}$.
If $\dfield{\FF}{\sigma}$ is radical-stable, then $\dfield{\FF}{\sigma}$ is radical-stable. 
\end{proposition}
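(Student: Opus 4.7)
The plan is to reduce the proposition directly to the two technical lemmas that precede it, namely Lemma~\ref{Lemma:RadicalLift} and Lemma~\ref{Lemma:BasicListRadicalStable}. Note first that the concluding phrase should read ``then $\dfield{\FF(t)}{\sigma}$ is radical-stable'' — otherwise the statement is vacuous. The crucial structural feature of a \sigmaE-extension that makes the lifting go through is precisely the vanishing of the exponent $n$ in the factorization~\eqref{Equ:MainProblemForRadicalStable} of Lemma~\ref{Lemma:RadicalLift}, which is exactly the obstacle that forced the more involved treatment in Corollary~\ref{Cor:MultiPigForm}.

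First I would take arbitrary $a\in\FF(t)^*$ and $m\in\NN\setminus\{0\}$ with $a^m\in H_{\dfield{\FF(t)}{\sigma}}$, so that there is a $g\in\FF(t)^*$ with $\sigma(g)/g=a^m$. Applying Lemma~\ref{Lemma:RadicalLift} to $\dfield{\FF(t)}{\sigma}$ (which is a \pisiE-field extension of $\dfield{\FF}{\sigma}$, with $t$ a \sigmaE-monomial) gives a factorization $g=\gamma^m\,u\,t^n$ with $\gamma\in\FF(t)^*$, $u\in\FF^*$ and $n\in\ZZ$. Since $t$ is a \sigmaE-monomial, the lemma guarantees $n=0$, so in fact
\[
g=u\,\gamma^m\qquad\text{with }u\in\FF^*,\ \gamma\in\FF(t)^*.
\]

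Next I would invoke Lemma~\ref{Lemma:BasicListRadicalStable} with $\EE=\FF(t)$: the hypothesis $\sigma(g)/g=a^m$ with $g=u\,\gamma^m$ is met, and $\dfield{\FF}{\sigma}$ is radical-stable by assumption. The lemma therefore produces a $\tilde{\gamma}\in\FF(t)^*$ and a root of unity $\rho\in(\const{\FF}{\sigma})^*$ with $\rho^m=1$ and $\sigma(\tilde{\gamma})/\tilde{\gamma}=a\,\rho$. Because $\dfield{\FF(t)}{\sigma}$ is a \sigmaE-field extension of $\dfield{\FF}{\sigma}$, the constants do not grow, i.e.\ $\const{\FF(t)}{\sigma}=\const{\FF}{\sigma}$, so $\rho$ is also a root of unity in the constant field of $\dfield{\FF(t)}{\sigma}$. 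This is precisely the defining property of radical-stability (Definition~\ref{Def:RacialStable}) for $\dfield{\FF(t)}{\sigma}$, completing the argument.

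No step is really an obstacle: Lemma~\ref{Lemma:RadicalLift} does all the hard combinatorial work of producing an $m$-th power up to a pure $t^n$ correction, and the \sigmaE-case of that lemma is the easy one. The only minor subtlety worth flagging is the bookkeeping about constants — checking $\const{\FF(t)}{\sigma}=\const{\FF}{\sigma}$ so that the $\rho$ supplied by Lemma~\ref{Lemma:BasicListRadicalStable} legitimately witnesses radical-stability of the \emph{larger} difference field — but this is immediate from the definition of a \sigmaE-field extension.
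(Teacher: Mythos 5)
Your proof is correct and follows exactly the paper's argument: apply Lemma~\ref{Lemma:RadicalLift} (where the \sigmaE-case forces $n=0$, so $g=u\,\gamma^m$), then invoke Lemma~\ref{Lemma:BasicListRadicalStable} with $\EE=\FF(t)$ to produce $\tilde{\gamma}$ and $\rho$. Your additional remarks — that the conclusion should read ``$\dfield{\FF(t)}{\sigma}$ is radical-stable'' (a typo in the statement) and that $\const{\FF(t)}{\sigma}=\const{\FF}{\sigma}$ justifies $\rho$ lying in the right constant field — are both accurate and only make the argument more careful than the paper's.
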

\begin{proof}
Suppose that $\dfield{\FF}{\sigma}$ is radical-stable.
Let $a,g\in\FF(t)^*$ and $m\in\NN\setminus\{0\}$ with $\frac{\sigma(g)}{g}=a^m$. By Lemma~\ref{Lemma:RadicalLift}
there are a $\gamma\in\FF(t)^*$ and $u\in\FF^*$ with $g=\gamma^m\,u$. 
Thus we can activate Lemma~\ref{Lemma:BasicListRadicalStable} with $\EE=\FF(t)$ and it follows that there is a $\tilde{\gamma}\in\FF(t)^*$ and a root of unity $\rho\in(\const{\FF}{\sigma})^*$ with $\rho^m=1$ 
such that $\frac{\sigma(\tilde{\gamma})}{\tilde{\gamma}}=a\,\rho$ holds.
Consequently, $\dfield{\FF(t)}{\sigma}$ is radical-stable.
\end{proof}

\begin{proposition}\label{Prop:SpecialPiLifting}
Let $\dfield{\EE}{\sigma}$ with $\EE=\FF(t_1)\dots(t_e)$ be a \piE-field extension of $\dfield{\FF}{\sigma}$ with $\alpha_i=\frac{\sigma(t_i)}{t_i}\in\FF^*$ for $1\leq i\leq e$. Suppose that $\dfield{\FF}{\sigma}$ is radical-stable and the following property holds:
\begin{multline}\label{Equ:PropertyForPiLift}
\forall m\in\NN\setminus\{0\}\,\forall (m_1,\dots,m_e)\in\ZZ^e\,\forall a,w\in\FF^*:\frac{\sigma(w)}{w}=a^m\alpha_1^{m_1}\dots\alpha_e^{m_e}\\
\Rightarrow m\mid m_1\wedge\dots\wedge m\mid m_e.
\end{multline}
Then $\dfield{\EE}{\sigma}$ is radical-stable.
\end{proposition}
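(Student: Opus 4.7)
The plan is to reduce the claim to the situation of Lemma~\ref{Lemma:BasicListRadicalStable} by absorbing any residual $\pi$-monomial factors into the $m$th power, using the divisibility hypothesis~\eqref{Equ:PropertyForPiLift}.

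Start with $a,g\in\EE^*$ and $m\in\NN\setminus\{0\}$ satisfying $\frac{\sigma(g)}{g}=a^m$; the goal is to find $\tilde\gamma\in\EE^*$ and $\rho\in(\const{\FF}{\sigma})^*$ with $\rho^m=1$ such that $\frac{\sigma(\tilde\gamma)}{\tilde\gamma}=\rho\,a$. First apply Corollary~\ref{Cor:MultiPigForm} to write
\[
g=u\,\gamma^m\,t_1^{m_1}\cdots t_e^{m_e}
\]
with $u\in\FF^*$, $\gamma\in\EE^*$ such that $t_i\nmid\gamma$ for $1\leq i\leq e$, and $m_1,\dots,m_e\in\ZZ$. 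Computing $\frac{\sigma(g)}{g}=a^m$ and rearranging, one obtains
\[
\Bigl(\frac{a\,\gamma}{\sigma(\gamma)}\Bigr)^{\!m}=\frac{\sigma(u)}{u}\,\alpha_1^{m_1}\cdots\alpha_e^{m_e}.
\]
The right-hand side lies in $\FF^*$, so by Lemma~\ref{Lemma:PiSiPowerProp} also $a':=\tfrac{a\,\gamma}{\sigma(\gamma)}\in\FF^*$. Hence we are in the situation covered by the hypothesis~\eqref{Equ:PropertyForPiLift} (applied with $w:=u$, $a:=a'$ and exponents $-m_i$): it yields $m\mid m_i$ for $1\leq i\leq e$.

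Write $m_i=m\,n_i$ with $n_i\in\ZZ$ and set $\hat\gamma:=\gamma\,t_1^{n_1}\cdots t_e^{n_e}\in\EE^*$. Then
\[
g=u\,\hat\gamma^{\,m},\qquad u\in\FF^*,\qquad \frac{\sigma(g)}{g}=a^m,
\]
which is precisely the hypothesis of Lemma~\ref{Lemma:BasicListRadicalStable}. Invoking that lemma (whose applicability requires $\dfield{\FF}{\sigma}$ to be radical-stable, which is part of our assumptions), one obtains $\tilde\gamma\in\EE^*$ and $\rho\in(\const{\FF}{\sigma})^*$ with $\rho^m=1$ and $\frac{\sigma(\tilde\gamma)}{\tilde\gamma}=\rho\,a$. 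This shows that $\dfield{\EE}{\sigma}$ is radical-stable.

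The only subtlety is that $a$ need not lie in $\FF$, so one cannot directly quote~\eqref{Equ:PropertyForPiLift}. This is overcome by observing that after the normalization supplied by Corollary~\ref{Cor:MultiPigForm}, the combination $a\,\gamma/\sigma(\gamma)$ automatically drops into $\FF^*$ via Lemma~\ref{Lemma:PiSiPowerProp}, so that~\eqref{Equ:PropertyForPiLift} becomes available and the $\pi$-monomial factors can be absorbed into the $m$th power, leaving a purely ground-field radical problem handled by Lemma~\ref{Lemma:BasicListRadicalStable}.
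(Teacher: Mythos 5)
Your proposal is correct and follows essentially the same route as the paper: normalize $g$ via Corollary~\ref{Cor:MultiPigForm}, use Lemma~\ref{Lemma:PiSiPowerProp} to see that $a\,\gamma/\sigma(\gamma)$ lies in $\FF^*$, invoke~\eqref{Equ:PropertyForPiLift} to absorb the $t_i$-powers into the $m$th power, and finish with Lemma~\ref{Lemma:BasicListRadicalStable}. In fact your bookkeeping of the quantity $a':=a\,\gamma/\sigma(\gamma)$ is slightly cleaner than the paper's, which reuses the symbol $a$ for $\sigma(\gamma)/\gamma$ at that step.
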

\begin{proof}
Suppose that $\dfield{\FF}{\sigma}$ is radical-stable and that~\eqref{Equ:PropertyForPiLift} holds.
Let $a,g\in\EE^*$ and $m\in\NN\setminus\{0\}$ with $\frac{\sigma(g)}{g}=a^m$. 
By Corollary~\ref{Cor:MultiPigForm}
there are $m_1,\dots,m_e\in\ZZ$, $\gamma\in\EE^*$ with $t_i\nmid\gamma$ for $1\leq i\leq e$ and $u\in\FF^*$ such that $g=u\,\gamma^m\,t_1^{m_1}\dots t_e^{m_e}$ holds. Hence $\frac{\sigma(w)}{w}=a^m\alpha_1^{m_1}\dots\alpha_e^{m_e}$ with $a=\frac{\sigma(\gamma)}{\gamma}\in\EE^*$ and $w=u^{-1}\in\FF^*$. Since $\frac{\sigma(w)}{w}\in\FF^*$ and $\alpha_1^{m_1}\dots\alpha_e^{m_e}\in\FF^*$, $a^m\in\FF^*$. Thus by Lemma~\ref{Lemma:PiSiPowerProp} it follows $a\in\FF^*$.
By property~\eqref{Equ:PropertyForPiLift} we conclude that there are $n_i\in\ZZ$ with $m\,n_i=m_i$ for $1\leq i\leq e$. Thus $g=u\,\gamma'^m$ with
$\gamma'=\gamma\,t_1^{n_1}\dots t_e^{n_e}\in\EE$.
As in the proof of Proposition~\ref{Prop:GeneralSigmaLifting} we can apply Lemma~\ref{Lemma:BasicListRadicalStable} with $\EE=\FF(t_1)\dots(t_e)$ and it follows that there are a $\tilde{\gamma}\in\EE^*$ and a root of unity $\rho\in(\const{\FF}{\sigma})^*$ with $\rho^m=1$ such that $\frac{\sigma(\tilde{\gamma})}{\tilde{\gamma}}=a\,\rho$ holds.
Consequently, $\dfield{\EE}{\sigma}$ is radical-stable.
\end{proof}

\noindent Finally, we show that the mixed-rational difference field is radical-stable and radical-solvable.

\begin{corollary}\label{Cor:RadicalComputable}
Let $\KK=\KK'(q_1,\dots,q_v)$ be a rational function field with coefficients from a field $\KK'$.
The mixed-rational difference field $\dfield{\FF}{\sigma}$ with $\FF=\KK(x)(y_1)\dots(y_v)$ where $\const{\KK}{\sigma}=\KK$, $\sigma(x)=x+1$ and $\sigma(y_i)=q_i\,y_i$ for $1\leq i\leq v$ is radical-stable.
If $\KK'$ is a rational function field over an algebraic number field, one can solve homogeneous first-order linear difference equations in $\dfield{\FF}{\sigma}$ and $\dfield{\FF}{\sigma}$ is radical-solvable.
\end{corollary}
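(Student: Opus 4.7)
The plan is to build up radical-stability by applying the two lifting propositions in succession, and then to deduce radical-solvability via Lemma~\ref{Lemma:RadicalStableImpliesRadicalComputable}. First, I would observe that $\dfield{\KK}{\sigma|_\KK}$ is trivially radical-stable: since $\sigma$ fixes $\KK$ pointwise, the homogeneous group equals $\{1\}$, so $a^m \in H_{\dfield{\KK}{\sigma}}$ forces $a^m=1$, and one picks $\rho:=a^{-1}\in\KK^*$, which satisfies $\rho^m=1$ and $a\rho=1\in H_{\dfield{\KK}{\sigma}}$. Since $x$ is a \sigmaE-monomial over $\KK$, Proposition~\ref{Prop:GeneralSigmaLifting} immediately promotes radical-stability to $\dfield{\KK(x)}{\sigma}$.

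The heart of the argument is then to lift from $\dfield{\KK(x)}{\sigma}$ to $\dfield{\FF}{\sigma}=\dfield{\KK(x)(y_1)\cdots(y_v)}{\sigma}$, which is a nested \piE-field extension of $\dfield{\KK(x)}{\sigma}$ (as already noted in Example~\ref{Exp:MixedDF}), with $\alpha_i=q_i\in\KK(x)^*$. I would invoke Proposition~\ref{Prop:SpecialPiLifting}, whose non-trivial hypothesis is the divisibility condition~\eqref{Equ:PropertyForPiLift}: whenever $\sigma(w)/w=a^m q_1^{m_1}\cdots q_v^{m_v}$ with $w,a\in\KK(x)^*$ and $m\geq 1$, one has $m\mid m_i$ for each $i$. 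This I would establish by a content-comparison argument. Every $f\in\KK(x)^*$ decomposes uniquely as $f=c_f\cdot\tilde f$ where $c_f\in\KK^*$ and $\tilde f$ is a ratio of monic polynomials in $\KK[x]$. Since $\sigma$ is the identity on $\KK$ and sends monic polynomials to monic polynomials of the same degree (because $\sigma(x)=x+1$), the assignment $f\mapsto c_f$ is a $\sigma$-invariant group homomorphism $\KK(x)^*\to\KK^*$. Applying it to the given equation yields $1=c_a^m\,q_1^{m_1}\cdots q_v^{m_v}$ in $\KK^*=\KK'(q_1,\dots,q_v)^*$. Taking the $q_i$-adic valuation (which is well defined with integer values because $q_1,\dots,q_v$ are algebraically independent over $\KK'$) gives $m\cdot v_{q_i}(c_a)=-m_i$, hence $m\mid m_i$. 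This verifies~\eqref{Equ:PropertyForPiLift}, and Proposition~\ref{Prop:SpecialPiLifting} concludes that $\dfield{\FF}{\sigma}$ is radical-stable.

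For the algorithmic part, assume in addition that $\KK'$ is a rational function field over an algebraic number field. One can then solve the homogeneous equation $\sigma(g)=w\,g$ in $\dfield{\FF}{\sigma}$ by Karr's algorithm~\cite{Karr:81} adapted to the mixed-rational setting in~\cite{Bron:00,Bauer:99,OS:18}. Combined with radical-stability, Lemma~\ref{Lemma:RadicalStableImpliesRadicalComputable} immediately yields radical-solvability of $\dfield{\FF}{\sigma}$.

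The only genuine difficulty in the plan is the content-comparison step in the \piE-lifting: one must carefully track how $\sigma$ acts on the monic normalisation and keep the two nested layers of rational function fields ($\KK'(q_1,\dots,q_v)$ underneath, $\KK(x)$ on top) cleanly separated so that the $q_i$-adic valuations on the lower layer can be applied to an equation coming from the upper layer. Once this bookkeeping is in place, the rest is routine.
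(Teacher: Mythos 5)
Your proposal is correct, but it runs the two lifting steps in the opposite order from the paper, and this changes where the real work sits. The paper first applies Proposition~\ref{Prop:SpecialPiLifting} over the constant field, i.e.\ with ground field $\KK$ and $\Pi$-monomials $y_1,\dots,y_v$: there the hypothesis~\eqref{Equ:PropertyForPiLift} only involves $a,w\in\KK^*$ with $\sigma(w)/w=1$, so the divisibility $m\mid m_i$ drops out of a one-line $q_i$-multiplicity count. It then lifts the $\Sigma$-monomial $x$ on top via Proposition~\ref{Prop:GeneralSigmaLifting} and finishes by reordering the generators to recover $\KK(x)(y_1)\dots(y_v)$. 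You instead adjoin $x$ first and then must verify~\eqref{Equ:PropertyForPiLift} over the larger ground field $\KK(x)$, which is a genuinely stronger statement; your leading-coefficient homomorphism $f\mapsto c_f$ from $\KK(x)^*$ to $\KK^*$ (well defined via the monic normalisation, multiplicative, and satisfying $c_{\sigma(f)}=c_f$ because $\sigma(x)=x+1$ preserves monicity and fixes $\KK$) correctly collapses the equation $\sigma(w)/w=a^mq_1^{m_1}\cdots q_v^{m_v}$ down to $1=c_a^m q_1^{m_1}\cdots q_v^{m_v}$ in $\KK^*$, after which the $q_i$-adic valuations give $m\mid m_i$ exactly as claimed. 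So your route buys a presentation that matches $\KK(x)(y_1)\dots(y_v)$ without any reordering, at the price of the content-comparison argument; the paper's route keeps the $\Pi$-lifting trivial at the price of a harmless permutation of generators. The algorithmic half of your argument coincides with the paper's (solve $\sigma(g)=w\,g$ algorithmically, then invoke Lemma~\ref{Lemma:RadicalStableImpliesRadicalComputable}), differing only in which references are cited for the solver.
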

\begin{proof}
Note that the constant field $\dfield{\KK}{\sigma}$ is trivially radical-stable: If there are a $g\in\KK^*$ and $a\in\KK^*$ with $a^m=\frac{\sigma(g)}{g}=1$ for some $m\in\NN\setminus\{0\}$ then $a$ is a root of unity. Thus we can set $\rho=a^{-1}\in\KK^*$ with $\rho^m=1$ and $\gamma=1\in\KK^*$ and get 
$\frac{\sigma(\tilde{\gamma})}{\tilde{\gamma}}=a\,\rho$. Furthermore, property~\eqref{Equ:PropertyForPiLift} (with $\EE=\KK(y_1)\dots(y_v)$ and $\alpha_i=q_i$ for $1\leq i\leq v=e$) holds: 
Let $a,g\in\KK^*$, $m\in\NN\setminus\{0\}$ and $(m_1,\dots,m_v)\in\ZZ^v$ such that
$1=\frac{\sigma(g)}{g}=a^m q_1^{m_1}\dots q_v^{m_v}$
holds. Suppose that $m_i\neq0$ for some $1\leq i\leq v$. Then $q_i$ must occur as factor in $a$, say with multiplicity $l_i\in\ZZ\setminus\{0\}$. More precisely, we must have $m_i=l_i\,m$, i.e., $m\mid m_i$. Since property~\eqref{Equ:PropertyForPiLift} holds, we can apply Proposition~\ref{Prop:SpecialPiLifting} and it follows that $\dfield{\KK(y_1)\dots(y_v)}{\sigma}$ is radical-stable. By Proposition~\ref{Prop:GeneralSigmaLifting} we conclude that also $\dfield{\KK(y_1)\dots(y_v)(x)}{\sigma}$ and thus by reordering of the generators also $\dfield{\KK(x)(y_1)\dots(y_v)}{\sigma}$ is radical-stable. In particular, if $\KK'$ is a rational function field over an algebraic number field, one can solve linear first-order homogeneous difference equations in $\dfield{\FF}{\sigma}$ by\footnote{The proof is based on Karr's summation algorithm~\cite{Karr:81} and Ge's algorithm~\cite{Ge:93}. For the rational and $q$-rational case we refer also to~\cite{Abramov:89a}.}~\cite[Thm.~3.2 and 3.5]{Schneider:05c}. As a consequence, 
$\dfield{\FF}{\sigma}$ is also radical-solvable by Lemma~\ref{Lemma:RadicalStableImpliesRadicalComputable}.
\end{proof}

As a consequence also the rational difference ring (i.e., $v=0$), the $q$-rational difference ring (i.e., $v=1$ and $\FF=\KK(y_1)$), and the multi-basic difference ring (i.e., $\FF=\KK(y_1)\dots(y_v)$) are radical-stable.
We remark that the property of being radical-stable has been shown already earlier for the rational case in~\cite[Lemma~5.3]{Schneider:05c} and for the $q$-rational case in~\cite[Lemma~5.5 and Lemma~5.6]{Schneider:05c}.


\begin{example}\label{Exp:Computeg2}
Consider the rational difference field $\dfield{\FF}{\sigma}$ with $\FF=\KK(x)$, $\sigma(x)=x+1$ and 
$\alpha_2$ given in~\eqref{Equ:Exp:DefineSpecialAlpha}. It will turn out that $m=2$ is minimal such that  
$\alpha_2^m\in H_{\dfield{\FF}{\sigma}}$ holds. Since $\dfield{\FF}{\sigma}$ is radical-stable by Corollary~\ref{Cor:RadicalComputable}, 
there is a $\rho\in\KK^*$ with $\rho^2=1$ and a $\bar{g}_2\in\FF$ such that $\sigma(\bar{g}_2)=\rho\,\alpha_2\,\bar{g}_2$ holds. We can calculate $\bar{g}_2$ and $\rho$ following the proof of Lemma~\ref{Lemma:RadicalStableImpliesRadicalComputable}.
Take the primitive 2nd root of unity $\rho=-1$.
Then we check for $i=0,1$ if there is a $\bar{g}_2\in\FF^*$ with $\sigma(\bar{g}_2)=(-1)^i\alpha_2\bar{g}_2$. Since $m=2$ is minimal with $\alpha_2^m\in H_{\dfield{\FF}{\sigma}}$, it suffices to look at $i=1$; see the footnote in the proof of Corollary~\ref{Cor:RadicalComputable}). Solving this first-order homogeneous difference equation we obtain
\begin{equation}\label{Equ:g'2Explicit}
\bar{g}_2=(1 + x)^2 (2 + x)^5 (3 + x)^8 (4 + x) (5 + x).
\end{equation}
\end{example}

\subsection{Properties of the mixed-rational difference field (Theorem~\ref{Thm:MixedIsGood})}\label{Sec:MixedRatDF}
Collecting results of the previous subsections yields a

\medskip

\noindent\textbf{Proof of Theorem~\ref{Thm:MixedIsGood}.}
Let $\dfield{\FF}{\sigma}$ be a mixed-rational difference field with constant field $\KK$ where $\KK$
is a rational function field over an algebraic number field.
Note that $\dfield{\FF}{\sigma}$ is a \pisiE-field over $\KK$.
\begin{enumerate}
\item Since the standard operations in a rational function field defined over $\KK$ are computable, $\dfield{\FF}{\sigma}$ is computable.

\item As elaborated in Example~\ref{Exp:MixedDF} there is a $o^{\sigma}$-computable $\KK$-embedding into the ring of sequences.

\item By Corollary~\ref{Cor:RadicalComputable} the difference field $\dfield{\FF}{\sigma}$ is radical-stable.

\item By Corollary~\ref{Cor:RadicalComputable} one can solve linear first-order homogeneous difference equations in $\dfield{\FF}{\sigma}$. 

\item Finally, by Proposition~\ref{Prop:ComputeBasisForM} one can compute a basis of $M((\alpha_1,\dots,\alpha_r),\FF)$ with $\alpha_1,\dots,\alpha_r\in\FF^*$.\qed
\end{enumerate}

We remark that the mixed-rational difference field $\dfield{\FF}{\sigma}$ introduced in Example~\ref{Exp:MixedDF} is also constant-stable. This follows either by Corollary~\ref{Cor:PisiFieldIsConstantStable} ($\dfield{\FF}{\sigma}$ is a \pisiE-field) or by Lemma~\ref{Lemma:TauImpliesConstantStable} ($\dfield{\FF}{\sigma}$ can be embedded into the ring of sequences).

\section{The solution of Problem~\DR\ for a special case}\label{Sec:SpecialCase}

In the following we will consider a $P$-extension $\dfield{\FF\langle x_1\rangle\dots\langle x_r\rangle}{\sigma}$ of a difference field $\dfield{\FF}{\sigma}$ with $\alpha_i=\frac{\sigma(x_i)}{x_i}\in\FF^*$ for $1\leq i \leq r$ with the following special property: the $\ZZ$-submodule $M((\alpha_1,\dots,\alpha_r),\FF)$ of $\ZZ^r$ has rank $u\geq1$ and there is a basis of the form
\begin{equation}\label{Equ:SmithBasisforM}
\{(d_1,0,\dots,0),(0,d_2,0,\dots,0),\dots,(0,\dots,0,d_u,0,\dots,0)\}
\end{equation}
with $d_1\mid\dots\mid d_u$.

\begin{example}\label{Exp:SpecialCase1}
As running example we will start with the difference field $\dfield{\FF}{\sigma}$ with $\FF=\KK(x)$ where $\KK=\QQ(\iota)$ and $\sigma(x)=x+1$ equipped with $\fct{\bar{\ev}}{\FF\times\NN}{\KK}$ defined in Example~\ref{Exp:RationalDField1} ($\ev$ replaced by $\bar{\ev}$). Furthermore, we consider the $P$-extension
$\dfield{\HH}{\sigma}$ of $\dfield{\FF}{\sigma}$ with $\HH=\FF\langle x_1\rangle\langle x_2\rangle\langle x_3\rangle\langle x_4\rangle$ where $\sigma(x_i)=\alpha_i\,x_i$ for $1\leq i\leq 4$ with
\begin{align}\label{Equ:Exp:DefineSpecialAlpha}
\alpha_1&=\tfrac{(x+6)^2}{(x+4)^2},&
\alpha_2&=-\tfrac{(x+4)^7
   (x+6)}{(x+1)^2 (x+2)^3 (x+3)^3},\\
\alpha_3&=-\tfrac{\iota
   (x+4)^6}{9 (x+1) (x+2)^2 (x+3)
   (x+6)},&
\alpha_4&=-\tfrac{162 (x+1) (x+3)}{x+6}.\nonumber
\end{align}
Further, we extend the evaluation function $\fct{\bar{\ev}}{\FF\times\NN}{\KK}$ to $\fct{\ev}{\HH\times\NN}{\KK}$ by
\begin{equation}\label{Equ:Exp:DefineSpecialEv}
\begin{aligned}
\ev(x_1,n)&=\prod_{k=1}^n \tfrac{(5+k)^2}{(3+k)^2}&
\ev(x_2,n)&=\prod_{k=1}^n \tfrac{-(3+k)^7 (5+k)}{k^2 (1+k)^3 (2+k)^3},\\
\ev(x_3,n)&=\prod_{k=1}^n \tfrac{-\iota (3+k)^6}{9 k (1+k)^2 (2+k) (5+k)},&
\ev(x_4,n)&=\prod_{k=1}^n \tfrac{-162 k (2+k)}{5+k}.
\end{aligned}
\end{equation}
Using, e.g., the algorithm from~\cite{Karr:81} we obtain the basis 
\begin{equation}\label{Equ:MBasisSimple}
\{(1, 0, 0, 0), (0, 2, 0, 0)\}
\end{equation}
of $M((\alpha_1,\alpha_2,\alpha_3,\alpha_4),\FF)$, 
i.e., $u=2$ with $d_1=1$ and $d_2=2$.
\end{example}
For such an extension, we will solve Problem~\DR\ as described in Subsection~\ref{Subsec:ProblemDescription}. 
In order to derive this result, we will first treat a more general situation in  Lemma~\ref{Lemma:LambdaConstruction} that does not require that there is a $\KK$-embedding $\fct{\tau}{\FF}{\seqK}$. Afterwards we will specialize this result to Theorem~\ref{Thm:LambdaTauConstruction} for a given $\KK$-embedding.

Before we can proceed with this construction, we will elaborate several lemmas. In their proofs we will use the following definition.
For a Laurent polynomial $f\in\AR[t,t^{-1}]\setminus\{0\}$ we define
$$\xdeg(f)=\deg(f)-\ord(f).$$
This means that for $f=\sum_{k=l}^rf_k\,t^k$ with $l\leq r$ where $f_l\neq0\neq f_r$ (i.e., $\ord(f):=l$ and $\deg(f):=r$) we have $\xdeg(f)=r-s$.

\begin{lemma}\label{Lemma:ConstructDRHomo}
Let $\dfield{\EE}{\sigma}$ with $\EE=\AR\langle x_1\rangle\dots\langle x_r\rangle$ be a $P$-extension of a difference ring $\dfield{\AR}{\sigma}$ with $\alpha_i=\frac{\sigma(x_i)}{x_i}\in\AR^*$ for $1\leq i\leq r$. Let $\dfield{\HH}{\sigma}$ be a difference ring extension of $\dfield{\AR}{\sigma}$ and take the ring homomorphism $\fct{\lambda}{\EE}{\HH}$ with $\lambda|_{\AR}=\id$ and $\lambda(x_i)=g_i$ for some $g_i\in\HH^*$ with $1\leq i\leq r$. Then $\lambda$ is a difference ring homomorphism iff $\alpha_i=\frac{\sigma(g_i)}{g_i}$ for $1\leq i\leq r$. 
\end{lemma}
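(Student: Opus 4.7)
The plan is to recognize this as a standard ``compatibility on generators extends to compatibility everywhere'' argument for difference ring homomorphisms. Since every element of $\EE$ is a finite $\AR$-linear combination of Laurent monomials $x_1^{m_1}\cdots x_r^{m_r}$ with $m_i\in\ZZ$, the two maps $\lambda\circ\sigma$ and $\sigma\circ\lambda$ from $\EE$ to $\HH$ are both ring homomorphisms, and they will coincide on all of $\EE$ as soon as they coincide on the generating set $\AR\cup\{x_1,\dots,x_r\}$. So the whole lemma reduces to checking the compatibility $\lambda(\sigma(\,\cdot\,))=\sigma(\lambda(\,\cdot\,))$ on this generating set.

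First I would dispatch the case $a\in\AR$: here $\sigma(a)\in\AR$ because $\dfield{\AR}{\sigma}$ is a sub-difference ring, so
$$\lambda(\sigma(a))=\sigma(a)=\sigma(\lambda(a))$$
follows automatically from $\lambda|_{\AR}=\id$. Then on each generator $x_i$ I would simply compute both sides:
$$\lambda(\sigma(x_i))=\lambda(\alpha_i\,x_i)=\alpha_i\,g_i,\qquad \sigma(\lambda(x_i))=\sigma(g_i),$$
using that $\alpha_i\in\AR$ is fixed by $\lambda$ and that $\lambda$ is multiplicative. The two sides coincide if and only if $\sigma(g_i)=\alpha_i\,g_i$, and since $g_i\in\HH^*$ this is equivalent to $\tfrac{\sigma(g_i)}{g_i}=\alpha_i$.

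Putting the two observations together yields both implications in one stroke: the forward direction is just the specialization of $\lambda\circ\sigma=\sigma\circ\lambda$ to $x_i$, and the reverse direction uses the generator-extension principle from the first paragraph. The only step requiring even mild care is the Laurent generators $x_i^{-1}$, but compatibility at $x_i^{-1}$ is automatic from compatibility at $x_i$, since ring homomorphisms commute with taking inverses of units ($\sigma(x_i^{-1})=\sigma(x_i)^{-1}$ and analogously for $\lambda$). I do not anticipate any genuine obstacle; the lemma is essentially a bookkeeping check.
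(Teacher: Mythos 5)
Your proof is correct and follows essentially the same route as the paper: verify the compatibility $\lambda\circ\sigma=\sigma\circ\lambda$ on the generators $x_i$ (and on $\AR$, where it is automatic), and note that agreement on generators extends to all of $\EE$ since both composites are ring homomorphisms. Your remarks on the inverse generators $x_i^{-1}$ and on the use of $g_i\in\HH^*$ are correct fillings-in of details the paper leaves implicit.
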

\begin{proof}
Suppose that $\lambda$ is a difference ring homomorphism. Then for $1\leq i\leq r$ we have $\frac{\sigma(g_i)}{g_i}=\frac{\sigma(\lambda(x_i))}{\lambda(x_i)}=\lambda(\frac{\sigma(x_i)}{x_i})=\lambda(\alpha_i)=\alpha_i$. Conversely, if $\alpha_i=\frac{\sigma(g_i)}{g_i}$ for $1\leq i \leq r$, then
$\lambda(\sigma(x_i))=\lambda(\alpha_i\,x_i)=\alpha_i\,g_i=\sigma(g_i)=\sigma(\lambda(x_i))$ which implies that $\sigma(\lambda(f))=\lambda(\sigma(f)$ for all $f\in\EE$.
\end{proof}

\begin{lemma}\label{Lemma:PropertyInOneVar}
Let $\dfield{\FF\langle x\rangle}{\sigma}$ be a $P$-extension of a difference field $\dfield{\FF}{\sigma}$ with $\sigma(x)=\alpha\,x$ and $\KK=\const{\FF}{\sigma}$.
Let $\dfield{\HH}{\sigma}$ be a difference ring extension of $\dfield{\FF}{\sigma}$ with $\const{\HH}{\sigma}=\KK$ equipped with a difference ring homomorphism $\fct{\lambda}{\FF\langle x\rangle}{\HH}$ with 
$\lambda|_{\KK}=\id$. 
Then the following holds. 
\begin{enumerate}
\item $M((\alpha),\FF)\neq\{0\}$ if and only if $\ker(\lambda)\neq\{0\}$.
\item If $M((\alpha),\FF)=\langle m\rangle_{\ZZ}$ with $m>0$, then $\ker(\lambda)=\langle \mu\rangle_{\FF\langle x\rangle}$ 
with $\mu=x^m+g$ for some $g\in\FF^*$ with $\sigma(g)=\alpha^m\,g$.
\end{enumerate}
\end{lemma}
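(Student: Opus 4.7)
My plan revolves around the excess degree $\xdeg$ of Laurent polynomials and a shift-cancellation trick. For any nonzero $p=\sum_k c_k x^k\in\ker(\lambda)$ with $d=\deg p$, the combination
\[q:=c_d\,\sigma(p)-\sigma(c_d)\,\alpha^d\,p\]
again lies in $\ker(\lambda)$ (since $\ker\lambda$ is $\sigma$-stable) and has its $x^d$-coefficient cancelled, so $\xdeg(q)<\xdeg(p)$ whenever $q\neq 0$. For a $p$ of minimal $\xdeg$ in $\ker(\lambda)\setminus\{0\}$ this forces $q=0$, i.e.\ $\sigma(p)/p=\sigma(c_d)\alpha^d/c_d\in\FF$; reading this off coefficient by coefficient, every $k$ with $c_k\neq 0$ satisfies $\sigma(c_k/c_d)/(c_k/c_d)=\alpha^{d-k}$, so $d-k\in M((\alpha),\FF)$. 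This computation is the engine of both parts of the lemma.

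For the ``if'' direction of~(1) I would take such a minimal $p$. Because $\lambda$ is a nonzero ring homomorphism out of the field $\FF$, $\lambda|_\FF$ is injective, and because $\lambda(x)\in\HH^*$ is a unit, no $\FF$-monomial $c\,x^k$ can lie in $\ker(\lambda)$; hence $p$ must have at least two nonzero coefficients, and the observation above produces a nonzero element of $M((\alpha),\FF)$. For the converse direction of~(1) and simultaneously for the existence of $\mu$ in~(2), I would assume $M((\alpha),\FF)=m\ZZ$ with $m>0$ and fix $g\in\FF^*$ with $\sigma(g)=\alpha^m g$. Both $\lambda(g)$ and $\lambda(x^m)$ are then nonzero elements of $\HH$ satisfying $\sigma(\cdot)=\lambda(\alpha)^m\cdot$, so their ratio is $\sigma$-invariant and lies in $\const{\HH}{\sigma}=\KK$; call it $c\in\KK^*$. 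Using $\lambda|_\KK=\id$ I obtain $\lambda(x^m-cg)=\lambda(x^m)-c\,\lambda(g)=0$, so $\mu:=x^m-cg$ belongs to $\ker(\lambda)$, and writing $\mu=x^m+g'$ with $g':=-cg\in\FF^*$ one checks $\sigma(g')=\alpha^m g'$ directly.

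Finally, to establish $\ker(\lambda)=\langle\mu\rangle_{\FF\langle x\rangle}$ in~(2): the inclusion $\supseteq$ is immediate; for $\subseteq$ I would multiply an arbitrary $p\in\ker(\lambda)$ by a suitable power of the unit $x$ to reduce to $p\in\FF[x]$ and then divide by the monic polynomial $\mu\in\FF[x]$, producing $p=\mu q+r$ with $r\in\FF[x]$, $\deg r<m$, whence $r\in\ker(\lambda)$ and $\xdeg(r)<m$. The analysis of the first paragraph, combined with $M((\alpha),\FF)=m\ZZ$, shows that the nonzero exponents of any minimal-$\xdeg$ element of $\ker(\lambda)$ lie in a single residue class modulo $m$, so the minimal $\xdeg$ is a nonnegative multiple of $m$; the second paragraph excludes $\xdeg=0$, and $\xdeg(\mu)=m$ supplies the matching upper bound. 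Hence the minimum equals $m$, forcing $r=0$ and $p\in\langle\mu\rangle$.

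The most delicate point I anticipate is to respect the weaker hypothesis $\lambda|_\KK=\id$ in place of $\lambda|_\FF=\id$: one may no longer identify $f\in\FF$ with its image $\lambda(f)\in\HH$, and the key computation that $\lambda(x^m)/\lambda(g)$ is a $\sigma$-constant must be carried out inside $\HH$ using only that $\lambda$ intertwines the two automorphisms. All other steps are purely formal once $\xdeg$ is used as the well-ordered statistic driving the induction.
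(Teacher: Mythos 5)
Your proof is correct and follows essentially the same route as the paper's: the shift--cancellation $c_d\,\sigma(p)-\sigma(c_d)\,\alpha^d p$ applied to a $\ker(\lambda)$-element of minimal $\xdeg$ is exactly the paper's $h=\sigma(\mu)-\alpha^n\mu$ step, the generator $x^m-c\,g$ is obtained in both cases from the constancy of $\lambda(x^m/g)$ in $\HH$, and the ideal equality is concluded by the same division-with-remainder and minimality argument. The only (harmless) difference is that you keep the leading coefficient general rather than normalizing $\mu$ to be monic, and you make the injectivity of $\lambda|_{\FF}$ explicit where the paper only argues $\lambda(x)\neq 0$.
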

\begin{proof}
(1) Suppose that $M((\alpha),\FF)\neq\{0\}$. Then there is a $\gamma\in\FF^*$ with $\sigma(\gamma)=\alpha^m\,\gamma$ for some $m>0$ (if $m<0$, we can take $\gamma'=1/\gamma\in\FF^*$ with $\sigma(\gamma')=\alpha^{-m}\gamma'$). Hence $\sigma(\frac{x^m}{\gamma})=\frac{x^m}{\gamma}$, therefore $\sigma(\lambda(\frac{x^m}{\gamma}))=\lambda(\frac{x^m}{\gamma})$, and thus $\lambda(\frac{x^m}{\gamma})\in\const{\HH}{\sigma}=\const{\FF}{\sigma}=\KK$. 
Since $\frac{x^m}{\gamma}$ is a unit, $\lambda(\frac{x^m}{\gamma})\neq0$.
Hence $\lambda(\frac{x^m}{\gamma})=c$ for some $c\in\KK^*$ and consequently $\lambda(\frac{x^m}{\gamma}-c)=0$ or equivalently $\lambda(x^m-g)=0$ with $g=c\,\gamma\in\FF^*$; obviously we have that $\sigma(g)/g=\alpha^m$. 
Conversely, suppose that there is a $\mu\in\FF\langle x\rangle\setminus\{0\}$ with $\lambda(\mu)=0$. Take such a $\mu$ such that $n=\xdeg(\mu)$ is minimal. W.l.o.g. we may assume that $\mu=x^n+b$ with $b\in\FF[x]$ and $\deg(b)<n$ (otherwise we may take $\mu'=\text{lc}(\mu)^{-1}x^{-\ord(n)}\mu\in\FF[x]$ with $\lambda(\mu')=\lambda(\text{lc}(\mu)^{-1}x^{-\ord(n)})\lambda(\mu)=0$ and $\xdeg(\mu')=\deg(\mu')=m$). Define $h:=\sigma(\mu)-\alpha^n\,\mu$. By construction, $\xdeg(h)=\deg(h)<n$. Furthermore, since $\lambda(\sigma(\mu))=\sigma(\lambda(\mu))=\sigma(0)=0$, we get 
$\lambda(h)=\lambda(\sigma(\mu)-\alpha^n\,\mu)=\lambda(\sigma(\mu))-\lambda(\alpha^n)\lambda(\mu)=0.$
Because of the minimality of $\mu$, it follows that $h=0$ and thus
\begin{equation}\label{Equ:muRel}
\sigma(\mu)=\alpha^n\,\mu.
\end{equation}
Suppose that $b=0$. Then $0=\lambda(\mu)=\lambda(x)^n$. But $\lambda(x)\,\lambda(x^{-1})=1$ and therefore $\lambda(x)\neq0$, a contradiction. Thus $b=g_l\,x^l+d$ with $g_l\in\FF^*$, $0\leq l<n$ and $d\in\FF[x]$ with $\deg(d)<l$. By coefficient comparison w.r.t.\ $x^l$ in~\eqref{Equ:muRel} we conclude that $\sigma(g_l)=\alpha^{n-l}\,g_l$. Therefore $M((\alpha),\FF)\neq\{0\}$.\\
(2) Suppose that $M((\alpha),\FF)=\langle m\rangle_{\ZZ}$ with $m>0$ . By part~(1) there is a polynomial $h=x^m-g\in\FF[x]\setminus\FF$ with $\lambda(\mu)=0$ where $g\in\FF^*$ with $\sigma(g)=\alpha^m\,g$. Moreover, looking at the proof of part (1), among the $\mu\in\FF\langle x\rangle^*$ with $\lambda(\mu)=0$ where $n=\xdeg(\mu)$ is minimal, we can take 
$\mu=x^{n}+g_l\,x^l+d$ with $g_l\in\FF^*$ and $d\in\FF[t]$ with $\deg(d)<l$ where $\sigma(g_l)=\alpha^{n-l}\,g_l$. Because of $n\leq m$ and $M((\alpha),\FF)=\langle m\rangle_{\ZZ}$, it follows that $m=n$ and $l=0$. Consequently, $\mu=x^m+g_l$. 
Finally, we show that\footnote{Note that $\FF\langle x\rangle$ is a p.i.d.\ which implies this statement. For completeness we carry out the proof explicitly.} $\ker(\lambda)=\langle \mu\rangle_{\FF\langle x\rangle}$. $\ker(\lambda)\supseteq\langle\mu\rangle_{\FF\langle x\rangle}$ holds trivially. Let $f\in\ker(\lambda)$.  By polynomial reductions (polynomial divisions) in $\FF\langle x\rangle$ we remove all terms whose degrees are larger than $x^m$ or smaller than $0$. Thus we get $f=r+\mu\,g$ with $g\in\FF\langle x\rangle$ and $r\in\FF[x]$ where $\deg(r)<m$. Since $f,\mu\in\ker(\lambda)$, $r\in\ker(\lambda)$, and by the minimality of $\mu$ it follows that $r=0$. Thus $f=\mu\,g\in\langle\mu\rangle_{\FF\langle x\rangle}$ which completes the proof.
\end{proof}

\begin{lemma}\label{Lemma:LambdaConstruction}
Let $\dfield{\FF}{\sigma}$ be a radical-stable difference field with $\KK=\const{\FF}{\sigma}$, and 
let $\dfield{\FF\langle x_1\rangle\dots\langle x_r\rangle}{\sigma}$ be a $P$-extension of $\dfield{\FF}{\sigma}$ with $\alpha_i=\frac{\sigma(x_i)}{x_i}\in\FF^*$. 
Suppose that~\eqref{Equ:SmithBasisforM} with $u\geq1$
and $d_1\mid\dots\mid d_u$  is a basis of $M((\alpha_1,\dots,\alpha_r),\FF)$. If $d_u>0$, suppose in addition that $\dfield{\FF}{\sigma}$ is constant-stable.
Then the following holds:
\begin{enumerate}
 \item[(1)] $\dfield{\EE}{\sigma}$ with $\EE=\FF\langle x_{u+1}\rangle\dots\langle x_r\rangle$ is a \piE-extension of $\dfield{\FF}{\sigma}$ (i.e., $\const{\EE}{\sigma}=\const{\FF}{\sigma}=\KK$).
 \item[2a)] If $d_u=1$, there are $\bar{g}_i\in\FF^*$ with $\sigma(\bar{g}_i)=\alpha_i\,\bar{g}_i$ for $1\leq i\leq u$ and for any $c_1,\dots,c_u\in\KK^*$ the surjective ring homomorphism $\fct{\lambda}{\FF\langle x_1\rangle\dots\langle x_r\rangle}{\EE}$ defined by $\lambda|_{\EE}=\id$ and 
 \begin{equation}\label{Equ:MapXiToEE}
 \lambda(x_i)=c_i\,\bar{g}_i\quad 1\leq i\leq u
 \end{equation}
is a difference ring homomorphism. 
 \item[2b)] Otherwise, if $d_u>1$, there is an $R$-extension $\dfield{\EE[z]}{\sigma}$ of $\dfield{\EE}{\sigma}$ of order $d_u$ with $\frac{\sigma(z)}{z}=\rho$. In addition, one can take $\bar{g}_i\in\FF^*$ and $\nu_i\in\NN$ for $1\leq i\leq u$ with $\sigma(\bar{g}_i)=\rho^{\nu_i}\,\alpha_i\,\bar{g}_i$
 such that for any choice $c_1,\dots,c_u\in\KK^*$  
  the surjective ring homomorphism $\fct{\lambda}{\FF\langle x_1\rangle\dots\langle x_r\rangle}{\EE[z]}$ defined by $\lambda|_{\EE}=\id$ and
 \begin{equation}\label{Equ:LambdaMap}
 \lambda(x_i)=c_i\,z^{d_u-\nu_i}\bar{g}_i\quad 1\leq i\leq u
 \end{equation}
 is a difference ring homomorphism; further\footnote{Since $\ord(z^{d_u-\nu_i})=d_i$, it follows that $z^{d_u-\nu_i}=z^{n_i\frac{d_u}{d_i}}$ for some $n_i\in\NN$ with $1\leq n_i<d_i$ and $\gcd(n_i,d_i)=1$.}, $\ord(z^{\nu_i})=\ord(z^{d_u-\nu_i})=\ord(\rho^{\nu_i})=\ord(\rho^{d_u-\nu_i})=d_i$.
 \item[(3)] $\ker(\lambda)=\langle x_1^{d_1}-(c_1\,\bar{g}_1)^{d_1},\dots,x_u^{d_u}-(c_u\,\bar{g}_u)^{d_u}\rangle_{\FF\langle x_1\rangle\dots\langle x_r\rangle}$ where the $\bar{g}_i$ are given by (2a) or (2b), respectively.
\end{enumerate} 
If $\dfield{\FF}{\sigma}$ is in addition computable and one can solve first-order homogeneous difference equations in $\dfield{\FF}{\sigma}$, then
one can compute $\dfield{\EE}{\sigma}$, $\lambda$ is computable and the generators of $\ker(\lambda)$ can be given explicitly.
\end{lemma}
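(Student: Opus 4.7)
Part (1) follows from Proposition~\ref{Prop:PiEquivalences}: any element of $M((\alpha_{u+1},\dots,\alpha_r),\FF)$ extends by zeros in the first $u$ coordinates to an element of $M((\alpha_1,\dots,\alpha_r),\FF)$, which is forced to vanish since every generator in the Smith basis~\eqref{Equ:SmithBasisforM} has zero entries beyond position $u$. For Part (2a), $d_u=1$ combined with $d_1\mid\dots\mid d_u$ forces $d_i=1$ for all $i$, so each $e_i\in M$ yields, via Assumption~\ref{Assum:AlgProp}(4), some $\bar g_i\in\FF^*$ with $\sigma(\bar g_i)=\alpha_i\bar g_i$; then $\lambda(x_i)=c_i\bar g_i\in\EE^*$ is a unit and $\sigma(\lambda(x_i))/\lambda(x_i)=\alpha_i$, so Lemma~\ref{Lemma:ConstructDRHomo} produces a difference ring homomorphism, surjective because $\lambda|_\EE=\id$.

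For Part (2b), the first move is to apply radical-stability of $\dfield{\FF}{\sigma}$ to $\alpha_u^{d_u}\in H_{\dfield{\FF}{\sigma}}$ to obtain $\rho\in\KK^*$ with $\rho^{d_u}=1$ and $\bar g_u\in\FF^*$ with $\sigma(\bar g_u)=\rho\,\alpha_u\,\bar g_u$, so one may take $\nu_u=1$. If $\rho$ had order $k<d_u$, then $\sigma(\bar g_u^k)=\alpha_u^k\bar g_u^k$ would place $k\,e_u$ in $M$, contradicting the Smith basis; hence $\rho$ is a primitive $d_u$-th root of unity and $\langle\rho\rangle$ contains every $d_u$-th root of unity in $\KK^*$. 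For $1\le i<u$, radical-stability applied to $\alpha_i$ with $m=d_u$ (valid since $\alpha_i^{d_u}=(\alpha_i^{d_i})^{d_u/d_i}\in H$) yields $\bar g_i\in\FF^*$ and $\nu_i\in\NN$ with $\sigma(\bar g_i)=\rho^{\nu_i}\alpha_i\bar g_i$. To see $\ord(\rho^{\nu_i})=d_i$ I argue in two steps: (i) $\rho^{\nu_i d_i}=(\sigma(\bar g_i)/\bar g_i)^{d_i}\alpha_i^{-d_i}\in H\cap\KK^*$, which equals $1$ by constant-stability of $\FF$ and Lemma~\ref{Lemma:NoRootElement}, so $\ord(\rho^{\nu_i})\mid d_i$; (ii) a smaller order $k<d_i$ would place $k\,e_i$ in $M$, again contradicting the Smith basis. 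The $R$-extension $\dfield{\EE[z]}{\sigma}$ with $\sigma(z)=\rho z$ then exists by Proposition~\ref{Prop:generalRChar}(3) (since $\EE$ is constant-stable by Corollary~\ref{Cor:NestedPIsPiAndCS}), and the remaining order identities come from Proposition~\ref{Prop:generalRChar}(2). A direct computation $\sigma(\lambda(x_i))/\lambda(x_i)=\rho^{d_u-\nu_i}\rho^{\nu_i}\alpha_i=\rho^{d_u}\alpha_i=\alpha_i$ combined with Lemma~\ref{Lemma:ConstructDRHomo} makes $\lambda$ a difference ring homomorphism; surjectivity holds because $\gcd(\nu_u,d_u)=1$ implies $z^{d_u-\nu_u}$ generates $\langle z\rangle$ in $\EE[z]$, putting $z$ in the image.

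The main obstacle is Part (3). The inclusion $\supseteq$ is immediate from $d_u\mid d_i\nu_i$, which gives $\lambda(x_i^{d_i})=c_i^{d_i}z^{d_i(d_u-\nu_i)}\bar g_i^{d_i}=(c_i\bar g_i)^{d_i}$. For $\subseteq$, I would substitute $y_i:=x_i/(c_i\bar g_i)$, which is an $\EE$-algebra automorphism of $R$ since $c_i\bar g_i\in\FF^*$ is a unit; then $R$ becomes the $\EE$-group algebra $\EE[\ZZ^u]$, $\EE[z]=\EE[\ZZ/d_u\ZZ]$, and $\lambda$ is realized as the $\EE$-algebra map induced by the group homomorphism $\psi\colon\ZZ^u\to\ZZ/d_u\ZZ$ sending $e_i\mapsto -\nu_i$. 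The standard description of kernels of group-algebra maps gives $\ker\lambda=\langle y^k-1:k\in\ker\psi\rangle$, and the identities $y^{a+b}-1=y^a(y^b-1)+(y^a-1)$ and $y^{-c}-1=-y^{-c}(y^c-1)$ reduce this to generators coming from any $\ZZ$-basis of $\ker\psi$. The crux is therefore the identification $\ker\psi=\ZZ\text{-span}\{d_i e_i\}_{i=1}^u$: the inclusion $\supseteq$ is $d_u\mid d_i\nu_i$, while for $\subseteq$, $\sum a_i\nu_i\equiv 0\pmod{d_u}$ forces $\prod_i\alpha_i^{a_i}=\rho^{-\sum a_i\nu_i}\sigma(h)/h=\sigma(h)/h\in H$ with $h=\prod_i\bar g_i^{a_i}$, so $(a_1,\dots,a_u,0,\dots,0)\in M((\alpha_1,\dots,\alpha_r),\FF)$, and expressing this in the Smith basis forces $d_i\mid a_i$ for each $i$. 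Rescaling $y_i^{d_i}-1$ by the unit $(c_i\bar g_i)^{d_i}$ returns the claimed presentation $\ker\lambda=\langle x_i^{d_i}-(c_i\bar g_i)^{d_i}:1\le i\le u\rangle_R$. Finally, the computability assertions follow by combining the algorithmic version of radical-stability from Lemma~\ref{Lemma:RadicalStableImpliesRadicalComputable} (which requires only computability of $\dfield{\FF}{\sigma}$ and solvability of first-order homogeneous equations) with a finite search in $\{0,1,\dots,d_u-1\}$ to extract each $\nu_i$ via the solvability test for $\sigma(g)=\rho^{\nu_i}\alpha_i g$.
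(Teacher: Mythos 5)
Your proposal is correct, and for parts (1), (2a) and (2b) it follows essentially the same path as the paper: Proposition~\ref{Prop:PiEquivalences} for (1), and radical-stability plus Lemma~\ref{Lemma:ConstructDRHomo} for (2). (Your variant of (2b) invokes radical-stability once with the single exponent $m=d_u$ for every $i$ and then recovers $\ord(\rho^{\nu_i})=d_i$ via Lemma~\ref{Lemma:NoRootElement} and the Smith-basis obstruction, whereas the paper applies it with $m=d_i$ for each $i$ and lifts the resulting primitive $d_i$-th roots into $\langle\rho\rangle$; both are sound.) The genuine divergence is in part (3). The paper proceeds by induction on the number of adjoined $P$-monomials, with a case split on $k\leq u$ versus $k>u$, a polynomial-division argument on elements of minimal $\xdeg$, and a reduction to the one-variable kernel description of Lemma~\ref{Lemma:PropertyInOneVar}. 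You instead rescale $y_i=x_i/(c_i\bar g_i)$, identify $\FF\langle x_1\rangle\dots\langle x_r\rangle$ with the group algebra $\EE[\ZZ^u]$ and $\lambda$ with the map induced by the surjective group homomorphism $\psi\colon\ZZ^u\to\ZZ/d_u\ZZ$, $e_i\mapsto-\nu_i$, so that $\ker\lambda$ is the ideal generated by $\{y^k-1: k\in\ker\psi\}$, which the telescoping identities collapse to any $\ZZ$-basis of $\ker\psi$; the computation $\ker\psi=\ZZ\text{-span}\{d_ie_i\}$ (using $d_u\mid d_i\nu_i$ one way and the Smith basis of $M((\alpha_1,\dots,\alpha_r),\FF)$ the other) then yields exactly the claimed generators, with the $d_u=1$ case appearing as the degenerate instance $H=\{0\}$. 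Your route is shorter and more conceptual, at the price of importing the standard fact about kernels of induced maps on group algebras; the paper's induction is more elementary and self-contained but requires the separate machinery of Lemma~\ref{Lemma:PropertyInOneVar} and the $\xdeg$ bookkeeping. Both arguments establish the full statement, including the computability claims, in the same way.
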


\begin{proof}
(1) Since $M((\alpha_{u+1},\dots,\alpha_r),\FF)=\{\vect{0}\}$, we can activate Proposition~\ref{Prop:PiEquivalences} and it follows that $\dfield{\EE}{\sigma}$ is a \piE-extension of $\dfield{\FF}{\sigma}$.\\
(2a) Suppose that $d_1=\dots=d_u=1$. 
Then for $1\leq i\leq u$ there are $\bar{g}_i\in\FF^*$ with $\sigma(\bar{g}_i)=\alpha_i\,\bar{g}_i$. Therefore we can define for any $c_i\in\KK^*$ with $u<i\leq r$ the ring homomorphism $\fct{\lambda}{\FF\langle x_1\rangle\dots\langle x_r\rangle}{\EE}$ with $\lambda|_{\EE}=\id$ and $\lambda(x_i)=c_i\,\bar{g}_i$ for $1\leq i\leq u$. 
Furthermore, $\lambda$ is a difference ring homomorphism by Lemma~\ref{Lemma:ConstructDRHomo}.\\
(2b) Suppose that $d_u>1$. For $1\leq i\leq u$ take $g_i\in\FF^*$ with $\sigma(g_i)=\alpha_i^{d_i}\,g_i$. 
Since $\dfield{\FF}{\sigma}$ is radical-stable, there are a $d_i$th root of unity $\rho_i\in\KK^*$ and $\bar{g}_i\in\FF^*$ with  
$\sigma(\bar{g}_i)=\rho_i\,\alpha_i\,\bar{g}_i$ for $1\leq i\leq u$. Note that
\begin{equation}\label{Equ:OrdRhoiEqdi}
\ord(\rho_i)=d_i.
\end{equation}
Otherwise we would get $\rho_i^s=1$ for some $s$ with $1<s<d_i$.
Therefore $\sigma(\bar{g}_i^s)=\rho_i^{s}\,\alpha_i^s\,\bar{g}_i^s=\alpha_i^s\,\bar{g}_i^s$. Since $s<d_i$, $(0,\dots,0,d_i,0,\dots,0)$ cannot be a basis element of $M((\alpha_1,\dots,\alpha_r),\FF)$, a contradiction. Thus the $\rho_i$ are primitive $d_i$th roots of unity for $1\leq i\leq u$.
Let $\rho=\rho_u$ and 
take the $A$-extension $\dfield{\EE[z]}{\sigma}$ of $\dfield{\EE}{\sigma}$ of order $d_u$ with $\sigma(z)=\rho\,z$.
Since $\dfield{\FF}{\sigma}$ is constant-stable, $\dfield{\EE}{\sigma}$ is constant-stable by Proposition~\ref{Prop:ConstantStableFieldVersion}.
Thus by part~(3) of Proposition~\ref{Prop:generalRChar} we conclude that $z$ is an $R$-monomial.
In particular, since $d_i\mid d_u$ for all $1\leq i\leq u$, there are
$\nu_i\in\{0,\dots,d_u-1\}$ with $\rho_i=\rho^{\nu_i}$. Consequently,
$\sigma(\bar{g}_i)=\rho^{\nu_i}\,\alpha_i\,\bar{g}_i$ for all $1\leq i\leq u$.
Define $g'_i=z^{d_u-\nu_i}\,\bar{g}_i$. Then
$$\sigma({g}_i')=\sigma(z^{d_u-\nu_i}\bar{g}_i)=\rho^{d_u-\nu_i}\rho^{\nu_i}\alpha_i\,z^{d_u-\nu_i}\bar{g}_i=\alpha_i\,{g}'_i.$$
Finally, define the ring homomorphism $\fct{\lambda}{\FF\langle x_1\rangle\dots\langle x_r\rangle}{\EE}$ with $\lambda|_{\EE}=\id$ and $\lambda(x_i)=c_i\,g'_i$ for 
$1\leq i\leq u$. Then by Lemma~\ref{Lemma:ConstructDRHomo} it follows that $\lambda$ is a difference ring homomorphism.
By~\eqref{Equ:OrdRhoiEqdi} we have $d_i=\ord(\rho^{\nu_i})=\ord(\rho^{d_u-\nu_i})$ and by part 2 of Proposition~\ref{Prop:generalRChar} we get $d_i=\ord(z^{\nu_i})=\ord(z^{d_u-\nu_i})$.\\
(3) Let $\EE_k=\FF\langle x_1\rangle\dots\langle x_k\rangle$ and denote by $I_k$ the difference ideal $I_k:=\ker(\lambda|_{\EE_k})$. Let $\dfield{\HH}{\sigma}$ be the $R\Pi$-extension of $\dfield{\FF}{\sigma}$ from case (2a) with $\HH=\EE$ or case (2b) with $\HH=\EE[z]$. In any case, 
\begin{equation}\label{Eq:HIsRP}
\const{\HH}{\sigma}=\KK.
\end{equation}
We show part~(3) by induction on the number of $P$-monomials in $\FF\langle x_1\rangle\dots\langle x_r\rangle$.
Since $\lambda|_{\EE_0}=\lambda|_{\FF}=\id$, $I_0=\{0\}=\langle\rangle$ and the statement holds. Now suppose that 
\begin{equation}\label{Equ:Ek-1IsGood}
I_{k-1}=\ker(\lambda|_{\EE_{k-1}})=\langle x_1^{d_1}-(c_1\,\bar{g}_1)^{d_1},\dots,x_{k-1}^{d_{k-1}}-(c_{k-1}\bar{g}_{k-1})^{d_{k-1}}\rangle_{\EE_{k-1}}
\end{equation}
where $\lambda(x_i^{d_i})=(c_i\bar{g}_i)^{d_i}$ for $1\leq i<k$.\\
\textbf{Case 1:} $k\leq u$. If $d_u=1$, we have $\lambda(x_k)=c_k\,\bar{g}_k$. Since $d_k\mid d_u$, $d_k=1$ and we get trivially $\lambda(x_k^{d_k})=(c_k\,\bar{g}_k)^{d_k}$.
Otherwise, if $d_u>1$, we have $\lambda(x_k)=c_k\,z^{d_u-\nu_k}\,\bar{g}_k$ where $(z^{d_u-\nu_k})^{d_k}=1$. 
Thus again $\lambda(x_k^{d_k})=(c_k\,\bar{g}_k)^{d_k}$.
Since $\lambda(c_k\,\bar{g}_k)=c_k\,\bar{g}_k$, we conclude for both cases that
$\lambda(x_k^{d_k}-(c_k\,\bar{g}_k)^{d_k})=0$. Consequently 
\begin{equation}\label{Equ:hinIk}
h:=x_k^{d_k}-(c_k\,\bar{g}_k)^{d_k}\in I_k.
\end{equation}
We will show that
\begin{equation}\label{Equ:Case1Ideal}
I_{k}=\ker(\lambda|_{\EE_{k}})=\langle x_1^{d_1}-(c_1\bar{g}_1)^{d_1},\dots,x_k^{d_{k-1}}-(c_{k-1}\bar{g}_{k-1})^{d_{k-1}},h\rangle_{\EE_k}.
\end{equation}
The inclusion $\supseteq$ holds trivially by~\eqref{Equ:hinIk}. For the inclusion $\subseteq$ two cases are considered. In what follows, $\xdeg$ is considered w.r.t.\ $x_k$\\
\textbf{Case 1.1:} Suppose that there is no $\mu\in I_k\setminus\langle I_{k-1}\rangle_{\EE_k}$ with $\xdeg(\mu)<d_k=\deg(h)$.
Now let $f\in I_k$ be arbitrary but fixed.
Then by polynomial division we can write
$f=a\,h+b$
where $a,b\in\EE_k$ with $\xdeg(b)<d_k$. Since $f,h\in I_k$, also $b\in I_k$. Thus 
$b\in\langle I_{k-1}\rangle_{\EE_k}$ by the assumption of Case~1.1. Thus~\eqref{Equ:Case1Ideal} holds with~\eqref{Equ:Ek-1IsGood}.\\
\textbf{Case 1.2:} Suppose that there is a $\mu\in I_k\setminus\langle I_{k-1}\rangle_{\EE_k}$
with $\xdeg(\mu)<d_k$. Note that $\xdeg(\mu)\neq0$, since otherwise $\mu=x^l\,c$ for some $l\in\ZZ$ and $c\in\FF^*$. Thus $0=\lambda(\mu)=\lambda(c)\,\lambda(x^l)$ with $\lambda(x^l)\neq0$ which would imply $\lambda(c)=0$. But then $\mu\in\langle I_{k-1}\rangle_{\EE_k}$ which we have excluded. 
Among all $\mu\in I_k\setminus\langle I_{k-1}\rangle_{\EE_k}$
with $0<\xdeg(\mu)<d_k$, we take one such that $\xdeg(\mu)$ is minimal. Write $\mu=\sum_{i}f_i\,x_k^i$ with $f_i\in\EE_{k-1}$. Now consider the Laurent polynomial $\mu'=\mu|_{x_1\mapsto\lambda(x_1),\dots, x_{k-1}\mapsto\lambda(x_{k-1})}$ in $x_k$. Note that $\mu'\neq0$ since otherwise $\lambda(f_i)=0$ for all $i$ and thus
$\mu\in\langle I_{k-1}\rangle_{\EE_k}$. In particular, $\mu'$ must depend on $x_k$, since $0=\lambda(\mu)=\lambda(\mu')=\mu'|_{x_k\mapsto\lambda(x_k)}$. \\
If $d_{k-1}=1$, $\lambda(x_1),\dots,\lambda(x_{k-1})\in\FF^*$ and thus $\mu'\in\FF\langle x_k\rangle$ with $0<\xdeg(\mu')<d_k$.\\
Otherwise, if $d_{k-1}>1$, $\mu'\in\FF[z]\langle x_k\rangle$ with $0<\xdeg(\mu')<d_k$. Write $\mu'=\sum_{i}\mu'_ix_k^i$ with $\mu'_i\in\FF[z]$ where we can take $r\neq0$ with $\mu'_r\neq0$. 
Now consider $\tilde{\mu}=\mu'|_{x_k\mapsto z^{d_u-\nu_k}\,x_k}$. 
Then $\tilde{\mu}=\sum_{i}\mu'_ix_k^i\,(z^{d_u-\nu_k})^i$. Since $(z^{d_u-\nu_k})^r\neq0$, $\mu'_r\,(z^{d_u-\nu_k})^r\neq0$ and consequently $\tilde{\mu}\neq0$. 
Now write $\tilde{\mu}=\sum_{i=0}^{d_u-1}\gamma_i\,z^i$ with $\gamma_i\in\FF\langle x_k\rangle$.
Note that 
$0=\lambda(\mu')=\tilde{\mu}|_{x_k\mapsto\bar{g}_k}$ and thus $\tilde{\mu}$ depends on $x_k$. In particular, 
$0<\xdeg(\tilde{\mu})<d_k$.
Furthermore, we can take an $l$ with $\gamma_l\in\FF\langle x_k\rangle$ and $0<\xdeg(\gamma_l)<d_k$. 
Since $\FF\langle x_k\rangle[z]$ is a $\FF\langle x_k\rangle$-module with basis $z^0,z^1,\dots,z^{d_u-1}$ and
$0=\tilde{\mu}|_{x_k\mapsto\bar{g}_k}=\sum_{i}(\gamma_i|_{x_k\mapsto\bar{g}_i})z^i$, it follows that $\lambda(\gamma_i)=\gamma_i|_{x_k\mapsto\bar{g}_k}=0$ for all $i$. In particular, this holds for $\gamma_l$. Hence, we get $\gamma_l\in\FF\langle x_k\rangle\setminus\FF$ with $0<\xdeg(\gamma_l)<$ and $\lambda(\gamma_l)=0$.
Summarizing there is a $\nu\in\FF\langle x_k\rangle$ (for the case $d_{k-1}=1$ we take $\nu=\mu'$ and for the case $d_{k-1}>1$, we take $\nu=\gamma_l$) such that $\lambda(\nu)=0$ and $0<\xdeg(\nu)<d_{k-1}$.
Note that $\dfield{\FF\langle x_k\rangle}{\sigma}$ is a $P$-extension of $\dfield{\FF}{\sigma}$ and $\lambda|_{\FF\langle x_k\rangle}$ is a difference ring homomorphism with $\lambda|_{\FF}=\id$. Furthermore, since~\eqref{Equ:SmithBasisforM} is a basis of $M((\alpha_1,\dots,\alpha_r),\FF)$, it follows that
$M((\alpha_k),\FF)=\langle d_k\rangle_{\ZZ}$.
Consequently, we may apply Lemma~\ref{Lemma:PropertyInOneVar} and conclude that $\ker(\lambda|_{\FF\langle x_k\rangle})=\langle m\rangle_{\FF\langle x_k\rangle}$ for some $m\in\FF[x_k]$ with $\xdeg(m)=\deg(m)=d_k$; a contradiction to the existence of $\nu$.\\
\textbf{Case 2:} $k>u$. Suppose that $f=\sum_{i}f_i\,x_k^i\in I_k$. As in the previous case define $f'=f|_{x_1\mapsto\lambda(x_1),\dots, x_{k-1}\mapsto\lambda(x_{k-1})}$. Note that $0=\lambda(f)=f'|_{x_k\mapsto x_k}=f'$. Thus $\lambda(f_i)=0$ for all $i$ and therefore $f\in\langle I_{k-1}\rangle_{\EE_{k}}$. This proves that $I_{k}=\langle I_{k-1}\rangle_{\EE_{k}}$ and thus $I_{k}=\langle x_1^{d_1}-(c_1\,\bar{g}_1)^{d_1},\dots,x_u^{d_u}-(c_u\,\bar{g}_u)^{d_u}\rangle_{\EE_k}$ by the induction assumption.\\
Finally suppose that $\dfield{\FF}{\sigma}$ is computable and one can solve homogeneous first-order difference equations in $\dfield{\FF}{\sigma}$. Since $\dfield{\FF}{\sigma}$ is radical-stable, it is also radical-solvable by Lemma~\ref{Lemma:RadicalStableImpliesRadicalComputable}. Thus for $1\leq i\leq u$ one can compute the $d_i$th roots of unity $\rho_i\in\KK^*$ and $g'_i\in\FF^*$ for $1\leq i\leq u$ with  
$\sigma(g'_i)=\rho_i\,\alpha_i\,g'_i$ which enables one to define $\lambda$ explicitly. In particular, the generators of $\ker(\lambda)$ can be calculated.
\end{proof}

\begin{example}[Cont. Ex.~\ref{Exp:SpecialCase1}]\label{Exp:lambdaDefSimple}
Recall that~\eqref{Equ:MBasisSimple} is a basis of $M((\alpha_1,\alpha_2,\alpha_3,\alpha_4),\FF)$ with~\eqref{Equ:Exp:DefineSpecialAlpha}.
Thus by part (1) of Lemma~\ref{Lemma:LambdaConstruction} we can construct the \piE-extension $\dfield{\EE}{\sigma}$ of $\dfield{\FF}{\sigma}$ with $\EE=\FF\langle x_3\rangle\langle x_4\rangle$ and $\sigma(x_3)=\alpha_3\,x_3$, $\sigma(x_4)=\alpha_3\,x_4$. In particular, by part (2b) of Lemma~\ref{Lemma:LambdaConstruction}, we can take the $R$-extension $\dfield{\EE[z]}{\sigma}$ of $\dfield{\EE}{\sigma}$ with $\sigma(z)=-z$.
Furthermore, we can compute the solution
$\bar{g}_1=(4 + x)^2 (5 + x)^2\in\KK(x)$ for $\sigma(g_1)=\alpha_1\,g_1$. In addition, in Example~\ref{Exp:Computeg2} 
we obtained $\bar{g}_2$ with~\eqref{Equ:g'2Explicit} such that $\sigma(\bar{g}_2)=-\alpha_2\,\bar{g}_2$ holds. 
With these $\bar{g}_i$ we can now define the ring homomorphism
$\fct{\lambda}{\HH}{\KK(x)\langle x_3\rangle\langle x_4\rangle[z]}$ with $\HH=\KK(x)\langle x_1\rangle\langle x_2\rangle\langle x_3\rangle\langle x_4\rangle$
given by $\lambda|_{\KK(x)}=\id$ and
\begin{equation}\label{Equ:LambdaDefSimple}
\begin{split}
\lambda(x_1)&=c_1\,\bar{g}_1=c_1(4 + x)^2 (5 + x)^2\\
\lambda(x_2)&=c_2\,\bar{g}_2\,z=c_2(1 + x)^2 (2 + x)^5 (3 + x)^8 (4 + x) (5 + x) z\\
\lambda(x_3)&=x_3\\
\lambda(x_4)&=x_4
\end{split}
\end{equation}
for any $c_1,c_2\in\KK^*$. By part~(2b) of Lemma~\ref{Lemma:LambdaConstruction} this forms a difference ring homomorphism. Finally, by part~(3) of Lemma~\ref{Lemma:LambdaConstruction} we obtain
\begin{multline*}
\ker(\lambda)=\langle x_1-(c_1\,\bar{g}_2)^1,x_2^2-(c_2\,\bar{g}_2)^2\rangle_{\HH}
=\langle x_1-c_1(4 + x)^2 (5 + x)^2,\\x_2^2-c_2^2(1 + x)^4 (2 + x)^{10} (3 + x)^{16} (4 + x)^2 (5 + x)^2\rangle_{\HH}.
\end{multline*}
\end{example}

Specialize Lemma~\ref{Lemma:LambdaConstruction} to Theorem~\ref{Thm:LambdaTauConstruction} yields the following  solution of Problem~\DR.

\begin{theorem}\label{Thm:LambdaTauConstruction}
Let $\dfield{\FF}{\sigma}$ be a radical-stable difference field with $\KK=\const{\FF}{\sigma}$ equipped with a $\KK$-embedding $\fct{\bar{\tau}}{\FF}{\seqK}$.
Let $\dfield{\FF\langle x_1\rangle\dots\langle x_r\rangle}{\sigma}$ be a $P$-extension of $\dfield{\FF}{\sigma}$ with $\alpha_i=\frac{\sigma(x_i)}{x_i}\in\FF^*$ and let $\fct{\tau}{\FF\langle x_1\rangle\dots\langle x_r\rangle}{\seqK}$ be a $\KK$-homomor\-phism with $\tau|_{\FF}=\bar{\tau}$. Suppose that~\eqref{Equ:SmithBasisforM}
with $u\geq1$ and $d_1\mid\dots\mid d_u$  is a basis of $M((\alpha_1,\dots,\alpha_r),\FF)\neq\{\vect{0}\}$. 
Then the following holds.
\begin{enumerate}
\item[(1)]  
$\tau|_{\EE}$ with $\EE=\FF\langle x_{u+1}\rangle\dots\langle x_r\rangle$ is a $\KK$-embedding where $\dfield{\EE}{\sigma}$ is a \piE-extension of $\dfield{\FF}{\sigma}$. 
\item[(2a)] If $d_r=1$, there are $\bar{g}_i\in\FF^*$ with $\sigma(\bar{g}_i)=\alpha_i\,\bar{g}_i$ for $1\leq i\leq u$. Furthermore, one can refine $\lambda$ from (2a) of Lemma~\ref{Lemma:LambdaConstruction} by appropriate $c_1,\dots,c_u\in\KK^*$ such that
 $\tau|_{\EE}(\lambda(f))=\tau(f)$ for all $f\in\FF\langle x_1\rangle\dots\langle x_r\rangle$ holds, i.e., the following diagram commutes:
 \begin{equation}\label{Equ:CommutingDiagramNoRExt}
 \xymatrix@!R=0.7cm@C1.8cm{
 \FF\langle x_1\rangle\dots\langle x_r\rangle\ar@{>>}[r]^{\lambda}\ar[dr]_{\tau} &\EE\ar@{^{(}->}[d]_{\tau|_{\EE}}\\
 &\seqK.}
 \end{equation} 
 \item[(2b)] Otherwise, if $d_u>1$, one can take the $R$-extension $\dfield{\EE[z]}{\sigma}$ of $\dfield{\EE}{\sigma}$ with order $d_u$ from part (2b) of Lemma~\ref{Lemma:LambdaConstruction} and the $\KK$-homomorphism $\fct{\tau'}{\EE[z]}{\seqK}$ with $\tau'|_{\EE}=\tau|_{\EE}$ and $\tau'(z)=\langle \rho^i\rangle_{i\geq0}$ which forms a $\KK$-embedding.
  Furthermore, one can take $\bar{g}_i\in\FF^*$ and $\nu_i\in\NN$ for $1\leq i\leq u$ with $\sigma(\bar{g}_i)=\rho^{\nu_i}\,\alpha_i\,\bar{g}_i$, and one can refine $\lambda$ from (2b) of Lemma~\ref{Lemma:LambdaConstruction} by appropriate $c_1,\dots,c_u\in\KK^*$ such that 
 $\tau'(\lambda(f))=\tau(f)$ for all $f\in\FF\langle x_1\rangle\dots\langle x_r\rangle$ holds, i.e., the following diagram commutes:
 \begin{equation}\label{Equ:CommutingDiagramWithRExt}
 \xymatrix@!R=0.7cm@C1.8cm{
 \FF\langle x_1\rangle\dots\langle x_r\rangle\ar@{>>}[r]^{\lambda}\ar[dr]_{\tau} &\EE[z]\ar@{^{(}->}[d]_{\tau'}\\
 &\seqK.}
 \end{equation} 
 \item[(3)] 
 $\ker(\tau)=\ker(\lambda)=\langle x_1^{d_1}-(c_1\,\bar{g}_1)^{d_1},\dots,x_u^{d_u}-(c_u\,\bar{g}_u)^{d_u}\rangle_{\FF\langle x_1\rangle\dots\langle x_r\rangle}$ where the $c_i$ and $\bar{g}_i$ are given by (2a) or (2b), respectively.
\end{enumerate}
If\footnote{Note that all the requirements of Assumption~\ref{Assum:AlgProp} except item~(5) are needed.} $\dfield{\FF}{\sigma}$ is in addition computable and one can solve first-order homogeneous difference equations in $\dfield{\FF}{\sigma}$, then one can compute $\dfield{\EE}{\sigma}$, $\lambda$ is computable and the generators of $\ker(\lambda)=\ker(\tau)$ can be given explicitly.
\end{theorem}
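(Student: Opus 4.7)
The plan is to take the homomorphism $\lambda$ supplied by Lemma~\ref{Lemma:LambdaConstruction}, fix the free parameters $c_1,\ldots,c_u\in\KK^*$ so that $\tau$ factors through $\lambda$, and then invoke Theorem~\ref{Thm:RPSImpliesInjective} for the injectivity assertions together with Lemma~\ref{Lemma:ConnectLemmaToTu} for the kernel identification in~(3).

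For~(1), observe that $\dfield{\EE}{\sigma}$ is a \piE-extension (hence an $R\Pi$-extension with no $R$-monomials) of $\dfield{\FF}{\sigma}$ by Lemma~\ref{Lemma:LambdaConstruction}(1). Since $\tau|_{\EE}$ is a $\KK$-homomorphism whose further restriction to $\FF$ is the embedding $\bar\tau$, Theorem~\ref{Thm:RPSImpliesInjective} directly yields that $\tau|_{\EE}$ is injective.

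The heart of the proof is the selection of the $c_i$. For $1\le i\le u$ set $h_i:=x_i/\bar{g}_i$, a unit in $\FF\langle x_1\rangle\dots\langle x_r\rangle$. The defining relation $\sigma(\bar{g}_i)=\rho^{\nu_i}\alpha_i\bar{g}_i$ (with $\rho=1$, $\nu_i=0$ in case~(2a)) yields $\sigma(h_i)=\rho^{-\nu_i}h_i$, and applying $\tau$ together with $\tau\circ\sigma=S\circ\tau$ gives $S(\tau(h_i))=\rho^{-\nu_i}\tau(h_i)$ in $\seqK$. A standard first-order recursion in the quotient $\seqK$ then shows $\tau(h_i)=\langle c_i\rho^{-\nu_i n}\rangle_{n\geq0}$ for a unique $c_i\in\KK$; moreover $c_i\in\KK^*$ because $h_i$ is a unit and so $\tau(h_i)$ is eventually nonzero by Lemma~\ref{Lemma:ZFunctionForField}(1). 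Make this choice of $c_i$ in $\lambda$. In case~(2b), extend $\tau|_{\EE}$ to $\fct{\tau'}{\EE[z]}{\seqK}$ by $\tau'(z):=\langle\rho^n\rangle_{n\geq0}$; this is well-defined since $\rho^{d_u}=1$, a difference ring homomorphism by the identity $S(\tau'(z))=\langle\rho^{n+1}\rangle=\rho\,\tau'(z)=\tau'(\sigma(z))$, and a $\KK$-homomorphism. Since $\dfield{\EE[z]}{\sigma}$ is an $R\Pi$-extension of $\dfield{\FF}{\sigma}$ by Lemma~\ref{Lemma:LambdaConstruction} and $\tau'|_{\FF}=\bar\tau$ is injective, Theorem~\ref{Thm:RPSImpliesInjective} gives that $\tau'$ is injective. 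With the chosen $c_i$ one computes
\[
\tau'(\lambda(x_i))=c_i\,\tau'(z)^{d_u-\nu_i}\tau(\bar{g}_i)=c_i\langle\rho^{-\nu_i n}\rangle\tau(\bar{g}_i)=\tau(h_i)\tau(\bar{g}_i)=\tau(x_i),
\]
using $\rho^{d_u}=1$; case~(2a) is the specialization $\rho=1$, $\nu_i=0$, $\tau'=\tau|_{\EE}$. Because $\tau'\circ\lambda$ and $\tau$ are $\KK$-homomorphisms that agree on $\FF$ and on each generator $x_i$, they agree on all of $\FF\langle x_1\rangle\dots\langle x_r\rangle$, so diagram~\eqref{Equ:CommutingDiagramWithRExt} (resp.~\eqref{Equ:CommutingDiagramNoRExt}) commutes.

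Finally, Lemma~\ref{Lemma:ConnectLemmaToTu} applied to this commuting diagram with $\SA:=\seqK$ yields $\ker(\tau)=\ker(\lambda)$, and the explicit generators are then furnished by Lemma~\ref{Lemma:LambdaConstruction}(3). For the computability addendum, every ingredient of Lemma~\ref{Lemma:LambdaConstruction} remains computable under the stated hypotheses, and each additional constant $c_i$ can be recovered from a single evaluation: for $n$ large enough one has $c_i=\ev(h_i,n)\,\rho^{\nu_i n}\in\KK^*$. The main technical obstacle will be the derivation of the closed form $\tau(h_i)=\langle c_i\rho^{-\nu_i n}\rangle$ inside the quotient $\seqK$ of eventually-equal sequences; once this is pinned down, the identity $\tau=\tau'\circ\lambda$ falls out and both the injectivity and the kernel claims follow from the structural results already in place.
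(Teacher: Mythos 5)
Your proposal is correct and follows essentially the same route as the paper: part (1) via Lemma~\ref{Lemma:LambdaConstruction}(1) and Theorem~\ref{Thm:RPSImpliesInjective}, the choice of the $c_i$ by observing that $\tau(x_i/\bar g_i)$ satisfies a first-order constant-coefficient recursion in $\seqK$ (the paper treats (2a) as the fixed-point case and says ``as in (2a)'' for (2b), where you write out the uniform computation explicitly), injectivity of $\tau'$ again from Theorem~\ref{Thm:RPSImpliesInjective}, and the kernel identification from Lemma~\ref{Lemma:ConnectLemmaToTu} combined with Lemma~\ref{Lemma:LambdaConstruction}(3). The computability remark also matches the paper's, which recovers the $c_i$ from the $o^{\sigma}$-computable evaluation.
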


\begin{proof}
(1) Since $\dfield{\EE}{\sigma}$ is a \piE-extension of $\dfield{\FF}{\sigma}$, $\tau|_{\EE}$ is injective by Theorem~\ref{Thm:RPSImpliesInjective}.\\
(2a) Suppose that $d_1=\dots=d_u=1$. 
For $1\leq i\leq u$ we have $\bar{g}_i\in\FF^*$ with $\sigma(\bar{g}_i)=\alpha_i\,\bar{g}_i$. Thus $\sigma(\frac{x_i}{\bar{g}_i})=\frac{x_i}{\bar{g}_i}$ and consequently $S(\tau(\frac{x_i}{\bar{g}_i}))=\tau(\sigma(\frac{x_i}{\bar{g}_i}))=\tau(\frac{x_i}{\bar{g}_i})$. This implies that $\tau(\frac{x_i}{\bar{g}_i})$ is a constant sequence, i.e., $\tau(\frac{x_i}{\bar{g}_i})=\tau(c_i)$ for some $c_i\in\KK$.  In particular, $\tau(x_i)=\tau(c_i\,\bar{g}_i)$. By part~(1) of Lemma~\ref{Lemma:ZFunctionForField}, $\tau(x_i)\neq\vect{0}$ and thus $c_i\neq0$.
Therefore we can refine the difference ring homomorphism $\fct{\lambda}{\FF\langle x_1\rangle\dots\langle x_r\rangle}{\EE}$ with $\lambda|_{\EE}=\id$ and~\eqref{Equ:MapXiToEE} with the particular choice of $c_i$ given above and get $\tau(\lambda(f))=\tau(f)$ for all $f\in\FF\langle x_1\rangle\dots\langle x_r\rangle$.\\
(2b) Suppose that $d_u>1$ and consider the $R\Pi$-extension $\dfield{\EE[z]}{\sigma}$ of $\dfield{\FF}{\sigma}$. By Theorem~\ref{Thm:RPSImpliesInjective} it follows that the $\KK$-homomorphism $\fct{\tau'}{\EE[z]}{\seqK}$ with $\tau'|_{\EE}=\tau|_{\EE}$ and $\tau'(z)=\langle \rho^i\rangle_{i\geq0}$ is a injective.
As in (2a) we can conclude that there are $c_i\in\KK^*$ for $1\leq i\leq u$ such that $\tau(x_i)=\tau'(c_i\,z^{d_u-\nu_i}\,\bar{g}_i)$. Therefore we can refine as in case (2a) the difference ring homomorphism $\fct{\lambda}{\FF\langle x_1\rangle\dots\langle x_r\rangle}{\EE}$ with $\lambda|_{\EE}=\id$ and $\lambda(x_i)=c_i\,\bar{g}_iz^{d_u-\nu_i}$ for $1\leq i\leq u$ and get $\tau'(\lambda(f))=\tau(f)$ for all $f\in\FF\langle x_1\rangle\dots\langle x_r\rangle$.\\
(3) By Lemma~\ref{Lemma:ConnectLemmaToTu} we conclude that $\ker(\lambda)=\ker(\tau)$. Thus with part (3) of Lemma~\ref{Lemma:LambdaConstruction} the statement is proven.\\
Suppose that one can solve first-order homogeneous linear difference equations in $\dfield{\FF}{\sigma}$. Then by Lemma~\ref{Lemma:LambdaConstruction} $\lambda$ can be given with generic $c_i\in\KK^*$ and the generators of $\ker(\lambda)$ can be computed. Since $\tau$ is $o^{\sigma}$-computable, one can determine the $c_i$ for $1\leq i\leq u$ with $\tau(x_i)=\tau(c_i\,\bar{g}_i)$ in case~2a or $\tau(x_i)=\tau'(c_i\,\bar{g}_iz^{d_u-\nu_i})$ in case~2b.
\end{proof}

\begin{example}[Cont.~Ex.~\ref{Exp:lambdaDefSimple}]\label{Exp:SpecialCaseTauExp}
As elaborated in Example~\ref{Exp:lambdaDefSimple} we can take the $R\Pi$-extension $\dfield{\EE[z]}{\sigma}$ of $\dfield{\FF}{\sigma}$ with $\EE=\FF\langle x_3\rangle\langle x_4\rangle$ and $\sigma(x_3)=\alpha_3\,x_3$, $\sigma(x_4)=\alpha_3\,x_4$ where the $\alpha_3,\alpha_4$ are given in~\eqref{Equ:Exp:DefineSpecialAlpha} and $\sigma(z)=-z$. Now take the evaluation function $\fct{\ev'}{\EE[z]\times\NN}{\KK}$ for $\dfield{\EE[z]}{\sigma}$ with
$\ev'_{\EE\times\NN}=\ev|_{\EE\times\NN}$ and $\ev'(z,n)=(-1)^n$  where $\fct{\ev}{\FF\langle x_1\rangle\langle x_2\rangle\langle x_3\rangle\langle x_4\rangle\times\NN}{\KK}$ is given in Example~\ref{Exp:SpecialCase1}.
This gives the $\KK$-embedding $\fct{\tau'}{\EE[z]}{\seqK}$ with $\tau'(f)=(\ev'(f,n))_{n\geq0}$.
By the ansatz $\tau(x_i)=\tau(c_i\,\bar{g}_i)$ with $i=1,2$ for the given $\bar{g}_i$ in Example~\ref{Exp:lambdaDefSimple}, i.e., by $\ev(x_i,n)=c_i\ev(\bar{g}_i,n)$, we conclude that $c_1=\frac1{400}$ and $c_2=\tfrac1{4199040}$. Thus with these particular values we obtain the surjective difference ring homomorphism $\fct{\lambda}{\HH}{\EE}$ with $\HH=\FF\langle x_1\rangle\langle x_2\rangle\langle x_3\rangle\langle x_4\rangle$, $\lambda|_{\FF}=\id$ and~\eqref{Equ:LambdaDefSimple} such that the diagram~\eqref{Equ:CommutingDiagramWithRExt} commutes. In particular, we get
\begin{multline}\label{equ:tauKerSimple}
\ker(\tau)=\ker(\lambda)=\langle x_1-(c_1\,\bar{g}_1)^1,x_2^2-(c_2\,\bar{g}_2)^2\rangle_{\HH}
=\langle x_1-\frac1{400}(4 + x)^2 (5 + x)^2,\\x_2^2-\tfrac1{4199040^2}(1 + x)^4 (2 + x)^{10} (3 + x)^{16} (4 + x)^2 (5 + x)^2\rangle_{\HH}.
\end{multline}
Note that $\tau=\tau'\circ\lambda$ implies that
\begin{align*}
\ev(x_1,n)&=\tfrac{(4 + n)^2 (5 + n)^2}{400},&
\ev(x_2,n)&=\tfrac{(1 + n)^2 (2 + n)^5 (3 + n)^8 (4 + n) (5 + n)(-1)^n}{4199040}
\end{align*}
holds. In short, we derived a simplification of the products given on the right-hand sides of $\ev(x_1,n)$ and $\ev(x_2,n)$ in ~\eqref{Equ:Exp:DefineSpecialEv}.
\end{example}

Finally, we will show in Theorem~\ref{Thm:OptimalityStatement} that the difference ring construction in Theorem~\ref{Thm:LambdaTauConstruction} is optimal. For this task we will use

\begin{lemma}\label{Lemma:AbstractOptimality}
Let $\dfield{\EE}{\sigma}$ be a \piE-extension of a difference field $\dfield{\FF}{\sigma}$ with 
$\EE=\FF\langle x_1\rangle\dots\langle x_r\rangle$ and $f_i=\frac{\sigma(x_i)}{x_i}\in\FF^*$ for $1\leq i\leq r$, and let $\dfield{\HH}{\sigma}$ be an \rpiE-extension of $\HH=\FF\langle y_1\rangle\dots\langle y_s\rangle[z_1]\dots[z_l]$ where the $y_i$ with $1\leq i\leq s$ are \piE-monomials with $\alpha_i=\frac{\sigma(y_i)}{y_i}\in\FF\langle y_1\rangle\dots\langle y_{i-1}\rangle^*$ and the $z_i$ with $1\leq i\leq l$ are \rE-monomials with $a_i=\frac{\sigma(z_i)}{z_i}\in(\const{\FF}{\sigma})^*$. If there is a difference ring embedding $\fct{\lambda}{\EE}{\HH}$ with $\lambda|_{\FF}=\id$, then
$r\leq s$.
\end{lemma}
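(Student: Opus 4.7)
The plan is to control the image $\lambda(x_i)\in\HH^*$ of each \piE-generator by invoking Proposition~\ref{Prop:StructureofProductSol}, and then to obtain the inequality $r\leq s$ through a linear-algebra / pigeonhole argument on the resulting integer exponent vectors.

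First I would observe that for each $1\leq i\leq r$, the element $\lambda(x_i)\in\HH\setminus\{0\}$ satisfies $\sigma(\lambda(x_i))=\lambda(\sigma(x_i))=\lambda(f_i\,x_i)=f_i\,\lambda(x_i)$ with $f_i\in\FF^*$ (using $\lambda|_{\FF}=\id$). Since $\dfield{\HH}{\sigma}$ is an \rpiE-extension of $\dfield{\FF}{\sigma}$ of the precise shape required, Proposition~\ref{Prop:StructureofProductSol} yields a representation
\[
\lambda(x_i) \;=\; h_i\, y_1^{m_{i,1}}\cdots y_s^{m_{i,s}}\, z_1^{n_{i,1}}\cdots z_l^{n_{i,l}}
\]
with $h_i\in\FF^*$, $m_{i,j}\in\ZZ$ and $0\leq n_{i,j}<\ord(z_j)$. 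The fact that $h_i$ must be a \emph{unit} of $\FF$ (and not merely an element) follows because the $y_j$ and $z_j$ are units in $\HH$ while $\lambda(x_i)\in\HH^*$.

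Next I would assume for contradiction that $r>s$ and look at the $r$ exponent vectors $(m_{i,1},\dots,m_{i,s})\in\ZZ^s\subseteq\QQ^s$. Since $r>\dim_{\QQ}\QQ^s=s$, they are $\QQ$-linearly dependent; clearing denominators produces integers $k_1,\dots,k_r\in\ZZ$, not all zero, with $\sum_{i=1}^{r}k_i\,m_{i,j}=0$ for every $1\leq j\leq s$. Setting $w:=x_1^{k_1}\cdots x_r^{k_r}\in\EE^*$ and $N:=\lcm(\ord(z_1),\dots,\ord(z_l))$, a direct computation gives
\[
\lambda(w^N) \;=\; \prod_{i=1}^{r} h_i^{N k_i}\cdot\prod_{j=1}^{l} z_j^{N\sum_i k_i n_{i,j}} \;=\; \prod_{i=1}^{r} h_i^{N k_i} \;\in\; \FF^*,
\]
since each $z_j$-exponent is a multiple of $\ord(z_j)$ and hence collapses to $1$.

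Finally, because $\lambda|_{\FF}=\id$ and $f_i\in\FF^*$, the element $g:=\lambda(w^N)\in\FF^*$ satisfies $\sigma(g)/g=f_1^{N k_1}\cdots f_r^{N k_r}$, so $(N k_1,\dots,N k_r)\in M((f_1,\dots,f_r),\FF)$. However, since $\dfield{\EE}{\sigma}$ is a \piE-extension of $\dfield{\FF}{\sigma}$, Proposition~\ref{Prop:PiEquivalences} forces $M((f_1,\dots,f_r),\FF)=\{\vect{0}\}$, whence $N k_i=0$ and therefore $k_i=0$ for all $i$, contradicting the choice of the $k_i$. The only mildly subtle point is the passage from $r>s$ to an honest $\ZZ$-relation on the $y$-exponents (via clearing denominators in $\QQ^s$); once Proposition~\ref{Prop:StructureofProductSol} and Proposition~\ref{Prop:PiEquivalences} are in hand, the remainder is a short bookkeeping argument on exponents.
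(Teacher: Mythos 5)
Your proposal is correct and follows essentially the same route as the paper's proof: apply Proposition~\ref{Prop:StructureofProductSol} to each $\lambda(x_i)$, use $r>s$ to produce a nontrivial integer relation among the $y$-exponent vectors, kill the root-of-unity contribution from the $z_j$ by raising to a suitable power (your $N=\lcm(\ord(z_1),\dots,\ord(z_l))$ plays the role of the paper's $\nu$ with $\tilde a^{\nu}=1$), and derive a nonzero element of $M((f_1,\dots,f_r),\FF)$ contradicting Proposition~\ref{Prop:PiEquivalences}. The bookkeeping is sound, so no changes are needed.
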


\begin{proof}
Let $\KK=\const{\FF}{\sigma}$ and suppose that $r>s$. 
For any $i\in\NN$ with $1\leq i\leq r$ we have 
$\frac{\sigma(\lambda(x_i))}{\lambda(x_i)}=\lambda(\frac{\sigma(x_i)}{x_i})=\lambda(f_i)=f_i.$
With Proposition~\ref{Prop:StructureofProductSol} we get
$\lambda(x_i)=g_i\,y_1^{n_{i,1}}\dots y_s^{n_{i,s}}z_1^{m_{i,1}}\dots z_l^{m_{i,l}}$
for some $n_{i,j}\in\ZZ$ with $1\leq j\leq s$, $m_{i,j}\in\NN$ with $1\leq j\leq l$ and $g_i\in\FF^*$. In particular, we get
\begin{equation}\label{Equ:fiRelation}
f_i=\frac{\sigma(g_i)}{g_i}\alpha_1^{n_{i,1}}\dots \alpha_s^{n_{i,s}}a_1^{m_{i,1}}\dots a_l^{m_{i,l}}.
\end{equation}
Since $r>s$, there is a nonzero vector $(\xi_1,\dots,\xi_r)\in\ZZ^r$ with
\begin{equation}\label{Equ:LASystem}
\left(\begin{smallmatrix} n_{1,1} & n_{2,1}&\dots&n_{r,1}\\
n_{1,2} & n_{2,2}&\dots&n_{r,2}\\
\vdots &\vdots& &\vdots\\
n_{1,s} & n_{2,s}&\dots&n_{r,s}
         \end{smallmatrix}\right)\left(\begin{smallmatrix} \xi_1\\\xi_2\\ \vdots\\ \xi_r\end{smallmatrix}\right)=\vect{0}.
\end{equation}
Exponentiation of~\eqref{Equ:fiRelation} with $\xi_i$ and multiplying these equations for $1\leq i\leq r$ yield
$$f_1^{\xi_1}\dots f_r^{\xi_r}=\frac{\sigma(g_1^{\xi_1}\dots g_r^{\xi_r})}{g_1^{\xi_1}\dots g_r^{\xi_r}}\alpha_{1}^{\tilde{n}_1}\dots\alpha_s^{\tilde{n}_s}a_1^{\tilde{m}_1}\dots a_l^{\tilde{m}_l}$$
with $\tilde{n}_i=n_{1,i}\xi_1+\dots+n_{r,i}\xi_r=0$ for $1\leq i\leq s$ and $\tilde{m}_i=m_{1,i}\xi_1+\dots+m_{r,i}\xi_r\in\NN$ for $1\leq i\leq l$. With~\eqref{Equ:LASystem} we obtain
$$f_1^{\xi_1}\dots f_r^{\xi_r}=\frac{\sigma(\gamma)}{\gamma}\tilde{a}$$
where $\gamma=g_1^{\xi_1}\dots g_r^{\xi_r}\in\FF^*$ and $\tilde{a}=a_1^{\tilde{m}_1}\dots a_l^{\tilde{m}_l}\in\KK^*$. Since $a_1,\dots,a_l$ are roots of unity, also $\tilde{a}$ is a root of unity. Hence we can take a $\nu\in\NN$ with $\tilde{a}^{\nu}=1$. Thus 
$$f_1^{\nu\,\xi_1}\dots f_r^{\nu\,\xi_r}=\frac{\sigma(\gamma^{\nu})}{\gamma^{\nu}}$$
and consequently $\dfield{\FF\langle y_1\rangle\dots\langle y_r\rangle}{\sigma}$ is not a \piE-extension of $\dfield{\FF}{\sigma}$ by Proposition~\ref{Prop:PiEquivalences}, a contradiction. 
\end{proof}

\begin{theorem}\label{Thm:OptimalityStatement}
Suppose that the properties of $\dfield{\FF}{\sigma}$ and 
$\dfield{\FF\langle x_1\rangle\dots\langle x_r\rangle}{\sigma}$ as stated in Theorem~\ref{Thm:LambdaTauConstruction} hold.
Then the constructions in (2a) and (2b) of Theorem~\ref{Thm:LambdaTauConstruction} are optimal: For any $R\Pi$-extension $\dfield{\HH}{\sigma}$ of $\dfield{\FF}{\sigma}$ where $\HH=\FF\langle y_1\rangle\dots\langle y_e\rangle[z_1]\dots[z_l]$ with $\Pi$-monomials $y_i$ ($\frac{\sigma(y_i)}{y_i}\in\FF\langle y_1\rangle\dots\langle y_{i-1}\rangle^*$) and $R$-mono\-mials $z_i$  ($\frac{\sigma(z_i)}{z_i}\in\KK^*$)
equipped with a difference ring homomorphism $\fct{\lambda'}{\FF\langle x_1\rangle\dots\langle x_r\rangle}{\HH}$ with $\lambda'_{\FF}=\id$ and $\KK$-embedding $\fct{\tau''}{\HH}{\seqK}$ with $\tau''|_{\FF}=\bar{\tau}$ such that
\begin{equation}\label{Equ:CommutingDiagramArbitraryExt}
 \xymatrix@!R=0.7cm@C1.8cm{
 \FF\langle x_1\rangle\dots\langle x_r\rangle\ar@{>>}[r]^{\lambda'}\ar[dr]_{\tau} &\HH
\ar@{^{(}->}[d]_{\tau''}\\
 &\seqK}
 \end{equation}
commutes, the following holds.
\begin{enumerate}
\item $r-u\leq e$. 
\item In addition, if $d_u>1$, then $l\geq1$ and
\begin{equation}\label{Equ:OrdOptimality}
\ord(z)\leq\lcm(\ord(z_1),\dots,\ord(z_l))=\ord(z_1)\dots\ord(z_l).
\end{equation}
\end{enumerate} 
\end{theorem}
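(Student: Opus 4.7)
The plan for (1) is to restrict $\lambda'$ to the pure $\Pi$-sub-extension $\EE=\FF\langle x_{u+1}\rangle\cdots\langle x_r\rangle$ supplied by part~(1) of Theorem~\ref{Thm:LambdaTauConstruction}. Since $\tau|_{\EE}$ is a $\KK$-embedding and $\tau''\circ\lambda'|_{\EE}=\tau|_{\EE}$ with $\tau''$ injective, the restriction $\lambda'|_{\EE}\colon\EE\to\HH$ is itself a difference ring embedding fixing $\FF$. Applying Lemma~\ref{Lemma:AbstractOptimality} to this embedding delivers $r-u\leq e$ at once.

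For (2), assume $d_u>1$ and set $g:=\lambda'(x_u)\in\HH^*$. From $\sigma(g)=\alpha_u\,g$ with $\alpha_u\in\FF^*$, Proposition~\ref{Prop:StructureofProductSol} forces the shape
$$g=h\,y_1^{m_1}\cdots y_e^{m_e}\,z_1^{n_1}\cdots z_l^{n_l}$$
with $h\in\FF^*$, $m_i\in\ZZ$, and $0\leq n_j<\ord(z_j)$. The Smith basis vector $(0,\ldots,0,d_u,0,\ldots,0)\in M((\alpha_1,\ldots,\alpha_r),\FF)$ supplies $g_u\in\FF^*$ with $\alpha_u^{d_u}=\sigma(g_u)/g_u$, so $g^{d_u}/g_u$ is $\sigma$-invariant and, using $\const{\HH}{\sigma}=\KK$, lies in $\KK^*$. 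In particular $g^{d_u}\in\FF^*$.

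Next I exploit that $\HH$ is a free $\FF$-module with basis
$$\{y_1^{k_1}\cdots y_e^{k_e}z_1^{j_1}\cdots z_l^{j_l}:k_i\in\ZZ,\ 0\leq j_i<\ord(z_i)\}.$$
Expanding $g^{d_u}$ in this basis (and reducing $z_j^{n_j d_u}$ modulo $\ord(z_j)$), and comparing with $g^{d_u}\in\FF^*$, forces $m_i=0$ for every $i$ and $\ord(z_j)\mid n_j d_u$ for every $j$. Hence $g=h\,z_1^{n_1}\cdots z_l^{n_l}$ and
$$\alpha_u=\tfrac{\sigma(h)}{h}\,a_1^{n_1}\cdots a_l^{n_l}.$$
Setting $L:=\lcm(\ord(z_1),\ldots,\ord(z_l))$ kills every $a_j^L$, so $\alpha_u^L=\sigma(h^L)/h^L$ and $(0,\ldots,0,L,0,\ldots,0)\in M((\alpha_1,\ldots,\alpha_r),\FF)$; by the Smith form, $d_u\mid L$. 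This rules out $l=0$ (else $L=1<d_u$), giving $l\geq 1$, and yields $\ord(z)=d_u\leq L$. Proposition~\ref{Prop:CharactSeveralRExt} then turns $L$ into the product $\ord(z_1)\cdots\ord(z_l)$, establishing~\eqref{Equ:OrdOptimality}.

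The main obstacle lies in the coefficient-comparison step in the third paragraph: $\HH$ is built as an \emph{iterated} Laurent extension (the $\Pi$-generators $y_i$ are nested, with $\sigma(y_i)/y_i\in\FF\langle y_1\rangle\cdots\langle y_{i-1}\rangle^*$) rather than as a single Laurent polynomial ring over $\FF$, so one must justify that the monomials above really form an $\FF$-basis and that reducing the $z_j$-exponents modulo $\ord(z_j)$ does not disturb the $y$-exponents before the conclusion $m_i=0$ can be read off. Once this module structure is pinned down, the rest is straightforward arithmetic in $M((\alpha_1,\ldots,\alpha_r),\FF)$.
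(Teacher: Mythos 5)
Your proposal is correct, and part (2) follows a genuinely different route from the paper. For part (1) you and the paper do essentially the same thing: both reduce to Lemma~\ref{Lemma:AbstractOptimality} applied to an embedding of the \piE-extension $\FF\langle x_{u+1}\rangle\dots\langle x_r\rangle$ into an \rpiE-extension; the paper realizes this embedding inside $\seqK$ via $\tau$ and $\tau''$, whereas you observe directly that $\lambda'|_{\EE}$ is injective because $\tau''\circ\lambda'|_{\EE}=\tau|_{\EE}$ is. For part (2) the paper compares the relation $\alpha_u=\rho^{d_u-\nu_u}\sigma(\bar g_u)/\bar g_u$ coming from its own construction with $\alpha_u=\frac{\sigma(h)}{h}a_1^{n_1}\cdots a_e^{n_e}b$, kills the $y$-exponents by a maximality argument contradicting Proposition~\ref{Prop:PiChar}, and then uses constant-stability of $\dfield{\FF}{\sigma}$ (Lemma~\ref{Lemma:TauImpliesConstantStable}) to force the leftover $\gamma$ into $\KK$ before reading off $\ord(b)=d_u$. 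You instead raise $g=\lambda'(x_u)$ to the power $d_u$, use $\const{\HH}{\sigma}=\KK$ to conclude $g^{d_u}\in\FF^*$, and read off both the vanishing of the $y$-exponents and the membership $(0,\dots,0,L,0,\dots,0)\in M((\alpha_1,\dots,\alpha_r),\FF)$, whence $d_u\mid L$ by the Smith-form basis; this replaces both the Proposition~\ref{Prop:PiChar} contradiction and the constant-stability step and is arguably cleaner. The ``obstacle'' you flag is not one: the ring underlying $\FF\langle y_1\rangle\dots\langle y_e\rangle[z_1]\dots[z_l]$ is by definition the Laurent polynomial ring in $y_1,\dots,y_e$ over $\FF$ adjoined the $z_j$ subject to $z_j^{\ord(z_j)}=1$ — the nesting condition $\frac{\sigma(y_i)}{y_i}\in\FF\langle y_1\rangle\dots\langle y_{i-1}\rangle^*$ constrains only the automorphism, not the module structure — so the monomials $y_1^{k_1}\cdots y_e^{k_e}z_1^{j_1}\cdots z_l^{j_l}$ with $0\leq j_i<\ord(z_i)$ form an $\FF$-basis by construction and your coefficient comparison is immediate.
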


\begin{proof}
(1) Within both constructions of (2a) and (2b) in Theorem~\ref{Thm:LambdaTauConstruction} we have $\lambda(\FF\langle x_{u+1}\rangle\dots\langle x_r\rangle)=\FF\langle x_{u+1}\rangle\dots\langle x_r\rangle=:\EE$. Thus 
$$\tau(\FF\langle x_{u+1}\rangle\dots\langle x_r\rangle)=\HH_1$$
with $\HH_1=\tau(\lambda(\FF\langle x_{u+1}\rangle\dots\langle x_r\rangle))=\tau(\FF)\langle \tau(x_{u+1})\rangle\dots\langle \tau(x_{r})\rangle$ with $\frac{\Shift(\tau(x_i))}{\tau(x_i)}=\tau(\frac{\sigma(x_i)}{x_i})\in\tau(\FF)$ for $u+1\leq i\leq r$.
Since $\tau|_{\EE}$ is a difference ring embedding, $\dfield{\HH_1}{\Shift}$ is a \piE-extension of $\dfield{\tau(\FF)}{\sigma}$ with the \piE-monomials $\tau(x_{i})$ for $u+1\leq i\leq r$. Furthermore, 
$$\tau''(\FF\langle y_1\rangle\dots\langle y_e\rangle[z_1]\dots[z_l])=\HH_2$$
with $\HH_2=\tau''(\FF)\langle \tau''(y_1)\rangle\dots\langle \tau''(y_e)\rangle[\tau''(z_1)]\dots[\tau''(z_l)])$.
As $\tau''$ is a difference ring embedding, $\dfield{\HH_2}{\Shift}$ is an \rpiE-extension of $\dfield{\tau''(\FF)}{\Shift}$.
Since $\tau(f)=\tau''(\lambda'(f))$ for all $f\in\EE$, $\dfield{\HH_1}{\Shift}$ is a subdifference ring of $\dfield{\HH_2}{\Shift}$. In particular, there is the trivial difference ring embedding $\fct{\tilde{\lambda}}{\HH_1}{\HH_2}$ with $\tilde{\lambda}=\id$. Hence we can apply Lemma~\ref{Lemma:AbstractOptimality} and we conclude that $r-u\leq e$.\\
(2) If $d_u>1$, it remains to show that the order of the \rE-monomial $z$ in our construction (2b) is optimal. 
By~\eqref{Equ:LambdaMap} and $\alpha_u:=\frac{\sigma(x_u)}{x_u}\in\FF^*$ we get 
\begin{equation}\label{Equ:alphauRelation1}
\alpha_u=\lambda(\alpha_u)=\lambda(\frac{\sigma(x_u)}{x_u})=\frac{\sigma(\lambda(x_u))}{\lambda(x_u)}=\rho^{d_u-\nu_u}\frac{\sigma(\bar{g}_u)}{\bar{g}_u}
\end{equation}
for some $\bar{g}_u\in\FF^*$ and $0\leq \nu_u\leq d_u-1$ with
\begin{equation}\label{Equ:Connectrhowithd}
\ord(\rho^{d_u-\nu_u})=d_u.
\end{equation}
By assumption we have
\begin{equation}\label{Equ:alpha_uAssum}
\alpha_u=\lambda'(\alpha_u)=\lambda'(\frac{\sigma(x_u)}{x_u})=\frac{\sigma(\lambda'(x_u))}{\lambda'(x_u)},
\end{equation}
and by Proposition~\ref{Prop:StructureofProductSol} we get
\begin{equation}\label{Equ:lambda(u)Shape}
\lambda'(x_u)=h\,y_1^{n_1}\dots y_e^{n_e}z_1^{m_1}\dots z_l^{m_l}
\end{equation}
with $n_i\in\ZZ$ for $1\leq i\leq e$, $m_i\in\NN$ where\footnote{For~\eqref{Equ:rhoBoundDivMi} we will require that $m_i\neq0$. Thus we allow $m_i=\ord(z_i)$ instead of $m_i=0$.} $1\leq m_i\leq\ord(z_i)$ for $1\leq i\leq l$ and $h\in\FF^*$. Let $a_i=\frac{\sigma(y_i)}{y_i}\in\FF\langle y_1\rangle\dots\langle y_{i-1}\rangle^*$ for $1\leq i\leq e$ and let $\gamma_i=\frac{\sigma(z_i)}{z_i}\in\KK^*$ for $1\leq i\leq l$ where the $\gamma_i$ are roots of unity.
Then plugging~\eqref{Equ:lambda(u)Shape} into~\eqref{Equ:alpha_uAssum} yields
\begin{equation}\label{Equ:alphauRelation2}
\alpha_u=\frac{\sigma(h)}{h}a_1^{n_1}\dots a_{n_e}^{n_e}\,b
\end{equation}
with
\begin{equation}\label{Equ:GammaRelation}
b=\gamma_1^{m_1}\dots\gamma_{l}^{m_l}\in\KK^*
\end{equation}
being a root of unity. Note that we also cover the case $l=0$ (i.e., no $R$-monomials arise); then we simply get $b=1$.
Combining~\eqref{Equ:alphauRelation1} with~\eqref{Equ:alphauRelation2} gives
\begin{equation}\label{Equ:DefinitionofWForRProof}
\frac{\sigma(\gamma)}{\gamma}a_1^{n_1}\dots a_{n_e}^{n_e}=\frac{\rho^{d_u-\nu_u}}{b}=:w\in\KK^*
\end{equation}
with $\gamma=\frac{h}{\bar{g}_u}\in\FF^*$.
Since the right-hand side $w\in\KK^*$ of the last equation is also a root of unity, we can choose a $k>0$ such that $w^k=1$. Thus
$\frac{\sigma(\gamma^k)}{\gamma^k}a_1^{k\,n_1}\dots a_{e}^{k\,n_e}=1.$
Suppose that $(n_1,\dots,n_e)\neq\vect{0}$ and
let $\nu$ be maximal such that $n_{\nu}\neq0$. Then $\sigma(v)=a_{\nu}^{-k\,n_{\nu}}\,v$ with $v:=\gamma^ky_1^{k\,n_1}\dots y_{\nu-1}^{k\,n_{\nu-1}}\in\FF\langle y_1\rangle\dots\langle y_{\nu-1}\rangle\setminus\{0\}$. This contradicts to Proposition~\ref{Prop:PiChar} and the assumption that $y_{\nu}$ is a \piE-monomial.
Thus $n_1=\dots=n_e=0$ and hence with~\eqref{Equ:DefinitionofWForRProof} we get
\begin{equation}\label{Equ:SimplewRelation}
\frac{\sigma(\gamma)}{\gamma}=w(=\frac{\rho^{d_u-\nu_u}}{b}). 
\end{equation}
Therefore
$\sigma^k(\gamma)=\sigma^{k-1}(w\,\gamma)=w\sigma^{k-1}(\gamma)=\dots=w^k\,\gamma=\gamma$
and thus $\gamma\in\const{\FF}{\sigma^k}$. Since there is a $\KK$-embedding $\fct{\bar{\tau}}{\FF}{\seqK}$, 
$\dfield{\FF}{\sigma}$ is constant-stable by Lemma~\ref{Lemma:TauImpliesConstantStable}. Thus $\gamma\in\KK^*$. With~\eqref{Equ:SimplewRelation} it follows that $\rho^{\nu_u}=b$. Further, with~\eqref{Equ:Connectrhowithd} we get $\ord(b)=d_u>1$. In particular, by~\eqref{Equ:GammaRelation} we obtain $l\geq1$. Furthermore,
\begin{equation}\label{Equ:rhoBoundDivMi}
d_u=\ord(\rho^{\nu_u})=\ord(b)=\frac{\ord(\gamma_1)\dots\ord(\gamma_l)}{m_1\dots m_l}.
\end{equation}
The last equality follows from Proposition~\ref{Prop:FromAlphaToRPi}: the $\gamma_i$ are roots of unity where $\gcd(\ord(\gamma_i),\ord(\gamma_j))=1$ for pairwise different $i,j$. Together with~\eqref{Equ:Connectrhowithd} we get $d_u\mid\ord(z_1)\dots\ord(z_e)=\lcm(\ord(\alpha_1),\dots,\ord(\alpha_e))=\lcm(\ord(z_1),\dots,\ord(z_e))$. 
\end{proof}

\begin{example}[Cont.~Example~\ref{Exp:SpecialCaseTauExp}]
In order to rephrase $\dfield{\KK(x)\langle x_1\rangle\langle x_2\rangle\langle x_3\rangle\langle x_4\rangle}{\sigma}$ from Example~\ref{Exp:SpecialCaseTauExp} in an $R\Pi$-extension, one needs at least 2 $\Pi$-monomials and one $R$-monomial of order 2. Such an optimal choice is given by the $R\Pi$-extension $\dfield{\KK(x)\langle x_3\rangle\langle x_4\rangle[z]}{\sigma}$ of $\dfield{\KK}{\sigma}$ 
in Example~\ref{Exp:SpecialCaseTauExp}.
\end{example}

\section{The solution of problem~\DR\ for the general case}\label{Sec:GeneralCase}

Finally, suppose that we are given a difference field $\dfield{\FF}{\sigma}$ with constant field $\KK$ 
equipped with an evaluation function $\fct{\bar{\ev}}{\FF\times\NN}{\KK}$ that 
satisfies the properties enumerated in Assumption~\ref{Assum:AlgProp}. In particular, suppose that the $\KK$-homomorphism $\fct{\bar{\tau}}{\FF}{\seqK}$ defined by $\bar{\tau}(f)=(\bar{\ev}(f,n))_{n\geq0}$ for $f\in\FF$ is injective.

We will solve Problem~\DR\ for a general $P$-extension $\dfield{\EE}{\sigma}$ of $\dfield{\FF}{\sigma}$ with $\EE=\FF\langle\hat{x}_1\rangle\dots\langle\hat{x}_r\rangle$ and $\hat{\alpha}_i=\frac{\sigma(\hat{x}_i)}{\hat{x}_i}\in\FF^*$ for $1\leq i \leq r$ equipped with an evaluation function $\fct{\ev}{\EE\times\NN}{\KK}$ where $\ev|_{\FF\times\NN}=\bar{ev}$.
Note that for the $\KK$-homomorphism $\fct{\tau}{\EE}{\seqK}$ defined by $\tau(f)=(\ev(f,n))_{n\geq0}$ for $f\in\EE$ we have $\tau|_{\FF}=\bar{\tau}$. 

We start to compute a basis ${\mathcal B}=\{z_{i1},\dots,z_{ir}\}_{1\leq i\leq u}$ of the $\ZZ$-module $\hat{V}=M((\hat{\alpha}_1,\dots,\hat{\alpha}_r),\FF)$ of $\ZZ^r$; this is possible by the requirement~(5) in Assumption~\ref{Assum:AlgProp}.

Then we distinguish two cases. If the rank of $\hat{V}$ is $u=0$, i.e., $\hat{V}=\{0\}$, we can conclude that $\dfield{\EE}{\sigma}$ is a \piE-extension of $\dfield{\FF}{\sigma}$  by Proposition~\ref{Prop:PiEquivalences} and that $\fct{\tau}{\EE}{\seqK}$ is injective by Theorem~\ref{Thm:RPSImpliesInjective}. Thus we can take $\HH:=\EE$, $\lambda:=\id$ and $\tau':=\tau$ and obtain a solution of Problem~\DR\ where $\lambda$ is even bijective. Since $\ker(\tau)=\{0\}$, there is no non-zero Laurent polynomial in $\EE$ whose sequence evaluation turns to the zero-sequence. Finally note that there is no solution of Problem~DR with a \piE-extension with less than $r$ \piE-monomials by Lemma~\ref{Lemma:AbstractOptimality}. 

In order to treat the remaining case where $\hat{V}$ has rank $u\geq1$,
we compute the Smith normal form\footnote{So far it was sufficient in our applications to use the standard algorithm (see, e.g.~\cite{Cohn:89}) based on column and row reductions to calculate the Smith Normal form for integer matrices. For faster algorithms we refer to~\cite{SNF1,SNF2}. For an excellent survey on the available strategies see also~\cite{SNF3}.}
\begin{equation}\label{Equ:DMatrix}
\left(\begin{smallmatrix}
d_1&0&\dots&\dots&\dots&0\\
&\ddots\\
&&d_u&0&\dots&0\\
\end{smallmatrix}\right)\in\ZZ^{u\times r}
\end{equation}
with $d_1,\dots,d_u\in\NN\setminus\{0\}$ and $d_1\mid d_2\mid\dots\mid d_u$
of the integer matrix $Z=(z_{ij})_{ij}\in\ZZ^{u\times r}$ whose entries are given from ${\mathcal B}$. 
In particular, we can compute matrices $A\in\ZZ^{u\times u}$ and $B\in\ZZ^{r\times r}$ being invertible over $\ZZ$ with
\begin{equation}\label{Equ:AZBFactorization}
A\,Z\,B=D.
\end{equation}

Note that for any invertible matrix $C\in\ZZ^{u\times u}$ over $\ZZ$, $D$ is again the Smith normal form of $C\,Z=(\tilde{b}_{ij})_{ij}$. Moreover observe that $\tilde{\mathcal{B}}=\{\tilde{b}_{i1},\dots,\tilde{b}_{ir}\}_{1\leq i\leq u}$ is again a basis of $\hat{V}$. In particular $C$ can be considered as a basis transform between the two bases $\mathcal{B}$ and $\tilde{\mathcal{B}}$. In other words, for any choice of basis of $\hat{V}$, we obtain the same Smith normal form $D$. 
We call $u$ the \notion{rank} and $d_r$ the \notion{largest divisor} of $\hat{V}$. 

\begin{example}\label{Equ:MainExample:Start}
Take the $P$-extension
$\dfield{\FF\langle\hat{x}_1\rangle\langle\hat{x}_2\rangle\langle\hat{x}_3\rangle\langle\hat{x}_4\rangle}{\sigma}$ of $\dfield{\FF}{\sigma}$ with $\sigma(\hat{x}_i)=\hat{\alpha}_i\,\hat{x}_i$ for $1\leq i\leq 4$ where the $\hat{\alpha}_i$ are given in~\eqref{Equ:MainAlpha}
Using Karr's algorithm~\cite{Karr:81} we can compute the basis
$\{(-6, 0, -4, 6), (0, 1, 0, -2)\}$ of $\hat{V}=M((\hat{\alpha}_1,\hat{\alpha}_2,\hat{\alpha}_3,\hat{\alpha}_4),\FF)$. We calculate the Smith normal form of $Z=\left(\begin{smallmatrix}
-6& 0& -4& 6\\ 
0& 1& 0& -2
\end{smallmatrix}\right)$
and get 
\begin{equation}\label{Equ:SmithExample}
\underbrace{\left(\begin{array}{cc}
 0 & 1 \\
 1 & 2 \\
\end{array}
\right)}_{=A}\,
Z\,\underbrace{\left(
\begin{array}{cccc}
 -1 & 1 & -2 & 1 \\
 1 & 0 & 0 & 2 \\
 2 & -2 & 3 & 0 \\
 0 & 0 & 0 & 1 \\
\end{array}
\right)}_{=B}=\underbrace{\left(
\begin{array}{cccc}
 1 & 0 & 0 & 0 \\
 0 & 2 & 0 & 0 \\
\end{array}
\right)}_{=D}.
\end{equation}
Thus $u=2$ and $d_1=1$, $d_2=2$; in particular the largest divisor of $\hat{V}$ is $2$.
\end{example}

In the following we will show how one can solve Problem~\DR\ under the assumption that a basis of $\hat{V}=M((\hat{\alpha}_1,\dots,\hat{\alpha}_r),\FF)$ is given. For this task, the following two observations are central.
\begin{enumerate}
\item Let $\tilde{b}_{i,j}\in\ZZ$ be the coefficients of the inverse matrix $B$, i.e., $B^{-1}=(\tilde{b}_{ij})_{ij}$ and define the $P$-extension $\dfield{\FF\langle x_1\rangle\dots\langle x_r\rangle}{\sigma}$ of $\dfield{\FF}{\sigma}$ with 
\begin{equation}\label{Equ:TransformedDR}
\frac{\sigma(x_i)}{x_i}=\alpha_1^{\tilde{b}_{i1}}\dots \alpha_r^{\tilde{b}_{ir}}=:\alpha_i\in\FF\quad 1\leq i\leq r.
\end{equation}
Then by Lemma~\ref{Lemma:IsomorphicDR} below the difference ring $\dfield{\FF\langle\hat{x}_1\rangle\dots\langle \hat{x}_r\rangle}{\sigma}$  is isomorphic to the difference ring $\dfield{\FF\langle x_1\rangle\dots\langle x_r\rangle}{\sigma}$, i.e., they cannot be distinguished up to renaming of the arising objects by an explicitly given bijective map $\fct{\mu}{\FF\langle \hat{x}_1\rangle\dots\langle\hat{x}_r\rangle}{\FF\langle x_1\rangle\dots\langle x_r\rangle}$.
\item Furthermore, using Lemma~\ref{Lemma:TransformationForMultipland} below one can read off a basis of $V=M(\alpha_1,\dots,\alpha_r),\FF)$ by looking at the matrix $D$ from~\eqref{Equ:DMatrix}: it is simplify~\eqref{Equ:SmithBasisforM}. 
\end{enumerate}
Thus using the difference ring isomorphism $\mu^{-1}$ one can carry over all the results of Section~\ref{Sec:SpecialCase}, in particular Theorem~\ref{Thm:OptimalityStatement} can be extended from the special case
$\dfield{\FF\langle x_1\rangle\dots\langle x_r\rangle}{\sigma}$ to the general version $\dfield{\FF\langle \hat{x}_1\rangle\dots\langle\hat{x}_r\rangle}{\sigma}$.

\begin{lemma}\label{Lemma:TransformationForMultipland}
Let $\dfield{\FF}{\sigma}$ be a difference field with $(\hat{\alpha}_1,\dots,\hat{\alpha}_r)\in(\FF^*)^r$. Take a basis $\{(z_{i1},\dots,z_{ir})\}_{1\leq i\leq u}$ of $M((\hat{\alpha}_1,\dots,\hat{\alpha}_r),\FF)$, in particular, take
$w_i\in\FF^*$ for $1\leq i\leq u$ with
$$\frac{\sigma(w_i)}{w_i}=\hat{\alpha}_1^{z_{i1}}\dots\hat{\alpha}_r^{z_{ir}}.$$
Consider the Smith normal form of $Z=(z_{ij})_{ij}\in\ZZ^{u\times r}$, i.e., the equation~\eqref{Equ:AZBFactorization} holds
for some $A=(a_{ij})_{ij}\in\ZZ^{u\times u}$ and $B\in\ZZ^{r\times r}$ being invertible matrices over $\ZZ$, in particular, $B^{-1}=(\tilde{b}_{ij})\in\ZZ^{r\times r}$ and~\eqref{Equ:DMatrix}
being a diagonal matrix.
Define
\begin{align}
\alpha_i&=\hat{\alpha}_1^{\tilde{b}_{i1}}\dots \hat{\alpha}_r^{\tilde{b}_{ir}}\in\FF^*&1\leq i\leq r,\label{Equ:tildeAlphaMap}\\
\tilde{w}_i&=w_1^{a_{i1}}\dots w_u^{a_{iu}}\in\FF^*&1\leq i\leq u.
\end{align}
Then a basis of $M((\alpha_1,\dots,\alpha_r),\FF)$ is 
\begin{equation}\label{Equ:DBasis}
\{(d_1,0,\dots,0),(0,d_2,0,\dots,0),\dots,(0,\dots,0,d_u,0,\dots,0)\}
\end{equation}
and for $1\leq i\leq u$ we get
\begin{equation}\label{tileWiRelation}
\frac{\sigma(\tilde{w}_i)}{\tilde{w}_i}=\alpha_i^{d_i}.
\end{equation}
\end{lemma}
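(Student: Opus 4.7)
The plan is to verify \eqref{tileWiRelation} first by a direct calculation, then use it together with a bookkeeping argument on the Smith normal form identities to pin down a basis of $M((\alpha_1,\dots,\alpha_r),\FF)$.

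First I would compute $\sigma(\tilde{w}_i)/\tilde{w}_i$ for $1\leq i\leq u$. By multiplicativity,
\[
\frac{\sigma(\tilde{w}_i)}{\tilde{w}_i}=\prod_{j=1}^{u}\Big(\frac{\sigma(w_j)}{w_j}\Big)^{a_{ij}}=\prod_{j=1}^{u}\prod_{k=1}^{r}\hat{\alpha}_k^{a_{ij}z_{jk}}=\prod_{k=1}^{r}\hat{\alpha}_k^{(AZ)_{ik}}.
\]
Now $AZB=D$ implies $AZ=DB^{-1}$, and because $D$ is the diagonal $u\times r$ block, $(DB^{-1})_{ik}=d_i\,\tilde{b}_{ik}$. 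Hence $(AZ)_{ik}=d_i\tilde{b}_{ik}$, and thus
\[
\frac{\sigma(\tilde{w}_i)}{\tilde{w}_i}=\prod_{k=1}^{r}\hat{\alpha}_k^{d_i\tilde{b}_{ik}}=\Big(\prod_{k=1}^{r}\hat{\alpha}_k^{\tilde{b}_{ik}}\Big)^{d_i}=\alpha_i^{d_i},
\]
which is precisely \eqref{tileWiRelation}. In particular each vector in \eqref{Equ:DBasis} lies in $M((\alpha_1,\dots,\alpha_r),\FF)$, giving the inclusion $\supseteq$.

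For the reverse inclusion, suppose $(n_1,\dots,n_r)\in M((\alpha_1,\dots,\alpha_r),\FF)$; i.e., there is some $g\in\FF^*$ with $\sigma(g)/g=\alpha_1^{n_1}\cdots\alpha_r^{n_r}$. Expanding the definition \eqref{Equ:tildeAlphaMap} yields
\[
\frac{\sigma(g)}{g}=\prod_{i=1}^{r}\prod_{k=1}^{r}\hat{\alpha}_k^{n_i\tilde{b}_{ik}}=\prod_{k=1}^{r}\hat{\alpha}_k^{m_k},\qquad (m_1,\dots,m_r)=(n_1,\dots,n_r)B^{-1},
\]
so $(m_1,\dots,m_r)\in M((\hat{\alpha}_1,\dots,\hat{\alpha}_r),\FF)$. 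Since the rows of $Z$ form a basis of this module, there exist integers $c_1,\dots,c_u$ with $(m_1,\dots,m_r)=(c_1,\dots,c_u)Z$, and consequently
\[
(n_1,\dots,n_r)=(m_1,\dots,m_r)B=(c_1,\dots,c_u)ZB=(c_1,\dots,c_u)A^{-1}D.
\]
Writing $(c'_1,\dots,c'_u):=(c_1,\dots,c_u)A^{-1}\in\ZZ^{u}$, this reads $(n_1,\dots,n_r)=(c'_1d_1,\dots,c'_ud_u,0,\dots,0)$, so $(n_1,\dots,n_r)$ is a $\ZZ$-linear combination of the vectors in \eqref{Equ:DBasis}. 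Since the $d_i$ are nonzero, the vectors in \eqref{Equ:DBasis} are $\ZZ$-linearly independent, and the lemma follows.

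The only subtle point is keeping the Smith normal form identities straight: the argument rests entirely on $AZ=DB^{-1}$ (to verify \eqref{tileWiRelation}) and on $ZB=A^{-1}D$ (to translate relations back and forth). Both are immediate consequences of \eqref{Equ:AZBFactorization} and the invertibility of $A$ and $B$ over $\ZZ$, so there is no real obstacle beyond this bookkeeping.
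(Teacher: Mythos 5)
Your proof is correct and follows essentially the same route as the paper's: both verify \eqref{tileWiRelation} via the identity $AZ=DB^{-1}$ and obtain the reverse inclusion by transporting a relation $(n_1,\dots,n_r)$ through $B^{-1}$ into $M((\hat{\alpha}_1,\dots,\hat{\alpha}_r),\FF)$ and back. Your version is, if anything, slightly tidier in that it packages the basis-change bookkeeping into the single identity $(n)=(c)A^{-1}D$.
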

\begin{proof}
Let $(\tilde{z}_{ij})_{ij}=A\,Z\in\ZZ^{u\times r}$.
Since $A$ is invertible over $\ZZ$, $A$ can be considered as a basis transformation. Consequently, also $\{(\tilde{z}_{i1},\dots,{z}_{ir})\}_{1\leq i\leq u}$ is a basis of $M((\hat{\alpha}_1,\dots,\hat{\alpha}_r),\FF)$. In particular, for $1\leq i\leq u$ we get
\begin{equation}\label{tileWiRelation1}
\frac{\sigma(\tilde{w}_i)}{\tilde{w}_i}=\left(\frac{\sigma(w_1)}{w_1}\right)^{a_{i1}}\dots\left(\frac{\sigma(w_u)}{w_u}\right)^{a_{iu}}=\hat{\alpha}_1^{p_{i1}}\dots\hat{\alpha}_r^{p_{ir}}
\end{equation}
with 
$$p_{ik}=a_{i1}z_{1k}+a_{i2}z_{2k}+\dots+a_{iu}z_{uk}=\tilde{z}_{ik},\quad 1\leq k\leq r;$$
the last equality follows from $(p_{ik})_{ik}=A\,Z=(z_{ik})_{ik}\in\ZZ^{u\times r}$.
Let $D=(d_{ij})_{ij}\in\ZZ^{u\times r}$, i.e., $d_{ii}=d_i$ for $1\leq i\leq u$ and all other entries are zero.
For $1\leq i\leq r$ we have
$$\alpha_i^{d_i}=\alpha_1^{d_{i1}}\dots\alpha_u^{d_{iu}}=\hat{\alpha}_1^{q_{i1}}\dots\hat{\alpha}_r^{q_{ir}}$$
with
$$q_{ik}=d_{i1}\tilde{b}_{1k}+d_{i2}\tilde{b}_{2k}+\dots+d_{ir}\tilde{b}_{rk}=\tilde{z}_{ik};$$
the last equality follows from $(\tilde{q}_{ik})_{ik}=D\,B^{-1}=A\,Z=\tilde{Z}=(\tilde{z}_{ik})_{ik}\in\ZZ^{r\times r}$. With~\eqref{tileWiRelation1} we conclude that~\eqref{tileWiRelation} holds.
Thus the elements in~\eqref{Equ:DBasis} are contained in $V=M((\alpha_1,\dots,\alpha_r),\FF)$. Finally, we will show that the linearly independent vectors in~\eqref{Equ:DBasis} generate $V$. 
Let $(n_1,\dots,n_r)\in\ZZ^r$ such that there is a $w\in\FF^*$ with $\frac{\sigma(w)}{w}=\alpha_1^{n_1}\dots\alpha_r^{n_r}$. Then using~\eqref{Equ:tildeAlphaMap} we obtain
\begin{align*}
\frac{\sigma(w)}{w}&=(\hat{\alpha}_1^{\tilde{b}_{11}}\dots\hat{\alpha}_r^{\tilde{b}_{1r}})^{n_1}(\hat{\alpha}_1^{\tilde{b}_{21}}\dots\hat{\alpha}_r^{\tilde{b}_{2r}})^{n_2}\dots(\hat{\alpha}_1^{\tilde{b}_{r1}}\dots\hat{\alpha}_r^{\tilde{b}_{rr}})^{n_r}\\
&=\hat{\alpha}_1^{\tilde{b}_{11}\,n_1+\dots+\tilde{b}_{r1}\,n_r}\dots\hat{\alpha}_r^{\tilde{b}_{1r}\,n_1+\dots+\tilde{b}_{rr}\,n_r}
=\hat{\alpha}_1^{m_1}\dots\hat{\alpha}_r^{m_r}
\end{align*}
with
$$(m_1,\dots,m_r)=(n_1,\dots,n_r)(\tilde{b}_{ij})_{ij}=(n_1,\dots,n_r)B^{-1}.$$
Conversely, $B$ maps $(m_1,\dots,m_r)$ to $(n_1,\dots,n_r)$. 
In particular, the given basis elements $(z_{i,1},\dots,z_{i,r})$ for $1\leq i\leq u$ of $M((\hat{\alpha}_1,\dots,\hat{\alpha}_r),\FF)$ are mapped in bijection to
$$(z_{i,1},\dots,z_{i,r})\,B=(\mu_{i,1},\dots,\mu_{i,r})$$
with $(\mu_{i,j})_{i,j}=Z\,B$. As a consequence, the vectors $(\mu_{i,1},\dots,\mu_{i,r})$ for $1\leq i\leq u$ form a basis of $V$. Since $A$ is invertible over $\ZZ$, these latter basis elements are mapped via a basis transformation with 
$$A\left(\begin{smallmatrix}\mu_{i,1}\\ \vdots\\ \mu_{i,r}\end{smallmatrix}\right)=\left(\begin{smallmatrix}\nu_{i,1}\\ \vdots\\ \nu_{i,r}\end{smallmatrix}\right)$$
to the another basis $\{(\nu_{i,1},\dots,\nu_{i,r})\}_{1\leq i\leq u}$ of $V$. With $A\,Z\,B=D$ we conclude that $(\nu_{i,1},\dots,\nu_{i,r})=(0,\dots,0,d_i,0,\dots,0)$ for $1\leq i\leq u$. Thus~\eqref{Equ:DBasis} is a basis of $V$.
\end{proof}

\begin{lemma}\label{Lemma:IsomorphicDR}
Let
$\dfield{\FF\langle \hat{x}_1\rangle\dots\langle \hat{x}_r\rangle}{\sigma}$ be a $P$-extension of a difference field $\dfield{\FF}{\sigma}$ with $\alpha_i=\frac{\sigma(\hat{x}_i)}{\hat{x}_i}\in\FF^*$ with $1\leq i\leq r$. Let $B=(b_{ij})_{ij}\in\ZZ^{r\times r}$ be an invertible matrix over $\ZZ$, i.e., $B^{-1}=(\tilde{b}_{ij})_{ij}\in\ZZ^{r\times r}$, and let $c_1,\dots,c_r\in\KK^*$.
Take the $P$-extension $\dfield{\FF\langle x_1\rangle\dots\langle x_r\rangle}{\sigma}$ of $\dfield{\FF}{\sigma}$ with~\eqref{Equ:TransformedDR}.
Then the ring homomorphism $\fct{\mu}{\FF\langle \hat{x}_1\rangle\dots\langle \hat{x}_r\rangle}{\FF\langle x_1\rangle\dots\langle x_r\rangle}$ given by $\mu|_{\FF}=\id$ and
\begin{equation}\label{Equ:MuConstruction}
\mu(\hat{x}_i)=c_i\,x_1^{b_{i1}}\dots x_r^{b_{ir}}.\quad 1\leq i\leq r
\end{equation}
is a difference ring isomorphism. Moreover, its inverse is given by $\mu^{-1}|_{\FF}=\id$ and
\begin{equation}\label{Equ:InverseMu}
\mu^{-1}(x_i)=(\tfrac{\hat{x}_1}{c_1})^{\tilde{b}_{i1}}\dots (\tfrac{\hat{x}_r}{c_r})^{\tilde{b}_{ir}},\quad 1\leq i\leq r.
\end{equation}
\end{lemma}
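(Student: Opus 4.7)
The plan is to define the candidate inverse $\nu$ via formula~\eqref{Equ:InverseMu} and then verify three things: that $\mu$ (and $\nu$) extend to well-defined ring homomorphisms, that both commute with $\sigma$, and that $\mu \circ \nu = \id$ and $\nu \circ \mu = \id$. All three reductions ultimately rest on the identity $B \cdot B^{-1} = B^{-1} \cdot B = I$, applied in the exponents of monomials.

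First, since $\FF\langle \hat{x}_1\rangle \dots \langle \hat{x}_r\rangle$ is a ring of Laurent polynomials over $\FF$, any assignment of the generators $\hat{x}_i$ to units of the target ring together with the identity on $\FF$ extends uniquely to a ring homomorphism. The elements $c_i x_1^{b_{i1}}\dots x_r^{b_{ir}}$ are units of $\FF\langle x_1\rangle \dots \langle x_r\rangle$, so $\mu$ is well defined; the same argument handles $\nu$.

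For the compatibility with $\sigma$ I would invoke Lemma~\ref{Lemma:ConstructDRHomo}: it suffices to check that $\tfrac{\sigma(\mu(\hat{x}_i))}{\mu(\hat{x}_i)} = \hat{\alpha}_i$. A direct computation gives
\[
\frac{\sigma(\mu(\hat{x}_i))}{\mu(\hat{x}_i)} = \prod_{j=1}^{r}\left(\frac{\sigma(x_j)}{x_j}\right)^{b_{ij}} = \prod_{j=1}^{r}\alpha_j^{b_{ij}} = \prod_{j=1}^{r}\prod_{k=1}^{r}\hat{\alpha}_k^{b_{ij}\tilde{b}_{jk}} = \prod_{k=1}^{r}\hat{\alpha}_k^{(BB^{-1})_{ik}} = \hat{\alpha}_i,
\]
since $BB^{-1} = I$. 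The analogous check for $\nu$ uses $B^{-1}B = I$ to verify $\tfrac{\sigma(\nu(x_i))}{\nu(x_i)} = \alpha_i$, where the $c_j^{\pm 1}$ factors are annihilated by $\sigma$ because $c_j \in \KK^*$.

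Finally, to show $\mu$ and $\nu$ are mutually inverse, it is enough to check the equalities on the generators $\hat{x}_i$ and $x_i$ (since both are ring homomorphisms fixing $\FF$). For instance,
\[
\mu(\nu(x_i)) = \prod_{j=1}^{r}\left(\frac{\mu(\hat{x}_j)}{c_j}\right)^{\tilde{b}_{ij}} = \prod_{j=1}^{r}\prod_{k=1}^{r}x_k^{\tilde{b}_{ij}b_{jk}} = \prod_{k=1}^{r}x_k^{(B^{-1}B)_{ik}} = x_i,
\]
and the symmetric calculation $\nu(\mu(\hat{x}_i)) = \hat{x}_i$ is identical up to swapping $B$ with $B^{-1}$ and moving the constants $c_i$ back. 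The main obstacle is really only notational: keeping straight which matrix indexing convention is in force (so that $B$ and $B^{-1}$ appear on the correct sides of the monomial exponents), but no serious difficulty arises beyond bookkeeping.
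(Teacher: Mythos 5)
Your proposal is correct and follows essentially the same route as the paper: the $\sigma$-compatibility check via $\tfrac{\sigma(\mu(\hat{x}_i))}{\mu(\hat{x}_i)}=\prod_k\hat{\alpha}_k^{(BB^{-1})_{ik}}$ is exactly the computation in the paper's proof, and your explicit generator-level verification that $\mu$ and $\nu$ are mutually inverse fills in the step the paper only sketches by reference to Lemma~\ref{Lemma:TransformationForMultipland}.
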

\begin{proof}
Similarly to the proof in Lemma~\ref{Lemma:TransformationForMultipland} one can verify that $\mu$ is invertible with~\eqref{Equ:InverseMu}.
In particular, for $1\leq i\leq r$ we have
\begin{align*}
\sigma(\mu(\hat{x}_i))&=\sigma(c_i\,x_1^{b_{i1}}\dots x_r^{b_{ir}})\\
&=(\alpha_1^{\tilde{b}_{11}}\dots \alpha_r^{\tilde{b}_{1r}})^{b_{i1}}(\alpha_1^{\tilde{b}_{21}}\dots \alpha_r^{\tilde{b}_{2r}})^{b_{i2}}\dots (\alpha_1^{\tilde{b}_{r1}}\dots \alpha_r^{\tilde{b}_{rr}})^{b_{ir}}c_ix_1^{b_{i1}}\dots x_r^{b_{ir}}\\
&=\alpha_1^{b_{i1}\tilde{b}_{11}+b_{i2}\tilde{b}_{21}+\dots+b_{ir}\tilde{b}_{r1}}\dots\alpha_1^{b_{i1}\tilde{b}_{1r}+b_{i2}\tilde{b}_{2r}+\dots+b_{ir}\tilde{b}_{rr}}\mu(\hat{x}_i)\\
&=\alpha_1^{p_{i1}}\dots\alpha_r^{p_{ir}}\mu(\hat{x}_i)
\end{align*}
with $(p_{ij})_{ij}=B\,B^{-1}=I_r$ where $I_r$ is the identity matrix. Consequently
$\sigma(\mu(\hat{x}_i))=\alpha_i\mu(\hat{x}_i)=\mu(\alpha_i\hat{x}_i)=\mu(\sigma(\hat{x}_i))$. Thus $\mu$ is a difference ring isomorphism.
\end{proof}

\begin{example}\label{Exp:ConstructMu}
Take $B$ from~\eqref{Equ:SmithExample} and compute the $\alpha_1,\alpha_2,\alpha_3,\alpha_4$ using~\eqref{Equ:TransformedDR}. This yields~\eqref{Equ:Exp:DefineSpecialAlpha}. Now compute
$$B^{-1}=\left(
\begin{array}{cccc}
 0 & 1 & 0 & -2 \\
 -3 & 1 & -2 & 1 \\
 -2 & 0 & -1 & 2 \\
 0 & 0 & 0 & 1 \\
\end{array}
\right).$$
Then using Lemma~\ref{Equ:MainExample:Start} it follows that the difference ring $\dfield{\FF\langle\hat{x}_1\rangle\langle\hat{x}_2\rangle\langle\hat{x}_3\rangle\langle \hat{x}_4\rangle}{\sigma}$ is isomorphic to $\dfield{\FF\langle x_1\rangle\langle x_2\rangle\langle x_3\rangle\langle x_4\rangle}{\sigma}$. More precisely, there is the difference ring isomorphism
$\fct{\mu}{\FF\langle\hat{x}_1\rangle\langle\hat{x}_2\rangle\langle\hat{x}_3\rangle\langle\hat{x}_4\rangle}{\FF\langle x_1\rangle\langle x_2\rangle\langle x_3\rangle\langle x_4\rangle}$ defined by $\mu|_{\FF}=\id$ and
\begin{align}\label{Equ:muExp}
\mu(\hat{x}_1)&=\frac{x_2
   x_4}{x_1 x_3^2},&\mu(\hat{x}_2)&=
   x_1 x_4^2,\\
\mu(\hat{x}_3)&=
   \frac{x_1^2
   x_3^3}{x_2^2},&\mu(\hat{x}_4)&=x_4;\nonumber
\end{align}
its inverse difference ring isomorphism
$\mu^{-1}$ is given by $\mu^{-1}|_{\FF}=\id$ and
\begin{align}\label{Equ:muInvExp}
\mu^{-1}(x_1)&=\frac{\hat{x}_2}{\hat{x}_4^2},&\mu^{-1}(x_2)&=\frac{\hat{x}_2 \hat{x}_4}{\hat{x}_1^3
   \hat{x}_3^2},\\
\mu^{-1}(x_3)&=\frac{\hat{x}_4^2}{\hat{x}_1^2\hat{x}_3},&\mu^{-1}(x_4)&=\hat{x}_4.\nonumber
\end{align}
Moreover, by Lemma~\ref{Lemma:TransformationForMultipland} one can read off the basis of $V=M((\alpha_1,\alpha_2,\alpha_3,\alpha_4),\FF)$ from $D$ in~\eqref{Equ:SmithExample}. Namely, we get~\eqref{Equ:MBasisSimple}. 
\end{example}

We are now in the position to apply the results from the previous Section~\ref{Sec:SpecialCase} and obtain the following main result. The constructed difference ring homomorphisms that arise in the proof of Theorem~\ref{Thm:MainResult} can be visualized by the following diagram.
\begin{equation}\label{Equ:FullDiagramm}
\xymatrix@!R=0.7cm@C1.8cm{
\FF\langle\hat{x}_1\rangle\dots\langle\hat{x}_r\rangle\ar@{^{(}->>}[r]^{\mu}\ar@{.>>}@/^2.0pc/[rr]^{\hat{\lambda}=\lambda\circ\mu}\ar@/^-1.0pc/[drr]_{\hat{\tau}}&\FF\langle x_1\rangle\dots\langle x_r\rangle\ar@{>>}[r]^{\lambda}\ar@{.>}[dr]_{\tau=\hat{\tau}\circ\mu^{-1}} &\HH\ar@{^{(}->}[d]_{\tau'}\\
& &\seqK.}
\end{equation}

\begin{theorem}\label{Thm:MainResult}
Let $\dfield{\FF}{\sigma}$ be a radical-stable difference field with $\KK=\const{\FF}{\sigma}$ equipped with a $\KK$-embedding $\fct{\bar{\tau}}{\FF}{\seqK}$. Let $\dfield{\FF\langle \hat{x}_1\rangle\dots\langle \hat{x}_r\rangle}{\sigma}$ be a $P$-extension of $\dfield{\FF}{\sigma}$ with $\hat{\alpha}_i=\frac{\sigma(\hat{x}_i)}{\hat{x}_i}\in\FF^*$ for $1\leq i\leq r$ together with a $\KK$-homomorphism $\fct{\hat{\tau}}{\FF\langle \hat{x}_1\rangle\dots\langle\hat{x}_r\rangle}{\seqK}$ with $\hat{\tau}|_{\FF}=\bar{\tau}$. Assume that $M((\hat{\alpha}_1,\dots,\hat{\alpha}_r),\FF)\neq\{\vect{0}\}$, and let $u\geq1$ be its rank and $d\geq1$ be its largest divisor.\\ 
Then there is an $R\Pi$-extension $\dfield{\HH}{\sigma}$ of the difference field $\dfield{\FF}{\sigma}$ together with a $\KK$-embedding $\fct{\tau'}{\HH}{\seqK}$ with $\tau'|_{\FF}=\bar{\tau}$ and a surjective difference ring homomorphism $\fct{\hat{\lambda}}{\FF\langle \hat{x}_1\rangle\dots\langle \hat{x}_r\rangle}{\HH}$
such that the following properties hold.
\begin{enumerate} 
\item We have $\tau'(\hat{\lambda}(f))=\hat{\tau}(f)$ for all $f\in\FF\langle \hat{x}_1\rangle\dots\langle\hat{x}_r\rangle$, i.e, the diagram
 \begin{equation}\label{Equ:CommutingDiagramGeneralSetting}
 \xymatrix@!R=0.7cm@C1.8cm{
 \FF\langle\hat{x}_1\rangle\dots\langle\hat{x}_r\rangle\ar@{>>}[r]^{\hat{\lambda}}\ar[dr]_{\hat{\tau}} &\HH\ar@{^{(}->}[d]_{\tau'}\\
 &\seqK}
 \end{equation} 
commutes. 
\item If $d=1$, $\HH$ is built by $r-u$ \piE-monomials, say $\HH=\FF\langle x_{u+1}\rangle\dots\langle x_{r}\rangle$ with $\alpha_i=\frac{\sigma(x_i)}{x_i}\in\FF^*$ for $u+1\leq i\leq r$. 
Otherwise,
$\HH=\FF\langle x_{u+1}\rangle\dots\langle x_{r}\rangle[z]$ where the $x_i$ are \piE-monomials with $\alpha_i=\frac{\sigma(x_i)}{x_i}\in\FF^*$ for $u+1\leq i\leq r$ and $z$ is an $R$-monomial of order $d$ with $\rho=\frac{\sigma(z)}{z}\in\KK^*$. 
\item $\hat{\lambda}$ is defined by $\hat{\lambda}|_{\FF}=\id$ and
\begin{equation}\label{Equ:LambdaGenDef}
\hat{\lambda}(\hat{x}_i)=\gamma_i\,z^{o_{i}}\,\,x_{u+1}^{m_{i,u+1}}\dots x_{r}^{m_{i,r}}
\end{equation}
for all $1\leq i\leq r$ for some $\gamma_i\in\FF^*$, $o_i\in\NN$ with $0\leq o_i< d$ and $m_{i,j}\in\ZZ$.
\item
If $\bar{\ev}$ is an evaluation function for $\bar{\tau}(=\hat{\tau}|_{\FF}=\tau'|_{\FF})$, then there is an evaluation function
$\ev'$ for $\tau'$ defined by $\ev'|_{\FF\times\NN}=\bar{\ev}$ and
\begin{equation}\label{ExtractBetterEv}
\ev'(x_i,n)=\kappa_i\prod_{k=l_i}^n\ev(\alpha_i,k-1)
\end{equation}
for all $u+1\leq i\leq r$ for some $l_i\in\NN$ and $\kappa_i\in\KK^*$; if $d>1$, we define in addition
$\ev'(z,n)=\rho^n$.
\item Among all possible $R\Pi$-extensions\footnote{\label{Footnote:Shape}This means that the $R\Pi$-extension $\dfield{\HH'}{\sigma}$ of the difference field $\dfield{\FF}{\sigma}$ has the form $\HH'=\FF\langle y_1\rangle\dots\langle y_e\rangle[z_1]\dots[z_l]$ with $\frac{\sigma(y_i)}{y_i}\in\FF\langle y_i\rangle\dots\langle y_{i-1}\rangle^*$ for $1\leq i\leq e$ and $\frac{\sigma(z_i)}{z_i}\in\KK^*$ for $1\leq i\leq l$.} with this property, $r-u$ is the minimal number of \piE-monomials; furthermore, if $d>1$, only one $R$-monomial is necessary and its order $d$ is minimal.
\item There are $n_{i,j}\in\ZZ$ where the rows in $(n_{i,j})_{i,j}\in\ZZ^{u\times r}$ are linearly independent 
such that for $g_i:=\hat{\lambda}(\hat{x}_1^{n_{i,1}}\dots \hat{x}_{r}^{n_{i,r}})\in\FF^*$ with $1\leq i\leq u$ we have
\begin{equation}\label{Equ:GeneralKerTau}
\ker(\hat{\tau})=\ker(\hat{\lambda})=\langle \hat{x}_1^{n_{1,1}}\dots \hat{x}_{r}^{n_{1,r}}-g_1,\dots,\hat{x}_1^{n_{u,1}}\dots \hat{x}_{r}^{n_{u,r}}-g_u\rangle_{\FF\langle \hat{x}_1\rangle\dots\langle \hat{x}_r\rangle}.
\end{equation}
\end{enumerate}
If the properties in Assumption~\ref{Assum:AlgProp} hold, then one can compute $\dfield{\HH}{\sigma}$, $\hat{\lambda}$ and the generators of $\ker(\hat{\lambda})$ explicitly. In particular, the evaluation function $\ev'$ for $\dfield{\HH}{\sigma}$ defined in~\eqref{ExtractBetterEv} can be given explicitly.
\end{theorem}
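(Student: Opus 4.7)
The plan is to reduce the general situation to the special case already handled by Theorem~\ref{Thm:LambdaTauConstruction} via the Smith normal form machinery of Section~\ref{Sec:GeneralCase}. First I would compute a basis $\{(z_{i,1},\dots,z_{i,r})\}_{1\leq i\leq u}$ of $\hat V=M((\hat\alpha_1,\dots,\hat\alpha_r),\FF)$ (available by item~(5) of Assumption~\ref{Assum:AlgProp}) and the Smith normal form $A\,Z\,B=D$ of $Z=(z_{i,j})$, yielding the diagonal matrix~\eqref{Equ:DMatrix} with $d_1\mid\dots\mid d_u$ where $d=d_u$ is the largest divisor of $\hat V$. Applying Lemma~\ref{Lemma:IsomorphicDR} with this $B$ and with $c_1=\dots=c_r=1$ produces a difference ring isomorphism $\mu$ from $\dfield{\FF\langle\hat x_1\rangle\dots\langle\hat x_r\rangle}{\sigma}$ onto a $P$-extension $\dfield{\FF\langle x_1\rangle\dots\langle x_r\rangle}{\sigma}$ whose multipliers $\alpha_i=\sigma(x_i)/x_i$ are given by~\eqref{Equ:TransformedDR}. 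By Lemma~\ref{Lemma:TransformationForMultipland}, the module $M((\alpha_1,\dots,\alpha_r),\FF)$ has precisely the special basis~\eqref{Equ:SmithBasisforM}, so the transformed $P$-extension falls into the hypothesis of Theorem~\ref{Thm:LambdaTauConstruction}.

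Next I would transport the $\KK$-homomorphism by setting $\tau=\hat\tau\circ\mu^{-1}$ (so $\tau|_{\FF}=\bar\tau$) and apply Theorem~\ref{Thm:LambdaTauConstruction} in the transformed ring. This hands us the $R\Pi$-extension $\dfield{\HH}{\sigma}$, which equals $\FF\langle x_{u+1}\rangle\dots\langle x_r\rangle$ when $d=1$ (case~2a) and $\FF\langle x_{u+1}\rangle\dots\langle x_r\rangle[z]$ with $\ord(z)=d$ when $d>1$ (case~2b); the surjective difference ring homomorphism $\lambda$; and the $\KK$-embedding $\tau'$ with $\tau'|_{\FF}=\bar\tau$. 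Defining $\hat\lambda:=\lambda\circ\mu$ gives the desired surjective morphism, and composing the two commutative triangles establishes~\eqref{Equ:CommutingDiagramGeneralSetting}, proving (1); (2) is inherited verbatim. Part~(3) is obtained by expanding $\hat\lambda(\hat x_i)=\lambda(\mu(\hat x_i))=\lambda(x_1^{b_{i,1}}\dots x_r^{b_{i,r}})$ and substituting~\eqref{Equ:MapXiToEE} or~\eqref{Equ:LambdaMap} for $\lambda(x_j)$ with $1\leq j\leq u$, while $\lambda(x_j)=x_j$ for $j>u$; collecting powers of $z$ into a single exponent $o_i\in\{0,\dots,d-1\}$ and the remaining \piE-monomials into $x_{u+1}^{m_{i,u+1}}\dots x_r^{m_{i,r}}$ isolates the prefactor $\gamma_i\in\FF^*$. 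Property~(4) is then a direct invocation of Lemma~\ref{Lemma:ExtractEvFunction} applied to $\tau'$.

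For the optimality statement~(5), suppose some $R\Pi$-extension $\dfield{\HH'}{\sigma}$ of the shape in footnote~\ref{Footnote:Shape} together with $\hat\lambda'$ and a $\KK$-embedding $\tau''$ also makes the diagram commute for $\hat\tau$. Composing with $\mu^{-1}$ gives $\lambda':=\hat\lambda'\circ\mu^{-1}$ with $\tau''\circ\lambda'=\tau$ in the transformed setting, so Theorem~\ref{Thm:OptimalityStatement} applied there yields both the bound on the number of \piE-monomials and the bound~\eqref{Equ:OrdOptimality}. For~(6) I would use that, since $\mu$ is an isomorphism, $\ker(\hat\lambda)=\mu^{-1}(\ker(\lambda))$; the generators $x_i^{d_i}-(c_i\bar g_i)^{d_i}$ supplied by part~(3) of Theorem~\ref{Thm:LambdaTauConstruction} pull back to Laurent-monomial generators of the desired form. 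A direct computation using $B\,B^{-1}=I$ gives $\mu^{-1}(x_i^{d_i})=\hat x_1^{d_i\tilde b_{i,1}}\dots\hat x_r^{d_i\tilde b_{i,r}}$, so $n_{i,j}=d_i\tilde b_{i,j}$ and $g_i=(c_i\bar g_i)^{d_i}\in\FF^*$; linear independence of the rows of $(n_{i,j})$ follows from the invertibility of $B^{-1}$ over $\ZZ$ combined with $d_i>0$. The computability assertion is immediate: every step (basis of $\hat V$, Smith normal form, $\mu$, $\lambda$ from Theorem~\ref{Thm:LambdaTauConstruction}, and the evaluation function via Lemma~\ref{Lemma:ExtractEvFunction}) is effective under Assumption~\ref{Assum:AlgProp}. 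The main subtleties I expect will be the explicit bookkeeping in~(3) (distributing powers of $z$ and the $\bar g_j$ through the product) and verifying that the pulled-back generators in~(6) really produce the full kernel rather than a proper subideal, which is what Lemma~\ref{Lemma:ConnectLemmaToTu} together with the ideal description in part~(3) of Lemma~\ref{Lemma:LambdaConstruction} ensures.
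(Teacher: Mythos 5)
Your proposal follows essentially the same route as the paper's proof: reduce to the special case of Theorem~\ref{Thm:LambdaTauConstruction} via the Smith normal form, the isomorphism $\mu$ of Lemma~\ref{Lemma:IsomorphicDR}, and the basis transformation of Lemma~\ref{Lemma:TransformationForMultipland}, then set $\hat\lambda=\lambda\circ\mu$ and pull back the kernel generators via $\mu^{-1}$. All six parts, including the optimality argument via Theorem~\ref{Thm:OptimalityStatement} and the identification $n_{i,j}=d_i\tilde b_{i,j}$, match the paper's argument, so the proposal is correct.
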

\begin{proof}
(1,2,3) Let ${\mathcal B}=\{(z_{i1},\dots,z_{ir})\}_{1\leq i\leq u}$ be a basis of $M((\hat{\alpha}_1,\dots,\hat{\alpha}_r),\FF)\neq\{\vect{0}\}$.
Consider the Smith normal form of $Z=(z_{ij})_{ij}\in\ZZ^{u\times r}$, i.e., the decomposition~\eqref{Equ:AZBFactorization}
with $A=(a_{ij})_{ij}\in\ZZ^{u\times u}$ and $B\in\ZZ^{r\times r}$ being invertible matrices over $\ZZ$, in particular, $B^{-1}=(\tilde{b}_{ij})\in\ZZ^{r\times r}$ with the diagonal matrix~\eqref{Equ:DMatrix} with $d_1\mid d_2\mid\dots\mid d_u=d$. Take the $P$-extension $\dfield{\FF\langle x_1\rangle\dots\langle x_r\rangle}{\sigma}$ of $\dfield{\FF}{\sigma}$ defined by~\eqref{Equ:TransformedDR}. By Lemma~\ref{Lemma:IsomorphicDR} there is the difference ring isomorphism $\fct{\mu}{\FF\langle \hat{x}_1\rangle\dots\langle \hat{x}_r\rangle}{\FF\langle x_1\rangle\dots\langle x_r\rangle}$ given by $\mu|_{\FF}=\id$ and~\eqref{Equ:MuConstruction} ($c_1=\dots=c_r=1$).
Define the map $\fct{\tau}{\FF\langle x_1\rangle\dots\langle x_r\rangle}{\seqK}$ with
$\tau=\hat{\tau}\circ\mu^{-1}$ which forms a $\KK$-homomorphism.
In particular, by Lemma~\ref{Lemma:TransformationForMultipland} a basis of $M((\alpha_1,\dots,\alpha_r),\FF)$ is given in~\eqref{Equ:DBasis}. Hence we can apply Theorem~\ref{Thm:LambdaTauConstruction} and obtain
a surjective difference ring homomorphism $\fct{\lambda}{\FF\langle x_1\rangle\dots\langle x_r\rangle}{\HH}$ with $\lambda|_{\FF}=\id$ and $\tau'(\lambda(f))=\tau(f)$ for all $f\in\FF\langle x_1\rangle\dots\langle x_r\rangle$. Thus with the surjective difference ring homomorphism 
\begin{equation}\label{Equ:HatLamdba}
\hat{\lambda}=\lambda\circ\mu 
\end{equation}
we get $\tau'(\hat{\lambda}(f))=\tau'(\lambda(\mu(f)))=\tau(\mu(f))=\hat{\tau}(f)$ which proves the first part. Part~(2) follows by (2a) and (2b) of Theorem~\ref{Thm:LambdaTauConstruction}. In particular, part (3) follows by~\eqref{Equ:HatLamdba}, the definition of $\mu$ given in~\eqref{Equ:MuConstruction} and the definition of $\lambda$ given in~\eqref{Equ:MapXiToEE} or ~\eqref{Equ:LambdaMap}.\\
(4) Since $\FF$ is a field, there is a $z$-function by part~(2) of Lemma~\ref{Lemma:ZFunctionForField} for $\bar{\ev}$. Thus we can activate Lemma~\ref{Lemma:ExtractEvFunction} and statement (4) follows.\\
(5) Suppose that there is an $R\Pi$-extension $\dfield{\HH'}{\sigma}$ of $\dfield{\FF}{\sigma}$ with the shape given in Footnote~\ref{Footnote:Shape} equipped with 
a difference ring homomorphism  $\fct{\lambda'}{\FF\langle \hat{x}_1\rangle\dots\langle \hat{x}_r\rangle}{\HH'}$ as claimed in statement (1). Suppose further that $\dfield{\HH'}{\sigma}$ is built by less than $r-u$ \piE-mononials or, if $d>1$, it can be built without \rE-monomial or is built with \rE-monomials but the product of their orders is smaller than $d$. Then $\dfield{\HH'}{\sigma}$ together with $\fct{\lambda''}{\FF\langle x_1\rangle\dots\langle x_r\rangle}{\HH'}$ where $\lambda''=\lambda'\circ\mu^{-1}$ yields a better construction than $\dfield{\HH}{\sigma}$ with $\lambda$, a contradiction to Theorem~\ref{Thm:OptimalityStatement}.\\
(6) By part~(3) of Theorem~\ref{Thm:LambdaTauConstruction} we have $\ker(\lambda)=\ker(\tau)=\langle x_1^{d_1}-g_1,\dots,x_u^{d_u}-g_u\rangle$ with $g_i:=\lambda(x_i^{d_i})$. With $\hat{\lambda}=\lambda\circ\mu$ and $\hat{\tau}=\tau\circ\mu$
we get
$$\ker(\hat{\lambda})=\ker(\hat{\tau})=\langle \mu^{-1}(x_1)^{d_1}-\mu^{-1}(g_1),\dots,\mu^{-1}(x_u)^{d_u}-\mu^{-1}(g_u)\rangle.$$ 
With $\mu^{-1}(g_i)=g_i$ for $1\leq i\leq u$ and~\eqref{Equ:InverseMu} it follows that 
$$\mu^{-1}(x_i)^{d_i}-\mu^{-1}(g_i)=(\hat{x}_1^{\tilde{b}_{i1}}\dots \hat{x}_r^{\tilde{b}_{ir}})^{d_i}-g_i$$ 
for $1\leq i\leq u$. In particular, for $n_{i,j}:=d_i\tilde{b}_{ir}$ we get~\eqref{Equ:GeneralKerTau}.
Since $B^{-1}=(\tilde{b}_{i,j})_{i,j}$, the first $u$ rows in $B^{-1}$ are linearly independent and thus the rows in $(n_{i,j})_{i,j}\in\ZZ^{u\times r}$ are linearly independent. 
Finally, with~\eqref{Equ:HatLamdba} it follows that $\hat{\lambda}((\hat{x}_1^{\tilde{b}_{i1}}\dots \hat{x}_r^{\tilde{b}_{ir}})^{d_i})=\lambda(\mu(\mu^{-1}(x_i))^{d_i})=\lambda(x_i^{d_i})=g_i$ which completes the proof of part~(6).\\
Suppose that the properties in Assumption~\ref{Assum:AlgProp} hold.
Then a basis ${\mathcal B}$ can be computed. Further, one can compute the Smith normal form of $Z$, and obtains $\mu$ and its inverse $\mu^{-1}$ explicitly. The remaining constructions in (1),(2) and~(6) follow by the constructive statements of Theorem~\ref{Thm:OptimalityStatement}. By Lemma~\ref{Lemma:ExtractEvFunction} we conclude that the claimed $\kappa_i$ and $l_i$ in~\eqref{ExtractBetterEv} can be calculated.
\end{proof}

\begin{example}[Details for Ex.~\ref{Exp:IntroExp}]\label{Exp:MainExampleResult}
We will illustrate the construction given in the proof of Theorem~\ref{Thm:MainResult} in order to obtain the calculations given in Example~\ref{Exp:IntroExp} above.\\
We take the rational difference field $\dfield{\FF}{\sigma}$ with $\FF=\KK(x)$, $\sigma(x)=x+1$ and $\const{\FF}{\sigma}=\KK=\QQ(\iota)$, which is a \pisiE-field over $\KK$. Furthermore we construct the $P$-extension $\dfield{\EE}{\sigma}$ of $\dfield{\FF}{\sigma}$ with $\EE=\FF\langle\hat{x}_1\rangle\langle\hat{x}_2\rangle\langle\hat{x}_3\rangle\langle\hat{x}_4\rangle$ where $\sigma(\hat{x}_i)=\hat{\alpha}_i\,\hat{x}_i$ for $1\leq i\leq 4$ and~\eqref{Equ:MainAlpha} as introduced in Example~\ref{Equ:MainExample:Start}. Further, we take the $\KK$-homomorphism $\fct{\hat{\tau}}{\FF\langle\hat{x}_1\rangle\langle\hat{x}_2\rangle\langle\hat{x}_3\rangle\langle\hat{x}_4\rangle}{\seqK}$ with the evaluation function
$\hat{\ev}$ defined by~\eqref{Equ:EvalRat} for $p,q\in\KK[x]$ and~\eqref{Equ:EvalhatXi} for $1\leq i\leq 4$ with~\eqref{Equ:DefineFi}. In a nutshell, we model the products~\eqref{Equ:DefineFi} with $\hat{x}_1,\hat{x}_2,\hat{x}_3,\hat{x}_4$, respectively. Utilizing the calculations in Example~\ref{Equ:MainExample:Start} we proceed with the constructions given in the proof of Theorem~\ref{Thm:MainResult} as follows.
As elaborated in Example~\ref{Exp:ConstructMu} we take the difference ring
$\dfield{\FF\langle{x}_1\rangle\langle{x}_2\rangle\langle{x}_3\rangle\langle{x}_4\rangle}{\sigma}$ 
with $\sigma(x_i)=\alpha_i\,t_i$ for $1\leq i\leq 4$ where the $\alpha_i$ are given in~\eqref{Equ:Exp:DefineSpecialAlpha}. By construction $\fct{\mu}{\FF\langle \hat{x}_1\rangle\langle \hat{x}_2\rangle\langle \hat{x}_3\rangle\langle \hat{x}_4\rangle}{\FF\langle{x}_1\rangle\langle{x}_2\rangle\langle{x}_3\rangle\langle{x}_4\rangle}$ forms a  difference ring isomorphism and a basis of $V=M((\alpha_1,\alpha_2,\alpha_3,\alpha_4),\FF)$ is~\eqref{Equ:MBasisSimple}. 
Now we can utilize the construction of Section~\ref{Sec:SpecialCase}. As preprocessing step, we define the $\KK$-homomorphism $\fct{\tau}{\FF\langle{x}_1\rangle\langle{x}_2\rangle\langle{x}_3\rangle\langle{x}_4\rangle}{\seqK}$ with
$\tau=\hat{\tau}\circ\mu^{-1}$. Here one obtains the evaluation function $\fct{\ev}{\FF\langle{x}_1\rangle\langle{x}_2\rangle\langle{x}_3\rangle\langle{x}_4\rangle\times\NN}{\KK}$ with
\begin{align*}
\ev(x_1,n)&=\big(
        \text{\scriptsize$\displaystyle\prod_{k=1}^n$} \tfrac{26244 k^2 (2+k)^2}{(3+k)^2}
\big)
\big(\text{\scriptsize$\displaystyle\prod_{k=1}^n$}\tfrac{-(5+k)}{162 k (2+k)}\big)^2,\\
\ev(x_2,n)&= \big(
        \text{\scriptsize$\displaystyle\prod_{k=1}^n$} \tfrac{26244 k^2 (2+k)^2}{(3+k)^2}
\big)
\big(\text{\scriptsize$\displaystyle\prod_{k=1}^n$}\tfrac{-(3+k)^3}{13122 k (1+k)}\big)^3 \big(
        \text{\scriptsize$\displaystyle\prod_{k=1}^n$}\tfrac{-162 k (2+k)}{5+k}
\big)
\big(\text{\scriptsize$\displaystyle\prod_{k=1}^n$} \tfrac{-729 \iota (5+k)}{k (2+k)^3}\big)^2\\
\ev(x_3,n)&= \big(
        \text{\scriptsize$\displaystyle\prod_{k=1}^n$}\tfrac{-(3+k)^3}{13122 k (1+k)}\big)^2 \big(
        \text{\scriptsize$\displaystyle\prod_{k=1}^n$}\tfrac{-162 k (2+k)}{5+k}\big)^2 \text{\scriptsize$\displaystyle\Big(\prod_{k=1}^n$}\tfrac{-729 \iota(5+k)}{k (2+k)^3}\Big),\\
\ev(x_4,n)&=\text{\scriptsize$\displaystyle\prod_{k=1}^n$} \tfrac{-162 k (2+k)}{5+k}.
\end{align*}
Now we activate Theorem~\ref{Thm:LambdaTauConstruction}. Repeating the construction from Example~\ref{Exp:SpecialCaseTauExp}
one obtains the $R\Pi$-extension $\dfield{\EE[z]}{\sigma}$ of $\dfield{\FF}{\sigma}$ with $\EE=\FF\langle x_3\rangle\langle x_4\rangle$ and $\sigma(x_3)=\alpha_3\,x_3$, $\sigma(x_4)=\alpha_4\,x_4$, $\sigma(z)=-z$ where the $\alpha_3,\alpha_4$ are given in~\eqref{Equ:Exp:DefineSpecialAlpha}.
Moreover, we can define the
evaluation function $\fct{\ev'}{\EE[z]\times\NN}{\KK}$ with $\ev'(f,n)=\ev(f,n)$ for all $f\in\FF$, $\ev'(x_3,n)=\ev(x_3,n)$, $\ev'(x_4,n)=\ev(x_4,n)$ and $\ev'(z,n)=(-1)^n$ yielding the $\KK$-embedding $\fct{\tau'}{\EE[z]}{\seqK}$. By coincidence these evaluations equal the evaluations given in~\eqref{Equ:Exp:DefineSpecialEv}; this comes from the fact the lower bounds are all the same. Thus we can use the construction from Example~\ref{Exp:lambdaDefSimple} 
and obtain $\lambda$ with~\eqref{Equ:LambdaDefSimple} such that the diagram~\eqref{Equ:CommutingDiagramWithRExt} commutes. Finally, we take $\fct{\hat{\lambda}}{\FF\langle{x}_1\rangle\langle{x}_2\rangle\langle{x}_3\rangle\langle{x}_4\rangle}{\FF\langle x_3\rangle\langle x_4\rangle[z]}$
 with $\hat{\lambda}=\lambda\circ\mu$. In other words, 
$\hat{\lambda}$ is determined by $\lambda|_{\FF}=\id$ and~\eqref{Equ:LambdaMainExp}.
It follows that $\hat{\tau}=\tau'\circ\hat{\lambda}$. Summarizing, we have carried out the construction visualized in~\eqref{Equ:FullDiagramm} with $r=4$ and $\HH=\EE[z]$.
In particular, we have~\eqref{equ:tauKerSimple}. Thus applying the inverse of $\mu$ defined in~\eqref{Equ:muInvExp} to the entries given in~\eqref{equ:tauKerSimple} we obtain~\eqref{Equ:kertauExp} (compare also~\eqref{Equ:IdealSetZ} with~\eqref{Equ:riDef}).
\end{example}

Suppose that we are given a $P$-extension $\dfield{\FF\langle \hat{x}_1\rangle\dots\langle \hat{x}_r\rangle}{\sigma}$ of a difference field $\dfield{\FF}{\sigma}$ with $\KK=\const{\FF}{\sigma}$, $\sigma(\hat{x}_i)=\hat{\alpha}_i\hat{x}_i$ for $1\leq i\leq r$ and 
an evaluation function $\fct{\hat{\ev}}{\FF\langle \hat{x}_1\rangle\dots\langle \hat{x}_r\rangle\times\NN}{\KK}$ 
of the form 
\begin{equation}\label{Equ:EvHatxi}
\hat{\ev}(\hat{x}_i,n)=F_i(n)=\prod_{k=l_i}^n\ev(\hat{\alpha}_i,k-1).
\end{equation}
This yields the $\KK$-homomorphism $\fct{\tau}{\FF\langle \hat{x}_1\rangle\dots\langle \hat{x}_r\rangle}{\seqK}$ defined by $\hat{\tau}(f)=(\hat{\ev}(f,n))_{n\geq0}$. In addition, suppose that the (algorithmic) properties in Assumption~\ref{Assum:AlgProp} hold.
By Theorem~\ref{Thm:MainResult} any expression in terms of products modeled in $\dfield{\FF\langle \hat{x}_1\rangle\dots\langle \hat{x}_r\rangle}{\sigma}$ together with  $\hat{\ev}$ can be also modeled in an $R\Pi$-extension $\dfield{\HH}{\sigma}$ of $\dfield{\FF}{\sigma}$ together with an evaluation function $\fct{\ev'}{\HH\times\NN}{\KK}$. 
This yields the following consequences.\\
(1) Using the symbolic summation toolbox described in~\cite{Schneider:07d,Schneider:08c,Schneider:10a,Schneider:10b,Schneider:15} any expression of indefinite nested sums defined over these products can be modeled in a \sigmaE-extension defined over $R\Pi$-extensions. In particular, the efficient simplification machinery of the summation package \texttt{Sigma} for indefinite nested sums can be applied in this general setting.\\
(2) Restricting to single nested products given in~\eqref{Equ:InputProds}, we obtain an improved calculation formula for the  evaluation function $\hat{\ev}$.
Namely, using the composition $\hat{\tau}=\tau'\circ\hat{\lambda}$ together with the definition of $\hat{\lambda}$ given in~\eqref{Equ:LambdaGenDef} and the definition of the evaluation $\ev'$ of $\tau'$ given in~\eqref{ExtractBetterEv} one gets:
\begin{equation}\label{Equ:BetterRepres}
\hat{\ev}(\hat{x}_i,n)=(\rho^n)^{o_i}\ev(\gamma_i,n)\ev'(x_{u+1},n)^{m_{i,u+1}}\dots\ev'(x_{r},n)^{m_{i,r}}
\end{equation}
for some $l_i\in\NN$ and $\kappa_i\in\KK^*$
for $1\leq i\leq r$. 
By setting $\phi_{i-u}(k):=\ev(\alpha_{i-u},k-1)$ equation~\eqref{ExtractBetterEv}
turns to 
$$\ev'(x_i)=\kappa_i\,\Phi_{i-u}(n)=\kappa_i\,\prod_{k=l_{i-u}}^n\phi_{i-u}(k),\quad\quad u+1\leq i\leq r.$$
In short, the products given in~\eqref{Equ:InputProds} can be simplified to expressions in terms of the products given in~\eqref{Equ:OutputProds}.
In particular, the sequences $\tau'(x_i)=(\ev'(x_i,n))_{n\geq0}=\kappa_i(\Phi_{i-u}(n))_{n\geq0}$ are algebraically independent over the sequences $\tau(\FF[z])$ while the sequences of the products in~\eqref{Equ:EvHatxi} are usually algebraically dependent (except for the special case $r=s$).

\begin{example}[Cont.~Ex.~\ref{Exp:MainExampleResult}]\label{Exp:MainExampleResult2}
Consider the difference ring $\dfield{\FF\langle\hat{x}_1\rangle\langle\hat{x}_2\rangle\langle\hat{x}_3\rangle\langle\hat{x}_4\rangle}{\sigma}$ with the evaluation function $\hat{\ev}$ defined by $\hat{\ev}(\hat{x}_i,n)=F_i(n)$ for $i=1,2,3,4$ with~\eqref{Equ:DefineFi}. The construction from
Example~\ref{Exp:MainExampleResult} yields an improved way to define $\ev(\hat{x}_i,n)$. Namely, we constructed the $R\Pi$-extension $\dfield{\EE[z]}{\sigma}$ of $\dfield{\FF}{\sigma}$ with
$\EE=\FF\langle x_3\rangle\langle x_4\rangle$ and $\sigma(x_3)=\alpha_3\,x_3$, $\sigma(x_4)=\alpha_4\,x_4$ and $\sigma(z)=-z$ where the $\alpha_3,\alpha_4$ are given in~\eqref{Equ:Exp:DefineSpecialAlpha}.
Moreover, we obtained the induced evaluation function $\fct{\ev'}{\EE[z]\times\NN}{\KK}$ defined by~\eqref{Eq:MainExpamleEvP}.
Since $\hat{\tau}=\tau'\circ\hat{\lambda}$ by construction, it follows with~\eqref{Equ:LambdaMainExp} that the evaluation $\hat{\ev}$ can be given also in the following form~\eqref{Equ:AlternativeEv}
which is precisely~\eqref{Equ:FiEvalNew}. 
\end{example}

\section{Conclusion}\label{Sec:Conclusion}

Given a finite set of products~\eqref{Equ:InputProds} whose multiplicands can be modeled in a difference field $\dfield{\FF}{\sigma}$, 
we presented a general framework in Theorem~\ref{Thm:MainResult} to find a minimal $R\Pi$-extension defined over $\dfield{\FF}{\sigma}$ in which the products can be modeled. In particular, the class of mixed-multibasic hypergeometric products are covered in this machinery,
As a consequence, the input products can be rephrased
by alternative products~\eqref{Equ:OutputProds} which are algebraically independent among each other and by one product of the form $\gamma^n$ with a root of unity $\gamma$. In particular, the number $s$ of output products and the order of the root of unity $\gamma$ are minimal among the possible choices of product representations. Moreover, we are able to compute a finite set of generators that produce all relations among the input products.

We remark that the analogous result for indefinite nested sums has been elaborated in~\cite[Thm.~3.13]{DR3}. A natural task is to merge the product and sum representations accordingly to find the difference ideal of all relations of indefinite nested sums defined over mixed-multibasic hypergeometric products.

The underlying algorithms for Theorem~\ref{Thm:MainResult} require that the ground difference field $\dfield{\FF}{\sigma}$ satisfies certain (algorithmic) properties enumerated in Assumption~\ref{Assum:AlgProp}. An interesting question is whether these requirements can be relaxed to weaker properties in order to calculate such representations or to find all relations among the given input products.

Furthermore, we showed explicitly that this machinery can be applied to the rational difference field (see Examples~\ref{Exp:RatDFPlusEv} and~\ref{Exp:RatDFPlusEv2})
and more generally to the mixed-rational difference field (see Example~\ref{Exp:MixedDF}). An novel task will be the application of this machinery to more general classes of difference fields that satisfy these requirements.


\begin{thebibliography}{10}

\bibitem{CALadder:16}
J.~Ablinger, A.~Behring, J.~Bl\"umlein, A.~D. Freitas, A.~von Manteuffel, and
  C.~Schneider.
\newblock {Calculating Three Loop Ladder and V-Topologies for Massive Operator
  Matrix Elements by Computer Algebra}.
\newblock {\em Comput. Phys. Comm.}, 202:33--112, 2016.
\newblock arXiv:1509.08324 [hep-ph].

\bibitem{HugeSummation:18}
J.~Ablinger, J.~Bl\"umlein, A.~D. Freitas, A.~Goedicke, C.~Schneider, and
  K.~Sch\"onwald.
\newblock {The Two-mass Contribution to the Three-Loop Gluonic Operator Matrix
  Element $A_{gg,Q}^{(3)}$}.
\newblock {\em Nucl. Phys. B}, 932:129--240, 2018.
\newblock arXiv:1804.02226 [hep-ph].

\bibitem{Abramov:71}
S.~A. Abramov.
\newblock On the summation of rational functions.
\newblock {\em Zh. vychisl. mat. Fiz.}, 11:1071--1074, 1971.

\bibitem{Abramov:75}
S.~A. Abramov.
\newblock The rational component of the solution of a first-order linear
  recurrence relation with a rational right-hand side.
\newblock {\em U.S.S.R. Comput. Maths. Math. Phys.}, 15:216--221, 1975.
\newblock Transl. from Zh. vychisl. mat. mat. fiz. 15, pp. 1035--1039, 1975.

\bibitem{Abramov:89a}
S.~A. Abramov.
\newblock Rational solutions of linear differential and difference equations
  with polynomial coefficients.
\newblock {\em U.S.S.R. Comput. Math. Math. Phys.}, 29(6):7--12, 1989.

\bibitem{Petkov:10}
S.~A. Abramov and M.~Petkov{\v{s}}ek.
\newblock Polynomial ring automorphisms, rational {$(w,\sigma)$}-canonical
  forms, and the assignment problem.
\newblock {\em J. Symbolic Comput.}, 45(6):684--708, 2010.

\bibitem{AZ:06}
M.~Apagodu and D.~Zeilberger.
\newblock {Multi-variable Zeilberger and Almkvist-Zeilberger algorithms and the
  sharpening of Wilf-Zeilberger theory}.
\newblock {\em Adv. Appl. Math.}, 37:139--152, 2006.

\bibitem{Bauer:99}
A.~Bauer and M.~Petkov{\v{s}}ek.
\newblock Multibasic and mixed hypergeometric {Gosper}-type algorithms.
\newblock {\em J.~Symbolic Comput.}, 28(4--5):711--736, 1999.

\bibitem{Bron:00}
M.~Bronstein.
\newblock On solutions of linear ordinary difference equations in their
  coefficient field.
\newblock {\em J.~Symbolic Comput.}, 29(6):841--877, 2000.

\bibitem{CK:12}
S.~Chen and M.~Kauers.
\newblock {Order-Degree Curves for Hypergeometric Creative Telescoping}.
\newblock In J.~van~der Hoeven and M.~van Hoeij, editors, {\em {Proceedings of
  ISSAC 2012}}, pages 122--129, 2012.

\bibitem{Chyzak:00}
F.~Chyzak.
\newblock An extension of {Z}eilberger's fast algorithm to general holonomic
  functions.
\newblock {\em Discrete Math.}, 217:115--134, 2000.

\bibitem{Cohn:89}
P.~M. Cohn.
\newblock {\em Algebra}, volume~2.
\newblock John Wiley \& Sons, 2nd edition, 1989.

\bibitem{Cohn:65}
R.~M. Cohn.
\newblock {\em Difference Algebra}.
\newblock John Wiley \& Sons, 1965.

\bibitem{SNF3}
M.~Elsheikh, M.~Giesbrecht, A.~Novocin, and B.~D. Saunders.
\newblock Fast computation of {S}mith forms of sparse matrices over local
  rings.
\newblock In {\em Proc. I{SSAC} 2012}, pages 146--153. ACM, New York, 2012.

\bibitem{Ge:93}
G.~Ge.
\newblock {\em Algorithms {R}elated to the {M}ultiplicative {R}epresentation of
  {A}lgebraic {N}umbers}.
\newblock PhD thesis, Univeristy of California at Berkeley, 1993.

\bibitem{Gosper:78}
R.~W. Gosper.
\newblock Decision procedures for indefinite hypergeometric summation.
\newblock {\em Proc. Nat. Acad. Sci. U.S.A.}, 75:40--42, 1978.

\bibitem{Singer:08}
C.~Hardouin and M.~Singer.
\newblock Differential {G}alois theory of linear difference equations.
\newblock {\em Math. Ann.}, 342(2):333--377, 2008.

\bibitem{Karr:81}
M.~Karr.
\newblock Summation in finite terms.
\newblock {\em J.~ACM}, 28:305--350, 1981.

\bibitem{Karr:85}
M.~Karr.
\newblock Theory of summation in finite terms.
\newblock {\em J.~Symbolic Comput.}, 1:303--315, 1985.

\bibitem{Kauers:08}
M.~Kauers and B.~Zimmermann.
\newblock Computing the algebraic relations of c-finite sequences and
  multisequences.
\newblock {\em Journal of Symbolic Computation}, 43(11):787--803, 2008.

\bibitem{Koutschan:13}
C.~Koutschan.
\newblock Creative telescoping for holonomic functions.
\newblock In C.~Schneider and J.~Bl\"umlein, editors, {\em {Computer Algebra in
  Quantum Field Theory: Integration, Summation and Special Functions}}, Texts
  and Monographs in Symbolic Computation, pages 171--194. Springer, 2013.
\newblock arXiv:1307.4554 [cs.SC].

\bibitem{PauleNemes:97}
I.~Nemes and P.~Paule.
\newblock A canonical form guide to symbolic summation.
\newblock In A.~Miola and M.~Temperini, editors, {\em Advances in the Design of
  Symbolic Computation Systems}, Texts Monogr. Symbol. Comput., pages 84--110.
  Springer, Wien-New York, 1997.

\bibitem{OS:18}
E.~D. Ocansey.
\newblock Representing ($q$-)hypergeometric products and mixed versions in
  difference rings.
\newblock pages 175--213.

\bibitem{Ocansey:19}
E.~D. Ocansey.
\newblock {\em Difference ring algorithms for nested products}.
\newblock PhD thesis, RISC, J.~Kepler University, 2019.

\bibitem{Paule:95}
P.~Paule.
\newblock Greatest factorial factorization and symbolic summation.
\newblock {\em J.~Symbolic Comput.}, 20(3):235--268, 1995.

\bibitem{PauleRiese:97}
P.~Paule and A.~Riese.
\newblock A {M}athematica q-analogue of {Z}eil\-ber\-ger's algorithm based on
  an algebraically motivated approach to $q$-hypergeometric telescoping.
\newblock In M.~Ismail and M.~Rahman, editors, {\em Special Functions, q-Series
  and Related Topics}, volume~14, pages 179--210. AMS, 1997.

\bibitem{PauleSchorn:95}
P.~Paule and M.~Schorn.
\newblock A {M}athematica version of {Z}eilberger's algorithm for proving
  binomial coefficient identities.
\newblock {\em J.~Symbolic Comput.}, 20(5-6):673--698, 1995.

\bibitem{AequalB}
M.~Petkov{\v s}ek, H.~S. Wilf, and D.~Zeilberger.
\newblock {\em $A=B$}.
\newblock A. K. Peters, Wellesley, MA, 1996.

\bibitem{ZimingLi:11}
G.~F. S.~Chen, R.~Feng and Z.~Li.
\newblock On the structure of compatible rational functions.
\newblock In {\em {Proceedings of ISSAC 2011}}, pages 91--98, 2011.

\bibitem{SNF1}
B.~D. Saunders and Z.~Wan.
\newblock Smith normal form of dense integer matrices fastalgorithms into
  practice.
\newblock In J.~Gutierrez, editor, {\em Proc. ISSAC'04}, pages 274--281. ACM
  Press, 2004.

\bibitem{Schneider:01}
C.~Schneider.
\newblock Symbolic summation in difference fields.
\newblock Technical Report 01-17, RISC-Linz, J.~Kepler University, November
  2001.
\newblock PhD Thesis.

\bibitem{Schneider:05c}
C.~Schneider.
\newblock Product representations in ${\Pi}{\Sigma}$-fields.
\newblock {\em Ann. Comb.}, 9(1):75--99, 2005.

\bibitem{Schneider:07d}
C.~Schneider.
\newblock Simplifying sums in {$\Pi\Sigma$}-extensions.
\newblock {\em J. Algebra Appl.}, 6(3):415--441, 2007.

\bibitem{Schneider:07a}
C.~Schneider.
\newblock Symbolic summation assists combinatorics.
\newblock {\em S\'em.~Lothar. Combin.}, 56:1--36, 2007.
\newblock Article B56b.

\bibitem{Schneider:08c}
C.~Schneider.
\newblock A refined difference field theory for symbolic summation.
\newblock {\em J. Symbolic Comput.}, 43(9):611--644, 2008.
\newblock [arXiv:0808.2543v1].

\bibitem{Schneider:10c}
C.~Schneider.
\newblock Parameterized telescoping proves algebraic independence of sums.
\newblock {\em Ann. Comb.}, 14:533--552, 2010.
\newblock [arXiv:0808.2596]; for a preliminary version see FPSAC 2007.

\bibitem{Schneider:10a}
C.~Schneider.
\newblock Structural theorems for symbolic summation.
\newblock {\em Appl. Algebra Engrg. Comm. Comput.}, 21(1):1--32, 2010.

\bibitem{Schneider:10b}
C.~Schneider.
\newblock A symbolic summation approach to find optimal nested sum
  representations.
\newblock In A.~Carey, D.~Ellwood, S.~Paycha, and S.~Rosenberg, editors, {\em
  {Motives, Quantum Field Theory, and Pseudodifferential Operators}}, volume~12
  of {\em Clay Mathematics Proceedings}, pages 285--308. Amer. Math. Soc, 2010.
\newblock arXiv:0808.2543.

\bibitem{DR2}
C.~Schneider.
\newblock A streamlined difference ring theory: Indefinite nested sums, the
  alternating sign and the parameterized telescoping problem.
\newblock In F.~Winkler, V.~Negru, T.~Ida, T.~Jebelean, D.~Petcu, S.~Watt, and
  D.~Zaharie, editors, {\em Symbolic and Numeric Algorithms for Scientific
  Computing (SYNASC), 2014 15th International Symposium}, pages 26--33. IEEE
  Computer Society, 2014.
\newblock arXiv:1412.2782v1 [cs.SC].

\bibitem{Schneider:15}
C.~Schneider.
\newblock Fast algorithms for refined parameterized telescoping in difference
  fields.
\newblock In M.~W. J.~Guitierrez, J.~Schicho, editor, {\em Computer Algebra and
  Polynomials}, number 8942 in Lecture Notes in Computer Science (LNCS), pages
  157--191. Springer, 2015.
\newblock arXiv:1307.7887 [cs.SC].

\bibitem{DR1}
C.~Schneider.
\newblock A difference ring theory for symbolic summation.
\newblock {\em J. Symb. Comput.}, 72:82--127, 2016.
\newblock arXiv:1408.2776 [cs.SC].

\bibitem{DR3}
C.~Schneider.
\newblock {Summation Theory II: Characterizations of $R\Pi\Sigma$-extensions
  and algorithmic aspects}.
\newblock {\em J. Symb. Comput.}, 80(3):616--664, 2017.
\newblock arXiv:1603.04285 [cs.SC].

\bibitem{Singer:16}
M.~Singer.
\newblock Algebraic and algorithmic aspects of linear difference equations.
\newblock In M.~S. C.~Hardouin, J.~Sauloy, editor, {\em Galois theories of
  linear difference equations: An Introduction}, volume 211 of {\em
  Mathematical Surveys and Monographs}. AMS, 2016.

\bibitem{Singer:97}
M.~van~der Put and M.~Singer.
\newblock {\em Galois theory of difference equations}, volume 1666 of {\em
  Lecture Notes in Mathematics}.
\newblock Springer-Verlag, Berlin, 1997.

\bibitem{SNF2}
P.~G. A.~S. W.~Eberly, M.~Giesbrecht and G.~Villard.
\newblock Faster inver-sion and other black box matrix computations using
  efficient block projections.
\newblock In C.~Brown, editor, {\em Proc. ISSAC'07}, pages 143--150. ACM Press,
  2007.

\bibitem{Wegschaider}
K.~Wegschaider.
\newblock Computer generated proofs of binomial multi-sum identities.
\newblock Master's thesis, RISC, J. Kepler University, May 1997.

\bibitem{Wilf:92}
H.~Wilf and D.~Zeilberger.
\newblock An algorithmic proof theory for hypergeometric (ordinary and ``q'')
  multisum/integral identities.
\newblock {\em Invent. Math.}, 108:575--633, 1992.

\bibitem{Zeilberger:90a}
D.~Zeilberger.
\newblock A holonomic systems approach to special functions identities.
\newblock {\em J.~Comput. Appl. Math.}, 32:321--368, 1990.

\bibitem{Zeilberger:91}
D.~Zeilberger.
\newblock The method of creative telescoping.
\newblock {\em J.~Symbolic Comput.}, 11:195--204, 1991.

\end{thebibliography}

\end{document}